\newcommand{\cmark}{\textcolor{black}{\ding{51}}}
\newcommand{\xmark}{\textcolor{black}{\ding{55}}}
\newcommand{\e}{\mathbb{E}}
\newcommand{\p}{\mathbb{P}}
\newcommand{\ccal}{\mathcal{C}}
\newcommand{\pcal}{\mathcal{P}}
\newcommand{\rcal}{\mathcal{R}}
\newcommand{\ncal}{\mathcal{N}}
\newcommand{\col}{\mathrm{col}}
\newcommand{\lb}{\left(}
\newcommand{\rb}{\right)}
\newcommand{\pare}[1]{\lb{#1}\rb}
\newcommand{\be}{\begin{equation}}
\newcommand{\ee}{\end{equation}}
\newcommand{\bes}{\begin{equation*}}
\newcommand{\ees}{\end{equation*}}
\newcommand{\ba}{\begin{align}}
\newcommand{\ea}{\end{align}}
\newcommand{\bas}{\begin{align*}}
\newcommand{\eas}{\end{align*}}
\newcommand{\bR}{\mathbb{R}}
\newcommand{\cov}{\mathrm{Cov}}
\newcommand{\var}{\mathrm{Var}}
\newcommand{\mbone}{{\mathbf{1}}}
\newcommand{\diag}{\text{diag}}
\newcommand{\Span}{\text{span}}
\newtheorem{defi}{Definition}
\newtheorem{thm}{Theorem}
\newtheorem{prop}{Proposition}
\newtheorem{lem}{Lemma}
\newtheorem{rem}{Remark}
\newtheorem{ass}{Assumption}
\newtheorem{coro}{Corollary}
\newtheorem{algo}{Algorithm}
\def\Var{\textsf{Var}} % the symbol Var for covariance used the sans serif letter
\title[Linear regression and its inference on noisy network-linked data]{ Linear regression and its inference on noisy network-linked data}
\author[Le and Li]{ Can M. Le \thanks{The two authors contributed equally to this paper.}}
\address{University of California, Davis,
USA.}
\email{canle@ucdavis.edu}
\author[Le and Li]{ Tianxi Li }
\address{University of Virginia,
Charlottesville,
USA.}
\email{ tianxili@virginia.edu }
\begin{document}

\newcommand{\norm}[1]{\left\lVert#1\right\rVert}
\newcommand{\abs}[1]{\left|#1\right|}

\begin{abstract}
Linear regression on network-linked observations has been an essential tool in modeling the relationship between response and covariates with additional network structures. Previous methods either lack inference tools or rely on restrictive assumptions on social effects and usually assume that networks are observed without errors. This paper proposes a regression model with nonparametric network effects. The model does not assume that the relational data or network structure is exactly observed and can be provably robust to network perturbations. Asymptotic inference framework is established under a general requirement of the network observational errors, and the robustness of this method is studied in the specific setting when the errors come from random network models. We discover a phase-transition phenomenon of the inference validity concerning the network density when no prior knowledge of the network model is available while also showing a significant improvement achieved by knowing the network model. Simulation studies are conducted to verify these theoretical results and demonstrate the advantage of the proposed method over existing work in terms of accuracy and computational efficiency under different data-generating models. The method is then applied to middle school students' network data to study the effectiveness of educational workshops in reducing school conflicts.
\end{abstract}

 \keywords{Network Modeling;  Network-linked Data; Linear Regression; Random Networks; Network Perturbation}

\section{Introduction}\label{sec:intro}

Nowadays, networks appear frequently in many areas, including social sciences, transportation, and biology. In most cases, networks are used to represent relationships or interactions between units of a complex (social, physical, or biological) system, so analyzing network data may render crucial insights into the dynamics and/or interaction mechanism of the system. One particular, yet commonly encountered, situation is when a group of units is observed connected by a network and a set of attributes for each of these units is available.  Such data sets are sometimes called network-linked data \citep{li2016prediction, li2019high} or multiview network data \citep{gao2019testing}. Network-linked data are widely available in almost all fields involving network analysis about social effects \citep{michell1996peer,pearson2003drifting}, collaborations \citep{ji2016coauthorship,su2019testing}, and causal experiments \citep{basse2018limitations,basse2018model}. In network-linked data, rich information is available from the perspective of both the individual attributes and the network, and the challenge is to find proper statistical methods to incorporate both. Suppose one single network is available. Consider the situation where, for each node $i$ of the network, we observe $({x}_i, y_i)$, in which ${x}_i \in \bR^p$ is a vector of covariates while $y_i \in \bR$ is a scalar response. In particular, we aim for a regression model of $y_i$ against ${x}_i$ that also takes the network information into account. Such a model arises naturally in any problem when a prediction model or inference of a specific attribute is of interest.

Although the systematic study of regression on network-linked data has only recently begun to attract interest in statistics \citep{li2016prediction,zhu2017network,su2019testing}, it has been studied in econometrics by many authors, who mostly focused on multiple networks \citep{manski1993identification,lee2007identification,bramoulle2009identification} or longitudinal data \citep{jackson2007relating, manresa2013estimating}.  Social effects are typically observed in the form that connected units share similar behaviors or properties. The similarity or correlation may be due to either \emph{homophily}, where social connections are established because of similarity, or \emph{contagion}, where individuals become similar through the influence of their social ties. In general, one cannot distinguish homophily from contagion in a single snapshot of observational data \citep{shalizi2011homophily}, as in our setting. Therefore, the two directions of causality will not be distinguished,  and this type of generic similarity between connected nodes will be called ``network cohesion", as in \cite{li2016prediction, li2019high}. %Traditionally, investigations focused on three different types of social effects: endogenous effect, exogenous (or contextual) effect and external (or correlated) effect  \citep{manski1993identification}. Roughly speaking, the endogenous effect refers to the dependence of one's response on its neighbors' responses, while the exogenous effect refers to the dependence of one's response on its neighbors' covariates. The external effect measures the similarity or correlation introduced by other latent information (not included in covariates or responses). 
A significant class of models for this type of network regression problems is the class of autoregressive models, which is based on ideas in spatial statistics.   The spatial autoregressive models have been widely used in econometrics, such as in \cite{manski1993identification,lee2007identification,bramoulle2009identification,hsieh2016social,zhu2017network}, to name a few.  A common form for such a model is 
%\begin{equation}\label{eq:SIM}
$$y_i = \gamma \Big(\frac{1}{n_i}\sum_{i'\sim i}y_{i'}\Big) + X\beta + \frac{1}{n_i}\Big(\sum_{i'\sim i}x_{i'}\Big)^T\eta + \epsilon,$$
%\end{equation}
where $y_i$ is the response, $x_i$ is the covariate vector and $n_i$ is the number of neighbors of $i$ (the notation $i\sim i'$ indicates that $i$ and $i'$ are connected). In this model, the neighborhood averages of the response  and covariates are  used to model the social effect. The parameters $\gamma$ and $\eta$ are typically called ``endogenous" and ``exogenous" effects, respectively, while $\beta$ is the standard regression slope. This class of models is also called the ``linear in means" model or social interaction model (SIM).  When multiple networks for different populations are available, the intercept can be treated as a group effect, called  the``external effect" \citep{manski1993identification,lee2007identification,lee2010specification}.  The SIM framework can also combined with network formulation models to study the correlation, homophily and contagion effects, if rather than a single network, one could observe multiple network snapshots over time \citep{goldsmith2013social, mcfowland2021estimating}.    The SIM framework, though it has been a popular setup for regression on network-linked data,  suffers from two crucial drawbacks. The first comes from its restrictive parametric form of the social effect. Assuming social effect in the form of autoregressive neighborhood average (or summation)  is far from realistic, and this stringent assumption significantly limits the usefulness of the framework. As shown in \cite{li2016prediction} and also in our empirical study, the restrictive assumption of social effect leads to poor prediction performance. The second limitation  comes from the  assumption that the network structure is precisely observed from the data. In practice, it is well known that most network data are subject to errors, due to missingness \citep{lakhina2003sampling,butts2003network,
clauset2005accuracy}, observational errors \citep{handcock2010modeling,
rolland2014proteome,le2017estimating,
khabbazian2017novel,newman2018estimating,
lunagomez2018evaluating}, or data collection method \citep{wu2018link, rohe2019critical}. If the imprecise network is used in SIM, the model is ill-defined and the inference would also be problematic, as shown by \cite{chandrasekhar2011econometrics}.

In this paper, a new regression model to fit the network-linked data is proposed that addresses both drawbacks mentioned above. The new model is based on a flexible network effect assumption and is robust to network observational errors. It also allows us to perform tests and construct confidence intervals for the model parameters. 
Our theoretical analysis provides the support for model estimation and inference and quantifies the magnitude of network observational errors under which the inference remains robust. Moreover, in the random network perturbation scenario, the tradeoff between the available information of the network structure and the level of robustness of the statistical inference is characterized. In its most difficult setting, when no prior knowledge about the network model is available, the result reveals a phase-transition phenomenon at the network average degree of $\sqrt{n}$, above which the inference is asymptotically correct, and below which the inference becomes invalid. The inference  could remain valid for much sparser networks if more information about the network structure is available. For example, when the network model is known and an effective parametric estimation can be applied, the sparsity requirement can be relaxed to $\log{n}$.  To the best of  our knowledge, this paper is the first work that addresses the inference of the network-linked regression models and accounts for the network observational errors.

Related to the network-linked model proposed in this paper is the semi-parametric model called ``regression with network cohesion" (RNC), introduced by \cite{li2016prediction}. In their model, the network effect is represented by individual parameters that are assumed to be ``smooth" over the network, and a similar idea is used in a few other statistical estimation settings \citep{wang2014trend,zhao2016significance,fan2018approximate}.  Despite its flexibility of social effect assumption and excellent  predictive performance, the RNC model lacks a valid inference framework and cannot be applied in many modern applications where statistical inference is needed \citep{ogburn2018challenges, su2019testing}. The result of \cite{li2019high} indicates that the RNC estimator fails to guarantee valid inference under reasonable assumptions unless additional assumptions such as sparsity are made \citep{zhao2016significance}. Moreover, little is known about the robustness of the RNC method to network observational errors, although preliminary results have been obtained in a particular scenario of network sparsification \citep{sadhanala2016graph,li2016prediction}. In contrast, our model does not assume the smoothness of network effects as in the RNC method. Instead, we use a general relational subspace to define the social effects. As can be seen later, the proposed model overcomes both of the two aforementioned limitations of RNC and is computationally more efficient.

Table~\ref{tab:model-compare} gives a high-level comparison between the proposed model and two popular benchmark regression models (discussed previously) in three aspects: the flexibility of modeling social effects, the availability of an inference method, and the provable robustness to the network perturbation. The model we introduced in this paper is the only one of the three which renders all of the  desired properties.
\begin{table}
\caption{Comparison between the proposed method and two other popular methods for network-linked data.}\label{tab:model-compare}
\centering
\fbox{
\begin{tabular}{c|c|c|c}  
    \cline{1-4}
    \multicolumn{1}{c|}{Models} & social effect flexibility & inference & network robustness  \\
    \hline
    SIM \citep{manski1993identification} & \xmark & \cmark & \xmark \\
    \hline
    RNC \citep{li2016prediction} & \cmark & \xmark & \xmark \\
    \hline
    Current method & \cmark & \cmark & \cmark \\
    \hline
\end{tabular}}
\end{table}

The rest of the paper is organized as follows. Section~\ref{sec:model} introduces our model and the corresponding statistical inference algorithm. Section~\ref{secsec:generic-inference} and Section~\ref{sec:theory} are devoted to our main theoretical results; the generic theory for model estimation consistency and the asymptotic inference is given in Section~\ref{secsec:generic-inference}; then, under the random network modeling framework, detailed discussions of the technical requirement are introduced in  Section~\ref{sec:theory}. Extensive simulation experiments are given in Section~\ref{sec:sim} to verify our theory and compare our method with a few benchmark methods mentioned above. Section~\ref{sec:conflict} demonstrates the usefulness of the proposed method in analyzing a study about the effects of educational workshops on reducing school conflicts. Concluding remarks and future directions are discussed in Section~\ref{sec:conclusion}. Extended theoretical results, proofs, additional experiments and data analysis are given as appendix in the supplementary material.

\section{Network regression model}\label{sec:model}

%\subsection{Notations}
Throughout the paper, $0_{k\times \ell}$ and $I_k$ are used to denote the zero matrix of size $k\times \ell$ and the identity matrix of size $k\times k$, respectively. The subscripts may be dropped when the dimensions are clear from the context. We use $e_i$ to denote the vector whose $i$th coordinate is 1 and the remaining coordinates are zero; the dimension of $e_i$ may vary according to the context. Given a matrix $M\in\mathbb{R}^{k\times \ell}$ and $1\le i\le j\le \ell$, $M_{i:j}\in \mathbb{R}^{k\times(j-i+1)}$ is used to denote the matrix whose columns are those of $M$ with indices $i,i+1,...,j$; when $i=j$, $M_i$ is used instead of $M_{i:i}$ for the simplicity of the notation. Throughout the paper, $\|.\|$  denotes the spectral norm,  which is the largest singular value, for matrices and the Euclidean norm for vectors. The number of nodes in the network is denoted by $n$. We say that an event $E$ occurs with high probability if $\mathbb{P}(E)\ge 1-n^{-c}$ for some constant $c>2$. We use $a_n = o(1)$ and $b_n=O(1)$ to indicate that $\lim_{n\rightarrow\infty}a_n= 0$ and $b_n$ is a bounded sequence, respectively.

\subsection{A semi-parametric regression model with network effects}

The unobserved true network is captured by the relational matrix $P \in \bR^{n\times n}$, where $P_{ij}$ describes the strength of the relation between nodes $i$ and $j$.
For each node $i$ of the network $({x}_i, y_i)$ is observed, where ${x}_i \in \bR^p$ is a vector of covariates while $y_i \in \bR$ is a scalar response. Denote by $Y=(y_1,...,y_n)^T \in \bR^{n}$ the vector of responses and by $X=(x_1,...,x_n)^T \in \bR^{n\times p}$ the design matrix, which sometimes has the first column as the all-one vector. Our goal is to construct a model that describes the dependence of $Y$ on $X$ and $P$. 

The observed network is represented by an $n\times n$ adjacency matrix $A$, where $0\le A_{ij} \le 1$ is the weight of the edge between $i$ and $j$. In the special case of an unweighted network, $A$ is a binary matrix and $A_{ij}=1$ if and only if node $i$ and node $j$ are connected. We view $A$ as a perturbed version of $P$ and focus on undirected networks, for which $A$ and $P$ are symmetric matricies. 
Moreover, we only consider the \emph{fixed design} setting, to avoid unnecessary complication of jointly modeling covariates and relational data. That means $X$ and $P$ are always treated as fixed. The adjacency matrix $A$ is also treated as fixed for the moment as the regression model and its generic inference framework are described. Later on, in only Section~\ref{sec:theory} where we study the robustness of our framework under deviation of $A$ from $P$, we will assume $A$  as a perturbation of $P$ following random network models.

Intuitively, the structural assumption is that $y_i$'s tend to be similar  for individuals having strong connections. This is called the ``assortative mixing" or ``network cohesion" property \citep{kolaczyk2009statistical, li2016prediction}. To incorporate this intuition, consider the model
\begin{equation}\label{eq:model-structure}
\e Y = X\zeta + \mu,
\end{equation}
where $\zeta \in \bR^p$ is the vector of coefficients for the covariates $X$ and $\mu \in \bR^n$ is the vector of individual effects reflecting the network cohesion property. This model was studied in \cite{li2016prediction}. Instead of  modeling the network effect by using a specific auto-regressive dependence as in \cite{manski1993identification}, \eqref{eq:model-structure} treats the network effect as a nonparametric component. As shown by \cite{li2016prediction} and the experiments later on in this paper, this approach is more flexible and gives significantly better predictive performance than the SIM framework of \cite{manski1993identification}. In \cite{li2016prediction}, $\mu$ is assumed to have small sum of squared differences $\sum_{i,j: A_{ij} = 1}(\mu_i-\mu_j)^2$. In contrast, we rely on a different form of the network cohesion that proves to be more general and stable. Specifically, the cohesion requirement of $\mu$ is formulated by assuming that 
\begin{equation}\label{eq:cohesion}
\mu \in S_K(P),
\end{equation}
where $S_K(P)$ is the subspace spanned by the $K$ leading eigenvectors of $P$. Depending on specific problems it may be possible to replace $P$ in \eqref{eq:cohesion} with other appropriate matrix-valued functions of $P$ for which the inference framework remains valid. One such example is provided in Appendix~\ref{sec:Laplacian}. The idea of using the eigenspace of $P$ to encode the cohesive pattern over the network is motivated by many previous studies. It is related to the standard spectral embedding  \citep{shi2000normalized, ng2001link, belkin2003laplacian,tang2013universally}, which maps the nodes of the network to a set of points in a Euclidean space so that the geometric relations between these points are similar to the topological relations of the nodes in the original network.  Moreover, under random network models, \cite{li2018hierarchical} and \cite{lei2020consistency} show that the eigenspaces of the expected adjacency matrix and Laplacian matrix encode varying resolutions of node similarity. Intuitively, since the spectral space combines both the network's macro-scale and micro-scale patterns, it is expected to be reasonably robust to perturbations on local connections. This gives a significant advantage compared with the SIM model. Figure~\ref{fig:school40} (Section~\ref{sec:conflict}) includes such an example. In summary, we assume the following mean structure:
\begin{defi}\label{defi:mean-structure}
The mean structure of the regression model with network effects is defined to be 
\begin{equation}\label{eq:mean-structure}
\e Y \in \Span\{\col(X),S_K(P)\},
\end{equation}
where $\col(X)$ is the column space of $X$ and $\Span$ denotes the linear subspace jointly  spanned by $\col(X)$ and $S_K(P)$. This is equivalent to \eqref{eq:model-structure} with the cohesion property \eqref{eq:cohesion}.
\end{defi}

The advantage of our mean structure \eqref{eq:model-structure} lies in its generality. In particular, one distinction between our model and the more commonly assumed linear mean structure is that our model includes the situation when the two subspaces $\col(X)$ and $S_K{P}$ have nontrivial intersection. For example, one of the covariates may be perfectly cohesive over the network, and this situation is allowed in our model. In this case, one could not uniquely determine $X\zeta$ and $\mu$. However, such a tricky situation is unavoidable for the level of generality and robustness we want to achieve. As a matter of fact, this type of ambiguity is an instance of the general conclusion from \cite{shalizi2011homophily} that contagion, and different types of  homophily effects from a single snapshot of observational data are not distinguishable. Nevertheless, the mean-structure , as well as an interpretable decomposition of the effects, can be uniquely identified with the following parameterization.

\begin{defi}\label{defi:model}
Let $\rcal=\col(X) \cap S_{K}(P)$ be the intersection of $\col(X)$ and $S_{K}(P)$. Model \eqref{eq:mean-structure} can reparametrized as
\begin{equation}\label{eq:RNC}
\e Y = X\beta + \xi + \alpha,
\end{equation}
where
\begin{eqnarray}\label{eq: identifiability assumptions}
\xi \in \rcal, \quad
X{\beta} \perp \rcal, \quad
{\alpha} \in S_{K}(P), \quad 
{\alpha}\perp \rcal,
\end{eqnarray} 
and $\beta \in \bR^p$, $\xi, \alpha \in \bR^n$. We call it the \emph{regression model with network effects}.
\end{defi}

Although there are several ways to reparameterize the mean structure \eqref{eq:mean-structure}, the one in \eqref{eq: identifiability assumptions} is preferable because it combines the two sources of information (the covariates $X$ and the relational information $P$) in a natural way so that the interpretation of the corresponding parameters is straightforward. Specifically, we observe that
\begin{itemize}
\item When $\beta=0$, we have $\e Y \in S_K(P)$ so the model can be specified as a cohesive network effects without using the covariates $X$. Therefore, $\beta$ represents the conditional covariate effects of $X$ given $P$.
\item When $\alpha=0$, we have $\e Y \in \col(X)$ so the model can be specified by the covariates without using the relational information $P$. Therefore, $\alpha$ represents the conditional network effect of $P$ given $X$.
\end{itemize}
Thus, $\beta$ and $\alpha$ reflect the conditional effects similar to the covariate effects in the standard linear regression setting. In addition, $\xi$ is the overlap of the two sources. The following result confirms the identifiability of our parameters.

\begin{prop}[Parameter identifiability]\label{prop:identifiability}
The parameterization in Definition~\ref{defi:model} is identifiable. That is, if there exist $(\beta, \alpha, \xi)$ and $(\beta', \alpha', \xi' )$ satisfying \eqref{eq:RNC} and \eqref{eq: identifiability assumptions} simultaneously then
$\beta = \beta'$, $\alpha = \alpha'$, and $\xi = \xi'$.
\end{prop}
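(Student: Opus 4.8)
The plan is to show that the constraints in \eqref{eq: identifiability assumptions} pin down each parameter uniquely by exploiting the orthogonal decomposition they induce. Suppose $(\beta,\theta,\alpha)$ and $(\beta',\theta',\alpha')$ both satisfy \eqref{eq:RNC} and \eqref{eq: identifiability assumptions}. Subtracting the two mean equations gives $X(\beta-\beta') + X(\theta-\theta') + (\alpha-\alpha') = 0$. First I would observe that $X(\theta-\theta')\in\rcal$ and $\alpha-\alpha'\in S_K(P)$ with $\alpha-\alpha'\perp\rcal$, while $X(\beta-\beta')\in\col(X)$ with $X(\beta-\beta')\perp\rcal$. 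So the identity reads $u + v + w = 0$ where $u:=X(\beta-\beta')\in\col(X)\cap\rcal^{\perp}$, $v:=X(\theta-\theta')\in\rcal$, and $w:=\alpha-\alpha'\in S_K(P)\cap\rcal^{\perp}$.

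The key step is to separate the $\rcal$ component from its orthogonal complement. Since $v\in\rcal\subseteq\col(X)$ and also $v\in\rcal\subseteq S_K(P)$, while $u+w = -v$ lies in $\rcal$, I would project onto $\rcal$ and onto $\rcal^{\perp}$. The $\rcal^{\perp}$-projection kills $v$ and fixes $u$ and $w$ (both already in $\rcal^{\perp}$), yielding $u + w = 0$, i.e. $X(\beta-\beta') = -(\alpha-\alpha')$. This common vector lies in $\col(X)\cap S_K(P) = \rcal$; but it is also orthogonal to $\rcal$ (being a difference of vectors orthogonal to $\rcal$), hence it is zero. Therefore $\alpha = \alpha'$ and $X(\beta-\beta') = 0$. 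Feeding this back, $v = X(\theta-\theta') = 0$ as well. This handles $\alpha$ and shows $X\beta = X\beta'$, $X\theta = X\theta'$.

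The remaining point is to pass from equality of the vectors $X\beta$, $X\theta$ to equality of the coefficient vectors $\beta,\theta$ themselves, which requires $X$ to have full column rank (otherwise $\beta$ is only identifiable up to $\ker X$). I expect this to be assumed implicitly — as in ordinary linear regression — or to be stated as a standing assumption on the design; with it, $X(\beta-\beta')=0$ forces $\beta=\beta'$ and likewise $\theta=\theta'$, completing the proof.

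The main obstacle, such as it is, is purely bookkeeping: keeping straight which vectors live in $\rcal$ versus $\rcal^{\perp}$ and making sure the orthogonality claims ($X\beta\perp\rcal$, $\alpha\perp\rcal$) are used exactly where needed so that the "difference is simultaneously in $\rcal$ and in $\rcal^{\perp}$, hence $0$" argument applies cleanly. There is no analytic difficulty; the whole argument is a short exercise in linear algebra once the decomposition is set up, and the only genuine hypothesis beyond \eqref{eq: identifiability assumptions} is the full-column-rank of $X$.
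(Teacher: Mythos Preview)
Your argument is correct. The orthogonal decomposition you set up is exactly right: writing $u=X(\beta-\beta')$, $v=X(\theta-\theta')$, $w=\alpha-\alpha'$, projecting onto $\rcal^\perp$ to kill $v$, and then observing that $u=-w$ lies in $\col(X)\cap S_K(P)=\rcal$ while also in $\rcal^\perp$ is a clean way to force $u=w=0$ and hence $v=0$. Your observation that full column rank of $X$ is needed to go from $X\beta=X\beta'$ to $\beta=\beta'$ is also correct and is indeed a standing assumption in the paper (it uses $(X^TX)^{-1}$ throughout; see Assumption~\ref{ass:weak-dependence}).

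The paper does not give a separate self-contained proof of Proposition~\ref{prop:identifiability}; instead it establishes identifiability constructively in Section~\ref{secsec:model-estimation} by building the projections $\mathcal{P}_{\rcal}$, $\mathcal{P}_{\ccal}$, $\mathcal{P}_{\ncal}$ from the SVD of $Z^TW$ and showing (see \eqref{eq:PR}--\eqref{eq:alpha beta proj form} and Lemma~\ref{lem:unorth proj}(iii)) that $\theta=(X^TX)^{-1}X^T\mathcal{P}_{\rcal}\,\e Y$, $\beta=(X^TX)^{-1}X^T\mathcal{P}_{\ccal}\,\e Y$, and $\alpha=\mathcal{P}_{\ncal}\,\e Y$. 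Since each parameter is a fixed function of $(\e Y,X,P)$, uniqueness follows. Your route is more elementary and avoids the SVD machinery; the paper's route has the payoff of simultaneously delivering the population-level estimators that drive Algorithm~\ref{algo:estimation-most-general}.
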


\begin{rem} A seemingly easier way to combine the two sources of information is to treat the eigenvectors of $P$ as another set of covariates and fit $Y$ using both $X$ and the eigenvectors. However, this approach has to assume $\rcal=\{0\}$ for identifiability. This model is a special case of our model, and as can be seen later, our model estimation method would adapt to this reduced case. Our model does not enforce this assumption because it is restrictive and also causes difficulties in interpretations (see Section~\ref{sec:degenerate} in Appendix for details).  Note also that \eqref{eq: identifiability assumptions} does not require $\alpha\perp X\beta$, although this strong restriction would significantly simplify the fitting procedure and its analysis. 
\end{rem}

In the present setting, even when $p$ is much smaller than $n$, the model inference is a high-dimensional problem because of the nonparametric individual effects $\alpha$.  Throughout this paper, we assume that the noise in the regression model of Definition \ref{defi:model} is a multivariate Gaussian vector, as in many other inference methods of high-dimensional regression model \citep{van2014asymptotically,zhang2014confidence,javanmard2014confidence}.
Specifically, 
\begin{equation}\label{eq:Gaussian}
Y = \e Y + \epsilon \text{~~and~~} \epsilon \sim N(0,\sigma^2I),
\end{equation}
where $I\in\mathbb{R}^{n\times n}$ is the identity matrix and $\sigma^2$ is the variance of the noise. We leave the study of other noise distributions for future work.

\begin{rem}
As to be clear in our theory later, the reasonable interpretation of $K$ in \eqref{eq:cohesion} is the index where the spectral space of $P$ has a large eigen gap. The problem of estimating such a $K$ has been extensively studied in network literature and many efficient methods  are now available  \citep{chatterjee2015matrix, chen2014network, le2015estimating,li2016network,jin2020estimating}. 
%For example, we can use the value of $K$ that associates with a large eigenvalue gap of $P$. l 
These methods all provide theoretical guarantees for the recovery of $K$ with overwhelming probability and can be applied to our setting. Since the task of estimating $K$ neither makes our problem conceptually more challenging nor brings more insight about it, for simplicity of presentation, we assume that $K$ is known throughout the paper.
\end{rem}

\subsection{Statistical inference method}\label{secsec:model-estimation}

%For a generic introduction of our inference framework, we assume that we have access to a certain approximation of $P$, denoted by $\hat{P}$. 
Our generic inference framework requires access to a certain approximation of $P$, denoted by $\hat{P}$. The specific $\hat{P}$ is determined by the user according to the understanding of the data problem. For example, one may assume that the adjacency matrix $A$ is a perturbed version of $P$. In this situation, without additional information, a natural option is to set $\hat{P} = A$ and we refer to the corresponding estimation and inference procedure as the ``model-free" version of our method, highlighting that no specific model for the network structure is assumed. Section~\ref{sec:theory} discusses some other cases for which additional network model assumptions are considered and more accurate parametric estimates of $P$ are available.

The identifiability condition \eqref{eq: identifiability assumptions} suggests a natural procedure for estimating parameters in model \eqref{eq:RNC} via subspace projections.
 %as follows. First, we find an  estimate $\hat{\rcal}$ of the subspace intersection $\rcal=\col(X) \cap S_{K}(P)$ and project $Y$ on $\hat{\rcal}$ to approximate $X\theta$, and then find an estimate $\hat{\theta}$ of $\theta$. Second, we estimate $\beta$ by finding $\hat{\beta}$ such that $X\hat{\beta}\perp \hat{\rcal}$ and $X\hat{\beta}+X\hat{\theta}$ is the projection of $Y$ on $\text{col}(X)$. Similarly, we estimates $\alpha$ by finding $\hat{\alpha}$ such that $\hat{\alpha}\perp\hat{\rcal}$ and $\hat{\alpha}+X\hat{\theta}$ is a projection of $Y$ on $S_K(\hat{P})$.  
However, the fact that $\alpha$ and $X\beta$ need not be orthogonal complicates the estimation procedure and its analysis. 
%The model-fitting method is now introduced in more detail. 
To highlight the main idea, we first describe the population-level estimation, assuming that both $P$ and $\e Y$ are known.

Let $Z\in\mathbb{R}^{n\times p}$ be a matrix whose columns form an orthonormal basis of the covariate subspace $\col(X)$. Similarly, let $W\in\mathbb{R}^{n\times K}$ be the matrix whose columns are eigenvectors of $P$ that span the subspace $S_K(P)$. Define the singular value decomposition (SVD) of matrix $Z^TW$ to be
%\begin{equation}\label{eq:svd population}
$Z^TW = U\Sigma V^T.$
%\end{equation}
Here, $U \in \bR^{p\times p}$ and  $V\in \bR^{K\times K}$ are orthonormal matrices of singular vectors while $\Sigma \in \bR^{p\times K}$ is the matrix with the following singular values on the main diagonal:
\begin{eqnarray}\label{eq:subspace angles}
\sigma_1 = \sigma_2 = \cdots =\sigma_r =1 >\sigma_{r+1} \ge \cdots \ge \sigma_{r+s}>0 = \sigma_{r+s+1} = \cdots = 0.
\end{eqnarray}
Thus, $r$ is the dimension of $\rcal=\text{col}(X)\cap S_K(P)$ and $Z^TW$ has $r+s$ non-zero singular values. 
Column vectors of the matrices 
\begin{equation}\label{eq:tilde Z W}
\tilde{Z} = ZU \text{~~and~~}\tilde{W} = WV
\end{equation}
also form a basis of $\col(X)$ and $S_K(P)$, respectively. In particular, the first $r$ column vectors of $\tilde{Z}$ and $\tilde{W}$ coincide and form a basis of $\rcal$. Moreover, the last $p-r$ column vectors of $\tilde{Z}$ form a basis of the subspace of $\text{col}(X)$ that is perpendicular to $\rcal$; similarly, the last $K-r$ column vectors of $\tilde{W}$ form a basis of the subspace of $S_K(P)$ that is perpendicular to $\rcal$. The model assumption \eqref{eq: identifiability assumptions} then indicates
\begin{equation}\label{eq:identifiability in detail}
\xi\in\text{col}(\tilde{Z}_{1:r}) = \text{col}(\tilde{W}_{1:r})=\rcal, \quad
X\beta\in\text{col}(\tilde{Z}_{(r+1):p})\perp \rcal, \quad \alpha\in \text{col}(\tilde{W}_{(r+1):K})\perp \rcal.  
\end{equation}
Therefore, $\xi$ can be recovered by $\xi = \mathcal{P}_{\rcal} \e Y$, where $\mathcal{P}_{\rcal}$ is the orthogonal projection onto $\rcal$ written as 
\begin{equation}\label{eq:PR}
\mathcal{P}_{\rcal} = \tilde{Z}_{1:r}\tilde{Z}_{1:r}^T=\tilde{W}_{1:r}\tilde{W}_{1:r}^T.
\end{equation}
For the convenience of theoretical analysis later, we also introduce the projection coefficient $\theta \in \bR^p$
$$\theta = (X^TX)^{-1}X^T\xi.$$ 
Recall that $\xi \in \rcal \subset \col(X)$, so we have $\xi = X\theta$, and we have the one-to-one correspondence between $\theta$ and $\xi$.  However, notice that $\theta$ cannot be attributed to conditional effects of covariates, as explained in the previous section.

For estimating $\alpha$ and $\beta$, we project $\e Y$ on $\text{col}(\tilde{Z}_{(r+1):p})$ and $\text{col}(\tilde{W}_{(r+1):K})$, respectively. Since these subspaces need not be orthogonal to each other (the principle angles between the two subspaces are the singular values $\sigma_{r+1},...,\sigma_{r+s}$), the corresponding projections are not necessarily orthogonal projections. Instead, they admit the following forms: 
\begin{eqnarray}
\label{eq:PC}{\mathcal{P}}_{\ccal} &=& \big(\tilde{Z}_{(r+1):p},0_{n\times(K-r)}\big)({M}^T{M})^{-1}{M}^T,\\
\label{eq:PN}{\mathcal{P}}_{\ncal} &=& \big(0_{n\times(p-r)},\tilde{W}_{(r+1):K}\big)({M}^T{M})^{-1}{M}^T,
\end{eqnarray}
where  $M=(\tilde{Z}_{(r+1):p},\tilde{W}_{(r+1):K})$. Both $\alpha$ and $\beta$ can then be recovered  by
\begin{eqnarray}\label{eq:alpha beta proj form}
\alpha = \pcal_{\ncal}\e Y \text{~~and~~} \quad \beta = (X^TX)^{-1}X^T\pcal_{\ccal}\e Y.  
\end{eqnarray} 

%note that $X\beta\in\text{col}(\tilde{Z}_{(r+1):p})$ and $\alpha\in\text{col}(\tilde{W}_{(r+1):K})$.   
%the last $p-r$ column vectors of $\tilde{Z}$ 
%
%denote 
%\begin{equation}\label{eq: Z definition}
%M = (\tilde{Z}_{r+1}, \cdots, \tilde{Z}_{p}, \tilde{W}_{r+1}, \cdots, \tilde{W}_{K}).
%\end{equation}
%Notice that $\tilde{Z}_{ r+i}^T\tilde{W}_{ r+i} = \sigma_{r+i} \in (0,1)$ for $1\le i \le s$ and column vectors of $M$ span a $(p+K-r)$-dimensional subspace orthogonal to $\text{col}(X)\cap S_K(P)$. Since $X\beta + \alpha\in \text{col}(M)$, we can write
%\begin{align*}
%M(M^TM)^{-1}M^T(X\beta+\alpha) =M(M^TM)^{-1}M^T(Y - X\theta)  = \sum_{i=1}^{p-r}\xi_i\tilde{Z}_{r+i}+\sum_{i=1}^{K-r}\xi_{p-r+i}\tilde{W}_{r+i},
%\end{align*}
%where $\xi = (M^TM)^{-1}M^T(Y - X\theta)$. Since $X\beta\in \text{col}(X)$ and $\alpha\in S_K(P)$ by \eqref{eq: identifiability assumptions}, 
%\begin{equation}\label{eq: beta alpha formula}
%\alpha = \sum_{i=1}^{K-r}\xi_{p-r+i}\tilde{W}_{r+i}, \quad \beta =  (X^TX)^{-1}X^T\sum_{j=1}^{p-r}\xi_j\tilde{Z}_{r+j}.
%\end{equation}

Finally, to test the hypothesis $H_0: \alpha  = 0$, we can use $\norm{\alpha}^2$ as the statistic. We have
%\begin{eqnarray}\label{eq: definition of gamma}
$\alpha = \sum_{i=1}^{K-r} \tilde{W}_{r+i} \gamma_i  = \tilde{W}_{(r+1):K}\gamma,$
%\end{eqnarray}
such that $\norm{\alpha} = \norm{\gamma}$. Although $\gamma$ can not be uniquely determined because $\tilde{W}_{(r+1):K}$ is only unique up to an orthogonal transformation, we can still uniquely identify its magnitude $\norm{\gamma}^2$ to perform a chi-squared test against the null hypothesis $H_0:\alpha=0$.

In practice, when $P$ and $\e Y$ are not observed, it is natural to replace them everywhere by the available approximations $\hat{P}$ and $Y$ in the procedure above. There is, however, one crucial issue that requires special attention: the plugin estimate $\col(X) \cap S_K(\hat{P})$ for $\mathcal{R}=\col(X) \cap S_K(P)$ is often a bad approximation of $\rcal$. This is because the intersection of subspaces is not robust with respect to small perturbations to the subspaces. For example, it is easy to see that in low dimensional settings, when $\col(X) \cap S_K(P)$ is nontrivial, a small perturbation of $P$ can easily make $\col(X) \cap S_K(\hat{P})$ a null space.
Therefore, rather than using $\col(X) \cap S_K(\hat{P})$ to approximate $\rcal$, the projections  in \eqref{eq:PR}, \eqref{eq:PC} and \eqref{eq:PN} are directly approximated using the eigenvectors of $\hat{P}$ (this partially explains the detailed discussion of the population level estimation above). The whole estimation procedure is summarized in Algorithm~\ref{algo:estimation-most-general}. From here through Section~\ref{secsec:generic-inference}, we will assume that the dimension $r$ of $\rcal$ is known for simplicity. This is because, so far, $A$ has been treated as fixed while a discussion of selecting $r$ naturally involves a detailed analysis of the perturbation mechanism from $P$ to $A$. In Section~\ref{sec:theory} when the perturbation model for $A$ is introduced, we provide a simple method to select $r$ with a theoretical guarantee (see Corollary~\ref{coro:r-consistency}). 

\begin{algo}[Spectral projection estimation]\label{algo:estimation-most-general}
Given $X$, $Y$, $\hat{P}$, $K$ and $r$. 
\begin{enumerate}
\item Calculate an orthonormal basis of $\text{col}(X)$ and forms $Z\in\bR^{n\times p}$. Similarly, calculate $K$ eigenvectors of $\hat{P}$ and form $\hat{W}\in \bR^{n\times K}$.    
%\item Set $M = (R, \hat{U})$.  Calculate $\hat{P}_{\acal} = M(M^TM)^{-1}M^T$.  
\item Calculate the singular value decomposition 
\begin{equation}\label{eq:svd sample}
Z^T\hat{W} = \hat{U}\hat{\Sigma}\hat{V}^T
\end{equation}
and denote
\begin{equation}\label{eq:Z hat W breve}
\hat{Z} = Z\hat{U}, \quad 
\breve{W} = \hat{W}\hat{V}.
\end{equation} 
\item Let $\hat{M}=(\hat{Z}_{(r+1):p},\breve{W}_{(r+1):K})$. Estimate $\mathcal{P}_R$, $\mathcal{P}_{\ccal}$ and $\mathcal{P}_{\ncal}$ by
\begin{eqnarray}
\label{PRhat}\hat{\mathcal{P}}_{\rcal} &=& \hat{Z}_{1:r}\hat{Z}_{1:r}^T,\\
\label{eq:PChat}\hat{\mathcal{P}}_{\ccal} &=& \big(\hat{Z}_{(r+1):p},0_{n\times(K-r)}\big)(\hat{M}^T\hat{M})^{-1}\hat{M}^T,\\
\label{eq:PNhat}\hat{\mathcal{P}}_{\ncal} &=& \big(0_{n\times(p-r)},\breve{W}_{(r+1):K}\big)(\hat{M}^T\hat{M})^{-1}\hat{M}^T.
\end{eqnarray}
\item Estimate $\xi$, $\beta$ and  $\alpha$  by 
\begin{equation}\label{eq:alpha beta theta hat}
\hat{\xi} = \hat{\mathcal{P}}_{\rcal} Y, \quad \hat{\beta} = (X^TX)^{-1}X^T\hat{\mathcal{P}}_{\ccal}Y, \quad \hat{\alpha} = \hat{\mathcal{P}}_{\ncal} Y. 
\end{equation}
And recover the projection coefficient of $\xi$ to $\col(X)$ as $\hat{\theta} = (X^TX)^{-1}X^T\hat{\xi}$.
\item Let $\hat{H} = \hat{\mathcal{P}}_{\rcal}+\hat{\mathcal{P}}_{\ccal}+\hat{\mathcal{P}}_{\ncal}$. Estimate the variance $\sigma^2$ of $\epsilon$ in \eqref{eq:RNC} by
\begin{equation}\label{eq: variance estimate}
\hat{\sigma}^2 = \norm{Y - \hat{H}Y}^2/(n-p-K+r).
\end{equation}
\item Estimate the covariance of $\hat{\beta}$ by 
\begin{equation}\label{eq:beta-cov}
\hat{\sigma}^2(X^TX)^{-1}X^T\hat{\mathcal{P}}_{\ccal}\hat{\mathcal{P}}_{\ccal}^TX(X^TX)^{-1}. 
\end{equation}
The inference of $\beta$ can be done according to Theorem~\ref{thm:general-beta-inference}.
\item To test the significance of $\alpha$, estimate $\gamma$ by  
\begin{equation}\label{eq:gamma hat}
\hat{\gamma} = \breve{W}_{(r+1):K}\hat{\alpha}
\end{equation}
and estimate the covariance matrix of $\hat{\gamma}$ by
$\hat{\Sigma}_{\hat{\gamma}} = \hat{\sigma}^2\breve{W}_{(r+1):K} \hat{\pcal}_{\ncal} \hat{\mathcal{P}}_{\ncal}^T \breve{W}_{(r+1):K}^T.$
Normalize $\hat{\gamma}$ to obtain
$\hat{\gamma}_0  = \hat{\Sigma}_{\hat{\gamma}}^{-1/2}\hat{\gamma}.$
Use $\norm{\hat{\gamma}_0 }^2$ for a chi-squared test (with $K$ degrees of freedom) against the null hypothesis $H_0: \alpha = 0$ according to Theorem~\ref{thm:chisq-test}.
\end{enumerate}
\end{algo}

\section{Generic inference theory of model parameters}\label{secsec:generic-inference}

For valid inference of parameters,  we make the following standard assumptions about the data.

\begin{ass}[Standardized scale]\label{ass:signal-scale}
All columns of $X$ satisfy
$\norm{X_{i}}= \sqrt{n}$. Moreover, 
$$\norm{\e Y}\le C\sqrt{np}$$
 for some constant $C>0$.
\end{ass}

\begin{ass}[Weak dependence of $X$]\label{ass:weak-dependence} Denote $\Theta=(X^TX/n)^{-1}$. Assume that $\Theta$ is well-conditioned, that is, there exists a constant $\rho > 0$ such that 
$$\rho \le \lambda_{\min}(\Theta) \le \lambda_{\max}(\Theta) \le 1/\rho.$$
\end{ass}

Due to the deviation of $\hat{P}$ from the true signal $P$, the estimators would be biased. A common approach to ensure the  valid inference is to control the bias levels of the parameter estimates  \citep{van2014asymptotically,zhang2014confidence,javanmard2014confidence}. In the current context, we need to guarantee that the biases are negligible compared to the standard deviations of the estimators. This requires $\hat{P}-P$ to be controlled to some extent. In particular, the biases will depend on the magnitude of the  perturbation of $S_{K}(P)$ associated with $\text{col}(X)$, defined by 
\begin{eqnarray}\label{eq: RUU}
\tau_n = \|Z^T\hat{W}\hat{W}^T - Z^TWW^T\| = \|(\hat{W}\hat{W}^T - WW^T)Z\|.
\end{eqnarray} 
%Since column vectors of $Z$ form an orthonormal basis of $\text{col}(X)$, we get
%\begin{equation}\label{eq:projection pertubation}
%\norm{(\hat{W}\hat{W}^T - WW^T)X} \le \tau_n \sqrt{np}.
%\end{equation}
 
As a warm-up, it is not difficult to get a bound on $\|\hat{\Sigma}-\Sigma\|$ in terms of $\tau_n$. % (see \eqref{eq:svd population} and \eqref{eq:svd sample} for the definition of $\Sigma$ and $\hat{\Sigma}$, respectively).  
Indeed,  from the previous discussion,  the singular decompositions of $Z^TWW^T$ and $Z^T\hat{W}\hat{W}^T$ are 
\begin{equation}\label{eq: RUU decomposition}
Z^TWW^T = U\Sigma\tilde{W}^T, \quad Z^T\hat{W}\hat{W}^T = \hat{U}\hat{\Sigma}\breve{W}^T.
\end{equation}
Therefore, by Weyl's inequality, 
\begin{equation}\label{eq:singular value bound}
\|\hat{\Sigma}-\Sigma\| \le \tau_n.
\end{equation}
Since the estimates depend crucially on the singular value decomposition of $Z^TWW^T$ and $Z^T\hat{W}\hat{W}^T$, \eqref{eq: RUU}, \eqref{eq: RUU decomposition} and \eqref{eq:singular value bound} play a central role in establishing the theoretical results. We now introduce our assumption on the error level that ensures the validity of the inference. It can be seen as a way to characterize the level of perturbation our framework could tolerate. This validity and theoretical insights about this assumption will be studied in more detail later (see Section~\ref{sec:theory}). As we show there, the condition is mild in many commonly studied network settings.

\begin{ass}[Small projection perturbation]\label{ass:small-perturbation} 
Let $Z$ and $W$ be the matrices formed by the bases of $\text{col}(X)$ and $S_K(P)$, respectively. Let $Z^TW=U\Sigma V^T$ be the singular value decomposition  with singular values
specified in \eqref{eq:subspace angles}. Assume $\hat{P}$ is a small projection perturbation of $P$ with respect to $X$, by which we mean
\begin{eqnarray*}
\frac{\tau_n}{\min\left\{(1-\sigma_{r+1})^3,\sigma_{r+s}^3\right\}} = o\big(1/\sqrt{np}\big) .
\end{eqnarray*}
\end{ass}

The first property to be introduced is the estimation consistency of $\hat{\sigma}^2$, which serves as the critical building block for later inference. 
\begin{prop}[Consistency of variance estimation]\label{prop:variance-consistency}
If Assumption~\ref{ass:signal-scale} holds and 
\begin{eqnarray}\label{eq: singular value gap condition}
40\tau_n\le\min\{(1-\sigma_{r+1})^2,\sigma_{r+s}^2\}.
\end{eqnarray}
Then with high probability, the estimate $\hat{\sigma}^2$ defined by \eqref{eq: variance estimate} satisfies
$$
|\hat{\sigma}^2-\sigma^2| \le \frac{C\tau_n}{\min\{(1-\sigma_{r+1})^3,\sigma_{r+s}^3\}}\cdot\frac{n(\sigma^2+p)}{n-p-K+r} 
$$
for some constant $C>0$. 
In particular, if Assumption~\ref{ass:small-perturbation} holds and $p+K = o(n)$ then $\hat{\sigma}^2$ is consistent.
\end{prop}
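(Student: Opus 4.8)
The plan is to reduce everything to the population picture, where $H:=\mathcal{P}_{\rcal}+\mathcal{P}_{\ccal}+\mathcal{P}_{\ncal}$ is the \emph{orthogonal} projection onto $\Span\{\col(X),S_K(P)\}$, a subspace of dimension $p+K-r$. Indeed, $\mathcal{P}_{\ccal}+\mathcal{P}_{\ncal}=M(M^TM)^{-1}M^T$ is the orthogonal projection onto $\col(M)=\col(\tilde Z_{(r+1):p})\oplus\col(\tilde W_{(r+1):K})$, which is orthogonal to $\rcal$, while $\mathcal{P}_{\rcal}$ is the orthogonal projection onto $\rcal$; their sum is the orthogonal projection onto $\rcal\oplus\col(M)=\col(X)+S_K(P)$. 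Since model \eqref{eq:RNC} places $\e Y$ in this subspace, $H\e Y=\e Y$ and $(I-H)$ is an orthogonal projection of rank $n-p-K+r$, so $\|(I-H)\epsilon\|^2\sim\sigma^2\chi^2_{n-p-K+r}$; with $H$ in place of $\hat H$ the estimator would be exactly unbiased and the conclusion would be immediate from a $\chi^2$ tail bound. The whole task is therefore to quantify the replacement of $H$ by $\hat H$.

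The heart of the argument would be a perturbation bound
\[\|\hat H-H\|\ \le\ \frac{C\,\tau_n}{\min\{(1-\sigma_{r+1})^3,\ \sigma_{r+s}^3\}}.\]
I would obtain it by propagating $\|\hat\Sigma-\Sigma\|\le\tau_n$ (from \eqref{eq:singular value bound}) together with Wedin-type bounds on the singular subspaces of $Z^T\hat W$ through every object built in Algorithm~\ref{algo:estimation-most-general}: first the rotated bases $\hat Z=Z\hat U$, $\breve W=\hat W\hat V$; then $\hat M=(\hat Z_{(r+1):p},\breve W_{(r+1):K})$ and its Gram matrix, where ill-conditioning enters through $\|(M^TM)^{-1}\|=(1-\sigma_{r+1})^{-1}$ and its perturbation costs further powers of that gap; and finally the matrix products defining $\hat{\mathcal P}_{\rcal},\hat{\mathcal P}_{\ccal},\hat{\mathcal P}_{\ncal}$. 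Condition \eqref{eq: singular value gap condition} (the factor $40$) is used precisely here: it keeps the perturbed singular values inside their correct ``buckets'' — $\hat\sigma_1,\dots,\hat\sigma_r$ bounded away from $\sigma_{r+1}$ and $\hat\sigma_{r+s+1},\dots$ bounded away from $\sigma_{r+s}$ — so the rank-$r$ structure of $\rcal$ is preserved and the denominators stay comparable to $1-\sigma_{r+1}$ and $\sigma_{r+s}$. \textbf{Tracking the accumulation of the three powers of $\min\{1-\sigma_{r+1},\sigma_{r+s}\}$ through these steps is the main obstacle;} the remaining manipulations are routine.

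Given that bound, I would finish as follows. Write $Y-\hat H Y=(I-\hat H)\e Y+(I-\hat H)\epsilon$. Since $H\e Y=\e Y$, the first term equals $(H-\hat H)\e Y$, so $\|(I-\hat H)\e Y\|\le\|\hat H-H\|\,\|\e Y\|\le C\|\hat H-H\|\sqrt{np}$ by Assumption~\ref{ass:signal-scale}. For the second term I would split $(I-\hat H)\epsilon=(I-H)\epsilon+(H-\hat H)\epsilon$: the first piece satisfies $|\,\|(I-H)\epsilon\|^2-\sigma^2(n-p-K+r)\,|\lesssim\sigma^2\sqrt{(n-p-K+r)\log n}$ with high probability by a Laurent--Massart $\chi^2$ tail bound, and the second has norm at most $\|\hat H-H\|\,\|\epsilon\|\lesssim\|\hat H-H\|\,\sigma\sqrt n$ on the high-probability event $\{\|\epsilon\|^2\le2\sigma^2 n\}$. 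Expanding $\|Y-\hat H Y\|^2$, bounding each cross term by Cauchy--Schwarz, using $\|\hat H-H\|=O(1)$ to absorb the quadratic term $\|\hat H-H\|^2 n(\sigma^2+p)$ into the linear one, and using $\sigma\sqrt p\le\tfrac12(\sigma^2+p)$ to collapse the mixed quantities, would give
\[\big|\,\|Y-\hat H Y\|^2-\sigma^2(n-p-K+r)\,\big|\ \lesssim\ \sigma^2\sqrt{(n-p-K+r)\log n}\ +\ \|\hat H-H\|\,n(\sigma^2+p).\]
Dividing by $n-p-K+r$ and inserting the perturbation bound for $\|\hat H-H\|$ yields the stated estimate, the $\chi^2$ fluctuation being of lower order. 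Finally, under Assumption~\ref{ass:small-perturbation} the prefactor $\tau_n/\min\{(1-\sigma_{r+1})^3,\sigma_{r+s}^3\}$ is $o(1/\sqrt{np})$, so when $p+K=o(n)$ both terms on the right tend to $0$ and $\hat\sigma^2\to\sigma^2$.
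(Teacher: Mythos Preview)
Your proposal is correct and follows the same architecture as the paper: first establish
\[
\|\hat H-H\|\le \frac{C\,\tau_n}{\min\{(1-\sigma_{r+1})^3,\sigma_{r+s}^3\}}
\]
by controlling each of $\|\hat{\mathcal P}_{\rcal}-\mathcal P_{\rcal}\|$, $\|\hat{\mathcal P}_{\ccal}-\mathcal P_{\ccal}\|$, $\|\hat{\mathcal P}_{\ncal}-\mathcal P_{\ncal}\|$ via singular-vector perturbation of $Z^T\hat W\hat W^T$ versus $Z^TWW^T$ (the paper packages this as Lemma~\ref{lem: projection bound} and Corollary~\ref{cor:pertubation of projections}), and then transfer the bound to $\hat\sigma^2$.

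The one execution difference worth noting is the final step. The paper observes that $\hat H$, like $H$, is an \emph{orthogonal} projection: $\hat{\mathcal P}_{\ccal}+\hat{\mathcal P}_{\ncal}=\hat M(\hat M^T\hat M)^{-1}\hat M^T$ projects onto $\col(\hat M)$, and $\hat Z_{1:r}^T\hat M=0$ because $\hat Z^T\breve W=\hat\Sigma$ is diagonal, so $\hat{\mathcal P}_{\rcal}$ and $\hat{\mathcal P}_{\ccal}+\hat{\mathcal P}_{\ncal}$ project onto orthogonal subspaces. This gives the one-line bound
\[
\big|\,\|\hat H^\perp Y\|^2-\|H^\perp Y\|^2\,\big| \;=\; \big|Y^T(H-\hat H)Y\big| \;\le\; \|\hat H-H\|\,\|Y\|^2,
\]
avoiding your mean/noise decomposition and cross-term bookkeeping. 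On the other hand, your explicit $\chi^2$ tail bound for $\|H^\perp\epsilon\|^2$ is more careful than the paper, which simply identifies $\sigma^2$ with $\|H^\perp Y\|^2/(n-p-K+r)$ in its displayed chain; your route makes the fluctuation term visible and confirms it is of lower order for the consistency conclusion.
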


The covariate effects $\beta$ will be the central target for inference. The first result is the following bound on the bias of $\hat{\beta}$, defined to be $\norm{\e(\hat{\beta}) - \beta}_{\infty}$. 

\begin{prop}[The bias of $\hat{\beta}$]\label{prop:beta-bias}
If condition \eqref{eq: singular value gap condition} holds then there exists a constant $C>0$ such that 
$$
\norm{\e(\hat{\beta}) - \beta} \le \frac{C\tau_n}{\min\{(1-\sigma_{r+1})^3,\sigma_{r+s}^3\}} \cdot\|(X^TX)^{-1}\|\cdot\|X\|\cdot\|X\beta+X\theta+\alpha\|.
$$
In particular, if Assumptions~\ref{ass:signal-scale},  \ref{ass:weak-dependence} and \ref{ass:small-perturbation} hold then the bias of $\hat{\beta}$ is of order $o(1/\sqrt{n})$.
\end{prop}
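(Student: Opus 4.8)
The plan is to reduce the bias of $\hat\beta$ to a single operator-norm perturbation bound for the oblique projection $\hat{\mathcal{P}}_{\ccal}$, and then read off the rate from the design assumptions. Since $Y=\e Y+\epsilon$ with $\e\epsilon=0$ and everything else in Algorithm~\ref{algo:estimation-most-general} is a fixed function of $X,\hat P$, step~4 gives $\e(\hat\beta)=(X^TX)^{-1}X^T\hat{\mathcal{P}}_{\ccal}\e Y$, while the population identity in \eqref{eq:alpha beta proj form} reads $\beta=(X^TX)^{-1}X^T\mathcal{P}_{\ccal}\e Y$. Subtracting and using submultiplicativity of the spectral norm,
\[
\norm{\e(\hat\beta)-\beta}=\norm{(X^TX)^{-1}X^T(\hat{\mathcal{P}}_{\ccal}-\mathcal{P}_{\ccal})\e Y}\le \norm{(X^TX)^{-1}}\cdot\norm{X}\cdot\norm{X\beta+X\theta+\alpha}\cdot\norm{\hat{\mathcal{P}}_{\ccal}-\mathcal{P}_{\ccal}},
\]
so the first display of the proposition follows at once from the key estimate $\norm{\hat{\mathcal{P}}_{\ccal}-\mathcal{P}_{\ccal}}\le C\tau_n/\min\{(1-\sigma_{r+1})^3,\sigma_{r+s}^3\}$, which I would isolate as a lemma.

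To prove that lemma I would first simplify the oblique projections. Writing $\bar Z=\tilde Z_{(r+1):p}$, $N=\tilde W_{(r+1):K}$ and $G=\bar Z^TN$ — which by \eqref{eq:tilde Z W} and \eqref{eq:svd population} equals the lower block of $\Sigma$, so $\norm{G}=\sigma_{r+1}<1$ under \eqref{eq: singular value gap condition} — block inversion of $M^TM$ together with $\bar Z\perp\rcal$ and $\col(N)^\perp=\rcal\oplus S_K(P)^\perp$ gives $\mathcal{P}_{\ccal}=\bar Z(I-GG^T)^{-1}\bar Z^T(I-WW^T)$; the same computation, using $\hat Z_{(r+1):p}^T\breve W=\hat\Sigma$ (hence $\hat Z_{(r+1):p}\perp\col(\breve W_{1:r})$), gives $\hat{\mathcal{P}}_{\ccal}=\hat{\bar Z}(I-\hat G\hat G^T)^{-1}\hat{\bar Z}^T(I-\hat W\hat W^T)$ with $\hat{\bar Z}=\hat Z_{(r+1):p}$ and $\hat G=\hat{\bar Z}^T\breve W_{(r+1):K}$.

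Setting $A=\bar Z(I-GG^T)^{-1}\bar Z^T$ and $\hat A=\hat{\bar Z}(I-\hat G\hat G^T)^{-1}\hat{\bar Z}^T$, I would split $\hat{\mathcal{P}}_{\ccal}-\mathcal{P}_{\ccal}=(\hat A-A)(I-\hat W\hat W^T)+A(WW^T-\hat W\hat W^T)$. For the first term, $\norm{I-\hat W\hat W^T}=1$ and $\norm{\hat A-A}$ is controlled by combining (i) a Davis--Kahan/Wedin bound $\lesssim \tau_n/(1-\sigma_{r+1})$ for the perturbation of the subspace $\col(U_{(r+1):p})$ — whose only relevant eigengap of $Z^TWW^T$ is $1-\sigma_{r+1}$, separating its unit singular values from the rest — and (ii) $\norm{(I-\hat G\hat G^T)^{-1}-(I-GG^T)^{-1}}\lesssim \tau_n/(1-\sigma_{r+1})^2$, using $\norm{\hat G-G}\le\norm{\hat\Sigma-\Sigma}\le\tau_n$ from \eqref{eq:singular value bound} and the well-conditioning of $I-GG^T,\ I-\hat G\hat G^T$ guaranteed by \eqref{eq: singular value gap condition}. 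For the second term the crucial point is that $\bar Z=ZU_{(r+1):p}$ lies in $\col(X)=\col(Z)$, so $\bar Z^T(WW^T-\hat W\hat W^T)=U_{(r+1):p}^T(Z^TWW^T-Z^T\hat W\hat W^T)$ has norm at most $\tau_n$ by the very definition \eqref{eq: RUU} of $\tau_n$; hence that term is $\lesssim\tau_n/(1-\sigma_{r+1})$. Adding up gives $\norm{\hat{\mathcal{P}}_{\ccal}-\mathcal{P}_{\ccal}}\lesssim\tau_n/(1-\sigma_{r+1})^2$, which is $\le C\tau_n/\min\{(1-\sigma_{r+1})^3,\sigma_{r+s}^3\}$ since all three quantities lie in $(0,1]$, establishing the lemma (the looser form with $\sigma_{r+s}^3$ being the one needed to unify with the companion bounds for $\mathcal{P}_{\rcal},\mathcal{P}_{\ncal}$).

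Finally, for the $o(1/\sqrt n)$ conclusion I would plug in the assumptions: Assumption~\ref{ass:weak-dependence} gives $\norm{(X^TX)^{-1}}=\lambda_{\max}(\Theta)/n\le 1/(\rho n)$ and $\norm{X}=\sqrt{n/\lambda_{\min}(\Theta)}\le\sqrt{n/\rho}$; Assumption~\ref{ass:signal-scale} gives $\norm{X\beta+X\theta+\alpha}\le C\sqrt{np}$; hence the product of these three factors is $O(\sqrt p)$, and Assumption~\ref{ass:small-perturbation} makes $\tau_n/\min\{(1-\sigma_{r+1})^3,\sigma_{r+s}^3\}=o(1/\sqrt{np})$, so $\norm{\e(\hat\beta)-\beta}=O(\sqrt p)\cdot o(1/\sqrt{np})=o(1/\sqrt n)$. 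The main obstacle is the projection perturbation lemma — specifically the bookkeeping that keeps only $\tau_n$, rather than the possibly large unrestricted distance $\norm{WW^T-\hat W\hat W^T}$, in the bound; this is exactly what forces the simplified oblique-projection identity and the observation that $\bar Z$ sits inside $\col(X)$. Everything else is Weyl's inequality, Davis--Kahan, and routine norm algebra.
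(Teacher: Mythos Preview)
Your overall strategy is correct and the main inequality follows exactly as you say once the projection-perturbation lemma is in hand; the reduction step and the final $o(1/\sqrt n)$ calculation match the paper's argument verbatim. Where you diverge is in proving $\|\hat{\mathcal P}_{\ccal}-\mathcal P_{\ccal}\|\lesssim\tau_n/\min\{(1-\sigma_{r+1})^3,\sigma_{r+s}^3\}$. The paper does this by expanding $\mathcal P_{\ccal}$ term-by-term via Lemma~\ref{lem:unorth proj}(ii), introducing an intermediate $\tilde{\mathcal P}_{\ccal}$ (Lemma~\ref{lem:comparing Phat and Ptilde}), and then bounding six separate pieces $\delta_1,\ldots,\delta_6$ (Lemma~\ref{lem:bounding perturbation unorthogonal proj}); this forces a split of indices $(r+1){:}(r+s)$ from $(r+s+1){:}p$ and brings in the $\sigma_{r+s}$ gap. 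Your compact identity $\mathcal P_{\ccal}=\bar Z(I-GG^T)^{-1}\bar Z^T(I-WW^T)$ is correct (and its sample analogue too, since $\hat{\bar Z}\perp\breve W_{1:r}$), and your two-term decomposition is more economical: it never separates the $s$ nonzero-but-less-than-one singular values from the zero ones, so $\sigma_{r+s}$ simply does not appear. One technical wrinkle: when you bound $\hat A-A$ by ``combining (i) and (ii)'', note that $G$ and $\hat G$ live in different bases ($\bar Z$ vs.\ $\hat{\bar Z}$), so the inequality $\|\hat G-G\|\le\tau_n$ does not directly feed into a bound on $\hat A-A$. The clean fix is to rewrite both in basis-free form as $A=\Pi(I-\Pi WW^T\Pi)^{-1}\Pi$ with $\Pi=\bar Z\bar Z^T$ (and similarly $\hat A$ with $\hat\Pi,\hat W\hat W^T$), then perturb $\Pi$ and $WW^T$ separately via a Neumann-series telescoping; this yields $\|\hat A-A\|\lesssim\tau_n/(1-\sigma_{r+1})^3$ rather than the $(1-\sigma_{r+1})^2$ you state, but since you immediately weaken to the $\min$-form this costs nothing. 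In short: same skeleton, different and somewhat slicker perturbation lemma, yielding a nominally sharper bound on $\mathcal P_{\ccal}$.
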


In general, one may be interested in inferring the contrast $\omega^T\beta$ for a given unit vector $\omega$, such as in comparing covariate effects or making predictions. Let  $\tilde{X} = X/\sqrt{n}$, then from \eqref{eq:alpha beta theta hat}, 
\begin{equation}\label{eq:true-var}
\var(\omega^T\hat{\beta}) = \frac{\sigma^2}{n}\omega^T\Theta \tilde{X}^T\hat{\mathcal{P}}_{\ccal}\hat{\mathcal{P}}_{\ccal}^T\tilde{X}\Theta\omega,
\end{equation}
with $\hat{\mathcal{P}}_{\ccal}$ defined by \eqref{eq:PChat}. According to Proposition~\ref{prop:beta-bias}, a sufficient condition for valid inference of $\omega^T\beta$ is $ \omega^T\Theta \tilde{X}^T\hat{\mathcal{P}}_{\ccal}\hat{\mathcal{P}}_{\ccal}^T\tilde{X}\Theta\omega\ge c$ for some constant $c>0$. Note that this quantity is directly computable in practice. However, we will focus on the population condition with respect to the true matrix ${\mathcal{P}}_{\ccal}$ in the following result.

\begin{thm}[Asymptotic distribution of $\omega^T\hat{\beta}$]\label{thm:general-beta-inference}
If Assumptions~\ref{ass:signal-scale}, \ref{ass:weak-dependence}, \ref{ass:small-perturbation} hold and 
\begin{equation}\label{eq:general-signal-requirement}
\sqrt{np}\cdot\min\{1-\sigma_{r+1},\sigma_{r+s}\}\ge 1.
\end{equation}
For a given unit vector $\omega \in \bR^p$, assume that 
\begin{equation}\label{eq:general-var-requirement}
\|\tilde{Z}_{(r+1):p}^T\tilde{X}\Theta\omega\| \ge c
\end{equation}
for some constant $c>0$ and sufficiently large $n$. Then as $n\rightarrow\infty$, 
$$\p\left(\frac{\omega^T{\hat{\beta}}-\omega^T{\beta}}{\sqrt{\frac{\hat{\sigma}^2}{n}\omega^T\Theta \tilde{X}^T\hat{\mathcal{P}}_\mathcal{C}\hat{\mathcal{P}}_\mathcal{C}^T\tilde{X}\Theta\omega }} \le t\right)\to \Phi(t),$$
where $\Phi$ is the cumulative distribution function of the standard normal distribution.
\end{thm}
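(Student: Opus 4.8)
\emph{Strategy.} I would follow the standard debiased‑inference template: express $\omega^T\hat\beta-\omega^T\beta$ as a deterministic bias plus an \emph{exactly} Gaussian fluctuation — exact because, in this section, $X$ and $\hat P$ (hence $\hat{\mathcal{P}}_{\ccal}$) are nonrandom and the only randomness is $\epsilon\sim N(0,\sigma^2 I)$ — then show the bias is of smaller order than the fluctuation's standard deviation, and finally Studentize using the consistency of $\hat\sigma^2$.

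\emph{Decomposition.} Substituting $Y=X\beta+X\theta+\alpha+\epsilon$ into $\hat\beta=(X^TX)^{-1}X^T\hat{\mathcal{P}}_{\ccal}Y$ (see \eqref{eq:alpha beta theta hat}) gives
$$\omega^T\hat\beta-\omega^T\beta \;=\; b_n \;+\; \omega^T(X^TX)^{-1}X^T\hat{\mathcal{P}}_{\ccal}\epsilon,\qquad b_n:=\omega^T\bigl(\e(\hat\beta)-\beta\bigr),$$
where $b_n$ is a fixed number and, by \eqref{eq:true-var}, the second summand is exactly $N(0,V_n)$ with $V_n=\tfrac{\sigma^2}{n}\,\omega^T\Theta\tilde X^T\hat{\mathcal{P}}_{\ccal}\hat{\mathcal{P}}_{\ccal}^T\tilde X\Theta\omega=\tfrac{\sigma^2}{n}\|\hat{\mathcal{P}}_{\ccal}^T v\|^2$, $\tilde X=X/\sqrt n$, $v:=\tilde X\Theta\omega$, so $\|v\|^2=\omega^T\Theta\omega\le 1/\rho$ by Assumption~\ref{ass:weak-dependence}. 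Hence $\tfrac{\omega^T\hat\beta-\omega^T\beta}{\sqrt{V_n}}$ is exactly $N(b_n/\sqrt{V_n},1)$, and it suffices to prove: (i) $V_n\gtrsim \sigma^2/n$; (ii) $|b_n|=o(1/\sqrt n)$; (iii) $\hat\sigma^2\to\sigma^2$ in probability.

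\emph{The three pieces.} For (i), I would use the explicit form \eqref{eq:PC}: with $M=(\tilde Z_{(r+1):p},\tilde W_{(r+1):K})$ one has $\mathcal{P}_{\ccal}^T v=M(M^TM)^{-1}(\tilde Z_{(r+1):p}^T v,\,0)^T$, and since $\tilde Z^T\tilde W=\Sigma$ (from \eqref{eq:svd population}--\eqref{eq:tilde Z W}), $M^TM=I+N$ with $N$ symmetric, block‑off‑diagonal, of norm $\le\sigma_{r+1}<1$; thus $\lambda_{\max}(M^TM)\le1+\sigma_{r+1}<2$ and
$$\|\mathcal{P}_{\ccal}^T v\|^2=(\tilde Z_{(r+1):p}^T v,\,0)\,(M^TM)^{-1}\,(\tilde Z_{(r+1):p}^T v,\,0)^T\;\ge\;\frac{\|\tilde Z_{(r+1):p}^T v\|^2}{\lambda_{\max}(M^TM)}\;\ge\;\frac{c^2}{2}$$
by assumption \eqref{eq:general-var-requirement}. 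Transferring this to $\hat{\mathcal{P}}_{\ccal}$ requires the operator‑norm perturbation bound that underlies Proposition~\ref{prop:beta-bias}, namely $\|\hat{\mathcal{P}}_{\ccal}-\mathcal{P}_{\ccal}\|\le C\tau_n/\min\{(1-\sigma_{r+1})^3,\sigma_{r+s}^3\}=o(1/\sqrt{np})$ under Assumption~\ref{ass:small-perturbation} (conditions \eqref{eq: singular value gap condition} and \eqref{eq:general-signal-requirement} being what keep $\hat M^T\hat M$ invertible and well‑conditioned, so $\hat{\mathcal{P}}_{\ccal}$ is defined and stable); this yields $\|\hat{\mathcal{P}}_{\ccal}^T v\|\ge c/\sqrt2-o(1)\|v\|$, hence $V_n\ge\sigma^2 c^2/(4n)$ for large $n$. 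For (ii), Proposition~\ref{prop:beta-bias} combined with Assumptions~\ref{ass:signal-scale}--\ref{ass:small-perturbation} and the elementary bounds $\|X\|\le\sqrt{n/\rho}$, $\|(X^TX)^{-1}\|\le1/(\rho n)$, $\|X\beta+X\theta+\alpha\|\le C\sqrt{np}$ gives $|b_n|\le\|\e(\hat\beta)-\beta\|=o(1/\sqrt n)$. For (iii), I would invoke Proposition~\ref{prop:variance-consistency} (its gap hypothesis \eqref{eq: singular value gap condition} being implied by Assumption~\ref{ass:small-perturbation}, under $p+K=o(n)$).

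\emph{Conclusion and main obstacle.} Combining (i) and (ii), $b_n/\sqrt{V_n}=o(1)$, so $\tfrac{\omega^T\hat\beta-\omega^T\beta}{\sqrt{V_n}}\Rightarrow N(0,1)$; since the quantity under the square root in the theorem equals $(\hat\sigma^2/\sigma^2)\,V_n$, the studentized statistic is $\bigl[(\omega^T\hat\beta-\omega^T\beta)/\sqrt{V_n}\bigr]/\sqrt{\hat\sigma^2/\sigma^2}$, and Slutsky's theorem with (iii) gives $\p(\cdot\le t)\to\Phi(t)$. The crux is part (i): carrying the population non‑degeneracy \eqref{eq:general-var-requirement} over to the plug‑in variance $\|\hat{\mathcal{P}}_{\ccal}^T\tilde X\Theta\omega\|^2$ demands a sharp operator‑norm control of the \emph{oblique} projection $\hat{\mathcal{P}}_{\ccal}$ — precisely where the cubic gap factor $\min\{(1-\sigma_{r+1})^3,\sigma_{r+s}^3\}$ and Assumption~\ref{ass:small-perturbation} are indispensable — handled in tandem with the exact Gram identity $M^TM=I+N$ that tames the non‑orthogonality of the two subspaces defining $\mathcal{P}_{\ccal}$.
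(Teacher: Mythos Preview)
Your proposal is correct and follows the same three-step template as the paper: lower-bound the plug-in variance, control the bias via Proposition~\ref{prop:beta-bias}, and Studentize via Proposition~\ref{prop:variance-consistency} and Slutsky. The only substantive difference is in step~(i): you lower-bound $\|\mathcal{P}_{\ccal}^Tv\|^2$ through the Gram identity $(M^TM)^{-1}\succeq(1+\sigma_{r+1})^{-1}I$, whereas the paper invokes Lemma~\ref{lem:unorth proj}(iv) to obtain the sharper $\|\mathcal{P}_{\ccal}^Tv\|^2\ge\|\tilde Z_{(r+1):p}^Tv\|^2$, and then transfers to $\hat{\mathcal{P}}_{\ccal}$ by bounding $\|\hat{\mathcal{P}}_C\hat{\mathcal{P}}_C^T-\mathcal{P}_{\ccal}\mathcal{P}_{\ccal}^T\|$; this latter step picks up an extra factor $\|\mathcal{P}_{\ccal}\|\lesssim 1/(1-\sigma_{r+1})$, and \eqref{eq:general-signal-requirement} is what the paper actually uses to absorb it. Your direct triangle-inequality transfer $\|\hat{\mathcal{P}}_{\ccal}^Tv\|\ge\|\mathcal{P}_{\ccal}^Tv\|-\|\hat{\mathcal{P}}_{\ccal}-\mathcal{P}_{\ccal}\|\|v\|$ is slightly cleaner and, as you have written it, does not require \eqref{eq:general-signal-requirement} at all --- so your parenthetical attribution of \eqref{eq:general-signal-requirement} to the well-conditioning of $\hat M^T\hat M$ is a minor mischaracterization, but the argument itself is sound.
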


Next, we discuss the special case when  $\omega = e_j$, which corresponds to making inference for the individual parameter $\beta_j = e_j^T\beta$ for some $1\le j\le p$. 
%With this relation, it is not hard to discover the following sufficient condition for valid inference of $\beta_i$.
In this case, condition \eqref{eq:general-var-requirement} has a simple interpretation.  

\begin{coro}[Valid inference of individual regression coefficient]\label{coro:inference-individual-beta}
Let ${\eta}_j$ be the partial residual from regressing $\tilde{X}_j$ against all other columns of $\tilde{X}$ and $R_j$ be the corresponding partial $R^2$ for this regression. 
%Notice that there is always $\norm{{\eta}_j} \le \norm{\tilde{X}_j} = 1$. 
If Assumption~\ref{ass:signal-scale}, \ref{ass:weak-dependence}, \ref{ass:small-perturbation} hold and there exists a constant $c>0$ such that
\begin{eqnarray}\label{eq:nontrivial-partial-correlation}
\|\tilde{Z}_{(r+1):p}^T {\eta}_j\| / (1-R_j^2) \ge c.
\end{eqnarray}
%assume there exist constants $c_1, c_2, c_3 > 0$ such that
%\begin{equation}\label{eq:nontrivial-parameter-space}
%\norm{\tilde{Z}_{(r+1):p}^T\tilde{X}_{j}}^2 \ge c_1 \norm{\tilde{X}_{j}}^2 = c_1 > c_2 + \frac{\sigma_{r+1}^4}{(1-\sigma_{r+1}^2)^2},
%\end{equation}
%and 
%\begin{equation}\label{eq:weak-partial-dependence}
% \Theta_{jj} - \frac{1+\sigma_{r+1}^2}{c_1(1-\sigma_{r+1}^2)-\sigma_{r+1}^2}\sum_{k\ne j}|\Theta_{kj}| \ge c_3,
%\end{equation}
Then, as $n\rightarrow\infty$, 
$$\p\left(\frac{\hat{\beta}_j-\beta_j}{\sqrt{\frac{\hat{\sigma}^2}{n}e_j^T\Theta \tilde{X}^T\hat{\mathcal{P}}_\mathcal{C}\hat{\mathcal{P}}_\mathcal{C}^T\tilde{X}\Theta e_j}} \le t\right)\to \Phi(t),$$
where $\Phi$ is the cumulative distribution function of the standard normal distribution.
\end{coro}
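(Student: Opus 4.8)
The plan is to derive Corollary~\ref{coro:inference-individual-beta} as a direct specialization of Theorem~\ref{thm:general-beta-inference} to the choice $\omega = e_j$, so the entire task reduces to verifying that the variance condition~\eqref{eq:nontrivial-partial-correlation} implies condition~\eqref{eq:general-var-requirement} with $\omega = e_j$ (possibly with a different constant $c$). Everything else---Assumptions~\ref{ass:signal-scale}, \ref{ass:weak-dependence}, \ref{ass:small-perturbation}, and the signal requirement~\eqref{eq:general-signal-requirement}---is either assumed directly in the corollary or should be shown to follow automatically from the others; in particular I would note that~\eqref{eq:general-signal-requirement} is implied once we have a constant lower bound on $\min\{1-\sigma_{r+1},\sigma_{r+s}\}$ (which is already embedded in Assumption~\ref{ass:small-perturbation} for $n$ large, since $\tau_n/\min\{\cdot\}^3 = o(1/\sqrt{np})$ forces the denominator to stay bounded away from $0$), so that $\sqrt{np}\cdot\min\{1-\sigma_{r+1},\sigma_{r+s}\} \to \infty$. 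Once~\eqref{eq:general-var-requirement} is established, Theorem~\ref{thm:general-beta-inference} applied verbatim with $\omega = e_j$ yields the stated asymptotic normality.

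So the heart of the proof is the identity (up to constants) between $\tilde Z_{(r+1):p}^T \tilde X \Theta e_j$ and $\tilde Z_{(r+1):p}^T \eta_j / (1-R_j^2)$. First I would recall that $\Theta = (\tilde X^T \tilde X)^{-1} = (X^TX/n)^{-1}$, so $\tilde X \Theta e_j$ is exactly the vector of fitted-value coefficients picking out the $j$th column of the Moore--Penrose-type inverse; by the standard regression-by-successive-orthogonalization (Frisch--Waugh--Lovell) identity, $\tilde X \Theta e_j = \eta_j / \|\eta_j\|^2$, where $\eta_j$ is the residual of $\tilde X_j$ regressed on the other columns of $\tilde X$. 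Then $\|\eta_j\|^2$ is related to the partial $R^2$: since the columns of $\tilde X$ have norm $1$ (as $\|X_j\| = \sqrt n$), we have $\|\eta_j\|^2 = 1 - R_j^2$ in the paper's normalization (the partial $R^2$ being the fraction of $\tilde X_j$'s variation explained by the others). Hence $\tilde Z_{(r+1):p}^T \tilde X \Theta e_j = \tilde Z_{(r+1):p}^T \eta_j / (1-R_j^2)$ exactly, and the two conditions coincide. I would write this out carefully, being explicit about the normalization conventions so that the factor $(1-R_j^2)$ in the denominator matches rather than $(1-R_j^2)$ versus $(1-R_j^2)^{1/2}$ or $\|\eta_j\|$.

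The main obstacle, and the place I would spend the most care, is pinning down the exact normalization so the algebra lands on $\|\tilde Z_{(r+1):p}^T\eta_j\|/(1-R_j^2)$ rather than some cousin of it. There are a few competing conventions: whether $R_j$ denotes the multiple correlation or its square, whether $\eta_j$ is normalized to unit norm or left as the raw residual, and whether the $\sqrt n$ scaling in $\tilde X = X/\sqrt n$ has been fully absorbed. The cleanest route is to compute $\tilde X\Theta e_j$ via Frisch--Waugh: write $\tilde X = (\tilde X_{-j}, \tilde X_j)$, block-invert $\tilde X^T\tilde X$, and read off that the $j$th column of $(\tilde X^T\tilde X)^{-1}$ is $(\text{something})\cdot(-\tilde X_{-j}^+ \tilde X_j, 1)^T / \|\eta_j\|^2$, so that $\tilde X \Theta e_j = (\tilde X_j - \tilde X_{-j}\tilde X_{-j}^+\tilde X_j)/\|\eta_j\|^2 = \eta_j/\|\eta_j\|^2$. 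Combined with $\|\eta_j\|^2 = 1 - R_j^2$ under the paper's scaling, this gives the claimed equivalence; applying Theorem~\ref{thm:general-beta-inference} then finishes the proof. A secondary, minor point worth a sentence is confirming that $c$ in~\eqref{eq:nontrivial-partial-correlation} being a positive constant translates to $c$ in~\eqref{eq:general-var-requirement} being a positive constant, which is immediate since the two quantities are literally equal.
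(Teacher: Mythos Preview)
Your approach is essentially the paper's: reduce to Theorem~\ref{thm:general-beta-inference} with $\omega=e_j$ by showing $\tilde Z_{(r+1):p}^T\tilde X\Theta e_j=\tilde Z_{(r+1):p}^T\eta_j/(1-R_j^2)$. The paper reaches this identity by writing $e_j=(\tilde X_j^T\eta_j)^{-1}\tilde X^T\eta_j$ and then noting $\tilde X(\tilde X^T\tilde X)^{-1}\tilde X^T=\tilde Z\tilde Z^T$ so that $\tilde Z_{(r+1):p}^T\tilde X\Theta e_j=(\tilde X_j^T\eta_j)^{-1}\tilde Z_{(r+1):p}^T\eta_j$, followed by $|\tilde X_j^T\eta_j|=\|\eta_j\|^2=1-R_j^2$; your Frisch--Waugh route $\tilde X\Theta e_j=\eta_j/\|\eta_j\|^2$ is an equivalent way to the same endpoint.

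One small correction: your argument that Assumption~\ref{ass:small-perturbation} forces $\min\{1-\sigma_{r+1},\sigma_{r+s}\}$ to be bounded away from zero (and hence~\eqref{eq:general-signal-requirement}) is not valid. The condition $\tau_n/\min\{\cdot\}^3=o(1/\sqrt{np})$ places no lower bound on the denominator when $\tau_n$ itself is very small; in particular $\tau_n=0$ is compatible with $\min\{\cdot\}\to 0$. The paper's own proof simply invokes Theorem~\ref{thm:general-beta-inference} without re-deriving~\eqref{eq:general-signal-requirement}, so that hypothesis is implicitly carried over rather than deduced from~\ref{ass:small-perturbation}. You should either assume~\eqref{eq:general-signal-requirement} alongside the others or drop the claim that it follows automatically.
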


To understand condition \eqref{eq:nontrivial-partial-correlation}, notice that ${\eta}_j$ is the signal from $X_j$, after adjusting for the other covariates, while $1-R_j^2$ is the proportion of this additional  information from $\tilde{X}_j$. Since $\|\tilde{Z}_{(r+1):p}^T {\eta}_j\| $ is the  magnitude of ${\eta}_j$'s projection on the subspace $\col(\tilde{Z}_{(r+1):p})$, \eqref{eq:nontrivial-partial-correlation} essentially requires that the additional information contributed by $X_j$ after conditioning on other covariates should nontrivially align with $\col(\tilde{Z}_{(r+1):p})$, the subspace on which the effect of $\beta_j$ lies. 
This type of condition is needed because if $\tilde{Z}_{(r+1):p}^T{\eta}_j=0 $, the parameter space of $\beta_j$ degenerates to $\{0\}$ and the inference is not meaningful.  Notice that requirement \eqref{eq:nontrivial-partial-correlation} is completely different from assuming that $|\beta_j|$ is large, and it does allow $|\beta_j|$ to be zero or small.

The next result shows the consistency of estimating $\alpha$. It also provides bounds for the bias and variance of $\hat{\gamma}$, which are later used for testing the hypothesis $H_0:\alpha=0$.    

\begin{prop}[Consistency of $\hat{\alpha}$ and $\hat{\gamma}$]\label{prop: bias variance of gamma hat}
If Assumption~\ref{ass:signal-scale} and \eqref{eq: singular value gap condition} hold. Then there exists a constant $C>0$ such that with high probability, the following hold. 
\begin{enumerate}
\item Let $\|W\|_{2\rightarrow\infty}$ be the maximum of the Euclidean norms of the row vectors of $W$, then
$$
\|\hat{\alpha}-\alpha\|_\infty \le \frac{C}{\min\{(1-\sigma_{r+1})^3,\sigma_{r+s}^3\}}\cdot\left(\|W\|_{2\rightarrow\infty}\sqrt{K\sigma^2}\log n+\tau_n\sqrt{n(p+\sigma^2)}\right).
$$
\item There exists an orthogonal matrix $O\in\mathbb{R}^{(K-r)\times(K-r)}$ such that  
\begin{eqnarray*}
\left\|\mathbb{E}\hat{\gamma}-O\gamma
\right\|\le \frac{C\tau_n\sqrt{np}}{\min\{(1-\sigma_{r+1})^3,\sigma_{r+s}^3\}}.
\end{eqnarray*}
\item  Let $\Sigma_{\hat{\gamma}}$ be the covariance matrix of $\hat{\gamma}$, then
\begin{eqnarray*}
\left\|\Sigma_{\hat{\gamma}} - \sigma^2 
\begin{pmatrix}
(I_{s}-{\Gamma}^2)^{-1} & 0\\
0 & I_{K-r-s} 
\end{pmatrix} \right\| 
&\le& \frac{C\tau_n}{\min\{(1-\sigma_{r+1})^2,\sigma_{r+s}^2\}},
\end{eqnarray*}
where $\Gamma = \diag(\sigma_{r+1}, \cdots, \sigma_{r+s})$.
\end{enumerate}
\end{prop}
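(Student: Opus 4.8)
\emph{Proof strategy.} The plan is to reduce all three claims to comparisons between the sample oblique projection $\hat{\mathcal{P}}_{\ncal}$ of \eqref{eq:PNhat} and its population version $\mathcal{P}_{\ncal}$ of \eqref{eq:PN}, so I would first record the algebra that makes this tractable. Since $Z$ has orthonormal columns and $\hat U,\hat V$ are orthogonal, $\hat Z^T\hat Z=I_p$, $\breve W^T\breve W=I_K$ and $\hat Z^T\breve W=\hat\Sigma$; writing $\hat\Sigma_\ast:=\hat\Sigma_{(r+1):p,(r+1):K}$, which is rectangular-diagonal with entries $\hat\sigma_{r+1},\hat\sigma_{r+2},\dots$ (hence $\|\hat\Sigma_\ast\|=\hat\sigma_{r+1}$), one gets $\hat M^T\hat M=\begin{pmatrix}I_{p-r}& \hat\Sigma_\ast\\ \hat\Sigma_\ast^T& I_{K-r}\end{pmatrix}$, which is invertible because \eqref{eq:singular value bound}, Weyl's inequality and the gap condition \eqref{eq: singular value gap condition} force $\hat\sigma_{r+1}$ to be bounded away from $1$. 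Block inversion together with the cancellation $(\hat M^T\hat M)^{-1}\hat M^T\hat M=I$ then gives the exact identity
\[
\breve W_{(r+1):K}^T\,\hat{\mathcal{P}}_{\ncal}\hat{\mathcal{P}}_{\ncal}^T\,\breve W_{(r+1):K}=(I_{K-r}-\hat\Sigma_\ast^T\hat\Sigma_\ast)^{-1},
\]
and the population counterpart equals $(I_{K-r}-\Sigma_\ast^T\Sigma_\ast)^{-1}=\diag\!\big((I_s-\Gamma^2)^{-1},\,I_{K-r-s}\big)$ by the block form of the singular values in \eqref{eq:subspace angles}. Because $\epsilon\sim N(0,\sigma^2 I)$ and $\hat\gamma$ is a linear image of $Y$ (cf. \eqref{eq:gamma hat}), Claim~3 becomes exactly $\sigma^2\big\|(I-\hat\Sigma_\ast^T\hat\Sigma_\ast)^{-1}-(I-\Sigma_\ast^T\Sigma_\ast)^{-1}\big\|$; I would bound it by the resolvent identity $\|A^{-1}-B^{-1}\|\le\|A^{-1}\|\|B^{-1}\|\|A-B\|$, using $\|\hat\Sigma_\ast^T\hat\Sigma_\ast-\Sigma_\ast^T\Sigma_\ast\|\le 2\|\hat\Sigma-\Sigma\|\le 2\tau_n$ from \eqref{eq:singular value bound} and controlling both inverses by $C/(1-\sigma_{r+1})$ via \eqref{eq: singular value gap condition}.

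For Claim~1 I would decompose $\hat\alpha-\alpha=(\hat{\mathcal{P}}_{\ncal}-\mathcal{P}_{\ncal})\,\e Y+\hat{\mathcal{P}}_{\ncal}\epsilon$. The deterministic term satisfies $\|(\hat{\mathcal{P}}_{\ncal}-\mathcal{P}_{\ncal})\e Y\|_\infty\le\|\hat{\mathcal{P}}_{\ncal}-\mathcal{P}_{\ncal}\|\,\|\e Y\|$, where $\|\hat{\mathcal{P}}_{\ncal}-\mathcal{P}_{\ncal}\|\le C\tau_n/\min\{(1-\sigma_{r+1})^3,\sigma_{r+s}^3\}$ is obtained exactly as the projection-perturbation estimate behind Proposition~\ref{prop:beta-bias}, and $\|\e Y\|=\|X\beta+X\theta+\alpha\|\le C\sqrt{np}$ by Assumption~\ref{ass:signal-scale}; this supplies the $\tau_n\sqrt{n(p+\sigma^2)}$ part. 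For the noise term I would split once more as $\hat{\mathcal{P}}_{\ncal}\epsilon=\mathcal{P}_{\ncal}\epsilon+(\hat{\mathcal{P}}_{\ncal}-\mathcal{P}_{\ncal})\epsilon$: the second piece is $\le\|\hat{\mathcal{P}}_{\ncal}-\mathcal{P}_{\ncal}\|\,\|\epsilon\|\lesssim(\tau_n/\min\{\cdots\})\,\sigma\sqrt n$ with high probability, again absorbed into the $\tau_n\sqrt{n(p+\sigma^2)}$ part; for $\mathcal{P}_{\ncal}\epsilon$, each row of $\mathcal{P}_{\ncal}=(0,\tilde W_{(r+1):K})(M^TM)^{-1}M^T$ lies in $\col(\tilde W_{(r+1):K})$ and hence has Euclidean norm $\le\|W\|_{2\to\infty}\|(M^TM)^{-1}\|\|M\|\lesssim\|W\|_{2\to\infty}/(1-\sigma_{r+1})$, so $e_i^T\mathcal{P}_{\ncal}\epsilon$ is Gaussian with that standard deviation and a union bound over the $n$ coordinates gives $\|\mathcal{P}_{\ncal}\epsilon\|_\infty\lesssim\|W\|_{2\to\infty}\sigma\sqrt{\log n}/(1-\sigma_{r+1})$, which is dominated by the claimed $\|W\|_{2\to\infty}\sqrt{K\sigma^2}\,\log n/\min\{\cdots\}$ term.

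Claim~2 I would get by writing $\e\hat\gamma=\breve W_{(r+1):K}^T\,\e\hat\alpha$ and, using $\alpha=\tilde W_{(r+1):K}\gamma$ (cf. \eqref{eq: definition of gamma}), taking $O$ to be the orthogonal polar factor of $\breve W_{(r+1):K}^T\tilde W_{(r+1):K}$, so that
\[
\e\hat\gamma-O\gamma=\breve W_{(r+1):K}^T(\e\hat\alpha-\alpha)+\big(\breve W_{(r+1):K}^T\tilde W_{(r+1):K}-O\big)\gamma .
\]
The first term is at most $\|\e\hat\alpha-\alpha\|=\|(\hat{\mathcal{P}}_{\ncal}-\mathcal{P}_{\ncal})\e Y\|\le C\tau_n\sqrt{np}/\min\{\cdots\}$, by the analysis of Claim~1. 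For the second, a Wedin-type bound for the singular subspaces of $Z^TWW^T$ and $Z^T\hat W\hat W^T$ — whose difference has spectral norm $\tau_n$ by \eqref{eq: RUU} — controls $\|\breve W_{(r+1):K}^T\tilde W_{(r+1):K}-O\|$ by a multiple of $\tau_n$ divided by a power of $\min\{1-\sigma_{r+1},\sigma_{r+s}\}$, while $\|\gamma\|=\|\alpha\|$ is bounded by $\sqrt{np}$ up to a factor of the same type, using Assumption~\ref{ass:signal-scale} together with the minimal-angle inequality $|\langle X\beta,\alpha\rangle|\le\sigma_{r+1}\|X\beta\|\|\alpha\|$; the gap condition \eqref{eq: singular value gap condition} then collapses the product to the stated $\tau_n\sqrt{np}/\min\{\cdots\}$ bound.

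The real work is not in these three deductions but in the shared perturbation estimates they borrow — principally $\|\hat{\mathcal{P}}_{\ncal}-\mathcal{P}_{\ncal}\|$ and the subspace bound for $\breve W_{(r+1):K}$ — whose denominators $\min\{(1-\sigma_{r+1})^3,\sigma_{r+s}^3\}$ come from propagating $\|\hat\Sigma-\Sigma\|\le\tau_n$ through the block-structured $\hat M$, and which are exactly the estimates already developed for Propositions~\ref{prop:variance-consistency}--\ref{prop:beta-bias}. Given those, the only genuinely new ingredient I expect to need is the coordinatewise (hence $\|\cdot\|_\infty$) control of $\mathcal{P}_{\ncal}\epsilon$ through the eigenvector incoherence $\|W\|_{2\to\infty}$ of $P$; and Claim~3 becomes almost immediate once the exact identity $\breve W_{(r+1):K}^T\hat{\mathcal{P}}_{\ncal}\hat{\mathcal{P}}_{\ncal}^T\breve W_{(r+1):K}=(I-\hat\Sigma_\ast^T\hat\Sigma_\ast)^{-1}$ is observed.
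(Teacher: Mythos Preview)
Your proposal is correct and follows the same blueprint as the paper: split $\hat\alpha-\alpha$ into a projection-perturbation piece controlled by $\|\hat{\mathcal P}_{\ncal}-\mathcal P_{\ncal}\|$ (Corollary~\ref{cor:pertubation of projections}) and a noise piece $\mathcal P_{\ncal}\epsilon$ controlled through $\|W\|_{2\to\infty}$; then push the same perturbation estimate through $\breve W_{(r+1):K}^T\hat{\mathcal P}_{\ncal}$ for the bias and covariance of $\hat\gamma$. Two points of comparison are worth recording. First, for Claim~2 the paper does something slicker than your polar-factor route: it sets $O=\breve W_{(r+1):K}^T\tilde W_{(r+1):K}$ directly, so that after replacing $\hat{\mathcal P}_{\ncal}$ by $\mathcal P_{\ncal}$ the residual is
\[
\big(\breve W_{(r+1):K}^T-O\,\tilde W_{(r+1):K}^T\big)\alpha=\breve W_{(r+1):K}^T\big(I-\tilde W_{(r+1):K}\tilde W_{(r+1):K}^T\big)\alpha=0,
\]
because $\alpha\in\col(\tilde W_{(r+1):K})$. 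This eliminates your second term outright and avoids the Wedin step and the separate bound on $\|\gamma\|$; the trade-off is that the paper's $O$ is not verified to be exactly orthogonal, whereas your polar factor is. Second, your Schur-complement identity $\breve W_{(r+1):K}^T\hat{\mathcal P}_{\ncal}\hat{\mathcal P}_{\ncal}^T\breve W_{(r+1):K}=(I_{K-r}-\hat\Sigma_\ast^T\hat\Sigma_\ast)^{-1}$ is exact and a bit cleaner than the paper's route, which computes $\breve W_{(r+1):K}^T\hat{\mathcal P}_{\ncal}$ explicitly and then uses $\hat Z^T\breve W=\hat\Sigma$ to zero out the off-diagonal blocks; both arrive at the same resolvent-type bound. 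For Claim~1 the paper factors $\mathcal P_{\ncal}=WQ$ and bounds the low-dimensional Gaussian $\|Q\epsilon\|$ before converting via $\|W\|_{2\to\infty}$, while your coordinatewise union bound is an equally valid (and marginally tighter) alternative.
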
 

Proposition~\ref{prop: bias variance of gamma hat} shows that $\hat{\alpha}$ is an element-wise consistent estimate of $\alpha$ if Assumption~\ref{ass:small-perturbation} holds, $\min\{(1-\sigma_{r+1})^3,\sigma_{r+s}^3\}$ is bounded away from zero, and $\|W\|_{2\to\infty}\sqrt{K\sigma^2}\log n = o(1)$. The last requirement holds when $P$ is an incoherent matrix --- the rows of $W$ are in similar magnitudes ---
 a commonly observed property introduced by \cite{candes2009exact, candes2010power}. The other two inequalities of Proposition~\ref{prop: bias variance of gamma hat} show that the bounds for both bias and variance of $\hat{\gamma}$ are of order $o(1)$ if Assumption~\ref{ass:small-perturbation} holds.  These properties directly lead to the validity of the chi-squared test for network effects. 

\begin{thm}[Chi-squared test for $\alpha$]\label{thm:chisq-test}
If Assumptions~\ref{ass:signal-scale} and \ref{ass:small-perturbation} hold, $K$ is a fixed integer, and $p=o(n)$. Then as $n\rightarrow\infty$, the statistic $\norm{\hat{\gamma}_0}^2$ constructed in Algorithm~\ref{algo:estimation-most-general} satisfies  
$$\norm{\hat{\gamma}_0}^2 \xrightarrow[]{d} \chi^2_{K},$$
where $\chi^2_{K}$ is a random variable that follows the chi-squared distribution with $K$ degrees of freedom and $\xrightarrow[]{d}$ denotes the convergence in distribution.
\end{thm}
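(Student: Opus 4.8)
The plan is to write $\hat\gamma$ as a deterministic bias vector plus a Gaussian noise vector, whiten it by the population covariance, and conclude via Slutsky's theorem, feeding in the bias/variance control of Proposition~\ref{prop: bias variance of gamma hat} and the variance–estimate consistency of Proposition~\ref{prop:variance-consistency}. The structural fact that makes this painless is that, in the fixed-$A$ setting of this section, $\hat P$ (and hence $\breve W$ and $\hat{\mathcal{P}}_{\ncal}$) is non-random, so the sample covariance $\hat\Sigma_{\hat\gamma}$ and the exact covariance $\Sigma_{\hat\gamma}$ of $\hat\gamma$ differ only by the scalar $\hat\sigma^2/\sigma^2$; consequently $\hat\gamma_0=\hat\Sigma_{\hat\gamma}^{-1/2}\hat\gamma=(\sigma/\hat\sigma)\,\Sigma_{\hat\gamma}^{-1/2}\hat\gamma$, and it suffices to analyze $\Sigma_{\hat\gamma}^{-1/2}\hat\gamma$ and the scalar $\hat\sigma^2/\sigma^2$ separately.

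First I would split off the noise. With $Y=\e Y+\epsilon$, write $\hat\gamma=b+g$, where $b:=\breve W_{(r+1):K}^{T}\hat{\mathcal{P}}_{\ncal}\e Y=\e\hat\gamma$ is deterministic and $g:=\breve W_{(r+1):K}^{T}\hat{\mathcal{P}}_{\ncal}\epsilon$. Since $\epsilon\sim N(0,\sigma^2 I)$ and everything multiplying it is fixed, $g\sim N(0,\Sigma_{\hat\gamma})$ \emph{exactly}, with $\Sigma_{\hat\gamma}=\sigma^2\breve W_{(r+1):K}^{T}\hat{\mathcal{P}}_{\ncal}\hat{\mathcal{P}}_{\ncal}^{T}\breve W_{(r+1):K}$. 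Part~3 of Proposition~\ref{prop: bias variance of gamma hat} shows that under Assumption~\ref{ass:small-perturbation} this matrix is $o(1)$-close in spectral norm to $\sigma^2\,\diag\!\big((I_s-\Gamma^2)^{-1},I_{K-r-s}\big)$, whose eigenvalues are bounded below by $\sigma^2$; hence for all large $n$, $\Sigma_{\hat\gamma}$ is invertible with $\|\Sigma_{\hat\gamma}^{-1/2}\|=O(1/\sigma)$ and $\Sigma_{\hat\gamma}^{-1/2}g\sim N(0,I_{K-r})$ exactly.

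Next I would control the bias. Under $H_0:\alpha=0$, the identity $\alpha=\tilde W_{(r+1):K}\gamma$ in \eqref{eq: definition of gamma} together with the full column rank of $\tilde W_{(r+1):K}$ forces $\gamma=0$; therefore part~2 of Proposition~\ref{prop: bias variance of gamma hat} (whose hypothesis \eqref{eq: singular value gap condition} is implied by Assumption~\ref{ass:small-perturbation} for large $n$, since $1-\sigma_{r+1},\sigma_{r+s}\le1$) gives $\|b\|=\|\e\hat\gamma-O\gamma\|\le C\tau_n\sqrt{np}/\min\{(1-\sigma_{r+1})^3,\sigma_{r+s}^3\}=o(1)$ under Assumption~\ref{ass:small-perturbation}. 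Combined with $\|\Sigma_{\hat\gamma}^{-1/2}\|=O(1/\sigma)$, this yields $\|\Sigma_{\hat\gamma}^{-1/2}b\|=o(1)$. Separately, Proposition~\ref{prop:variance-consistency} applies (Assumption~\ref{ass:signal-scale} holds, \eqref{eq: singular value gap condition} holds for large $n$, and $p+K=o(n)$ since $p=o(n)$ and $K$ is fixed), giving $\hat\sigma^2\xrightarrow{p}\sigma^2$, i.e. $\sigma/\hat\sigma\xrightarrow{p}1$.

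Finally, assembling, $\hat\gamma_0=(\sigma/\hat\sigma)\big(\Sigma_{\hat\gamma}^{-1/2}g+\Sigma_{\hat\gamma}^{-1/2}b\big)$, where the first summand has the exact $N(0,I_{K-r})$ law for every $n$, the second is a deterministic $o(1)$ shift, and the prefactor tends to $1$ in probability; by Slutsky's theorem $\hat\gamma_0\xrightarrow{d}N(0,I_{K-r})$, hence $\|\hat\gamma_0\|^2\xrightarrow{d}\chi^2_{K-r}$ (the number of degrees of freedom, $K-r$, equals $K$ when $\rcal=\{0\}$, and finiteness of $K$ is what keeps the limit a fixed $\chi^2$). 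The heavy lifting here lives inside Propositions~\ref{prop:variance-consistency} and~\ref{prop: bias variance of gamma hat}; the only genuinely new points — and where one must be careful — are (i) observing that $\hat\Sigma_{\hat\gamma}$ and $\Sigma_{\hat\gamma}$ are exact scalar multiples of one another, so no separate convergence $\hat\Sigma_{\hat\gamma}\to\Sigma_{\hat\gamma}$ is needed, (ii) obtaining an \emph{exact} (not merely asymptotic) $N(0,I_{K-r})$ law for the whitened noise, which uses the fixedness of $\hat P$ and the invertibility of $\Sigma_{\hat\gamma}$, and (iii) the observation that $\gamma=0$ under $H_0$, so that Proposition~\ref{prop: bias variance of gamma hat}(2) delivers a bias that is genuinely $o(1)$. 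None of these is deep, but they must be lined up carefully for the Slutsky step to be airtight.
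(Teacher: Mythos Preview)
Your proof is correct and follows essentially the same route as the paper's: write $\hat\gamma$ as Gaussian with mean $\e\hat\gamma$ and covariance $\Sigma_{\hat\gamma}$, use Proposition~\ref{prop: bias variance of gamma hat} to show the whitened bias is $o(1)$ under $H_0$, and then invoke the consistency of $\hat\sigma^2$ together with Slutsky. Your observation that $\hat\Sigma_{\hat\gamma}=(\hat\sigma^2/\sigma^2)\Sigma_{\hat\gamma}$ exactly (so only the scalar $\sigma/\hat\sigma$ needs to converge) is a slightly cleaner way to handle the final step than the paper's appeal to $\|\hat\Sigma_{\hat\gamma}-\Sigma_{\hat\gamma}\|\to 0$, and you are right that the limiting degrees of freedom should be $K-r$ rather than $K$, since $\hat\gamma\in\mathbb{R}^{K-r}$.
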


The next theorem provides the theoretical result for $\hat{\theta}$. 

\begin{thm}[Inference of $\theta$ and $\xi$]\label{thm:theta-inference}
The estimator $\hat{\theta}$ in \eqref{eq:alpha beta theta hat} follows a Gaussian distribution with covariance matrix
$\cov(\hat{\theta}) = \frac{\sigma^2}{n}\Theta\tilde{X}^T\hat{\pcal}_{\rcal}\tilde{X}\Theta.$ 
Furthermore, under Assumptions~\ref{ass:signal-scale},  \ref{ass:weak-dependence} and \ref{ass:small-perturbation},
$\norm{\e(\hat{\theta}) - \theta}  = o(1/\sqrt{n}).$ If in addition, there exists a constant $c>0$ such that
\begin{eqnarray}\label{eq:nontrivial-partial-correlation-theta}
\|\tilde{Z}_{1:r}^T {\eta}_j\| / (1-R_j^2) \ge c,
\end{eqnarray}
where $\eta_j$ and $R_j$ are defined in Corollary~\ref{coro:inference-individual-beta}, then as $n\to \infty$, 
$$\p\left(\frac{\hat{\theta}_j-\theta_j}{\sqrt{\frac{\hat{\sigma}^2}{n}e_j^T\Theta\tilde{X}^T\hat{\pcal}_{\rcal}\tilde{X}\Theta e_j }} \le t\right)\to \Phi(t).$$
The inference of each $\xi_i, i \in [n]$ can be done by noticing $\xi_i = X_{i\cdot}\theta$ where $X_{i\cdot}$ is the $i$th row of $X$.
\end{thm}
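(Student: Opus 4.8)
The plan is to mirror the argument used for $\hat\beta$ in Theorem~\ref{thm:general-beta-inference} and Corollary~\ref{coro:inference-individual-beta}, replacing the oblique projection $\hat{\mathcal P}_{\ccal}$ by the orthogonal projection $\hat{\mathcal P}_{\rcal}=\hat Z_{1:r}\hat Z_{1:r}^T$. First I would establish the exact (finite-sample) distribution of $\hat\theta$. Since $\hat\theta=(X^TX)^{-1}X^T\hat{\mathcal P}_{\rcal}Y$ is a fixed linear functional of $Y$ and $Y=\e Y+\epsilon$ with $\epsilon\sim N(0,\sigma^2 I)$ by \eqref{eq:Gaussian}, $\hat\theta$ is exactly Gaussian with mean $(X^TX)^{-1}X^T\hat{\mathcal P}_{\rcal}\e Y$ and covariance $\sigma^2 (X^TX)^{-1}X^T\hat{\mathcal P}_{\rcal}\hat{\mathcal P}_{\rcal}^T X(X^TX)^{-1}$. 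Using $\hat{\mathcal P}_{\rcal}=\hat{\mathcal P}_{\rcal}^T=\hat{\mathcal P}_{\rcal}^2$ (it is an orthogonal projection since $\hat Z_{1:r}$ has orthonormal columns) and $\tilde X=X/\sqrt n$, $\Theta=(X^TX/n)^{-1}$, this simplifies to $\cov(\hat\theta)=\frac{\sigma^2}{n}\Theta\tilde X^T\hat{\mathcal P}_{\rcal}\tilde X\Theta$, which is the stated covariance. This step is essentially bookkeeping.

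The second step is the bias bound $\|\e(\hat\theta)-\theta\|=o(1/\sqrt n)$. Here I would reuse the machinery already developed for Proposition~\ref{prop:beta-bias}: $\e(\hat\theta)-\theta=(X^TX)^{-1}X^T(\hat{\mathcal P}_{\rcal}-{\mathcal P}_{\rcal})\e Y$, since at the population level $\theta=(X^TX)^{-1}X^T{\mathcal P}_{\rcal}\e Y$ by \eqref{eq:alpha beta proj form}/\eqref{eq:PR} (note ${\mathcal P}_{\rcal}X\theta=X\theta$, ${\mathcal P}_{\rcal}X\beta=0$, ${\mathcal P}_{\rcal}\alpha=0$ by \eqref{eq:identifiability in detail}). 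The key perturbation estimate I need is $\|\hat{\mathcal P}_{\rcal}-{\mathcal P}_{\rcal}\|$, or more precisely the quantity $\|(\hat{\mathcal P}_{\rcal}-{\mathcal P}_{\rcal})$ acting against the signal$\|$, controlled by $\tau_n/\min\{(1-\sigma_{r+1})^3,\sigma_{r+s}^3\}$ exactly as in Proposition~\ref{prop:beta-bias}; combined with $\|X\beta+X\theta+\alpha\|\le C\sqrt{np}$ (Assumption~\ref{ass:signal-scale}), $\|X\|\asymp\sqrt n$, $\|(X^TX)^{-1}\|\asymp 1/n$ (Assumption~\ref{ass:weak-dependence}), and Assumption~\ref{ass:small-perturbation}, this yields a bound of order $o(1/\sqrt n)$. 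I would cite the proof of Proposition~\ref{prop:beta-bias} here rather than redo the SVD-perturbation bound on $\hat Z_{1:r}\hat Z_{1:r}^T$ versus $\tilde Z_{1:r}\tilde Z_{1:r}^T$; the subspace-perturbation lemma underlying $\tau_n$-control is the same, just applied to the leading $r$ columns of $\breve W$ instead of the trailing ones.

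The third step is the asymptotic normality of $\hat\theta_j$ under \eqref{eq:nontrivial-partial-correlation-theta}. Writing the studentized statistic, the numerator is $(\e(\hat\theta_j)-\theta_j)+(\hat\theta_j-\e\hat\theta_j)$; the first term is $o(1/\sqrt n)$ by step two, and the second is exactly mean-zero Gaussian with variance $\frac{\sigma^2}{n}e_j^T\Theta\tilde X^T\hat{\mathcal P}_{\rcal}\hat{\mathcal P}_{\rcal}^T\tilde X\Theta e_j=\frac{\sigma^2}{n}e_j^T\Theta\tilde X^T\hat{\mathcal P}_{\rcal}\tilde X\Theta e_j$ by step one. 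It then suffices to (i) show this variance is bounded below by $c'/n$ for some constant $c'>0$, and (ii) show $\hat\sigma^2/\sigma^2\to1$ and that replacing $\hat{\mathcal P}_{\rcal}$ in the variance is harmless. For (i), the population analogue of the variance is $\frac{\sigma^2}{n}e_j^T\Theta\tilde X^T{\mathcal P}_{\rcal}\tilde X\Theta e_j$; since ${\mathcal P}_{\rcal}=\tilde Z_{1:r}\tilde Z_{1:r}^T$ and $\tilde X\Theta e_j$ projected onto $\col(X)$ equals (up to scaling) the partial residual direction $\eta_j/(1-R_j^2)$ — this is the standard partial-regression identity already invoked in Corollary~\ref{coro:inference-individual-beta} — we get $e_j^T\Theta\tilde X^T{\mathcal P}_{\rcal}\tilde X\Theta e_j=\|\tilde Z_{1:r}^T\eta_j\|^2/(1-R_j^2)^2\ge c^2$ by \eqref{eq:nontrivial-partial-correlation-theta}, and the perturbation from $\hat{\mathcal P}_{\rcal}$ to ${\mathcal P}_{\rcal}$ is $o(1)$ by Assumption~\ref{ass:small-perturbation}. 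For (ii), $\hat\sigma^2$ is consistent by Proposition~\ref{prop:variance-consistency}, and $\hat{\mathcal P}_{\rcal}\to{\mathcal P}_{\rcal}$ in the relevant sense by the same $\tau_n$-bound. Slutsky's theorem then gives convergence to $\Phi$.

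I expect the only genuine obstacle to be a clean statement of the partial-regression identity that turns $e_j^T\Theta\tilde X^T{\mathcal P}_{\rcal}\tilde X\Theta e_j$ into $\|\tilde Z_{1:r}^T\eta_j\|^2/(1-R_j^2)^2$ and hence makes condition \eqref{eq:nontrivial-partial-correlation-theta} exactly the variance lower bound; this is the same computation that underlies Corollary~\ref{coro:inference-individual-beta} (with $\tilde Z_{(r+1):p}$ replaced by $\tilde Z_{1:r}$), so I would simply lift it. Everything else is a transcription of the $\hat\beta$ arguments with $\hat{\mathcal P}_{\ccal}\rightsquigarrow\hat{\mathcal P}_{\rcal}$, using the fact that $\hat{\mathcal P}_{\rcal}$ is an honest orthogonal projection (which if anything simplifies the bias and variance bookkeeping relative to the oblique $\hat{\mathcal P}_{\ccal}$ case).
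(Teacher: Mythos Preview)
Your proposal is correct and follows essentially the same route the paper intends: the paper does not write out a separate proof of Theorem~\ref{thm:theta-inference}, remarking instead that it is ``included only for completeness'' and is symmetric to the $\beta$ case, and your argument is precisely that symmetric transcription (replace $\hat{\mathcal P}_{\ccal}$ by the orthogonal projection $\hat{\mathcal P}_{\rcal}$, reuse the bias machinery, and lift the partial-regression identity from Corollary~\ref{coro:inference-individual-beta} with $\tilde Z_{1:r}$ in place of $\tilde Z_{(r+1):p}$). One small pointer: the explicit bound $\|\hat{\mathcal P}_{\rcal}-\mathcal P_{\rcal}\|\le 4\tau_n/(1-\sigma_{r+1})$ that you need appears in the proof of Proposition~\ref{prop:variance-consistency} (via inequality~\eqref{eq: projection difference bound} of Lemma~\ref{lem: projection bound}), not in Proposition~\ref{prop:beta-bias} itself; and because $\mathcal P_{\rcal}$ is an orthogonal projection onto $\col(\tilde Z_{1:r})$, the variance lower bound is an equality $\|\mathcal P_{\rcal}\tilde X\Theta e_j\|^2=\|\tilde Z_{1:r}^T\tilde X\Theta e_j\|^2$ rather than the inequality coming from part~(iv) of Lemma~\ref{lem:unorth proj}, which indeed simplifies things as you note.
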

%Theorem~\ref{thm:theta-inference} is included only for completeness. In practice, the inference of $\theta$ may not be very useful. This is because, as discussed in Section~\ref{sec:model}, the interpretations of $\beta$ and $\alpha$ are of primary interest, while $\theta$ is typically not interpretable on its own. 

%Notice that the comparison between \eqref{eq:nontrivial-partial-correlation-theta} and \eqref{eq:nontrivial-partial-correlation} reveals a symmetric relation between the inference of $\theta$ and that of $\beta$.  The inference of $\beta$ requires $X$ to have low correlation with the network subspace, while the inference of $\theta$ requires $X$ to be sufficiently correlated with the network subspace. Such comparison also reveals why the inference of $\theta$ is practically less useful, because  the high-correlation condition of vectors in $\mathbb{R}^n$ for large $n$ is less natural than its low-correlation counterpart.

Condition~\eqref{eq:nontrivial-partial-correlation-theta} is in parallel of \eqref{eq:nontrivial-partial-correlation}; see the discussion following Corollary~\ref{coro:inference-individual-beta} for its interpretation.

\section{Small projection perturbation under random network models}\label{sec:theory}

 We have introduced the small projection perturbation  condition as a general way to characterize the structural perturbation level our framework could tolerate. To provide more insights about this characterization, in this section, we study the small projection perturbation assumption in a special setting when the perturbation of $P$ comes from random network models. Specifically, we assume that the edge between each pair of nodes $(i,j)$ is generated independently from a Bernoulli distribution with $\p(A_{ij} = 1) = P_{ij}, 1\le i < j \le n$. In particular, $\e A = P$. This model is known as the inhomogeneous Erd\H{o}s-R\'{e}nyi model \citep{Bollobas2007}. To clarify, the randomness discussed in this section comes from the random network model above, which is assumed to be independent of the randomness from the noise $\epsilon$ in the regression model.

Section~\ref{secsec:nonparametric-setting} investigates the most difficult setting when no prior information is available about the network model, while Section~\ref{secsec:parametric-setting} presents the results in the arguably easiest setting when the true underlying network can be accurately estimated.  Section~\ref{sec:Laplacian} further extends the results in another commonly seen setting when the Laplacian matrix represents the relational information. 

%\subsection{The non-informative situation: small projection perturbation for a random adjacency matrix}
\subsection{The non-informative situation: small projection perturbation for adjacency matrices}
\label{secsec:nonparametric-setting} 
In this section, we investigate the situation when no prior information about the network model is available. This can be seen as the most difficult, yet the most general, situation. Arguably, the only reasonable approximation of $P$ is $\hat{P} = A$. We will study  Assumption~\ref{ass:small-perturbation} for this version of $\hat{P}$.

Recall that the main requirement of Assumption~\ref{ass:small-perturbation} is $\tau_n = o(1/\sqrt{n})$, where $\tau_n$ measures the level of perturbation of the network subspace $S_K(P)$ associated with the covariate subspace $\text{col}(X)$, defined in \eqref{eq: RUU}. Existing results relevant for controlling $\tau_n$, such as \cite{Abbe2017entrywise,cape2019two, mao2020estimating, lei2019unified}, do not give sufficiently tight bounds to support the inference, even for dense networks. The recent work of \cite{xia2019data} contains useful tools for obtaining such error bound that allows sparsity when $P$ is a low-rank matrix. However, since in general problems, $P$ may be of full rank, a new tool is needed for theoretical analysis. The following assumption is made about $P$.

\begin{ass}[Eigenvalue gap of the expected adjacency matrix]\label{ass:eigen-gap} 
%The adjacency matrix of the observed network is generated according to the inhomogeneous Erd\H{o}s-R\'{e}nyi model with edge probability matrix $P\in\bR^{n\times n}$. That is, edges between all pairs of nodes $i$ and $j$ are formed independently with probabilities $P_{ij}$. Moreover, 
Let $A$ be the adjacency matrix of a random network generated from the inhomogeneous Erd\H{o}s-R\'{e}nyi model with the edge probability matrix $P=\e A$. 
Assume that the $K$ largest eigenvalues of $P$
are well separated from the remaining eigenvalues and their range is not too large:
$$
\min_{i\le K, \ i'>K} |\lambda_i - \lambda_{i'}| \ge \rho' d, \qquad \max_{i, i'\le K} |\lambda_i - \lambda_{i'}| \le d/\rho',
$$
where $\rho'>0$ is a constant and $d =n\cdot \max_{ij}P_{ij}$. 
\end{ass}

The quantity $d$ can be seen as an upper bound of the network node degree, which has been widely used to measure network density (e.g., \cite{lei2014consistency,le2017concentration}). Assumption~\ref{ass:eigen-gap} is very general in the sense that it only assumes that $P$ has a sufficiently large eigenvalue gap, but 
%does not have further constraints on the magnitude of the eigenvalues. In particular, it 
$P$  need not be low-rank or even approximately low-rank. It appears that this is already sufficient to guarantee the small projection perturbation property of $A$, as stated in the next theorem. We believe the theorem itself can be used as a general tool for statistical analysis of independent interest.

\begin{thm}[Concentration of perturbed projection for adjacency matrix]\label{thm:projection-concentration}
Let ${w}_1,...,w_n$ and $\lambda_1\ge\lambda_2\ge \cdots \ge\lambda_n$ be eigenvectors and corresponding eigenvalues of $P$ and similarly, let $\hat{w}_1,..., \hat{w}_n$ and $\hat{\lambda}_1\ge\hat{\lambda}_2\ge\cdots\ge \hat{\lambda}_n$ be the eigenvectors and eigenvalues of $A$. Denote $W = ({w}_1,...,{w}_K)$ and $\hat{W} = (\hat{{w}}_1,...,\hat{{w}}_K)$. Suppose that Assumption~\ref{ass:eigen-gap} holds and $d \ge C\log{n}$ for a sufficiently large constant $C>0$. Then for any fixed unit vector ${v}$, with high probability,  
\begin{eqnarray}\label{eq: projection bound}
\|(\hat{W}\hat{W}^T - WW^T){v}\| \le  \frac{2\sqrt{K\log n}}{d}.
\end{eqnarray}
\end{thm}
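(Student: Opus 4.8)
The plan is to bound $\|(\hat W\hat W^T - WW^T)v\|$ by invoking a contour-integral (resolvent) representation of the spectral projector $\hat W\hat W^T$ and expanding $\hat W\hat W^T - WW^T$ as a series in the perturbation $E = A - P$, then controlling each term of the series applied to the fixed vector $v$. Concretely, writing $\mathcal{P}_0 = WW^T$ and $\hat{\mathcal{P}} = \hat W\hat W^T$ as $-\frac{1}{2\pi i}\oint_\Gamma (zI - P)^{-1}\,dz$ and $-\frac{1}{2\pi i}\oint_\Gamma (zI - A)^{-1}\,dz$ over a contour $\Gamma$ enclosing exactly $\lambda_1,\dots,\lambda_K$ at distance $\asymp \rho' d$ from the rest of the spectrum (Assumption~\ref{ass:eigen-gap} guarantees such a contour exists with the eigenvalue cluster of diameter $\le d/\rho'$ well inside it), the difference becomes $\hat{\mathcal{P}} - \mathcal{P}_0 = -\frac{1}{2\pi i}\oint_\Gamma \sum_{k\ge 1}\big[(zI-P)^{-1}E\big]^k (zI-P)^{-1}\,dz$. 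The key point is that we only need to hit this with the \emph{fixed} unit vector $v$, so we can afford to put the \emph{operator} norm on all but one factor of $(zI-P)^{-1}E$ and measure the last action on $v$ in Euclidean norm; this is exactly where a uniform $\sqrt{K\log n}/d$-type bound (rather than a $\sqrt{n}$-type bound) can emerge.

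The main estimates needed are: (i) for $z\in\Gamma$, $\|(zI-P)^{-1}\| \le C/( \rho' d)$, which is immediate from the spectral gap; (ii) the standard spectral-norm concentration $\|E\| = \|A-P\|\le C\sqrt{d}$ with high probability when $d\ge C\log n$ \citep{le2017concentration,lei2014consistency}, so that the Neumann series converges geometrically with ratio $\|(zI-P)^{-1}E\|\le C\sqrt d/(\rho' d) = C/(\rho'\sqrt d) = o(1)$; and (iii) the crucial one — a high-probability bound on $\|(zI-P)^{-1}E\, u\|$ (or, unrolling one more step, on quantities like $\|WW^T E u\|$ and $\|(I-WW^T)(zI-P)^{-1}E u\|$) for a fixed unit vector $u$. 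Here, because $u$ is deterministic, $Eu$ is a vector with independent (up to symmetry) mean-zero bounded-variance entries — its coordinates have variance $\sum_j P_{ij}(1-P_{ij})\le d$ — so $\|Eu\|^2$ concentrates around its mean $\le n\cdot(\text{something})$... but that only gives $\|Eu\|\lesssim\sqrt n$, which is too weak. The right move is to \emph{not} bound $\|Eu\|$ directly but to exploit that $W$ has $K$ columns: $\|W^T E u\|$ is a vector of $K$ quadratic-type scalars $w_\ell^T E u$, each of which is a sum of independent bounded terms with variance $\le \|w_\ell\|_\infty^2\cdot(\text{stuff})$, controllable by Bernstein to be $O(\sqrt{\log n})$ up to incoherence-type factors; and the complementary projection of the resolvent series onto $S_K(P)^\perp$ contributes the $1/d$ from the gap. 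Iterating the resolvent expansion term by term and summing the geometric series produces the stated bound $2\sqrt{K\log n}/d$.

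The hard part will be step (iii): getting the per-term bound down to the $\sqrt{K\log n}/d$ scale rather than the naive $\sqrt{n}/d$ scale. The naive Euclidean bound $\|Eu\|\lesssim\sqrt n$ loses the entire game, so the proof must avoid ever measuring a raw vector $Eu$ in $\ell_2$ except against the low-dimensional subspace $W$ (where $\|W^T(\cdot)\|$ lives in $\mathbb R^K$) or except when it is immediately contracted by a resolvent whose action away from the top-$K$ eigenspace is $O(1/d)$. Making this decomposition clean — splitting each resolvent factor as $\mathcal{P}_0(zI-P)^{-1} + (I-\mathcal{P}_0)(zI-P)^{-1}$, tracking which factors act on $v$ versus on an intermediate random vector, and verifying that every appearance of a raw $E$-action on a fixed (or appropriately conditioned) vector is either projected into $\mathbb R^K$ or damped by $1/d$ — is the combinatorial/bookkeeping crux. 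Once the structure is set up correctly, each term is bounded by $(C/(\rho'\sqrt d))^{k-1}\cdot(C\sqrt{K\log n}/d)$ with high probability (with a union bound over the $O(\log n)$ contour-discretization points and over the $k$ up to $O(\log n)$ terms one actually needs, the tail beyond being negligible by geometric decay), and summing over $k\ge 1$ yields the claim. A secondary technical point is that the resolvent bounds and concentration must hold \emph{uniformly} over $z\in\Gamma$; this is handled by a standard net argument on $\Gamma$ combined with the Lipschitz continuity of $z\mapsto(zI-P)^{-1}$ on the contour, at the cost of only constant factors.
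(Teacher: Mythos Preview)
Your overall framework is the same as the paper's: resolvent/contour representation of the spectral projector, Neumann expansion $\hat W\hat W^T-WW^T=\sum_{k\ge1}S_k$ with $S_k=\frac{1}{2\pi i}\oint_\Gamma [R(z)E]^kR(z)\,dz$, operator-norm control $\|R(z)\|\le C/d$ and $\|E\|\le C\sqrt d$, and geometric summation of the $k\ge2$ terms. That part is fine and matches the paper.

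The gap is in your handling of the first-order term $S_1$. Your plan is to split each resolvent as $WW^T(zI-P)^{-1}+(I-WW^T)(zI-P)^{-1}$ and bound pointwise in $z\in\Gamma$ via a net. This does not exploit a crucial cancellation: when you evaluate the contour integral by residues, the terms with \emph{both} resolvent factors projected into $W$ (or both into $W^\perp$) vanish identically, because $\oint_\Gamma\frac{dz}{(z-\lambda_j)(z-\lambda_{j'})}=0$ whenever $j,j'$ are both $\le K$ or both $>K$. Your pointwise-then-net bound does not see this; in particular the ``perp--perp'' piece $(I-WW^T)(zI-P)^{-1}E(I-WW^T)(zI-P)^{-1}v$ has pointwise norm $\lesssim(1/d)\cdot\|E\|\cdot(1/d)=1/d^{3/2}$, and integrating over a contour of length $\asymp d$ gives only $O(1/\sqrt d)$, strictly worse than the target $\sqrt{K\log n}/d$ and in fact useless for the downstream inference (it would force $d\gg n$, which is impossible).

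The paper instead computes $S_1$ \emph{explicitly} by residues before bounding anything, obtaining
\[
S_1v \;=\; \sum_{j\le K}\bigl(w_j^TEQ_jv\bigr)\,w_j,\qquad Q_j=\sum_{j'>K}\frac{1}{\lambda_j-\lambda_{j'}}\,w_{j'}w_{j'}^T,\quad \|Q_j\|\le C/d,
\]
so $\|S_1v\|^2=\sum_{j\le K}|w_j^TE(Q_jv)|^2$. Each scalar $w_j^TE(Q_jv)$ is a bilinear form $x^TEy$ in the independent entries of $E$ with \emph{fixed} vectors $x=w_j$, $y=Q_jv$, and Hoeffding's inequality gives $|x^TEy|\le C\|x\|\|y\|\sqrt{\log n}=C\sqrt{\log n}/d$ w.h.p.\ directly---no incoherence of $W$ is needed or used, so your mention of ``incoherence-type factors'' is a red herring. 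Summing the $K$ squares gives $\|S_1v\|\le C\sqrt{K\log n}/d$, and adding the $O(1/d)$ from $\sum_{k\ge2}\|S_k\|$ yields the theorem. Once the residue is computed, no net on $\Gamma$ is required either.
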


Notice that the bound is for a given deterministic vector ${v}$ instead of all unit vectors. The latter would be equivalent to a bound on $\norm{\hat{W}\hat{W}^T - WW^T}$ and is too large for our inference purpose. 
%We believe that such a bound on the projection of one given vector is needed in many situations. Therefore, 
By restricting the scope of applicability, our result trades off for the tighter bound \eqref{eq: projection bound}, which is crucial for our inference to work in relatively sparse network settings. 

\begin{coro}[Small projection perturbation, adjacency matrix case]\label{coro:small-perturbation-A}
Assume that $\hat{P} = A$ is used in Algorithm~\ref{algo:estimation-most-general}. If Assumption~\ref{ass:eigen-gap} holds and $d$ satisfies
$$\frac{d\cdot\min\left\{(1-\sigma_{r+1})^3,\sigma_{r+s}^3\right\}}{\left(Knp\log{n}\right)^{1/2}} \rightarrow \infty$$
then Assumption~\ref{ass:small-perturbation} holds with high probability for sufficiently large $n$.
%,  at least $1-pKn^{-c}$.
\end{coro}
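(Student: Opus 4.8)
The plan is to reduce Corollary~\ref{coro:small-perturbation-A} to a direct substitution of the bound from Theorem~\ref{thm:projection-concentration} into the definition of $\tau_n$ and then into Assumption~\ref{ass:small-perturbation}. Recall from \eqref{eq: RUU} that $\tau_n = \|(\hat{W}\hat{W}^T - WW^T)Z\|$, where $Z \in \bR^{n\times p}$ has orthonormal columns spanning $\col(X)$. The first step is to bound the spectral norm of a matrix applied to $Z$ by controlling its action on each column of $Z$: since $\|Z_\ell\| = 1$ for each column, Theorem~\ref{thm:projection-concentration} gives $\|(\hat{W}\hat{W}^T - WW^T)Z_\ell\| \le 2\sqrt{K\log n}/d$ with high probability for each fixed $\ell$. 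Taking a union bound over the $p$ columns (which only costs a factor polynomial in $n$ in the failure probability, still leaving the "high probability" guarantee intact since $p = o(n)$ and the bound in Theorem~\ref{thm:projection-concentration} holds with probability $1 - n^{-c}$ for $c>2$), and using $\|MZ\| \le \|MZ\|_F \le \sqrt{p}\,\max_\ell \|MZ_\ell\|$, we obtain
$$
\tau_n \le \sqrt{p}\cdot\frac{2\sqrt{K\log n}}{d} = \frac{2\sqrt{Kp\log n}}{d}
$$
with high probability.

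The second step is purely arithmetic: plug this bound on $\tau_n$ into the quantity controlled by Assumption~\ref{ass:small-perturbation}, namely
$$
\frac{\tau_n}{\min\{(1-\sigma_{r+1})^3,\sigma_{r+s}^3\}} \le \frac{2\sqrt{Kp\log n}}{d\cdot\min\{(1-\sigma_{r+1})^3,\sigma_{r+s}^3\}} = \frac{2}{\sqrt{n}}\cdot\frac{\sqrt{Knp\log n}}{d\cdot\min\{(1-\sigma_{r+1})^3,\sigma_{r+s}^3\}}.
$$
By the hypothesis of the corollary, $d\cdot\min\{(1-\sigma_{r+1})^3,\sigma_{r+s}^3\}/(Knp\log n)^{1/2} \to \infty$, so the second factor is $o(1)$, and the whole expression is $o(1/\sqrt{n})$, which is exactly $o(1/\sqrt{np})$ up to the harmless $\sqrt{p}$ absorbed already (more precisely, one should carry the $\sqrt{p}$ through: the requirement is $\tau_n/\min\{\cdots\} = o(1/\sqrt{np})$, and indeed $\frac{2\sqrt{Kp\log n}}{d\min\{\cdots\}} = \frac{1}{\sqrt{np}}\cdot\frac{2\sqrt{Knp^2\log n}}{d\min\{\cdots\}}$; one checks the stated divergence condition still forces this to vanish, since the condition is stated with $np$ not $np^2$ — so I would either note that $p$ is treated as fixed/bounded here or adjust constants accordingly). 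Either way, Assumption~\ref{ass:small-perturbation} holds with high probability for $n$ large.

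I also need to verify the side condition $d \ge C\log n$ required by Theorem~\ref{thm:projection-concentration}: this follows immediately from the divergence hypothesis, since $\min\{(1-\sigma_{r+1})^3,\sigma_{r+s}^3\} \le 1$ forces $d \ge (Knp\log n)^{1/2}\cdot(\text{something}\to\infty) \ge C\log n$ eventually. The main obstacle, such as it is, is bookkeeping the exponents on $p$ consistently between $\tau_n \lesssim \sqrt{Kp\log n}/d$ and the $\sqrt{np}$ appearing in Assumption~\ref{ass:small-perturbation} versus the $\sqrt{Knp\log n}$ in the corollary's hypothesis — the cleanest route is to observe that under the standing convention $p$ is a fixed constant (or at worst $o(\log n)$), so all factors of $p$ are absorbed into constants and the matching is exact. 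There is no real analytic difficulty here; the corollary is essentially a restatement of Theorem~\ref{thm:projection-concentration} in the language of Assumption~\ref{ass:small-perturbation}, and the proof is two or three lines once the union-bound-over-columns step is made precise.
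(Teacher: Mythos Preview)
Your proposal is correct and follows exactly the paper's approach: apply Theorem~\ref{thm:projection-concentration} to each of the $p$ orthonormal columns of $Z$, take a union bound, and pass from columnwise control to the spectral norm via the Frobenius norm. Your observation about the extra factor of $\sqrt{p}$ when matching the bound $\tau_n \le 2\sqrt{Kp\log n}/d$ against the $o(1/\sqrt{np})$ requirement of Assumption~\ref{ass:small-perturbation} is in fact sharper than the paper's own write-up (which contains a typographical slip in the exponent of $n$); the paper implicitly treats $p$ as fixed, as you surmise, and with that convention the arithmetic closes exactly.
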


Corollary~\ref{coro:small-perturbation-A} provides a sufficient condition for valid inference. If $\min\left\{1-\sigma_{r+1},\sigma_{r+s}\right\}$ is bounded away from zero while $K$ and $p$ are fixed then $d$ needs to grow faster than $\sqrt{n\log n}$. This is arguably a strong assumption, although it does allow for moderately sparse networks. A natural question is whether it can be relaxed. Unfortunately, the answer to this is negative. We now show that under the current assumptions, the bound \eqref{eq: projection bound} is rate optimal up to a logarithm factor and we cannot guarantee valid inference if $d$ is of the order $\sqrt{n}$.

\begin{thm}[Tightness of concentration and degree requirements]\label{thm:necessary}
Assume $C\sqrt{n/\log n}\le d\le n^{1-\xi}$ for some constant $\xi\in(0,1/2]$ and some sufficiently large constant $C>0$. The following statements hold.
\begin{enumerate}[label=(\roman*)] 
\item There exists a configuration of $(K,{v}, P)$ satisfying the condition of Theorem~\ref{thm:projection-concentration} with $K= O(1)$, under which, for a sufficiently large $n$, 
$$\norm{(\hat{W}\hat{W}^T-WW^T){v}} \ge c/d$$
holds with high probability and a constant $c>0$.
\item There exists a configuration of $(K, p, X, P, \beta, \theta, \alpha)$ satisfying  Assumptions~\ref{ass:signal-scale}, \ref{ass:weak-dependence} and the conditions of Theorem~\ref{thm:projection-concentration} with $K,p = O(1)$ and $\min\left\{1-\sigma_{r+1},\sigma_{r+s}\right\} \ge 1/4$,  under which, for a sufficiently large $n$, we have
$$\norm{\e\hat{\beta}-\beta}_{\infty} \ge c'/d \text{~~ and ~~} \max_j Var(\hat{\beta}_j) \le C'/n$$
with high probability for some constant $C',c'>0$.
\end{enumerate}
\end{thm}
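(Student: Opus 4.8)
The plan is to exhibit one explicit random-graph construction and analyze its leading eigenprojection from both sides. Throughout, take $K=1$ and $P=\frac{d}{n}(\mathbf{1}\mathbf{1}^{T}-I_{n})$, so $A$ is the adjacency matrix of $G(n,d/n)$; then $w_{1}=\mathbf{1}/\sqrt{n}$ with $\lambda_{1}=d(n-1)/n$, all other eigenvalues of $P$ equal $-d/n$, the spectral gap is exactly $d$, and since $d\ge C\sqrt{n/\log n}\gg\log n$, Assumption~\ref{ass:eigen-gap} and hence the hypotheses of Theorem~\ref{thm:projection-concentration} hold with an absolute $\rho'$ and $K=1$. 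Fixing the sign of $\hat{w}_{1}$ by $\hat{w}_{1}^{T}w_{1}>0$, the two inputs I will establish, each with high probability over $A$, are: (a) $c\sqrt{d}\le\|(A-P)w_{1}\|\le C\sqrt{d}$, whence $c/\sqrt{d}\le\|\hat{w}_{1}-w_{1}\|\le C/\sqrt{d}$ by Davis--Kahan for the upper bound and by the first-order expansion $\hat{w}_{1}-w_{1}=\lambda_{1}^{-1}(I-w_{1}w_{1}^{T})(A-P)w_{1}+O(\|A-P\|^{2}/d^{2})$ with $\|A-P\|^{2}/d^{2}=O(1/d)=o(1/\sqrt{d})$ for the lower bound; (b) consequently $c'/d\le 1-\hat{w}_{1}^{T}w_{1}=\tfrac12\|\hat{w}_{1}-w_{1}\|^{2}\le C'/d$. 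I will also use that for a fixed unit $u\perp w_{1}$ the scalar $((A-P)w_{1})^{T}u$ has standard deviation of order $\sqrt{d/n}$, so $|(\hat{w}_{1}-w_{1})^{T}u|\asymp 1/\sqrt{dn}=o(1/d)$ (using $d\le n^{1-\xi}$).

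For part (i), fix any unit $u\perp w_{1}$ and put $v=\tfrac{1}{\sqrt{d}}w_{1}+\sqrt{1-1/d}\,u$, a legitimate fixed unit vector. From $(\hat{W}\hat{W}^{T}-WW^{T})v=(\hat{w}_{1}-w_{1})(\hat{w}_{1}^{T}v)+w_{1}((\hat{w}_{1}-w_{1})^{T}v)$, I estimate the two summands: $\hat{w}_{1}^{T}v=\tfrac{1}{\sqrt{d}}\hat{w}_{1}^{T}w_{1}+O(1/\sqrt{dn})=\tfrac{1}{\sqrt{d}}(1+o(1))$ by (b) and the remark above, so the first summand has norm $|\hat{w}_{1}^{T}v|\cdot\|\hat{w}_{1}-w_{1}\|\asymp\tfrac1d$ with matching lower bound $\ge c''/d$ from (a)--(b); the second summand has norm $|(\hat{w}_{1}-w_{1})^{T}v|=|\tfrac{1}{\sqrt d}(\hat{w}_{1}^{T}w_{1}-1)+O(1/\sqrt{dn})|=O(d^{-3/2})+O((dn)^{-1/2})=o(1/d)$. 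Hence $\|(\hat{W}\hat{W}^{T}-WW^{T})v\|\ge c/d$ with high probability, which also shows \eqref{eq: projection bound} is tight up to the $\sqrt{\log n}$ factor.

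For part (ii), keep the same $P$ (so (a),(b) still hold), take $p=2$, fix orthonormal $u_{1},u_{2}$ both $\perp w_{1}$, and set $Z=(Z_{1},Z_{2})$ with $Z_{1}=\tfrac12 w_{1}+\tfrac{\sqrt{3}}{2}u_{1}$, $Z_{2}=u_{2}$, $X=\sqrt{n}\,Z$. Then $\|X_{i}\|=\sqrt{n}$ and $X^{T}X=nI_{2}$ (so Assumption~\ref{ass:signal-scale} and \ref{ass:weak-dependence} hold with $\Theta=I_{2}$), and $Z^{T}W=(\tfrac12,0)^{T}$ gives $\rcal=\{0\}$, $r=0$, $s=1$, $\sigma_{r+1}=\sigma_{r+s}=\tfrac12$, so $\min\{1-\sigma_{r+1},\sigma_{r+s}\}=\tfrac12\ge\tfrac14$. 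Choose $\beta=\theta=0$ and $\alpha=\sqrt{n}\,w_{1}$, satisfying \eqref{eq: identifiability assumptions} and $\|X\beta+X\theta+\alpha\|=\sqrt{n}\le C\sqrt{np}$. Since the true $\pcal_{\ccal}$ is the oblique projection onto $\col(X)$ along $\Span\{w_{1}\}$, one has $\pcal_{\ccal}w_{1}=0$, hence $\e\hat{\beta}-\beta=(X^{T}X)^{-1}X^{T}(\hat{\pcal}_{\ccal}-\pcal_{\ccal})\alpha=\tfrac{1}{\sqrt{n}}X^{T}\hat{\pcal}_{\ccal}w_{1}$. Writing $\hat{\pcal}_{\ccal}=(\hat{Z},0)(\hat{M}^{T}\hat{M})^{-1}\hat{M}^{T}$ from Algorithm~\ref{algo:estimation-most-general}, inverting the explicit $3\times3$ Gram matrix $\hat{M}^{T}\hat{M}$, and solving $\hat{M}^{T}\hat{M}c=\hat{M}^{T}w_{1}$, the $\col(X)$-block of $c$ turns out to be of exact order $1-\hat{w}_{1}^{T}w_{1}$: the oblique projection annihilates the $O(1/\sqrt{d})$ first-order perturbation of $\hat{w}_{1}$ and surfaces only the $O(1/d)$ second-order loss of alignment, and the several $O(1/d)$ contributions reinforce rather than cancel. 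By (b) this yields $\|\e\hat{\beta}-\beta\|_{\infty}\asymp 1/d$ with high probability. For the variance, \eqref{eq:true-var} with $\Theta=I_{2}$ and $\tilde{X}=Z$ gives $\var(\hat{\beta}_{j})=\tfrac{\sigma^{2}}{n}\|\hat{\pcal}_{\ccal}^{T}Ze_{j}\|^{2}\le\tfrac{\sigma^{2}}{n}\|\hat{\pcal}_{\ccal}\|^{2}$, and $\|\hat{\pcal}_{\ccal}\|^{2}=(1-\hat{s}^{2})^{-1}\le 2$ with high probability (with $\hat{s}=\|Z^{T}\hat{w}_{1}\|=\tfrac12+o(1)$), so $\max_{j}\var(\hat{\beta}_{j})\le 2\sigma^{2}/n$.

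I expect the main obstacle to be the two-sided bounds (a)--(b), especially the anti-concentration lower bounds. The lower bound $\|(A-P)w_{1}\|\ge c\sqrt{d}$, i.e.\ $\mathbf{1}^{T}(A-P)^{2}\mathbf{1}=\sum_{i}S_{i}^{2}\ge c^{2}nd$ with $S_{i}=\sum_{j\ne i}(A_{ij}-P_{ij})$, is delicate because the $S_{i}$ are weakly dependent (each pair sharing one edge); it can be handled via a Hanson--Wright/decoupling estimate for $\mathbf{1}^{T}(A-P)^{2}\mathbf{1}$ or a direct second-moment-plus-bounded-difference argument, and it drives the lower bound $1-\hat{w}_{1}^{T}w_{1}\ge c'/d$ through the expansion above. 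The other subtleties are showing the higher-order remainder in the eigenprojection expansion is $o(1/d)$ rather than $O(1/d)$ --- which uses $\|A-P\|=O(\sqrt{d})$ (hence $d\ge C\log n$) so that $\|A-P\|/d=O(1/\sqrt{d})\to 0$, together with the general position, with high probability, of $w_{1},\hat{w}_{1},\col(X)$ that controls all Gram matrices in Algorithm~\ref{algo:estimation-most-general} --- and tracking the $O(1/d)$ contributions in the Gram-matrix inversion of part (ii) to rule out cancellation. The remaining steps are routine bookkeeping with those explicit formulas.
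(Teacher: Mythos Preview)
Your overall architecture is reasonable and genuinely different from the paper's, but there is a real gap that affects part (ii) and, to a lesser extent, the bookkeeping in part (i).

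The crux is your repeated claim that for a \emph{generic} fixed unit $u\perp w_1$ one has $|(\hat w_1-w_1)^Tu|=O(1/\sqrt{dn})$. You derive this only from the first-order expansion $\hat w_1-w_1=\lambda_1^{-1}(I-w_1w_1^T)(A-P)w_1+R$ with $\|R\|=O(\|A-P\|^2/d^2)=O(1/d)$. The first-order piece projected on $u$ is indeed $O(1/\sqrt{dn})$ by Bernstein, but the remainder $R$ has norm $O(1/d)$, not $o(1/d)$, so a priori $|R^Tu|$ can be $\Theta(1/d)$. In part (ii) your bias computation reduces (after inverting $\hat M^T\hat M$) to a leading term $\asymp(1-\hat w_1^Tw_1)-c_0\,\hat w_1^Tu_1$; without $|\hat w_1^Tu_1|=o(1/d)$ the two pieces are the same order and you cannot rule out cancellation, so ``the $O(1/d)$ contributions reinforce rather than cancel'' is unjustified as stated. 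In part (i) this same issue makes your second-summand bound $o(1/d)$ unproven, though there the fix is easy: project onto $w_1^\perp$, which kills the $w_1$-summand entirely and leaves $\|(I-w_1w_1^T)\hat w_1\|\cdot|\hat w_1^Tv|\asymp\sqrt{1-a}/\sqrt d\asymp 1/d$.

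The paper sidesteps this problem in two complementary ways. For (i) it takes $v=w_1$ and lower-bounds the \emph{quadratic form} $w_1^T(\hat W\hat W^T-WW^T)w_1$ directly from the Xia expansion: the $S_1$ contribution vanishes and the $S_2$ contribution equals $-\tfrac{1}{d^2n}\|\tilde P^\perp E\mathbf 1\|^2$, whose anti-concentration (the same one you need for $\|\hat w_1-w_1\|\ge c/\sqrt d$) gives $\ge c/d$; no auxiliary $u$ is needed. For (ii) it takes $p=K=1$ and a \emph{sparse} $w_\perp=(e_1-e_2)/\sqrt2$, then invokes $\ell_\infty$ eigenvector perturbation (Abbe--Fan--Wang--Zhong or Lei) to get $|w_\perp^T\hat w_1|\le\sqrt2\|\hat w_1-w_1\|_\infty=O(\sqrt{\log n/(nd)})=o(1/d)$ since $d\le n^{1-\xi}$. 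Your approach can be repaired either by specializing $u_1,u_2$ to sparse vectors and citing the same $\ell_\infty$ bound, or by carrying out a second-order analysis of $S_2u$ (a Hanson--Wright estimate for $w_1^TE^2u$ shows its contribution is $O(\sqrt{\log n}/(d\sqrt n))=o(1/d)$), but neither step is in your sketch. Without one of them, the lower bound on the bias in part (ii) does not follow.
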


Notice that the two statements of Theorem~\ref{thm:necessary} are different and, in general,  neither implies the other. The first statement is about the general concentration bound of Theorem~\ref{thm:projection-concentration}. It indicates that for the given range of $d$, the bound of \eqref{eq: projection bound} is rate optimal (up to a logarithm order). The second statement is about the necessary condition of the inference problem, under the model-free setting. When $d = \sqrt{n}$, the bias is at least in the order of $1/\sqrt{n}$. In contrast, the standard deviation is at most of order $1/\sqrt{n}$. Therefore, we will not be able to give asymptotically correct testing or confidence intervals under these circumstances.  Meanwhile, under the same condition, Corollary~\ref{coro:small-perturbation-A} shows that if  $d$ grows faster than $\sqrt{n\log n}$, valid inference could be achieved. In combination, we observe a phase transition at the network degree of $\sqrt{n}$. 

To conclude this section, we address the problem of selecting the correct dimension $r$, which is used in Algorithm~\ref{algo:estimation-most-general}, as another application of Theorem~\ref{thm:projection-concentration}. Specifically, denote 
$\hat{d} = \frac{1}{n}\sum_{i,j=1}^n A_{ij}$ and $\bar{d} = \frac{1}{n}\sum_{i,j=1}^n P_{ij}.$
The following rule is used to select $r$:
\begin{equation}\label{eq:r-selection}
\hat{r}  = \max\left\{i: \hat{\sigma}_i \ge 1- \frac{4\sqrt{pK\log{n}}}{\hat{d}}\right\}. 
\end{equation}
This estimate can be shown to give the correct dimension with high probability.

\begin{coro}[Consistency of estimating the dimension of $\rcal$]\label{coro:r-consistency}
Assume that the conditions of Theorem~\ref{thm:projection-concentration} hold and 
\begin{equation}\label{eq:same-scale}
\bar{d} \ge \frac{16\sqrt{pK\log n}}{1-\sigma_{r+1}}.
\end{equation}
Then for a sufficiently large $n$, the dimension estimate $\hat{r}$ defined in \eqref{eq:r-selection} satisfies $\hat{r}=r$ with high probability.
%$$\p(\hat{r} = r) \ge 1-pKn^{-c}.$$
\end{coro}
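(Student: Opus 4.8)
The plan is to convert the fixed-vector concentration bound of Theorem~\ref{thm:projection-concentration} into control of $\tau_n$, use Weyl's inequality to compare each sample singular value $\hat\sigma_i$ with its population value $\sigma_i$, and then check $\hat\sigma_i$ against the data-driven threshold $1-4\sqrt{pK\log n}/\hat d$, treating the cases $i\le r$ and $i=r+1$ separately. All statements below are understood to hold on a single high-probability event.

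First I would bound $\tau_n$. Under the fixed-design setup the columns $Z_1,\dots,Z_p$ of $Z$ are deterministic unit vectors, so Theorem~\ref{thm:projection-concentration} applied to each $Z_j$, together with a union bound over the columns, gives $\|(\hat W\hat W^T-WW^T)Z_j\|\le 2\sqrt{K\log n}/d$ for all $j$; hence, by the definition \eqref{eq: RUU} of $\tau_n$,
\[
\tau_n=\|(\hat W\hat W^T-WW^T)Z\|\le\|(\hat W\hat W^T-WW^T)Z\|_F\le\frac{2\sqrt{pK\log n}}{d}.
\]
Combining this with \eqref{eq:singular value bound} yields $|\hat\sigma_i-\sigma_i|\le\tau_n\le 2\sqrt{pK\log n}/d$ for every $i$.

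Next I would pin down the observable denominator $\hat d$. Since $n\hat d=\sum_{i,j}A_{ij}$ is a sum of independent Bernoulli variables with mean $n\bar d$, and \eqref{eq:same-scale} forces $\bar d\ge 16\sqrt{pK\log n}$, which dominates $\log n$, a Chernoff bound gives $\bar d/2\le\hat d\le 2\bar d$ on the high-probability event; moreover $\bar d=n^{-1}\sum_{i,j}P_{ij}\le n\max_{i,j}P_{ij}=d$, so also $\hat d\le 2d$. For $i\le r$ we have $\sigma_i=1$, hence $\hat\sigma_i\ge 1-\tau_n\ge 1-2\sqrt{pK\log n}/d\ge 1-4\sqrt{pK\log n}/\hat d$, so every such $i$ clears the threshold and $\hat r\ge r$. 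For $i=r+1$, using $d\ge\bar d$ and \eqref{eq:same-scale} we get $\tau_n\le 2\sqrt{pK\log n}/\bar d\le(1-\sigma_{r+1})/8$, while $4\sqrt{pK\log n}/\hat d\le 8\sqrt{pK\log n}/\bar d\le(1-\sigma_{r+1})/2$, so
\[
\hat\sigma_{r+1}\le\sigma_{r+1}+(1-\sigma_{r+1})/8<1-(1-\sigma_{r+1})/2\le 1-\frac{4\sqrt{pK\log n}}{\hat d},
\]
i.e.\ $r+1$ fails the threshold. Since the $\hat\sigma_i$ are nonincreasing, the set of indices clearing the threshold in \eqref{eq:r-selection} is exactly $\{1,\dots,r\}$, so $\hat r=r$ with high probability.

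The conceptual content --- a tight, direction-fixed bound on $\|(\hat W\hat W^T-WW^T)v\|$ --- is already delivered by Theorem~\ref{thm:projection-concentration}, so what is left is largely bookkeeping: the Chernoff estimate for $\hat d$, the Frobenius-norm passage from the per-column bound to $\tau_n$, and tracking the constants $2,4,8,16$ so that both inclusions go through. The only genuinely delicate point is that the threshold in \eqref{eq:r-selection} is built from the data-driven $\hat d$, not $d$ or $\bar d$, so the whole argument has to be funnelled through the sandwich $\bar d/2\le\hat d\le 2d$; aligning those constants is the main, and fairly mild, obstacle.
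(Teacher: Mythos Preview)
Your proof is correct and follows essentially the same route as the paper: apply Theorem~\ref{thm:projection-concentration} columnwise to bound $\tau_n\le 2\sqrt{pK\log n}/d$, invoke \eqref{eq:singular value bound} for $|\hat\sigma_i-\sigma_i|$, sandwich $\hat d$ against $\bar d$ (the paper uses Bernstein rather than Chernoff, a cosmetic difference), and then verify the threshold inequality separately for $i\le r$ and $i=r+1$. The only minor stylistic difference is that you route the $i=r+1$ case through the explicit fractions $(1-\sigma_{r+1})/8$ and $(1-\sigma_{r+1})/2$, whereas the paper directly checks $\hat d>8\sqrt{pK\log n}/(1-\sigma_{r+1})$; the substance is identical.
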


%Finally, the results of Corollary~\ref{coro:small-perturbation-A} and \ref{coro:r-consistency} are only meaningful if $pK = o(n^c)$. Therefore, our framework is mainly applicable for relatively low-dimensional problems when $p$ and $K$ are relatively small.

%\subsection{The informative situation: small projection perturbation for parametric estimation}\label{secsec:parametric-setting}
\subsection{The informative situation: small projection perturbation for parametric models} \label{secsec:parametric-setting}

In the previous section, it is shown that if $A$ is used as $\hat{P}$ in Algorithm~\ref{algo:estimation-most-general}, the network degree needs to grow faster than $\sqrt{n}$ for the small perturbation assumption to hold. In many cases, it may be reasonable to assume that $P$ satisfies certain structural conditions. Leveraging such additional information can provide a more accurate estimate of $P$ and, in turn, ensure the validity of inference for potentially much sparser networks. 
%The assumptions needed for such results, not surprisingly, would vary for different classes of models. 
In this section, we investigate a few special cases for which the edge density requirement can be substantially relaxed, highlighting the benefit of knowing the correct model. 

The  discussion begins with the stochastic block model (SBM) as the true network generating model. Specifically, assume that  there exists a vector $g \in [K]^n$ specifying the community node labels and a symmetric matrix $B \in [0,1]^{K\times K}$ such that $P_{ij} = B_{g_ig_j}$. The SBM has been widely used to model community structures \citep{holland1983stochastic}, and its theoretical properties have been well-understood  thanks to intensive research in recent years%\citep{rohe2011spectral,zhao2012consistency,amini2013pseudo, lei2014consistency,jin2015fast,chin2015stochastic,le2017concentration, gao2015achieving,Abbe2017entrywise}
. For more detail, refer to the review paper of \cite{abbe2017community} and the references therein. Since the SBM is used in this section only  to illustrate how the degree requirement can be relaxed, we will focus on the following special configuration.

\begin{ass}[Stochastic block model]\label{ass:SBM}
Let $n_k$ be the number of nodes in the $k$th community. Assume that $(1-t)n/K \le n_k \le (1+t)n/K$ for some constant $t$ and all $1\le k\le K$. Moreover, assume that $B = \kappa_n B_0$, where $B_0 \in [0,1]^{K\times K}$ is a fixed matrix and $\kappa_n$ controls the dependence of network density on $n$  and $K=o(n^{1/4})$.  
\end{ass}

Another example is the degree-corrected block model (DCBM) proposed by \cite{karrer2011stochastic}, which generalizes the SBM by allowing the degree heterogeneity of the nodes. Specifically, in addition to the SBM parameters, the model has a degree parameter $\nu \in \bR^n$ such that $P_{ij} = \nu_i\nu_jB_{g_ig_j}$. Notice that $\nu_i$ is only identifiable up to a scale so additional constraints are needed. The following simplified assumption is made about it.

\begin{ass}[Identifiability of degree parameter]\label{ass:DCBM}
Assume $\sum_{g_i = k}\nu_i = n_k$ and there exists a constant $\zeta$ such that $1/\zeta \le \nu_i \le \zeta$.
\end{ass}

Notice that under Assumptions~\ref{ass:SBM} and \ref{ass:DCBM}, both the SBM and DCBM satisfy Assumption~\ref{ass:eigen-gap} with $d = n\kappa_n$. In particular, under the SBM, the subspace of the $K$ leading eigenvectors of $P$ has the same component for all nodes within the same community. Therefore, the model under the SBM reduces to a linear regression model with fixed group effects according to communities.  The following assumption is further made for the two models. 

\begin{ass}[Exact recovery of community labels]\label{ass:strong-consistency}
Assume the community $g$ is known with $n\kappa_n/\log{n} \to \infty$.
\end{ass}

Although $g$ is seldom known in practice, this assumption is based on the fact that in many cases $g$ can be exactly recovered from the network. Indeed, there exist many polynomial-time algorithms that ensure the exact recovery of $g$ such as \cite{gao2015achieving,lei2017generic,Abbe2017entrywise,li2018hierarchical,lei2020consistency} for the SBM and \cite{lei2017generic,chen2018convexified, gao2018community} for the DCBM, to name a few. %Therefore, we can perform our analysis and then show that  can be seen as a joint event with such an exact recovery event. 
The corresponding regularity condition needed for such strong consistency of community detection is already reflected in the degree requirement in Assumption~\ref{ass:strong-consistency}. Given the community labels, the commonly used estimator of $P$ would be the MLE (see Section~\ref{app:blockmodel} of the supplementary material for details). 
%Note that our analysis here can be easily extended to the case of weak consistency when the fraction of mislabeled nodes converges to zero as the network size grows to infinity.
 When such parametric estimates are used as $\hat{P}$ in  Algorithm~\ref{algo:estimation-most-general}, the corresponding estimation and inference procedure are referred to as the \emph{parametric version} of our method, in contrast to the model-free version when $A$ is used as $\hat{P}$. The following theorem shows that the parametric version delivers valid inference under much weaker network density assumptions compared to its nonparametric counterpart.

\begin{thm}[Small projection perturbation under block models]\label{thm:blockmodel-small-perturbation}
Let $d = n\kappa_n$. Let $\hat{P}$ be the parametric estimator of $P$ introduced in Section~\ref{app:blockmodel}. Under either of the following settings:
\begin{enumerate}[label=(\roman*)] 
\item $A$ is generated from $P$ according to the SBM satisfying Assumptions~\ref{ass:SBM} and  \ref{ass:strong-consistency} with
$$
\min\left\{(1-\sigma_{r+1})^6,\sigma_{r+s}^6\right\}\cdot \frac{d}{p\log n} \rightarrow\infty; 
$$
\item $A$ is generated from $P$ according to the DCBM satisfying Assumptions~\ref{ass:SBM}, \ref{ass:DCBM} and \ref{ass:strong-consistency} with
%$$\frac{\min\left\{(1-\sigma_{r+1})^3,\sigma_{r+s}^3\right\}}{\sqrt{p}K\log{K}}\cdot\frac{\sqrt{n\kappa_n}}{\sqrt{\log{n}}} \to \infty;$$
$$
\min\left\{(1-\sigma_{r+1})^6,\sigma_{r+s}^6\right\}\cdot \frac{d}{K^2p\log n} \rightarrow\infty; 
$$
\end{enumerate}
the estimator $\hat{P}$ satisfies Assumption~\ref{ass:small-perturbation} with high probability.
\end{thm}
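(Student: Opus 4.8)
The plan is to reduce both parts to controlling the single quantity $\tau_n = \|(\hat W\hat W^T - WW^T)Z\|$ from \eqref{eq: RUU} and then to verify the inequality of Assumption~\ref{ass:small-perturbation}, namely $\tau_n/\min\{(1-\sigma_{r+1})^3,\sigma_{r+s}^3\} = o(1/\sqrt{np})$. Since $Z$ has orthonormal columns, $\tau_n\le\|\hat W\hat W^T - WW^T\|$, so by a Davis--Kahan $\sin\Theta$ bound together with the eigengap of $P$ (of order $d=n\kappa_n$ by Assumption~\ref{ass:eigen-gap}, which holds for both models under Assumptions~\ref{ass:SBM}--\ref{ass:DCBM}) it suffices to show that $\|\hat P - P\|$ is (a) below half the eigengap, so the top-$K$ eigenspaces are comparable and Davis--Kahan applies, and (b) quantitatively small. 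If one establishes $\|\hat P - P\|\lesssim\sqrt{d\log n/n}$ in case (i) and $\|\hat P - P\|\lesssim K\sqrt{d\log n/n}$ in case (ii), then $\tau_n\lesssim\sqrt{\log n/(nd)}$, resp.\ $\tau_n\lesssim K\sqrt{\log n/(nd)}$, and the required inequality reduces to $\min\{(1-\sigma_{r+1})^6,\sigma_{r+s}^6\}\gg p\log n/d$, resp.\ $\gg K^2 p\log n/d$ --- exactly the two stated thresholds. So the entire task is to bound $\|\hat P - P\|$ for each estimator.

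For case (i) the SBM estimator \eqref{eq:SBM-est} has a transparent structure. Writing $\Pi\in\{0,1\}^{n\times K}$ for the membership matrix ($\Pi_{ik}=1$ iff node $i$ lies in community $k$) and $Q = \Pi(\Pi^T\Pi)^{-1}\Pi^T$ for the orthogonal projection onto $\col(\Pi)$, one checks directly that $\hat P = QAQ$ while $P = QPQ$ (because $P$ is block-constant), so $\hat P - P = Q(A-P)Q$ and $\|\hat P - P\|\le\|A-P\|\lesssim\sqrt d$ by standard inhomogeneous Erd\H{o}s-R\'{e}nyi concentration \citep{le2017concentration}, applicable since Assumption~\ref{ass:strong-consistency} forces $d\gg\log n$. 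This is already more than enough: $\|\hat P-P\| = o(d)$ is below the eigengap, so Weyl's inequality shows $\hat P$ is, like $P$, positive semidefinite of rank exactly $K$ with top-$K$ eigenspace $\col(\Pi)$; hence $S_K(\hat P) = S_K(P)$ and $\tau_n = 0$. For the quantitative route one instead notes that $\|\hat P-P\| = \|\tilde\Pi^T(A-P)\tilde\Pi\|$, with $\tilde\Pi = \Pi(\Pi^T\Pi)^{-1/2}$, is the spectral norm of a $K\times K$ compression of the centered noise, which a matrix Bernstein inequality bounds by $\sqrt{d\log n/n}$ up to factors polynomial in $K=o(n^{1/4})$; this feeds through Davis--Kahan and matches condition (i).

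For case (ii) the DCBM estimator \eqref{eq:DCBM-est1}--\eqref{eq:DCBM-est2} no longer preserves the column space: one has $\hat P = RAR^T$ with the (non-orthogonal) reweighted projection $R = \diag(\hat\nu)\,Q\,\diag(\hat\nu)^{-1}$ onto $\col(\diag(\hat\nu)\Pi)$, while $P = R_0 P R_0^T$ with $R_0 = \diag(\nu)\,Q\,\diag(\nu)^{-1}$, and $\hat\nu\ne\nu$ forces $\tau_n>0$. The plan here is: first, concentration of node degrees together with Assumption~\ref{ass:DCBM} (bounded $\nu_i$, and $\sum_{g_i=k}\nu_i=n_k$, which makes the ratios \eqref{eq:DCBM-est1} consistent) and Assumption~\ref{ass:strong-consistency} gives $\|\hat\nu-\nu\|_\infty\lesssim\sqrt{\log n/d}$ with high probability; second, decompose $\hat P - P = R_0(A-P)R_0^T + \big[(R-R_0)AR^T + R_0A(R-R_0)^T\big]$, handle the first term exactly as the compressed noise in case (i), and for the bracketed term use $\|R-R_0\|\lesssim\|\hat\nu-\nu\|_\infty$ together with the fact that $R-R_0$ has rank $O(K)$ with range consisting of within-community mean-zero directions. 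The target is $\|\hat P-P\|\lesssim K\sqrt{d\log n/n}$, giving $\tau_n\lesssim K\sqrt{\log n/(nd)}$ and condition (ii).

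I expect the bracketed DCBM term $(R-R_0)AR^T$ to be the main obstacle. A crude bound $\|R-R_0\|\cdot\|A\|$ with $\|A\|\asymp d$ overshoots the target by a factor of order $\sqrt n$, so one cannot treat $A$ --- or even its mean $P$ --- as a generic operator of norm $d$. The fix is to exploit that the within-community recentering built into the MLE makes the degree-estimation error act only as a low-rank, community-demeaned perturbation, which then has to be composed with the rank-$\le K$ structure of $P$ and $\hat P$ and with fixed-direction (rather than operator-norm) concentration of $A-P$; tracking these cancellations carefully is what produces the clean $K\sqrt{d\log n/n}$ rate and hence the precise degree threshold in part (ii). A secondary, routine point is checking that $\hat P$ inherits a well-separated top-$K$ block from $P$ so that the Davis--Kahan step is legitimate, which follows once $\|\hat P - P\|$ is below half the eigengap.
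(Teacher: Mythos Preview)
Your proposal diverges from the paper's proof in both parts.

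For part (i), your observation that $\hat P = QAQ$ with $Q=\Pi(\Pi^T\Pi)^{-1}\Pi^T$ forces $\col(\hat P)\subset\col(\Pi)$, and hence $S_K(\hat P)=S_K(P)=\col(\Pi)$ once $\hat B$ inherits full rank from $B$, is correct and genuinely cleaner than the paper's argument. The paper instead bounds $|\hat B_{k\ell}-B_{k\ell}|=O(K\sqrt{\kappa_n\log n}/n)$ entrywise via Bernstein on each block average, passes from $\|\hat P-P\|_\infty$ to a spectral-norm bound, and applies Davis--Kahan against an eigengap of order $d/K$. Your route yields $\tau_n=0$ exactly and makes the degree condition in (i) essentially vacuous; it buys simplicity and a sharper conclusion. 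The paper's route is more generic (it does not need to argue that the two rank-$K$ column spaces coincide) but is correspondingly looser.

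For part (ii), there is a genuine gap. You correctly isolate $(R-R_0)AR^T+R_0A(R-R_0)^T$ as the obstruction and note that the crude bound $\|R-R_0\|\cdot\|A\|$ overshoots by $\sqrt n$; but your proposed fix (``within-community recentering acts as a low-rank demeaned perturbation, then compose with rank-$K$ structure and fixed-direction concentration'') remains a heuristic, and you do not show how to extract the target rate from it. The paper avoids this operator-level difficulty altogether by working \emph{entrywise}: from $|\hat\nu_i-\nu_i|=O(\sqrt{\log n/d})$ it introduces the intermediate quantity $\hat B^*_{k\ell}=(n_kn_\ell)^{-1}\sum_{g_i=k,g_j=\ell} A_{ij}/(\nu_i\nu_j)$, bounds $|\hat B_{k\ell}-\hat B^*_{k\ell}|$ and $|\hat B^*_{k\ell}-B_{k\ell}|$ separately to get $|\hat B_{k\ell}-B_{k\ell}|=O(\sqrt{\kappa_n\log n/n})$, and then combines these scalar estimates directly into $|\hat P_{ij}-P_{ij}|=|\hat\nu_i\hat\nu_j\hat B_{g_ig_j}-\nu_i\nu_jB_{g_ig_j}|=O(\sqrt{\kappa_n\log n/n})$. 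Only after this entrywise control does it convert to a spectral bound and invoke Davis--Kahan. This route is simpler and sidesteps the obstacle you flagged; I would replace your $RAR^T$ decomposition for (ii) with the entrywise argument.
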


Again, if $p, K$ and $\min\left\{(1-\sigma_{r+1}),\sigma_{r+s}\right\}$ are fixed, then the degree requirement of the parametric version for the small projection perturbation to hold is $d/\log{n}\to \infty$ for either of two block models. This is much weaker than the degree requirement for the nonparametric estimation, indicating the benefit of knowing the underlying model for $P$. Although the above result is only about two special classes of models for the network, it clearly shows the tradeoff between the model assumption and statistical efficiency: without any information about the true model, the model-free version of the proposed method requires the average degree to grow faster than $\sqrt{n}$ to be effective; in contrast, if the class that the true model belongs to is known then the parametric version of our method performs well for much sparser networks.

\section{Simulation}\label{sec:sim}
In this section, we present a simulation study to evaluate the proposed method and compare it with other benchmark methods for linear regression on networks. First, we evaluate the validity of the inference framework and demonstrate the predicted phase transition of the model-free method, as well as the advantage of the parametric version in the situation when the true network model is known. After that, we introduce comparisons with other benchmark methods under several network effect models.

\subsection{Inference validation}\label{secsec:Fixed-Design-Sim}
%\subseciton{Inference theory validation}\label{secsec:inference-validation} 

We will evaluate our theoretical claims about the inference and phase transition in this subsection. For this purpose, we will again assume $K$ to be known to exactly match our theoretical framework in this subsection.  However, this does not lose generality because in all of our experiment settings, because the methods we mentioned before  \citep{chen2014network, le2015estimating,li2016network} can accurately identify the $K$ almost perfectly in our settings. We fix $p=K=4$ in all configurations. The network is generated by the SBM with $K=4$ communities. In Section~\ref{sec:additional-sim} of the supplementary material, we provide additional results and also a study when the network is generated by the more general DCBM. Using the parameterization in Section~\ref{secsec:nonparametric-setting}, we set $B = 0.2 \mbone_4\mbone_4^T + 0.8I_4$, where $\mbone_k$ denotes the all-one vector of length $k$. Three levels of sample size $n= 300, 500, 1000, 2000, 4000$, and three levels of average expected degree  $\varphi_n = 2\log{n}, \sqrt{n}$, and $n^{2/3}$ for each level of $n$ are compared. Given the eigenvectors $w_1,...,w_n$ from $P$, we generate $X \in \bR^{n\times p}$ in the following way: Set $X_{1}/\sqrt{n} = w_1$; Set $X_{j}/\sqrt{n} = \sqrt{\frac{1}{25}}w_j + \sqrt{\frac{24}{25}}w_{j+3}, j = 2, 3, 4$. This configuration gives a design with $r=1, s=3$ and $(\sigma_1, \sigma_2, \sigma_3, \sigma_4) = (1, 0.2, 0.2, 0.2)$. In particular, the setup with $\beta = (0,1,1,1)^T$ and $\theta = (1,0,0,0)^T$ satisfies the proposed model and is fixed  in all settings. Similarly, the $\gamma$ can be any vector with the first coordinate being zero. We consider two cases: $\gamma = (0,1,1,1)^T$ and $\gamma = (0,0,0,0)^T$. As discussed, the case of $\gamma = (0,0,0,0)^T$ indicates that there is no network effect. The two settings do not give any difference on inference of $\beta$ while the latter is used to verify the validity of the $\chi^2$ test.  Also, notice that $X_1$ violates assumption~\eqref{eq:nontrivial-partial-correlation} because it is completely orthogonal to the space of $\col(\hat{Z}_{r+1:p})$.  Therefore, valid inference cannot be made for $\beta_1$. This setup highlights that having a degenerate parameter space on $\beta_1$ does not impact the validity of our inference on the other parameters $\beta_2, \beta_3, \beta_4$, as shown by both the theoretical and numerical analyses. Both the model-free and parametric versions of the subspace projection (SP) methods are used to demonstrate the empirical difference, denoted by ``SP" and ``SP-SBM", respectively. All results are taken as the average of the 50 independent replications.

Table~\ref{tab:Controlled-SBM-Ratio} shows the ratio between the absolute bias and standard deviation (SD) averaged across $\hat{\beta}_2, \hat{\beta}_3$ and $\hat{\beta}_4$ for different $n$ and network densities, when the network is generated from the SBM.  For average degree of order $\sqrt{n}$ or below, the ratio does not vanish with increasing $n$ for the SP, indicating that valid statistical inference for the parameters cannot be achieved. For denser networks, the bias becomes vanishing; thus, the asymptotic inference becomes valid. These observations coincide with the prediction of the phase transition at degree $\sqrt{n}$. In contrast, the SP-SBM results in much smaller bias-SD ratios and achieves vanishing ratio even at degree $\sqrt{n}$. At the borderline case $2\log{n}$, such vanishing pattern is not clearly observed, even for the SP-SBM.
%
%\begin{table}[h]
%\centering
%\caption{The average bias-SD ratios for $\beta_2, \beta_3$ and $\beta_4$ when the network is generated from the SBM.}
%\label{tab:Controlled-SBM-Ratio}
%\begin{tabular}{l|rrrr}
%  \hline
%\multirow{2}{*}{Method} & \multirow{2}{*}{$n$} & \multicolumn{3}{c}{average expected degree} \\ 
%%\hline
% &  & $2\log{n}$ & $\sqrt{n}$ & $n^{2/3}$  \\ 
%  \hline
%\multirow{3}{*}{SP}& 1000 & 1.950 & 0.414 & 0.326 \\ 
% & 2000 & 2.526 & 0.344 & 0.237 \\ 
% & 4000 & 2.460 & 0.408 & 0.175 \\ 
%\hline
%\multirow{3}{*}{SP-SBM} & 1000 & 0.922  & 0.164 & $<10^{-4}$ \\ 
% & 2000 & 0.931 & 0.025 & $<10^{-4}$  \\ 
% & 4000 & 0.847 & $<10^{-4}$ & $<10^{-4}$  \\ 
%   \hline
%\end{tabular}
%\end{table}

\begin{table}
\caption{\label{tab:Controlled-SBM-Ratio}Average bias-SD ratios for $\beta_2, \beta_3$ and $\beta_4$ when the network is generated from the SBM.}
\centering
\fbox{
\begin{tabular}{l|rrrr|rrr}
\hline
 \multirow{2}{*}{}&  & \multicolumn{3}{c}{bias-SD ratio}  & \multicolumn{3}{c}{coverage probability} \\  
 \hline
\multirow{2}{*}{Method} & \multirow{2}{*}{$n$} & \multicolumn{3}{c}{average expected degree}   & \multicolumn{3}{c}{average expected degree}\\ 
%\hline
 &  & $2\log{n}$ & $\sqrt{n}$ & $n^{2/3}$ & $2\log{n}$ & $\sqrt{n}$ & $n^{2/3}$  \\ 
  \hline
 \multirow{5}{*}{SP} &300&  1.497 & 0.796 & 0.393 & 0.813 & 0.908 & 0.941 \\
 &  500 & 1.310 & 0.720 & 0.255 & 0.845 & 0.910 & 0.949 \\ 
& 1000 & 1.929 & 0.791 & 0.232  & 0.717 & 0.888 & 0.948 \\ 
 & 2000 & 2.273 & 0.723 & 0.254& 0.585 & 0.891 & 0.943 \\ 
 & 4000 & 2.818 & 0.790 & 0.213& 0.554 & 0.884 & 0.949 \\ 
\hline
 \multirow{5}{*}{SP-SBM} &300& 1.257 & 0.281 & $<10^{-4}$ & 0.844 & 0.945 & 0.950 \\
 & 500 & 0.826 & 0.157 & $<10^{-4}$ & 0.911 & 0.949 & 0.949 \\ 
 & 1000  & 0.887 & 0.012 & $<10^{-4}$ & 0.895 & 0.949 & 0.950 \\ 
 & 2000 & 0.934 &$<10^{-4}$ & $<10^{-4}$   & 0.856 & 0.950 & 0.950 \\ 
 & 4000 & 0.898 & $<10^{-4}$ & $<10^{-4}$  & 0.838 & 0.950 & 0.951 \\ 
   \hline
\end{tabular}
}
\end{table}

Correspondingly, Table~\ref{tab:Controlled-SBM-Ratio} also shows the resulting average coverage probabilities $p_{\text{cov}}$ of the 95\% confidence intervals for $\beta_2, \beta_3$ and $\beta_4$. Notice that, differently from Table~\ref{tab:Controlled-SBM-Ratio}, the coverage probability is calculated as the average of 500 Monte Carlo runs, thereby taking the errors of $\hat{\sigma}$ into account. It can be seen that the confidence interval for the SP is not good enough for degree of $\sqrt{n}$ or lower, but does give valid results for denser graphs. The parametric version is always more accurate than the model-free version, and in particular, remains valid for the order of $\sqrt{n}$.

%
%\begin{table}[h]
%\centering
%\caption{The average coverage coverage probability of 95\% confidence intervals of $\beta_2, \beta_3, \beta_4$ when the network is generated from the SBM.}
%\label{tab:Controlled-SBM-Ratio}
%\begin{tabular}{l|rrrr}
%  \hline
%\multirow{2}{*}{Method} & \multirow{2}{*}{$n$} & \multicolumn{3}{c}{average expected degree} \\ 
%%\hline
% &  & $2\log{n}$ & $\sqrt{n}$ & $n^{2/3}$  \\ 
%  \hline
%\multirow{3}{*}{SP}& 1000 & 0.7282 & 0.9309 & 0.9401 \\
% & 2000 & 0.5275 & 0.9373 & 0.9453 \\  
% & 4000 & 0.5879 & 0.9374 & 0.9462 \\ 
%\hline
%\multirow{3}{*}{SP-SBM} & 1000 & 0.8763 & 0.9501 & 0.9486 \\ 
% & 2000 & 0.8502 & 0.9505 & 0.9500 \\ 
% & 4000 & 0.8423 & 0.9500 & 0.9515 \\ 
%   \hline
%\end{tabular}
%\end{table}
%

\subsection{Comparison with other benchmarks}\label{secsec:Random-Design-Compare}

In this subsection, we will evaluate the practical performance of our method in different settings of network effects and compare it with other benchmark methods. 
The first benchmark in the comparison is the ordinary least squares method (OLS). The other two are the SIM of \cite{manski1993identification} and the RNC of \cite{li2016prediction} introduced before. It has been shown that the RNC is a better modeling option than both OLS and the SIM \citep{li2016prediction}. However, the RNC method cannot make inference of the parameters, and it is computationally more intensive than our method. In this experiment, the RNC is always tuned by 10-fold cross-validation. For our method, the number $K$ is determined by the method of \cite{le2015estimating}. The theory-driven tuning of $r$ introduced in \eqref{eq:r-selection} is based on large-sample properties and may be conservative for small samples. We also propose a bootstrapping method in Section~\ref{sec:r-tuning} of the supplementary material. This method is used to select $r$ for the numerical and data examples from now on and works very well.  Lastly, a systematic model fitting procedure is used for SP to take advantage of its available inference framework.  Specifically, the SP methods would check the $\chi^2$ test result of $\gamma$. If the p-value for the $\chi^2$ test is more significant than 0.05 (any other reasonable level can be chosen), the method would instead return OLS fit, which is equivalent to constraining $\gamma = 0$ in our estimation procedure.

\begin{table}
\caption{\label{tab:SBM-compare} Mean squared error (MSE) $\times 10^2$ of $\e Y$ when the network is generated from the SBM with $p=4, K=4$ with three types of individual effects, in small sample case ($n=300$) and large sample case ($n=1000$), respectively. } 
\centering
\fbox{{\small
\begin{tabular}{l|l|l|rrrrrrr}
  \hline
$n$& indiv. effects &   avg. degree & SP & SP-SBM & OLS & SIM &RNC & SP-L & SP-SBM-L\\ 
  \hline
\multirow{9}{*}{$300$}& \multirow{3}{*}{eigenspace} & $2\log{n}$  & 18.20 & 13.59 & 40.65 & 246.05 & 12.11 & 40.40 & 27.13 \\ 
&& $\sqrt{n}$ & 11.13 & 2.36 & 40.65 & 40.16 & 11.57 & 39.00 & 21.50 \\ 
& & $n^{2/3}$ & 3.66 & 0.30 & 40.65 & 36.35 & 9.53 & 22.93 & 20.47 \\  \cline{2-10}
&\multirow{3}{*}{$\gamma = 0$} & $2\log{n}$ & 0.44 & 0.43 & 0.37 & 0.70 & 1.11 & 0.43 & 0.42 \\ 
& & $\sqrt{n}$ & 0.42 & 0.42 & 0.37 & 0.68 & 1.73 & 0.38 & 0.44 \\ 
& & $n^{2/3}$ & 0.42 & 0.41 & 0.37 & 0.69 & 2.31 & 0.40 & 0.42 \\  \cline{2-10}
&\multirow{3}{*}{smooth} & $2\log{n}$ & 25.39 & 25.40 & 25.36 & 189.83 & 17.83 & 16.98 & 25.38 \\ 
&  & $\sqrt{n}$ & 15.67 & 14.81 & 27.09 & 25.82 & 23.16 & 13.87 & 20.94 \\ 
&  & $n^{2/3}$ & 2.55 & 2.97 & 24.07 & 23.24 & 16.61 & 12.30 & 13.52 \\ 
   \hline
\multirow{9}{*}{$1000$}&\multirow{3}{*}{eigenspace} & $2\log{n}$  & 13.60 & 8.67 & 38.47 & 38.68 & 12.07 & 39.38 & 23.57 \\ 
&& $\sqrt{n}$ & 5.24 & 0.24 & 38.47 & 37.90 & 10.52 & 23.13 & 19.35 \\ 
& & $n^{2/3}$ & 1.44 & 0.11 & 38.47 & 35.82 & 7.83 & 20.16 & 19.29 \\  \cline{2-10}
&\multirow{3}{*}{$\gamma = 0$} & $2\log{n}$ & 0.14 & 0.14 & 0.13 & 0.43 & 0.56 & 0.13 & 0.14 \\ 
& & $\sqrt{n}$ & 0.14 & 0.14 & 0.13 & 0.22 & 0.80 & 0.14 & 0.13 \\ 
& & $n^{2/3}$ & 0.14 & 0.14 & 0.13 & 0.24 & 2.19 & 0.13 & 0.14 \\  \cline{2-10}
&\multirow{3}{*}{smooth} & $2\log{n}$ & 25.34 & 25.26 & 25.33 & 26.40 & 21.73 & 12.75 & 25.29 \\ 
 & & $\sqrt{n}$ & 14.69 & 14.75 & 17.61 & 16.54 & 18.79 & 8.90 & 16.18 \\ 
  && $n^{2/3}$ & 1.16 & 1.23 & 3.30 & 2.55 & 11.95 & 1.74 & 2.26 \\ 
   \hline
\end{tabular}
}}
\end{table}

As before, the network is generated by the SBM with $K=4$  with the same configurations. We take $n=300$ and $n=1000$, as representative cases for small-sample and large-sample settings. The results for the more general situation of DCBM can be found in Section~\ref{sec:additional-sim} of the supplementary material. However, to avoid overly tailoring the setup for our methods, $X$'s are generated differently. Specifically, $X_2, X_3, X_4$ are randomly generated from Gaussian distribution $N(0,1)$, uniform distribution $U(0,1)$ and exponential distribution $\text{Exp}(1)$ before standardization is applied. $X_1$ is directly set to be $\sqrt{n}w_1$ as before, where $w_1$ is the first eigenvector of $P$.  Meanwhile, three different schemes are used to generate the individual effect $\alpha$, so that the model misspecification situations can also be tested. In particular, the first scenario is exactly the same setting as in the last section, which corresponds to the assumed model. This individual effect setting is called ``eigenspace". In the second scenario,  we set $\gamma = 0$, thus the model becomes the standard linear regression model for which OLS is designed. In the last scenario, we set $\alpha$ to be the average of the three eigenvectors corresponding to the three smallest nonzero eigenvalues of the observed Laplacian matrix $L$. This $\alpha$ gives a small value of $\sum_{i\sim j}(\alpha_i - \alpha_j)^2$, and thus matches the assumption of the RNC framework  \citep{li2016prediction}.  As discussed in Appendix~\ref{sec:Laplacian}, our method should still work by using $\hat{P} = L$ when $\alpha$ is defined in this way. Therefore, for the evaluation in this section, the Laplacian version of the SP estimator is also included and labeled as ``SP-L''.

Since the data generating models have different parameterizations, we directly compare the relative mean squared error (MSE) of the expected response $\e Y$. All of the results are averaged over 50 independent replications and are shown in Table~\ref{tab:SBM-compare}.  The overall pattern remains the same for small-sample and large-sample problems. Under the model with eigenspace individual effects (the proposed model), it can be seen that OLS never renders competitive performance while the SIM is only slightly better than OLS. The RNC gives much better results than OLS and the SIM because it effectively incorporates the network information. However, since the network is noisy, its performance is far from adequate. Both versions of our method (SP and SP-SBM) significantly outperform the other methods, with the parametric version performing better than the model-free version, as expected. When there are no individual effects ($\gamma = 0$), OLS becomes the correct model and gives optimal results, as expected. The RNC and our methods all are correctly adaptive to this setting, but the SP methods are still better than the RNC.  The performance under the model with smooth individual effects (the RNC model) is noisier as it depends on the perturbation of the network. Overall, neither OLS nor the SIM works. However, under the SBM with average degree $n^{2/3}$, the network is too dense, such that the smooth individual effects become almost identical everywhere, and OLS becomes effective. In this setting, using the Laplacian matrix in our method still models the correct form of individual effects, so SP-L still works and is more effective than the RNC.  Note that the parametric versions (SP-SBM-L) are no longer better than the SP in this situation because the individual effects depend on the observed network instead of the population signal; thus, the parametric methods are not using the correct model.

In conclusion, when there are network effects, our method outperforms the other benchmark methods in the experiments. When there are no network effects, it is adaptive to the simpler model and gives a similar result to OLS. Compared to the model-free SP, the parametric SP method is less robust to model misspecification. 

\subsection{Timing evaluation}\label{secsec:timing}
The proposed method is generally computationally efficient. We include the timing evaluation in this section. All the computations are on a Linux system Intel(R) Xeon(R) CPU E5-2630 v3 \@ 2.40GHz CPU and 2G memory. Our implementation of the SP method is completely done in R by taking advantage of sparse matrices and the efficient partial eigen-decomposition algorithm in \cite{RSpectra}.  The SIM fitting is based on the two-stage least square method in the R package \texttt{spatialreg} \citep{spatialreg} and the RNC method is implemented in the R package \texttt{netcoh} \citep{li2016netcoh}.  In Table~\ref{tab:TimingSBM} we include the average computational time for model fitting in the same settings of Section~\ref{secsec:Fixed-Design-Sim}.

The computational efficiency of our method depends on the network density. The computation is slightly faster on a sparse network, but the difference is marginal. This is because while sparse networks result in more efficient spectrum decomposition, the other part of model fitting involving $X$ and the eigenvectors do not benefit from sparsity. Overall, the model fitting procedure takes about 1.3 seconds for a network with 4000 nodes. All the other methods are also reasonably fast. The SIM is similar to our method on sparse networks, but it becomes much slower when the network is denser. The RNC is generally the slowest, taking about 2.5 times that of the SP method.

Notice that the timing only accounts for modeling fitting without prior tuning because these prior procedures can be flexibly specified by users. Among the methods we recommend for selecting $K$ in the SP method, the USVT of \cite{chatterjee2015matrix} is the fastest while the Beth-Hession method of \cite{le2015estimating} is slightly slower; both of them are cheaper than our model fitting procedure itself. The cross-validation method of \cite{li2016network} is much more computationally intensive, but still remains feasible for moderately large networks. The $r$ selection based on \eqref{eq:r-selection} does not introduce an additional cost. Alternatively, if the bootstrap is used to select $r$, the computational cost is upper bounded by that of the single model fitting in Table~\ref{tab:TimingSBM} times the bootstrap cost, if no distributed computing is involved. Similarly, for the RNC method, the cross-validation tuning time is not included, which depends on how many tuning parameter values are examined and the number of data-splitting for each value. Empirically, RNC needs cross-validation for a wide range of tuning parameter values so its computational disadvantage will be even more significant than Table~\ref{tab:TimingSBM}.

%
%\begin{table}
%\caption{The average timing (in seconds) of model fitting procedures in the settings of Table~\ref{tab:Controlled-SBM-Ratio}. The standard deviation is below 2\% of the timing in all settings. }
%\label{tab:TimingSBM}
%\centering
%\fbox{
%\begin{tabular}{rrrr}
%\multirow{2}{*}{$n$} & \multicolumn{3}{c}{average expected degree} \\ 
%%\hline
% & $2\log{n}$ & $\sqrt{n}$ & $n^{2/3}$  \\ 
%  \hline
%300 & 0.007 & 0.007 & 0.006 \\ 
%  500 & 0.018 & 0.018 & 0.018 \\ 
%  1000 & 0.062 & 0.062 & 0.063 \\ 
%  2000 & 0.236 & 0.232 & 0.243 \\ 
%  4000 & 0.915 & 0.946 & 0.981 \\
%   \hline
%\end{tabular}
%}
%\end{table}

\begin{table}
\caption{\label{tab:TimingSBM} Average timing (in seconds) of model fitting procedures in the settings of Table~\ref{tab:Controlled-SBM-Ratio}. The standard deviation is below 3\% of the timing in all settings. }
\centering
\fbox{
\begin{tabular}{l|r|rrrr}
  \hline
$n$ &avg. degree & SP & OLS & SIM & RNC \\ 
  \hline
\multirow{3}{*}{$300$} &$2\log{n}$ & 0.01 & $<0.01$ & 0.01 & 0.01 \\ 
 &$\sqrt{n}$  & 0.01 & $<0.01$ & 0.02 & 0.01 \\ 
 &$n^{2/3}$  & 0.01 & $<0.01$ & 0.02 & 0.01 \\ 
\hline
\multirow{3}{*}{$500$} &$2\log{n}$  & 0.01 & $<0.01$ & 0.03 & 0.02 \\ 
 &$\sqrt{n}$    & 0.01 & $<0.01$ & 0.03 & 0.02 \\ 
 &$n^{2/3}$& 0.01 & $<0.01$ & 0.05 & 0.02 \\ 
\hline
\multirow{3}{*}{$1000$} &$2\log{n}$  & 0.04 & $<0.01$ & 0.12 & 0.11 \\ 
 &$\sqrt{n}$    & 0.04 & $<0.01$ & 0.09 & 0.10 \\
 &$n^{2/3}$& 0.04 & $<0.01$ & 0.20 & 0.09 \\ 
\hline
\multirow{3}{*}{$2000$} &$2\log{n}$& 0.19 & $<0.01$ & 0.24 & 0.52 \\ 
 &$\sqrt{n}$  & 0.18 & $<0.01$ & 0.29 & 0.51 \\ 
&$n^{2/3}$& 0.18 & $<0.01$& 0.79 & 0.51 \\ 
\hline
\multirow{3}{*}{$4000$} &$2\log{n}$& 1.30 & $<0.01$ & 1.22 & 3.15 \\ 
 &$\sqrt{n}$& 1.33 & $<0.01$ & 1.43 & 3.10 \\ 
&$n^{2/3}$& 1.34 & $<0.01$& 3.94 & 3.12 \\ 
   \hline
\end{tabular}
}
\end{table}

\section{ School conflict reduction study}\label{sec:conflict}
In \cite{paluck2016changing}, an experiment is conducted to study the impacts of educational workshops on reducing conflicts in schools. The experimenters randomly selected 26 middle schools in New Jersey in which they held educational workshops.  In each school, the experimenters first determined a group of eligible students. A small proportion of them was selected (randomly after gender and race blocking) to participate in bi-monthly educational workshops about school conflicts. Students were also asked to name their friends (with whom they spend time) in school. The friend nomination may not be symmetric in this case. Following the standard approach in previous  studies \citep{bramoulle2009identification,goldsmith2013social,paluck2016changing}, we ignore the directions of edges and treat two students as connected as long as either one identifies the other as a connection. This approach is widely applied with the assumption that as long as one side nominates the association, it reliably indicates that connections reasonably exist.

One interesting aspect of the data set is that the social network information is collected in two waves of surveys, one at the early stage of the school year and one towards the end of the year. The nominated edge sets are very different from the two surveys in all schools. On average, across all schools, 66\% of edges only appear in one survey. Such a mismatch is commonly observed in social studies \citep{bramoulle2009identification,goldsmith2013social}.  It also highlights the fact that the observed social network relation is noisy. First, it is questionable whether one can use a “true network”. Second, it is not immediately clear how to construct a social network for further analysis. These difficulties reveal that a suitable model in practice should be robust to network variations, a property that our proposed model has been designed for. In Sections~\ref{secsec:full-model} and \ref{secsec:reduced-model}, we will use the weighted network based on both surveys. If an edge is nominated in both surveys, the edge weight is 1; if the edge only appears in one survey, the edge weight is 0.5; otherwise, there is no connection. In Section~\ref{secsec:network-versions}, we will study the impacts of different ways to construct the network and show that our proposed SP model gives consistent conclusions under these different constructions.

At the end of the school year, students answered the 13 questions about their rating of the school's ``friendliness" atmosphere, such as ``How many students at this school think it is good to be friendly and nice with all students no matter who." For each of such questions, the student would give a score ranging from 0 to 5, representing their estimate of the proportion of students satisfying the condition in the question, ranging from ``Almost nobody" (0) to ``Almost everyone" (5). Overall, a higher rating indicates a more friendly environment.\footnote{Two of the questions were about negative atmosphere. So we transform then by the $5-$ the original scores before combining with other questions to ensure all scores are measuring positive atmosphere.} We use the average of the 13 scores as the overall friendliness rating of each student. The score is further raised by a power of 1.2 to make the data more Gaussian based on the Box-Cox test. In this example, our target is to learn the workshop's impact on students’ perception of the school atmosphere and conflict situations, and which of the other demographic attributes are also strongly related to the response. The demographic attributes include race (white, black, Hispanic, Asian, other), gender (M, F), grade, whether the student is a returning student from the previous year, and whether the student lives with both parents. The involvement of the workshop is indicated by a single binary vector (treatment). For each school, after removing observations with missing values in the response and predictors, we take the largest connected component of the social network as the final data. In total, 9026 students from 26 schools are included in our analysis. Each school will be assigned an individual effect parameter to account for potential school-level differences. Figure~\ref{fig:reg-hist} shows distributions of non-binary variables in the data set. The distributions of the binary variables are included in Appendix~\ref{sec:more_school_conflicts}.

\begin{figure}[h]
\centering
\includegraphics[width=\textwidth]{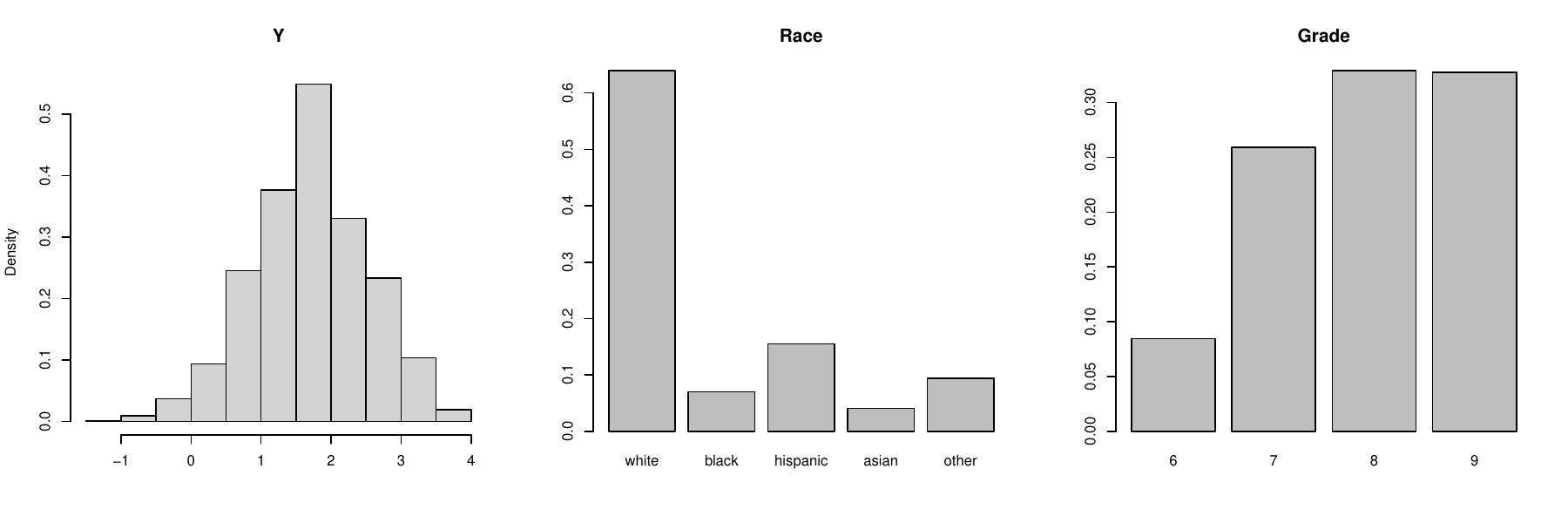}
\caption{Marginal distributions of the response variable, race and grade in the data set. }
\label{fig:reg-hist}
\end{figure}

Notice that the treatment is randomized by design, uncorrelated with other covariates. We expect all reasonable estimation procedures to give a similar treatment effect estimate. However, the variance of the estimation (and the validity of the inference) would depend on modeling effectiveness. This will be reflected in our results to follow.

\subsection{Full models}\label{secsec:full-model}

We first use all predictors to fit a regression model based on our proposed SP and the three benchmark methods (OLS, SIM, and RNC). Due to the redundancy of insignificant variables, these may not serve as final models. However, we can have an evaluation of four methods on common ground.

\paragraph{Network effects and model significance.} In the SP estimation, $r$ is detected to be 0, so there is no confounding observed between the covariates and network space. The $\chi^2$ test of our SP method gives a very small p-value ($<10^{-6}$), representing strong evidence of network effects. Appendix~\ref{sec:more_school_conflicts} includes visualizations of our model fitting residuals, and the residuals follow a Gaussian pattern reasonably well. The fitted $\alpha$ values are shown in Figure~\ref{fig:prediction_error}-(a). Recall that our model includes the standard linear regression model as a special case. That means, if the OLS inference is correct, our inference should also be correct. Therefore, the rejection in the $\chi^2$ test also indicates that the OLS model-fitting without network effects is not proper for the data.  The test of network autoregressive effect in the SIM model gives a p-value of 0.92, suggesting no strong evidence of the endogenous correlation. Furthermore, the exogenous coefficients are not significantly different from zero either, as shown in Table~\ref{tab:fullcovariates}. Together, these indicate that the SIM autoregressive pattern is supported. The RNC model does not come with an inference framework. However, the cross-validation procedure in the model fitting selects a penalty that gives a very different fitting from the OLS, which implicitly indicates the potential network effects. 
\paragraph{Parameter estimation} Table~\ref{tab:fullcovariates} displays the estimated coefficients (excluding schools and intercepts). The SP fitting and OLS fitting agree on the overall magnitude in most coefficients with a 5\%-10\% difference in the parameters with small p-values. In particular, on the treatment effect, the SP estimates a negative effect of -0.084 while the OLS gives an estimate of -0.089. The similarity is expected, as discussed before. However, the effectiveness of incorporating other factors would influence the variance for the inference. In this example, while the SP estimates a slightly smaller treatment effect, the corresponding p-value turns out to be smaller, indicating a much smaller estimation variance. The SIM model has very different estimates due to incorporating all local averages, except for the treatment effect due to the randomized design. The coefficients are not directly comparable with SP or OLS, but most of them come with larger p-values, especially considering the number of tests. Overall, we consider the current SIM fitting as non-informative. RNC method does not allow for the intercept and the school effects. Therefore, the estimated parameters are not meaningfully comparable to the other methods. We treat this as a limitation because it eliminates the straightforward interpretation of school-level fixed effects. The drawback of lacking an inference framework becomes more critical in this case. The table shows that the RNC renders larger estimated effects, but it is unclear how significant they are.

\begin{table}
\caption{\label{tab:fullcovariates} Estimates and p-values of the common covariates (excluding the intercept and schools). For the SIM,  exogenous coefficients (for local averages of covariates) are also included. ``--" indicates that the quantity is not available in the setting.} 
\centering
\fbox{\begin{tabular}{r|rr|rr|rr|rr}
  \hline
&\multicolumn{2}{c}{SP} & \multicolumn{2}{c}{OLS} &\multicolumn{2}{c}{SIM} & \multicolumn{2}{c}{RNC}\\
  \hline
 & coef. & p-value & coef. & p-value & coef. & p-value & coef. & p-value \\ 
  \hline
Race: white & 0.0260 & 0.2432 & 0.0261 & 0.4831 & 0.0152 & 0.6941 & 0.2131 & --  \\ 
Race: black & -0.1193 & 0.0121 & -0.1288 & 0.0149 & -0.1079 & 0.0564 & 0.194 &  -- \\ 
 Race: hispanic  & -0.0313 & 0.2261 & -0.0244 & 0.5580 & -0.0207 & 0.6272 & 0.1102 &  --\\ 
 Race: asian & 0.0433 & 0.2477 & 0.0394 & 0.5358 & 0.0297 & 0.6449 & 0.1637 &  -- \\ 
 Grade & -0.1663 & $<10^{-4}$ & -0.1784 & $<10^{-4}$ & 0.0148 & 0.8574 & 0.3672 & -- \\ 
  Gender & -0.0091 & 0.3557 & -0.0099 & 0.6842 & 0.0842 & 0.1458 & 0.1207 &  -- \\ 
  Return student & -0.1429 & 0.0001 & -0.1296 & 0.0002 & -0.0816 & 0.0859 & -0.1379 &  -- \\ 
  Live w/ both parents& 0.0715 & 0.0066 & 0.0767 & 0.0078 & 0.0610 & 0.0376 & 0.1710 & --  \\ 
  Treatment & -0.0839 & 0.0442 & -0.0889 & 0.0709 & -0.0895 & 0.0702 & -0.0768 &  -- \\ 
   \hline
local avg.: white & -- & -- & -- & -- & 0.0719 & 0.5769 & -- & -- \\ 
local avg.: black  & -- & -- & -- & -- & 0.0144 & 0.9321 & -- & -- \\ 
local avg.: hispanic & -- & -- & -- & -- & -0.0023 & 0.987 & -- & -- \\ 
local avg.: asian & -- & -- & -- & -- & -0.0152 & 0.9426 & -- & -- \\ 
local avg.: grade & -- & -- & -- & -- & -0.1776 & 0.2977 & -- & -- \\ 
local avg.: gender  & -- & -- & -- & -- & -0.1389 & 0.1267 & -- & -- \\ 
local avg.: return & -- & -- & -- & -- & -0.1041 & 0.2623 & -- & -- \\ 
local avg.: w. parents & -- & -- & -- & -- & 0.2654 & 0.0064 & -- & -- \\ 
local avg.: treatment & -- & -- & -- & -- & 0.1982 & 0.2734 & -- & -- \\ 
\end{tabular}}
\vspace{-0.5cm}
\end{table}

\begin{figure}
\begin{center}
\subfigure[  Fitted $\alpha$ by the SP method  ]{\includegraphics[width = 0.8\textwidth]{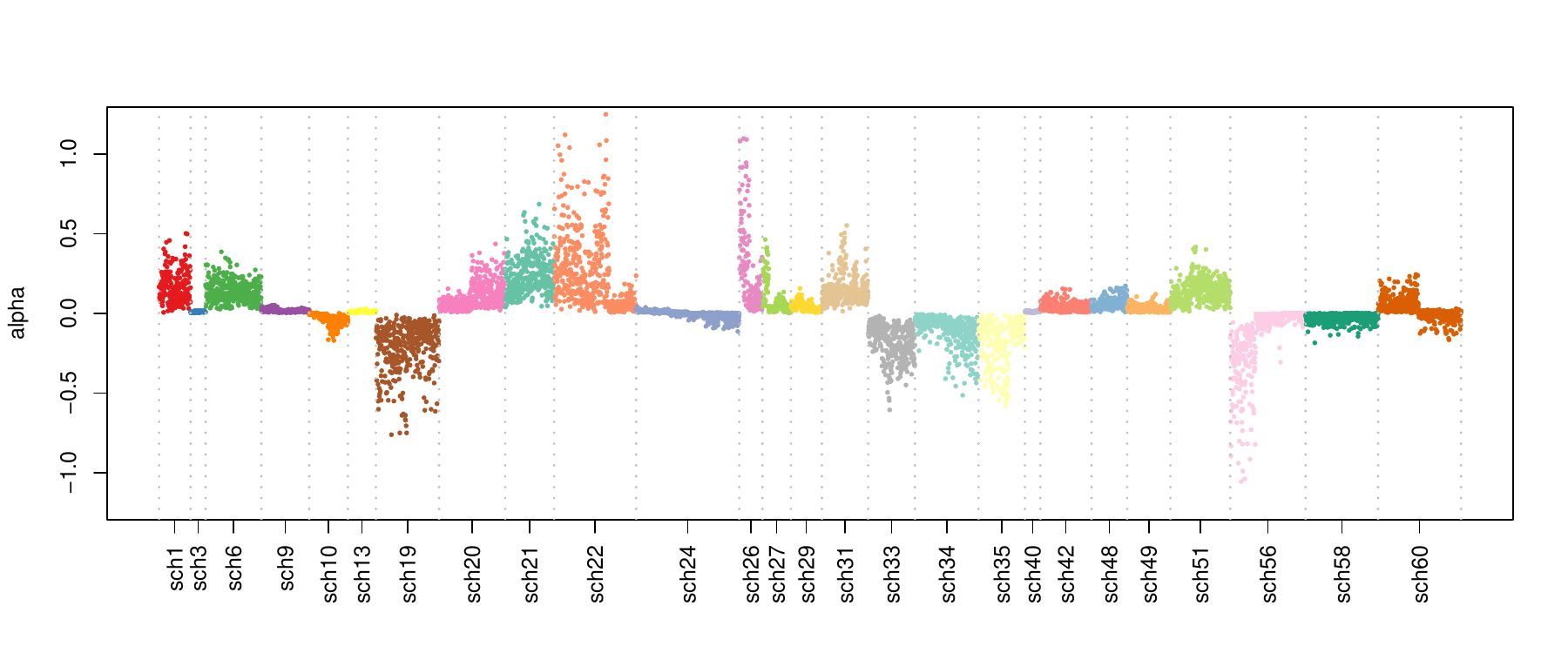}}
\subfigure[  Mean squared prediction error at each school: SP vs. OLS ]{\includegraphics[width = 0.8\textwidth]{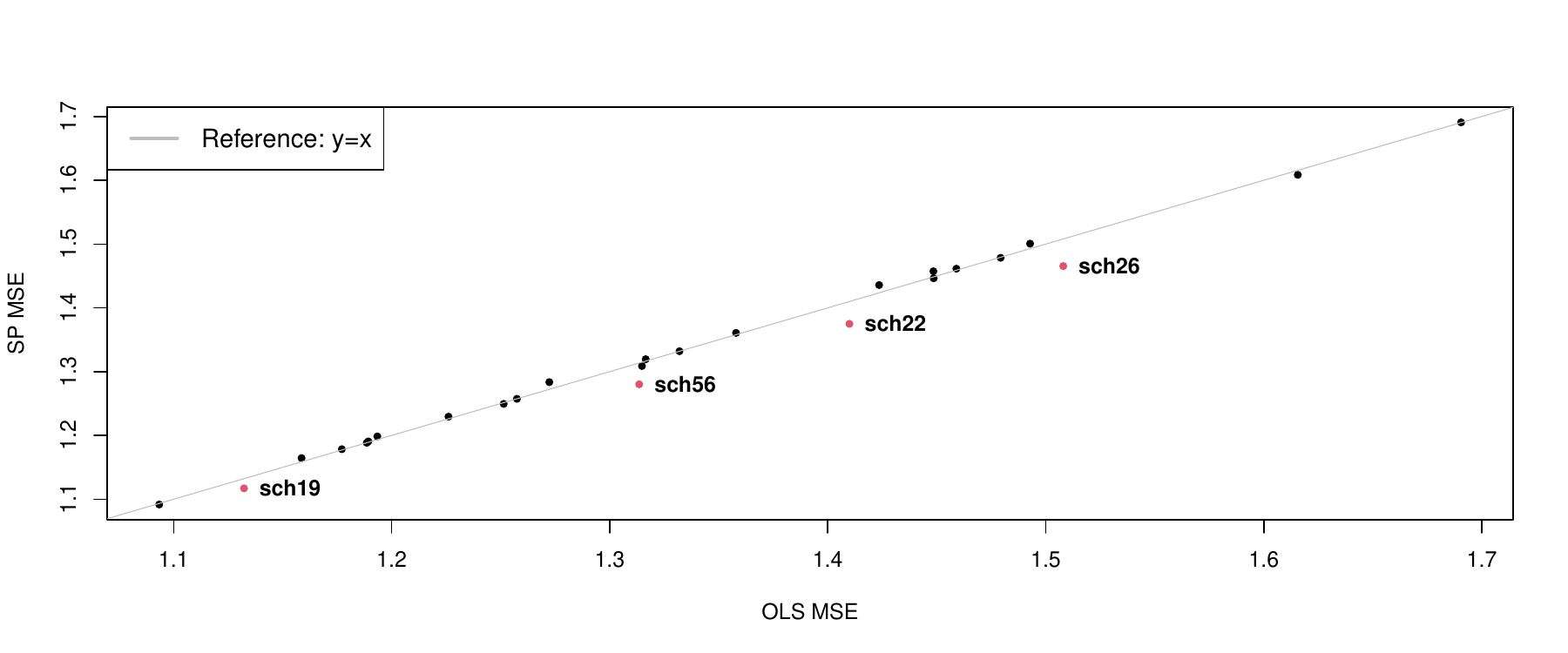}}
\subfigure[  Mean squared prediction error at each school: SP vs. SIM  ]{\includegraphics[width = 0.8\textwidth]{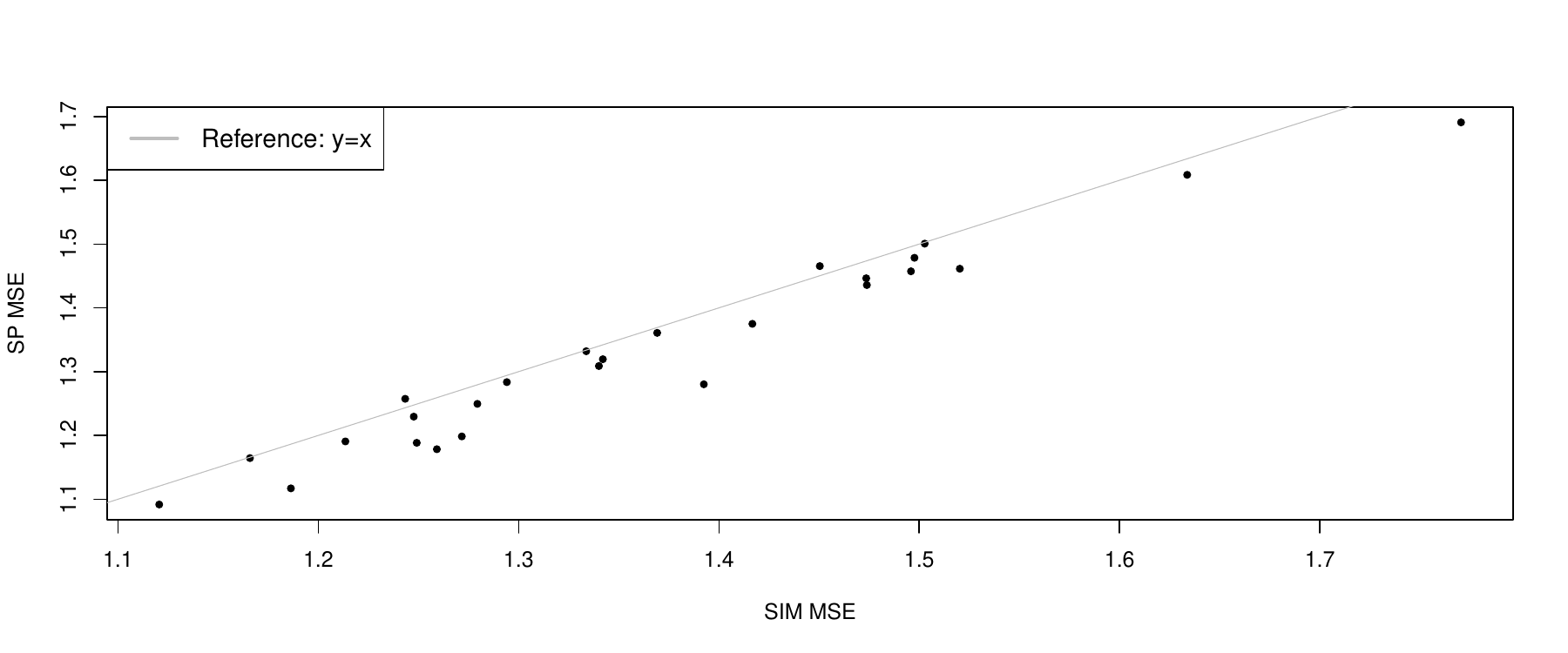}}
\subfigure[  Mean squared prediction error at each school: SP vs. RNC  ]{\includegraphics[width = 0.8\textwidth]{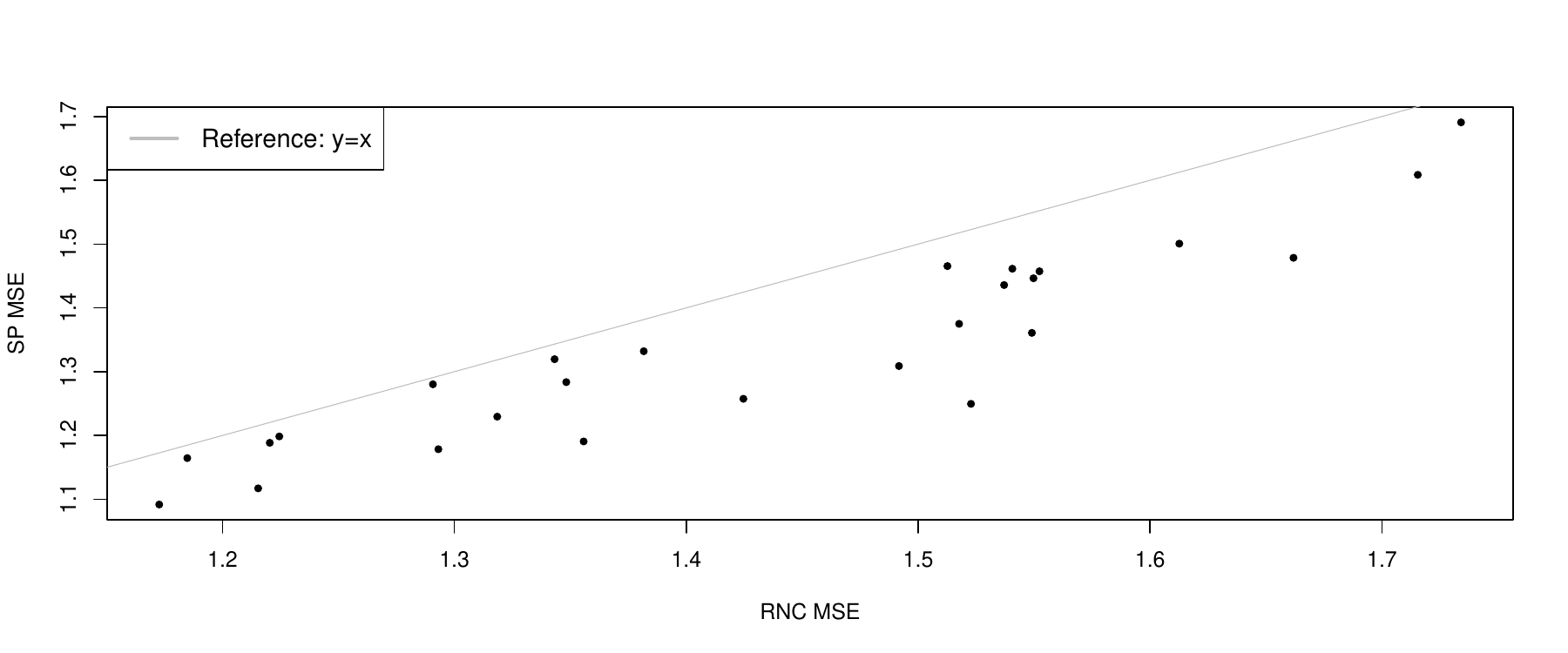}}
\caption{ Fitted $\alpha_i$'s in the SP method and the mean square prediction error comparison for each school.}
\end{center}
\label{fig:prediction_error}
\end{figure}

\paragraph{Predictive performance} We can also compare the methods by their predictive performance. The predictive performance is evaluated in a cross-validation manner. We randomly split the 9026 students into 200 folds. One fold of students' responses is held out each time, and we estimate the models using the remaining 199 folds of responses with the full set of predictors and the network. This procedure is repeated for all 200 folds. As schools exhibit significant variations in social effects, we calculate the mean squared prediction errors within each school $\sum_{i \in \text{sch}_k}(y_i - \hat{y}_i)^2/|\text{sch}_k|$. The comparisons between the SP method with the other three are shown in Figure~\ref{fig:prediction_error}.  Recall that when the $\alpha_i$'s are uniformly close to zero, our model would be similar to the standard linear regression, while when the  $\alpha_i$'s exhibit large deviations from zero, the SP and OLP tend to have very different results. This phenomenon is verified in the example. Figure~\ref{fig:prediction_error}-(b) shows that the SP and OLS have similar performance for most schools, but the SP has more accurate predictions for schools sch19, sch22, sch26 and sch56. This observation matches the fitted $\alpha$ values in Figure~\ref{fig:prediction_error}-(a), as these four schools have substantial variations in $\alpha$'s. In most of the other schools, the estimated $\alpha$'s have a smaller and more uniform magnitude and the difference between the two methods is small. Compared with the SIM and RNC, the SP renders better predictive performance in most schools. These results show that the SP model is more proper for the data than the others.

\subsection{Interpretation with refined models}\label{secsec:reduced-model}

In the full models of the previous section, many schools do not have significantly different effects from the reference level, and Table~\ref{tab:fullcovariates}  also suggests that many covariates are not significant. To better understand and interpret the data, we will resort to statistical inference for model selection. The RNC is not suitable for this model refinement procedure due to the lack of an inference framework. We will focus only on SP, OLS and SIM.

Starting from the full model, we first check the significance of the categorical variables, School (26 categories) and Race (5 categories). Bonferroni correction is applied to the multiple-level tests. Insignificant levels (adjusted p-value  $\ge$ 0.05) are merged into the reference level (School: sch1, Race: other). After this step, the model selection follows backward elimination \citep{halinski1970selection}. The variable with the largest p-value that exceeds 0.05 (after Bonferroni correction) is removed at each step until no further elimination is possible. Throughout this procedure, we always keep the treatment variable from elimination. For the SIM, before the backward elimination step, the elimination of the exogenous parameters is applied.  The final models from the three methods are given in Table~\ref{tab:reducedcovariates}. In the SP model, the $\chi^2$ test for the network effect gives a p-value smaller than $10^{-10}$, indicating strong evidence of network effects. However, the SIM model results in a p-value of 0.168, showing no firm evidence of network autoregressive correlation. The SIM model does have the local average of ``live with both parents" as a significant covariate effect. However, it fails to identify the impacts of the Race:black and the Returning Student, different from SP and OLS.

\begin{table}
\caption{\label{tab:reducedcovariates} Estimates and p-values of variables in the reduced models. ``--" indicates that the quantity is not available in the setting. A blank means the variable is excluded by model selection.}  
\centering
\fbox{
\begin{tabular}{r|ll|ll|ll}
  \hline
&\multicolumn{2}{c}{SP} & \multicolumn{2}{c}{OLS} &\multicolumn{2}{c}{SIM} \\ 
\hline
 & coef. & p-value & coef. & p-value & coef. & p-value  \\ 
  \hline
  (Intercept) & 4.1221 & $<10^{-4}$ & 4.3117 & $<10^{-4}$ & 3.4427 & $<10^{-4}$ \\ 
  Treatment & -0.0856 & 0.0411 & -0.0812 & 0.0993 & -0.079 & 0.1204 \\ 
  Race: black & -0.136 & 0.0033 & -0.1786 & 0.0003 &  &  \\  
  Grade & -0.1828 & $<10^{-4}$ & -0.1948 & $<10^{-4}$ &  &  \\ 
  Return & -0.1162 & 0.0004 & -0.0987 & 0.0016 & -0.1967 & 0.0001 \\ 
  Live w/ both parents & 0.0767 & 0.0038 &  &  & 0.0927 & 0.0019 \\ 
\hline
  School 3 & 0.5421 & $<10^{-4}$ & 0.4437 & 0.0001 & 0.3904 & 0.001 \\ 
  School 19 &  &  & -0.2901 & $<10^{-4}$ & -0.4269 & $<10^{-4}$ \\ 
  School 20 & 0.3503 & $<10^{-4}$ & 0.3281 & $<10^{-4}$ & 0.4512 & $<10^{-4}$ \\ 
  School 27 & -0.3371 & 0.0005 & -0.3954 & $<10^{-4}$ &  &  \\ 
  School 34 &  &  & -0.3017 & $<10^{-4}$ & &  \\ 
  School 35 & 0.8085 & $<10^{-4}$ & 0.5303 & $<10^{-4}$ &  &  \\ 
  School 40 & 0.4912 & $<10^{-4}$ & 0.3802 &0.0008 & 0.4674 & 0.0002\\ 
  School 42 & 0.4274 & $<10^{-4}$ & 0.3678 & $<10^{-4}$ &  &  \\ 
  School 48 &  &  & -0.3634 & $<10^{-4}$ &  &  \\ 
  School 49 &  &  & -0.3869 & $<10^{-4}$ &  &  \\ 
  School 51 & -0.5481 & $<10^{-4}$ & -0.5234 & $<10^{-4}$ & -0.5032 & $<10^{-4}$ \\ 
  School 56 & -0.3192 & 0.0002 & -0.5437 & $<10^{-4}$ & -0.606 & $<10^{-4}$ \\ 
\hline
  Local avg.: live w/ both parents  & -- & -- & -- & -- & 0.6465 & $<10^{-4}$ \\ 
   \hline
\end{tabular}}
\vspace{-0.5cm}
\end{table}

Comparing the OLS and SP, the high-level messages from the two models coincide in many aspects. Both models identify that black students tend to have more negative ratings of the school atmosphere, suggesting potential racial effects in school conflicts. Return students and students from higher grades also tend to have more negative perceptions. The two models also have differences in their conclusions. Based on the SP model, students living with both parents tend to have a more positive perception of the school atmosphere, while the OLS fails to detect this phenomenon. Though the ground truth about the data set is unknown, the SP’s finding agrees with a few previous social studies on experiments \citep{musick2010both,anderson2014impact}. When it comes to the treatment effect, as expected, all three methods give similar estimates but due to the more effective modeling, the SP delivers a smaller p-value of 0.04, suggesting potentially weak effects. Note that the treatment effect is negative. This is reasonable since the education workshops help introduce more information about school conflicts, so students may be aware of many conflicts they did not know about before. Our conclusion is implicitly supported by the original analysis of experimenters \citep{paluck2016changing}, who showed that students involved in the workshops were more likely to wear orange waistbands as a public sign that they stand against school conflicts. However, the analysis of \cite{paluck2016changing} considers the experiment design with additional assumptions about the spill-over effects of the educational workshops and is based on complicated causal inference methods. Therefore, their conclusion is about the stronger causal relation. In contrast, our method does not take advantage of the design, nor do we know whether the network cohesion assumption explicitly corresponds to the widely used spill-over effects assumptions. So our conclusion is not causal. We leave the development of causal inference under the current framework for future work.

\subsection{Robustness to network perturbation}\label{secsec:network-versions}

In the previous sections, we use the information of both nomination surveys to construct a weighted network for the regression analysis. In practice, researchers seldom know whether such a construction is the best one or if an optimal construction exists. It may also be more natural for some people to use the Wave II survey (or the Wave I survey) as the primary information. However, a valuable analysis that can reflect the true nature of the data should deliver consistent results as long as one uses some reasonable construction of the network. In this section, we examine how different ways of constructing the network would change the resulting models from the previous analysis. OLS does not use the network information, so it would not be affected. We will focus on comparing SP and the SIM.

The first alternative construction is the undirected network by only the Wave II survey. The same model fitting procedures of the previous section are applied, where model selection is done by back elimination with p-values and multiple comparison correction.  The fitted SP and SIM with the Wave II network are given in Table~\ref{tab:W2-reduced-covariates}. The SP gives a very small p-value ($<10^{-10}$) for the $\chi^2$ test, indicating strong network effects. Moreover, comparing Table~\ref{tab:W2-reduced-covariates} and Table~\ref{tab:reducedcovariates}, we can see that the high-level message of the SP model remains consistent: the Race:black, Grade, and Return Student variable have negative effects while ``live with parents" has a positive effect. The parameter estimates are slightly different but the changes are marginal. The school effects are also very similar, with the only difference on the school sch49. So overall, the change of the network construction method does not lead to material changes in the SP inference results. The SIM, in contrast, delivers very different messages from Table~\ref{tab:reducedcovariates} and Table~\ref{tab:W2-reduced-covariates}. For example, on the Wave II network, the SIM model identifies a strong network autoregressive correlation (p-value = 0.0003). However, the local average of ``live with parents" is no longer included in the model. Both Gender and Race:black are now identified as strong effects in the model. These are all different from the previous result.

\begin{table}
\caption{\label{tab:W2-reduced-covariates} Estimates and p-values of reduced models based on the Wave II network. ``--" indicates that the quantity is not available in the setting. A blank means the variable is excluded in the selection procedure.} 
\centering
\fbox{
\begin{tabular}{rrrrr}
  \hline
&\multicolumn{2}{c}{SP} &\multicolumn{2}{c}{SIM} \\ 
\hline
 & coef. & p-value & coef. & p-value  \\ 
  \hline
 \hline
(Intercept) & 4.0926 & $<10^{-4}$ & 4.9619 & $<10^{-4}$ \\ 
  Treatment & -0.0838 & 0.0439 & -0.0904 & 0.1673 \\ 
  Race: black  & -0.1362 & 0.0032 & -0.2084 & 0.0022 \\ 
  Grade& -0.1795 & $<10^{-4}$ &  &  \\ 
  Return & -0.0761 & 0.0175 & -0.1965 & 0.0005 \\ 
  Live with both parents & 0.0765 & 0.0038 & 0.1392 & 0.0004 \\ 
  Gender &  &  & 0.0850 & 0.0451 \\ 
\hline
  sch3 & 0.5213 & $<10^{-4}$ &  &  \\ 
  sch20 & 0.3621 & $<10^{-4}$ &  &  \\ 
  sch19 &  &  & -0.6291 & $<10^{-4}$ \\ 
  sch22 &  &  & -0.3637 & $<10^{-4}$ \\ 
  sch24 &  &  & -0.4440 & $<10^{-4}$ \\ 
  sch26 &  &  & -0.4029 & 0.0015 \\ 
  sch27 & -0.3483 & 0.0004 & -0.6313 & $<10^{-4}$ \\ 
  sch29 &  &  & -0.5946 & $<10^{-4}$ \\ 
  sch34 &  &  & -0.3376 & $<10^{-4}$ \\ 
  sch35 & 0.7134 & $<10^{-4}$ & 0.2783 & 0.0019 \\ 
  sch40 & 0.4733 & $<10^{-4}$ &  &  \\ 
  sch42 & 0.4121 & $<10^{-4}$ &  &  \\ 
  sch48 &  &  & -0.7667 & $<10^{-4}$ \\ 
  sch49 & -0.3490 & 0.0003 & -0.6358 & $<10^{-4}$ \\ 
  sch51 & -0.4448 & 0.0001 & -1.1455 & $<10^{-4}$ \\ 
  sch56 & -0.3561 & $<10^{-4}$ & -0.9281 & $<10^{-4}$ \\ 
  sch58 &  &  & -0.4807 & $<10^{-4}$ \\ 
  sch60 &  &  & -0.7074 & $<10^{-4}$ \\ 
   \hline
\end{tabular}}
\end{table}

Naturally, the second alternative network is the unweighted network based on the Wave I survey.   The model fitting results are shown in Table~\ref{tab:W1-reduced-covariates}. The conclusion remains the same. Though the estimates of the SP model change numerically, the significance and the selected model remain the same. The $\chi^2$ test indicates a strong network effect in all three cases. The SIM again selects different results compared to either Table~\ref{tab:reducedcovariates} or Table~\ref{tab:W2-reduced-covariates}. It also fails to identify the autoregressive correlation. In summary, it is evident that the SIM heavily relies on how one constructs the network. Since it is unclear which is the best way to construct the network, the SIM fails provide a reliable analysis in this case.

The above difference in robustness highlights the crucial advantage of our model over the SIM framework. Figure~\ref{fig:school40} displays the ``sch40" networks from the two surveys. The two networks have only about 60\% overlapping edges. So a statistical model focusing on local edges may deliver very different results, as reflected in the SIM case. Our model, in contrast, relies on the subspace assumption that is more robust to these changes. In other words, the eigenspaces remain stable under a certain amount of perturbations. Recall that the alignment of two subspaces is measured by the cosine values of their principal angles. A perfect alignment would result in cosine values of 1. Figure~\ref{fig:school40} shows the principle angle cosine values for the pairwise alignment between the 3-dimensional leading eigenspaces of Wave I network, combined (and weighted) Wave I+II network, and Wave II network. The eigenspace alignment is close to perfect ($>0.95$) for the leading two dimensions and remains high for the 3rd dimension. Therefore, even though the two networks are very different in their edge sets, the eigenspace remains stable.  In Figure~\ref{fig:school-other} of Appendix~\ref{sec:more_school_conflicts}, we include the principle angle plots for the other 25 schools in the data set. Such stable alignment of the eigenspace holds in most schools. This observation explains the robustness of our model.

%
%
%\begin{figure}[H]
%\subfigure[  Wave I ]{\includegraphics[width = 0.38\textwidth]{}}
%\subfigure[  Wave II  ]{\includegraphics[width = 0.38\textwidth]{}}
%\subfigure[  Principle angle cosines ]{\includegraphics[width = 0.22\textwidth]{}}
%\caption{ Demonstration of network perturbation in two waves of surveys in one school. About 40\% edges in one network disappear in the other one.  However, the leading 3-dimensional eigenspaces of the networks still align well, indicated by the large cosine values of the principle angles.}
%\label{fig:school40}
%\end{figure}

\begin{table}
\caption{\label{tab:W1-reduced-covariates} Estimates and p-values of reduced models based on the Wave I network. ``--" indicates that the quantity is not available in the setting. A blank means the variable is excluded in the selection procedure.} 
\centering
\fbox{
\begin{tabular}{rrrrr}
  \hline
&\multicolumn{2}{c}{SP} &\multicolumn{2}{c}{SIM} \\ 
\hline
 & coef. & p-value & coef. & p-value  \\ 
  \hline
(Intercept) & 4.1216 & $<10^{-4}$ & 3.5797 & $<10^{-4}$ \\ 
  Treatment & -0.0860 & 0.0401 & -0.0691 & 0.1687 \\ 
  Race: black   & -0.1333 & 0.0037 & -0.1409 & 0.0063 \\ 
  Grade & -0.1797 & $<10^{-4}$ &  &  \\ 
  Return & -0.0966 & 0.0017 & -0.2237 & $<10^{-4}$ \\ 
  Live with both parents & 0.0868 & 0.0012 & 0.0928 & 0.0019 \\ 
\hline
 sch3 & 0.5135 & $<10^{-4}$ &  &  \\ 
  sch20 & 0.4070 & $<10^{-4}$ & 0.5619 & $<10^{-4}$ \\ 
  sch27 & -0.3530 & 0.0002 &  &  \\ 
  sch33 & 0.3088 & 0.0032 &  &  \\ 
  sch35 & 0.7650 & $<10^{-4}$ &  &  \\ 
  sch40 & 0.4672 & $<10^{-4}$ &  &  \\ 
  sch42 & 0.3836 & 0.0001 & 0.2750 & 0.0384 \\ 
  sch49 & -0.3465 & 0.0004 &  &  \\ 
  sch51 & -0.5468 & $<10^{-4}$ & -0.4526 & $<10^{-4}$ \\ 
  sch56 & -0.3861 & $<10^{-4}$ & -0.5451 & $<10^{-4}$ \\ 
  sch19 &  &  & -0.3956 & $<10^{-4}$ \\ 
  sch31 &  &  & 0.3844 & $<10^{-4}$ \\ 
\hline
 Local average: live with parents &  &  & 0.3617 & $<10^{-4}$ \\ 
    \hline
\end{tabular}}
\end{table}

\begin{figure}[h]
\centering
\includegraphics[width=1\textwidth]{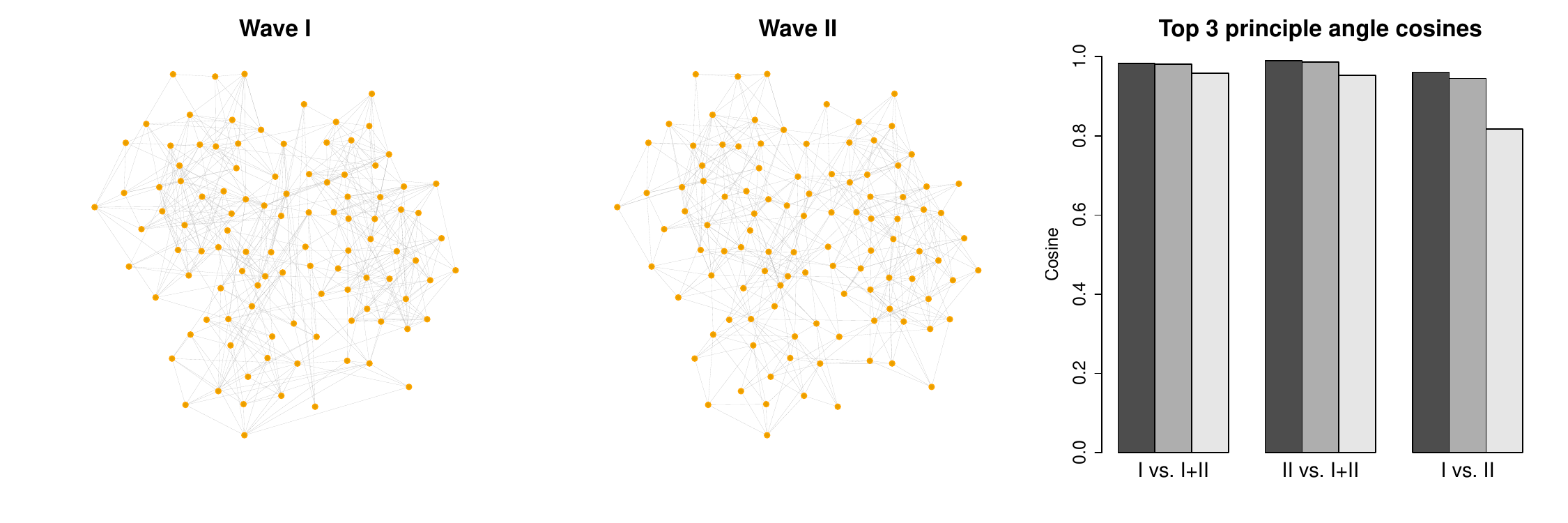}
\caption{Demonstration of network perturbation in two waves of surveys in one school. About 40\% edges in one network disappear in the other one.  However, the leading 3-dimensional eigenspaces of the networks still align well, indicated by the large cosine values of the principle angles. }
\label{fig:school40}
\end{figure}

\section{Conclusion}\label{sec:conclusion}
We have introduced a linear regression model on observations linked by a network. The model comes with a computationally efficient inference algorithm that can tolerate network observational errors. The study in this paper focuses on using Assumption~\ref{ass:small-perturbation} to control the estimation bias and deliver valid inference. It makes important progress toward the inference problem for the network-linked model.   

The next step is to explore whether a certain bias correction can be applied so that the small perturbation assumption can be further relaxed. Developing such a technique will require an accurate estimate of the perturbation $\hat{P} - P$ and, in turn, new tools for characterizing such random matrices. Meanwhile, the current focus is on the fixed design problem of $X$ and $P$;  the framework remains valid if we condition on $X$ and $P$ while assuming that they are independent. Exploring the model with dependence between $X$ and $P$ is another promising research direction. One such situation is when network evolution is observed over time.  In  \cite{goldsmith2013social} and  \cite{mcfowland2021estimating}, assuming the network is perfectly observed, the generative models between $X$ and $P$ lead to valid estimations of homophily effects.  It would be very interesting to study if embedding our subspace regression strategies in such settings would lead to robust inference framework of homophily.

\section*{Acknowledgement}
C. M. Le is supported in part by the NSF grant DMS-2015134. T. Li is supported in part by the NSF grant DMS-2015298 and the 3-Caverliers Award from the University of Virginia.

\bibliographystyle{rss}
\bibliography{CommonBib}{}

\newpage

\setcounter{page}{1}
\begin{appendix}

\begin{center}
\textbf{\Large Supplement to ``Linear regression and its inference on noisy network-linked data"}
\end{center}

\section{Extension to Laplacian individual effects}\label{sec:Laplacian} 
So far, we have treated $A$ as a perturbed version of $P$. A natural and popular alternative is to use the Laplacian matrix as the perturbed version of the relational information $P$ \citep{smola2003kernels,sadhanala2016graph,li2016prediction}. Specifically, we assume that $\alpha$ lies in the subspace determined by $K$ eigenvectors of $P = \e L = \e D - \e A$ corresponding to the smallest eigenvalues, while the estimation procedure is based on the perturbed version $\hat{P} = L=D-A$ of $P$, where $D$ is the diagonal matrix with node degrees $d_i$ on the diagonal. Similar to Assumption~\ref{ass:eigen-gap} about the spectrum of $\e A$, the following assumption about the spectrum of $\e L$ is made.  

\begin{ass}[Eigenvalue gap of the expected Laplacian]\label{ass:eigen-gap laplacian} 
Let $L=D-A$ be the Laplacian of a random network generated from the inhomogeneous Erd\H{o}s-R\'{e}nyi model and $P=\e L$. Denote by $\lambda_1\le\lambda_2\le \cdots \le\lambda_n$ the eigenvalues of $P$. Assume that the $K$ smallest eigenvalues of $P$
are well separated from the remaining eigenvalues and their range is not too large:
$$
\min_{i\le K, \ i'>K} |\lambda_i - \lambda_{i'}| \ge \rho' d, \qquad \max_{i, i'\le K} |\lambda_i - \lambda_{i'}| \le d/\rho',
$$
where $\rho'>0$ is a constant and $d =n\cdot \max_{ij}P_{ij}$.
\end{ass}

We show that, under the perturbation mechanism of Assumption~\ref{ass:eigen-gap laplacian}, the eigenvectors of the Laplacian corresponding to the $K$ smallest eigenvalues satisfy the small projection perturbation requirement, and therefore the inference framework based on  Algorithm~\ref{algo:estimation-most-general} with $\hat{P}=L$ remains valid.          

\begin{thm}[Concentration of perturbed projection for the Laplacian]\label{thm:projection-concentration-Laplacian}
Let ${w}_1,..., w_n$ and $\lambda_1\le\lambda_2\le \cdots \le\lambda_n$ be eigenvectors and corresponding eigenvalues of $\e L = \e D-\e A$ and similarly, let $\hat{w}_1,..., \hat{ w}_n$ and $\hat{\lambda}_1\le\hat{\lambda}_2\le\cdots\le \hat{\lambda}_n$ be the eigenvectors and eigenvalues of $L=D-A$. Denote $W = ({w}_1,...,{w}_K)$ and $\hat{W} = (\hat{{w}}_1,...,\hat{{w}}_K)$. Assume that Assumption~\ref{ass:eigen-gap laplacian} holds and $d \ge C\log{n}$ for a sufficiently large constant $C$. Then  for any fixed unit vector ${v}$, with high probability we have 
\begin{eqnarray*}
\|(\hat{W}\hat{W}^T - WW^T){v}\| \le  \frac{C\big[K\left(1+n\|W\|_{\infty}^2\right)\big]^{1/2}\log n}{d},
\end{eqnarray*}
where $\|W\|_{\infty}$ denotes the largest absolute value of entries in $W$. 
\end{thm}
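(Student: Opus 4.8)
The plan is to mirror the proof of Theorem~\ref{thm:projection-concentration}, with the role of the adjacency perturbation $A-\mathbb{E}A$ now played by $E:=L-\mathbb{E}L=(D-\mathbb{E}D)-(A-\mathbb{E}A)$, where we write $P$ for $\mathbb{E}A$ in this paragraph. The crucial structural point is that $E$ is still a sum of \emph{independent} rank-one contributions, one per potential edge,
\[
E=\sum_{k<\ell}\big(A_{k\ell}-P_{k\ell}\big)\,(e_k-e_\ell)(e_k-e_\ell)^T,
\]
so the same probabilistic toolbox (matrix Bernstein for the spectral norm; scalar Bernstein for bilinear forms) applies, and the entire resolvent/contour-integral expansion of $\hat W\hat W^T-WW^T$ used for the adjacency case carries over once the two quantitative inputs it relies on are supplied for this new $E$: a high-probability spectral-norm bound ensuring the perturbation series converges, and high-probability bounds on $w_i^T E u$ for the \emph{fixed} eigenvectors $w_i$ of $P$ in the cluster and the \emph{fixed} vectors $u$ (built from $v$ and the spectrum of $P$) that appear in that expansion. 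Since $W$ now spans the $K$ smallest eigenvalues of the Laplacian rather than the largest of the adjacency matrix, only the signs in the separation condition change, and the contour machinery is insensitive to this.

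\emph{Input 1 (spectral norm of $E$).} Since $\mathrm{diag}\big((A-P)\mathbf 1\big)$ is the diagonal matrix of degree deviations, $\|\mathrm{diag}((A-P)\mathbf 1)\|=\max_k\big|\sum_\ell(A_{k\ell}-P_{k\ell})\big|$, which by Bernstein and a union bound is $O(\sqrt{d\log n})$ with high probability when $d\ge C\log n$; combined with the standard bound $\|A-P\|=O(\sqrt d)$ in the same regime, this gives $\|E\|=O(\sqrt{d\log n})$. Under Assumption~\ref{ass:eigen-gap laplacian} the relevant eigenvalue gap is $\ge\rho'd$, so $\|E\|/(\rho'd)=O(\sqrt{\log n/d})=o(1)$ and the spectral-projector expansion converges. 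This $\sqrt{d\log n}$ scaling (versus $\sqrt d$ in the adjacency case) is the sole source of the extra logarithmic factor in the final bound: it enters through the $\|E\|^2$ (and higher) factors in the higher-order terms of the expansion.

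\emph{Input 2 (bilinear forms against fixed vectors).} For a fixed eigenvector $w_i$ of $P$ and any fixed vector $u$,
\[
w_i^T E u=\sum_{k<\ell}\big(A_{k\ell}-P_{k\ell}\big)\,(w_{i,k}-w_{i,\ell})(u_k-u_\ell),
\]
whose variance is at most $\tfrac dn\sum_{k<\ell}(w_{i,k}-w_{i,\ell})^2(u_k-u_\ell)^2\le 4d\|u\|^2\big(\|W\|_\infty^2+n^{-1}\big)$, using $(w_{i,k}-w_{i,\ell})^2\le 2(w_{i,k}^2+w_{i,\ell}^2)$, the analogous bound for $u$, and $\sum_k w_{i,k}^2u_k^2\le\|W\|_\infty^2\|u\|^2$. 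Bernstein then yields $|w_i^T E u|\lesssim\|u\|\big[d\log n\,(1+n\|W\|_\infty^2)/n\big]^{1/2}$ with high probability. The appearance of $\|W\|_\infty$ is the new feature relative to the adjacency case, where the corresponding variance is only $\tfrac dn\|u\|^2$; it is produced entirely by the diagonal part $D-\mathbb{E}D$. Feeding this into the reduction from the proof of Theorem~\ref{thm:projection-concentration} — where the relevant $u$'s have norm $O(1/d)$ and there are $K$ cluster eigenvectors $w_i$ — bounds the leading contribution by $\tfrac1d\sqrt K\big[d\log n(1+n\|W\|_\infty^2)/n\big]^{1/2}=\big[K(1+n\|W\|_\infty^2)\log n\big]^{1/2}/\sqrt{nd}\le\big[K(1+n\|W\|_\infty^2)\big]^{1/2}\log n/d$ (using $d\le n$), while the higher-order and self-referential corrections are absorbed by powers of $\|E\|/(\rho'd)=o(1)$ together with bounds of the form $\|Eu_0\|\lesssim\sqrt{d\log n}\,\|u_0\|$ for fixed $u_0$.

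\textbf{Main obstacle.} The two inputs above are routine; the delicate part is faithfully re-running the portion of the proof of Theorem~\ref{thm:projection-concentration} that controls the quadratic-and-higher-order terms in $E$ and the self-referential correction involving $\hat W\hat W^T-WW^T$ applied to a random vector — the places where $E$ is contracted against the \emph{random} $\hat W$ rather than deterministic vectors, so that Input~2 does not apply directly and one must instead rely on $\|E\|$-type estimates and close a bootstrap $\|(\hat W\hat W^T-WW^T)v\|\le(\text{small})+\tfrac{\|E\|}{\rho'd}\|(\hat W\hat W^T-WW^T)v\|+(\text{small})$. One must check that inserting the diagonal degree-deviation term does not degrade these operator-norm bounds beyond the $\sqrt{d\log n}$ scale, and — since the adjacency proof tracked no $\|W\|_\infty$ dependence — re-examine every bilinear estimate in the chain to confirm that the final dependence is exactly $\big[K(1+n\|W\|_\infty^2)\big]^{1/2}$ and no worse, in particular that each term stays within the budget $\big[K(1+n\|W\|_\infty^2)\big]^{1/2}\log n/d$.
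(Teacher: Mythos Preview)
Your proposal is correct and takes essentially the same route as the paper: bound $\|L-\mathbb{E}L\|=O(\sqrt{d\log n})$ by combining the standard adjacency concentration with a scalar Bernstein bound on degree deviations, plug this into the resolvent expansion $\hat W\hat W^T-WW^T=\sum_{k\ge 1}S_k$ from the proof of Theorem~\ref{thm:projection-concentration}, control $S_1v$ via bilinear-form concentration against the fixed vectors $w_j$ and $Q_jv$, and bound $\sum_{k\ge 2}\|S_k\|$ by a geometric series in $\|E'\|/d$. Your unified representation $E'=\sum_{k<\ell}(A_{k\ell}-P_{k\ell})(e_k-e_\ell)(e_k-e_\ell)^T$ with the single variance estimate $\lesssim d\|u\|^2(\|W\|_\infty^2+n^{-1})$ is a clean alternative to the paper's approach of splitting $w_j^TE'Q_jv$ into its adjacency and diagonal parts and bounding each by Hoeffding separately; both give the same order.

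Your ``main obstacle'' is a phantom: the proof of Theorem~\ref{thm:projection-concentration} contains no bootstrap or self-referential step, and $E$ is never contracted against the random $\hat W$. The higher-order terms are handled purely through operator norms, $\|S_k\|\le c\,(C\|E'\|/d)^k$ for $k\ge 2$, so once $\|E'\|=O(\sqrt{d\log n})$ the tail $\sum_{k\ge 2}\|S_k\|=O(\log n/d)$ already fits under the target bound and there is nothing further to check.
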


Compared to the bound for the adjacency matrix in Theorem~\ref{thm:projection-concentration}, the bound in Theorem~\ref{thm:projection-concentration-Laplacian} contains a new term $n\|W\|_\infty^2$. When $\e L$ is incoherent, $\|W\|_\infty^2$ is of order $1/n$ (for example under stochastic block model) and therefore the upper bound is in the order of $\sqrt{K\log^2n}/d$. Both the term $n\|W\|_\infty^2$ and the extra $\sqrt{\log n}$ factor are results of the deviation of node degrees from their expected values.

\begin{coro}[Small projection perturbation, Laplacian case]\label{coro:small-perturbation-L}
Assume that $L$ is used as $\hat{P}$ in Algorithm~\ref{algo:estimation-most-general} and $P = \e L$. If Assumption~\ref{ass:eigen-gap laplacian} holds and $d$ is sufficiently large so that
\begin{eqnarray*}
\frac{d\cdot\min\left\{(1-\sigma_{r+1})^3,\sigma_{r+s}^3\right\}}{\left[Knp\big(1+n\|W\|_{\infty}^2\right)\big]^{1/2}\log n} \rightarrow \infty,
\end{eqnarray*}
then Assumption~\ref{ass:small-perturbation} holds with high probability. In particular, the proposed inference framework in Section~\ref{secsec:generic-inference} is valid with high probability.
\end{coro}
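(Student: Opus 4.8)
The plan is to reduce the corollary to the single‑vector concentration bound of Theorem~\ref{thm:projection-concentration-Laplacian} (which I may assume) followed by bookkeeping. There are two claims to establish: (a) Assumption~\ref{ass:small-perturbation} holds with high probability under the stated degree condition, and (b) the inference results of Section~\ref{secsec:generic-inference} are then valid. I would dispatch (b) first, since it is immediate: Assumption~\ref{ass:small-perturbation} is the only network‑side hypothesis entering the proofs of Theorem~\ref{thm:general-beta-inference}, Corollary~\ref{coro:inference-individual-beta}, Theorem~\ref{thm:chisq-test} and Theorem~\ref{thm:theta-inference}, and the network randomness is conditioned on there; so on the high‑probability event produced in (a) those conclusions apply verbatim with $\hat P = L$ playing the role of $\hat P$ in Algorithm~\ref{algo:estimation-most-general}. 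The substance is in (a).

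For (a), recall $\tau_n = \norm{(\what W\what W^T - WW^T)Z}$ from \eqref{eq: RUU}, where now $W$ collects the $K$ eigenvectors of $P=\e L$ with the smallest eigenvalues, $\what W$ the corresponding eigenvectors of $L=D-A$, and $Z$ is an orthonormal basis of $\col(X)$. Because the design is fixed, each column $z_j$ of $Z$ ($j=1,\dots,p$) is a fixed unit vector, so I would apply Theorem~\ref{thm:projection-concentration-Laplacian} to each $z_j$ --- its hypothesis $d\ge C\log n$ is implied by the stated degree condition (the remaining factors there are bounded below by a constant and $\min\{(1-\sigma_{r+1})^3,\sigma_{r+s}^3\}\le 1$, so the condition already forces $d\gtrsim\log n$), and Assumption~\ref{ass:eigen-gap laplacian} is assumed --- obtaining, for each $j$ with high probability,
\[
\norm{(\what W\what W^T - WW^T)z_j}\ \le\ \frac{C\big[K(1+n\norm{W}_\infty^2)\big]^{1/2}\log n}{d}.
\]
A union bound over $j=1,\dots,p$ keeps this simultaneously valid with high probability, since the per‑vector failure probability in Theorem~\ref{thm:projection-concentration-Laplacian} can be taken as $n^{-c}$ for $c$ as large as desired (by enlarging $C$) while $p$ is at most polynomial in $n$. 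On that event I would pass from columns to the operator norm through the Frobenius norm,
\[
\tau_n\ \le\ \Big(\sum_{j=1}^{p}\norm{(\what W\what W^T - WW^T)z_j}^2\Big)^{1/2}\ \le\ \sqrt{p}\cdot\frac{C\big[K(1+n\norm{W}_\infty^2)\big]^{1/2}\log n}{d},
\]
and then divide by $\min\{(1-\sigma_{r+1})^3,\sigma_{r+s}^3\}$ and check that the stated degree condition drives the result to $o(1/\sqrt{np})$, which is Assumption~\ref{ass:small-perturbation}. This is the Laplacian analogue of Corollary~\ref{coro:small-perturbation-A}; the only new feature is the factor $1+n\norm{W}_\infty^2$, which is $O(1)$ precisely when $\e L$ is incoherent.

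At the level of the corollary itself the only thing that requires care is this passage from a single‑vector bound to the operator‑norm quantity $\tau_n$ (handled by the union bound and the Frobenius step above) and the verification that the stated density condition lines up with the resulting rate. The genuine difficulty is upstream, inside Theorem~\ref{thm:projection-concentration-Laplacian}: controlling a single coordinate $\norm{(\what W\what W^T - WW^T)v}$ of the perturbed Laplacian projection --- rather than the far weaker spectral‑norm bound on $\what W\what W^T - WW^T$ --- with the sharp dependence on $K$, on $d$, and on the incoherence of $\e L$ through $\norm{W}_\infty$; the extra $n\norm{W}_\infty^2$ and the extra $\sqrt{\log n}$ relative to the adjacency bound of Theorem~\ref{thm:projection-concentration} are the price of the diagonal degree matrix $D$ fluctuating around $\e D$. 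If one wished to remove the $\sqrt{p}$ slack incurred by the Frobenius step, one could instead cover the unit sphere of $\col(X)$ with an $\varepsilon$‑net and invoke a uniform version of Theorem~\ref{thm:projection-concentration-Laplacian}, which is affordable when $p=O(\log n)$, but the column‑wise bound is adequate for the densities the corollary covers.
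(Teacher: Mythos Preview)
Your proposal is correct and follows essentially the same route as the paper: the paper's proof simply says to repeat the argument for Corollary~\ref{coro:small-perturbation-A} with Theorem~\ref{thm:projection-concentration-Laplacian} in place of Theorem~\ref{thm:projection-concentration}, and that argument is exactly what you spell out---apply the single-vector bound to each fixed column $z_j$ of $Z$, take a union bound over $j=1,\dots,p$, and pass to the operator norm via the Frobenius norm. Your handling of part (b) and the verification that the degree condition forces $d\ge C\log n$ are also fine.
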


\section{Basic tools and properties for generic inference}

We start with the following lemma, which among other things allows us to rewrite $\mathcal{P}_{C}$ and $\mathcal{P}_{N}$ in a more convenient way for theoretical analysis. 

\begin{lem}[Properties of population projections]\label{lem:unorth proj}
%Denote $M=(\tilde{Z}_{(r+1):p},\tilde{W}_{(r+1):K})$, where $\tilde{Z}$ and $\tilde{W}$ are defined in \eqref{eq:tilde Z W}. 
Let $\mathcal{P}_{\ccal}$ and $\mathcal{P}_{\ncal}$ be the matrices defined by \eqref{eq:PC} and \eqref{eq:PN}, respectively. 
Then the following statements hold.
\begin{enumerate}[label=(\roman*)]
\item $\mathcal{P}_{\ccal}+\mathcal{P}_{\ncal} = M(M^TM)^{-1}M^T$ is the orthogonal projection onto the column space of $M$. 
\item $\mathcal{P}_{\ccal}$ and $\mathcal{P}_{\ncal}$ can be written as
\begin{eqnarray}
\label{eq:PC alt}\mathcal{P}_{C} &=& \sum_{i=r+1}^{r+s} \frac{1}{1-\sigma_i^2} \tilde{Z}_{i} \tilde{Z}^T_{i} - \sum_{i=r+1}^{r+s} \frac{\sigma_i}{1-\sigma_i^2} \tilde{Z}_{i} \tilde{W}^T_{i} + \sum_{i=r+s+1}^{p} \tilde{Z}_{i} \tilde{Z}^T_{i},\\
\label{eq:PN alt}\mathcal{P}_{N} &=& \sum_{i=r+1}^{r+s} \frac{1}{1-\sigma_i^2} \tilde{W}_{i} \tilde{W}^T_{i} - \sum_{i=r+1}^{r+s} \frac{\sigma_i}{1-\sigma_i^2} \tilde{W}_{i} \tilde{Z}^T_{i} + \sum_{i=r+s+1}^{K} \tilde{W}_{i} \tilde{W}^T_{i}.
\end{eqnarray}
\item For any $x\in \text{col}(\tilde{Z}_{(r+1):p})$ and $y\in \text{col}(\tilde{W}_{(r+1):K})$, we have
\begin{eqnarray*}
\mathcal{P}_{\ccal}(x+y) = x, \quad \mathcal{P}_{\ncal}(x+y) = y.
\end{eqnarray*}
%In other words, $\mathcal{P}_{\ccal}$ and $\mathcal{P}_{\ncal}$ are (unorthogonal) projections onto the column spaces of $\tilde{Z}_{(r+1):p}$ and $\tilde{W}_{(r+1):K}$, respectively. 
\item For any $x\in\mathbb{R}^n$, we have $\|\mathcal{P}_{\ccal}x\|\ge \|\tilde{Z}_{(r+1):p}^Tx\|$.
\end{enumerate}
\end{lem}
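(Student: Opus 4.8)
The plan is to dispatch the four parts in order, with part~(ii) carrying the computational work and the rest following quickly from it. For part~(i), I would add the two defining formulas to get
\[
\mathcal P_{\ccal}+\mathcal P_{\ncal}=\big(\tilde Z_{(r+1):p},\,\tilde W_{(r+1):K}\big)(M^TM)^{-1}M^T=M(M^TM)^{-1}M^T,
\]
so it only remains to check that $M$ has full column rank, which makes this the ordinary orthogonal projector onto $\col(M)$. Linear independence of the columns of $M$ comes from the model geometry: by \eqref{eq:identifiability in detail}, $\col(\tilde Z_{(r+1):p})\subseteq\col(X)$ and $\col(\tilde W_{(r+1):K})\subseteq S_K(P)$ are each orthogonal to $\rcal=\col(X)\cap S_K(P)$, so any nonzero vector in $\col(\tilde Z_{(r+1):p})\cap\col(\tilde W_{(r+1):K})$ would lie in $\rcal\cap\rcal^{\perp}=\{0\}$; since $\tilde Z_{(r+1):p}$ and $\tilde W_{(r+1):K}$ each have orthonormal columns, a relation $\tilde Z_{(r+1):p}a+\tilde W_{(r+1):K}b=0$ then forces $a=0$ and $b=0$. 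Part~(iii) is immediate once this is known: writing $x=\tilde Z_{(r+1):p}a$, $y=\tilde W_{(r+1):K}b$, we have $x+y=M\begin{pmatrix}a\\ b\end{pmatrix}$, so $\mathcal P_{\ccal}(x+y)=\big(\tilde Z_{(r+1):p},0\big)(M^TM)^{-1}(M^TM)\begin{pmatrix}a\\ b\end{pmatrix}=\tilde Z_{(r+1):p}a=x$, and symmetrically $\mathcal P_{\ncal}(x+y)=y$.

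For part~(ii), the engine is the block identity
\[
M^TM=\begin{pmatrix}I_{p-r}&\mathbf D\\ \mathbf D^T&I_{K-r}\end{pmatrix},\qquad \mathbf D:=\tilde Z_{(r+1):p}^T\tilde W_{(r+1):K},
\]
where the relation $\tilde Z^T\tilde W=U^T(Z^TW)V=\Sigma$ shows $\mathbf D$ is diagonal with entries $\sigma_{r+1},\dots,\sigma_{r+s}$ followed by zeros. Writing $\Gamma=\diag(\sigma_{r+1},\dots,\sigma_{r+s})$ gives $\mathbf D\mathbf D^T=\diag(\Gamma^2,0)$ with $\|\mathbf D\mathbf D^T\|=\sigma_{r+1}^2<1$, so $I-\mathbf D\mathbf D^T$ is invertible and the standard block-inverse formula makes the top block row of $(M^TM)^{-1}$ equal to $\big((I-\mathbf D\mathbf D^T)^{-1},\,-(I-\mathbf D\mathbf D^T)^{-1}\mathbf D\big)$. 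Substituting into $\mathcal P_{\ccal}=\big(\tilde Z_{(r+1):p},0\big)(M^TM)^{-1}M^T$ collapses it to
\[
\mathcal P_{\ccal}=\tilde Z_{(r+1):p}(I-\mathbf D\mathbf D^T)^{-1}\tilde Z_{(r+1):p}^T-\tilde Z_{(r+1):p}(I-\mathbf D\mathbf D^T)^{-1}\mathbf D\,\tilde W_{(r+1):K}^T,
\]
and since $(I-\mathbf D\mathbf D^T)^{-1}$ is the diagonal matrix with entries $(1-\sigma_{r+1}^2)^{-1},\dots,(1-\sigma_{r+s}^2)^{-1},1,\dots,1$, expanding it and $(I-\mathbf D\mathbf D^T)^{-1}\mathbf D$ column by column yields exactly \eqref{eq:PC alt}. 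The derivation of \eqref{eq:PN alt} is identical with $\tilde Z_{(r+1):p}$ and $\tilde W_{(r+1):K}$ interchanged.

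Part~(iv) again falls out of the same block inverse: since $\mathcal P_{\ccal}^Tx=M(M^TM)^{-1}\big(\tilde Z_{(r+1):p}^Tx,\,0\big)^T$, we get
\[
\|\mathcal P_{\ccal}^Tx\|^2=\big(\tilde Z_{(r+1):p}^Tx,\,0\big)(M^TM)^{-1}\big(\tilde Z_{(r+1):p}^Tx,\,0\big)^T=(\tilde Z_{(r+1):p}^Tx)^T(I-\mathbf D\mathbf D^T)^{-1}(\tilde Z_{(r+1):p}^Tx),
\]
and $0\preceq\mathbf D\mathbf D^T$ forces $(I-\mathbf D\mathbf D^T)^{-1}\succeq I$, so $\|\mathcal P_{\ccal}^Tx\|^2\ge\|\tilde Z_{(r+1):p}^Tx\|^2$. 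This is precisely the inequality needed to pass from the population hypothesis $\|\tilde Z_{(r+1):p}^T\tilde X\Theta\omega\|\ge c$ to a lower bound on the variance in \eqref{eq:true-var}. I would note that it is the transpose $\mathcal P_{\ccal}^T$, rather than $\mathcal P_{\ccal}$ itself, that dominates $\tilde Z_{(r+1):p}^T$ here---indeed $\mathcal P_{\ccal}$ annihilates every vector in $\col(\tilde W_{(r+1):K})$ whereas $\tilde Z_{(r+1):p}^T$ in general does not---and this transposed form is exactly what enters \eqref{eq:true-var}.

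The only real care is the indexing in part~(ii): when $p\neq K$ one must keep the block sizes straight so that the zero-singular-value directions of $\col(X)$ and of $S_K(P)$ reappear as the tail sums $\sum_{i=r+s+1}^{p}\tilde Z_i\tilde Z_i^T$ and $\sum_{i=r+s+1}^{K}\tilde W_i\tilde W_i^T$ in \eqref{eq:PC alt}--\eqref{eq:PN alt}, and check that the columnwise expansion of the diagonal blocks produces the stated coefficients $1/(1-\sigma_i^2)$ and $\sigma_i/(1-\sigma_i^2)$; beyond this bookkeeping no step seems to pose a genuine difficulty.
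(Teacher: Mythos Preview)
Your proposal is correct and follows the same route as the paper: both invert $M^TM$ explicitly (the paper via a $4\times4$ block layout separating the $\Gamma$-block from the zero-singular-value directions, you via a $2\times2$ Schur complement) to obtain (ii), and deduce the rest. Your handling of (iii) directly from the definition of $\mathcal P_{\ccal}$ is slicker than the paper's basis-vector verification via the expansions in (ii), and your observation that (iv) is really about $\mathcal P_{\ccal}^T$ is exactly right---the paper's own proof of (iv) silently computes $\|\mathcal P_{\ccal}^Tx\|^2$, which is precisely what enters the variance bound in Theorem~\ref{thm:general-beta-inference}.
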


A crucial step in proving our results is to bound the differences $\hat{\mathcal{P}}_C-\mathcal{P}_{\ccal}$ and $\hat{\mathcal{P}}_N-\mathcal{P}_{\ncal}$; see \eqref{eq:PChat}, \eqref{eq:PNhat}, \eqref{eq:PC} and \eqref{eq:PN} for the definition of these matrices. Since they depend
on $M$ and $\hat{M}$, it is tempting to bound $\hat{M}-M$ first and then use that bound to show that
$\hat{\mathcal{P}}_C-\mathcal{P}_{\ccal}$ and $\hat{\mathcal{P}}_N-\mathcal{P}_{\ncal}$ are small in magnitude. Unfortunately, $\hat{M}-M$ may not be small because in general singular vectors are unique only up to multiplying by a certain orthogonal matrix. Nevertheless, we will still be able to show that $\hat{\mathcal{P}}_C-\mathcal{P}_{\ccal}$ and $\hat{\mathcal{P}}_N-\mathcal{P}_{\ncal}$ are small
by leveraging the projection-type properties of $\hat{\mathcal{P}}_C$, $\mathcal{P}_{\ccal}$, $\hat{\mathcal{P}}_N$ and $\mathcal{P}_{\ncal}$, although the fact that
they may not be orthogonal projections makes the analysis much more involved. 

In light of
Lemma~\ref{lem:unorth proj}, it is convenient to introduce the following notations:
\begin{eqnarray}
\label{eq:PCtilde}\tilde{\mathcal{P}}_C
&=& \sum_{i=r+1}^{r+s} \frac{1}{1-\hat{\sigma}_i^2} \hat{Z}_{i} \hat{Z}^T_{i} - \sum_{i=r+1}^{r+s} \frac{\hat{\sigma}_i}{1-\hat{\sigma}_i^2} \hat{Z}_{i} \breve{W}^T_{i} + \sum_{i=r+s+1}^{p} \hat{Z}_{i} \hat{Z}^T_{i},\\
\label{eq:PNtilde}\tilde{\mathcal{P}}_N&=& \sum_{i=r+1}^{r+s} \frac{1}{1-\hat{\sigma}_i^2} \breve{W}_{i} \breve{W}^T_{i} - \sum_{i=r+1}^{r+s} \frac{\hat{\sigma}_i}{1-\hat{\sigma}_i^2} \breve{W}_{i} \hat{Z}^T_{i} + \sum_{i=r+s+1}^{K} \breve{W}_{i} \breve{W}^T_{i}.
\end{eqnarray}
The following lemma bound $\hat{\mathcal{P}}_C-\tilde{\mathcal{P}}_C$ and $\hat{\mathcal{P}}_N-\tilde{\mathcal{P}}_N$.     

\begin{lem}[Pertubation of projections, first step]\label{lem:comparing Phat and Ptilde}
If condition \eqref{eq: singular value gap condition} holds then
\begin{eqnarray*}
\max\{\|\hat{\mathcal{P}}_C-\tilde{\mathcal{P}}_C\|,\|\hat{\mathcal{P}}_N-\tilde{\mathcal{P}}_N\|\} \le  \frac{16\tau_n}{(1-\sigma_{r+1})^2}.
\end{eqnarray*} 
\end{lem}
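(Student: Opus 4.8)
The plan is to first write the algorithm's projections $\hat{\mathcal{P}}_C$ and $\hat{\mathcal{P}}_N$ in closed form, parallel to the population identities \eqref{eq:PC alt}--\eqref{eq:PN alt}, and then read off the discrepancy with $\tilde{\mathcal{P}}_C$ and $\tilde{\mathcal{P}}_N$ directly. The starting point is that, by \eqref{eq:svd sample} and \eqref{eq:Z hat W breve}, the columns of $\hat Z=Z\hat U$ and $\breve W=\hat W\hat V$ are orthonormal and $\hat Z^T\breve W=\hat U^TZ^T\hat W\hat V=\hat\Sigma$, a rectangular diagonal matrix. Hence, grouping the columns of $\hat M=(\hat Z_{(r+1):p},\breve W_{(r+1):K})$ into the pairs $(\hat Z_{r+i},\breve W_{r+i})$ for $i=1,\dots,\min\{p,K\}-r$ together with the leftover columns of whichever of $\hat Z_{(r+1):p}$, $\breve W_{(r+1):K}$ is longer, the Gram matrix $\hat M^T\hat M$ is block diagonal: a direct sum of $2\times2$ blocks $\begin{pmatrix}1 & \hat\sigma_{r+i}\\ \hat\sigma_{r+i} & 1\end{pmatrix}$ and $1\times1$ identity blocks. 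Weyl's inequality with \eqref{eq:singular value bound}, combined with \eqref{eq: singular value gap condition}, gives $\hat\sigma_{r+1}\le\sigma_{r+1}+\tau_n<1$, so every block is invertible, $\hat M$ has full column rank, and $(\hat M^T\hat M)^{-1}$ is the direct sum of $\frac{1}{1-\hat\sigma_{r+i}^2}\begin{pmatrix}1 & -\hat\sigma_{r+i}\\ -\hat\sigma_{r+i} & 1\end{pmatrix}$ and identity blocks.

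Multiplying out $\hat{\mathcal{P}}_C=(\hat Z_{(r+1):p},0_{n\times(K-r)})(\hat M^T\hat M)^{-1}\hat M^T$ block by block — a paired index $j$ contributes $\frac{1}{1-\hat\sigma_j^2}(\hat Z_j\hat Z_j^T-\hat\sigma_j\hat Z_j\breve W_j^T)$, a leftover $\hat Z_j$ contributes $\hat Z_j\hat Z_j^T$, and a leftover $\breve W_j$ contributes $0$ because that column is zeroed out in the first factor — gives
\begin{equation*}
\hat{\mathcal{P}}_C=\sum_{j=r+1}^{\min\{p,K\}}\frac{1}{1-\hat\sigma_j^2}\big(\hat Z_j\hat Z_j^T-\hat\sigma_j\hat Z_j\breve W_j^T\big)+\sum_{j=\min\{p,K\}+1}^{p}\hat Z_j\hat Z_j^T,
\end{equation*}
and, exchanging the roles of $\hat Z$ and $\breve W$, an analogous formula for $\hat{\mathcal{P}}_N$ with $\breve W$ in place of $\hat Z$ and the last sum running to $K$. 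Since $r+s$ equals the rank of $Z^TW$, we have $r+s\le\min\{p,K\}$, so comparing with \eqref{eq:PCtilde}--\eqref{eq:PNtilde} (which cut the cross-term sums at $r+s$) shows that $\hat{\mathcal{P}}_C-\tilde{\mathcal{P}}_C$ and $\hat{\mathcal{P}}_N-\tilde{\mathcal{P}}_N$ are supported on $j=r+s+1,\dots,\min\{p,K\}$, with
\begin{equation*}
\hat{\mathcal{P}}_C-\tilde{\mathcal{P}}_C=\sum_{j=r+s+1}^{\min\{p,K\}}\frac{\hat\sigma_j}{1-\hat\sigma_j^2}\,\hat Z_j(\hat\sigma_j\hat Z_j-\breve W_j)^T,
\end{equation*}
and the mirror expression for $\hat{\mathcal{P}}_N-\tilde{\mathcal{P}}_N$ obtained by swapping $\hat Z_j$ and $\breve W_j$.

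It then remains to bound the spectral norm of this sum. For $j>r+s$ we have $\sigma_j=0$, so \eqref{eq:singular value bound} gives $|\hat\sigma_j|\le\tau_n$, and \eqref{eq: singular value gap condition} forces $\tau_n\le 1/40$, whence $1-\hat\sigma_j^2\ge 1-\tau_n^2\ge 0.999$. The vectors $\hat Z_{r+s+1},\dots,\hat Z_{\min\{p,K\}}$ are orthonormal, and from $\hat Z_j^T\hat Z_k=\delta_{jk}$, $\breve W_j^T\breve W_k=\delta_{jk}$ and $\hat Z_j^T\breve W_k=\hat\sigma_j\delta_{jk}$ one checks that $v_j:=\hat\sigma_j\hat Z_j-\breve W_j$ satisfy $v_j^Tv_k=\delta_{jk}(1-\hat\sigma_j^2)$, i.e. they are mutually orthogonal with $\|v_j\|=\sqrt{1-\hat\sigma_j^2}$. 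Writing the difference as $\sum_j c_j\,\hat Z_j(v_j/\|v_j\|)^T$ with $c_j=\hat\sigma_j/\sqrt{1-\hat\sigma_j^2}$ and using orthonormality of both families, its spectral norm equals $\max_j|c_j|\le\tau_n/\sqrt{0.999}\le 2\tau_n\le 16\tau_n/(1-\sigma_{r+1})^2$, the last step because $(1-\sigma_{r+1})^2\le1$; the identical bound holds for $\hat{\mathcal{P}}_N-\tilde{\mathcal{P}}_N$.

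The main obstacle is the first stage — establishing the closed forms for $\hat{\mathcal{P}}_C$ and $\hat{\mathcal{P}}_N$. The block-diagonalization of $\hat M^T\hat M$ and the ensuing multiplication are elementary once one spots $\hat Z^T\breve W=\hat\Sigma$, but the bookkeeping needs care, especially the treatment of the unpaired columns that appear when $p\neq K$ and the verification that $\hat M$ has full column rank. The remaining estimate is short and in fact yields the sharper constant $2$; the weaker form $16/(1-\sigma_{r+1})^2$ in the statement would be kept only for uniformity with the companion lemmas.
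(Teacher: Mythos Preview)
Your proof is correct and takes a genuinely different route from the paper's. The paper introduces an auxiliary matrix $Q$ obtained from $\hat M^T\hat M$ by zeroing out the off-diagonal block $\hat\Phi$ (carrying the singular values $\hat\sigma_{r+s+1},\dots,\hat\sigma_{\min\{p,K\}}$), verifies that $\tilde{\mathcal{P}}_C=(\hat Z_{(r+1):p},0)Q^{-1}\hat M^T$ and likewise for $\tilde{\mathcal{P}}_N$, and then bounds $\|(\hat M^T\hat M)^{-1}-Q^{-1}\|$ via the identity $A^{-1}-B^{-1}=A^{-1}(B-A)B^{-1}$ together with a Weyl lower bound on the smallest eigenvalue of both Gram matrices; multiplying by $\|\hat M\|$ produces the constant $16/(1-\sigma_{r+1})^2$. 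You instead invert $\hat M^T\hat M$ block by block, write $\hat{\mathcal{P}}_C$ out explicitly, and subtract $\tilde{\mathcal{P}}_C$ term by term, so that the difference is visibly a sum of rank-one pieces over indices $j>r+s$ with mutually orthogonal left and right factors. Your approach is more hands-on but yields the sharper bound $O(\tau_n)$ because it exploits the exact orthogonal structure of the difference rather than a generic operator-norm perturbation of matrix inverses; the paper's route trades this bookkeeping for a cleaner argument at the cost of the looser constant, which is in any case immaterial for the downstream Corollary~\ref{cor:pertubation of projections}.
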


We will bound the differences $\hat{\mathcal{P}}_C-\mathcal{P}_{\ccal}$ and $\hat{\mathcal{P}}_N-\mathcal{P}_{\ncal}$ by bounding $\tilde{\mathcal{P}}_C-\mathcal{P}_{\ccal}$ and $\tilde{\mathcal{P}}_N-\mathcal{P}_{\ncal}$, and then combining them with Lemma~\ref{lem:comparing Phat and Ptilde}. For that purpose, denote
\begin{eqnarray}
\delta_1 &=& \sum_{i=r+1}^{r+s} \frac{1}{1-\hat{\sigma}_i^2} \hat{Z}_{i} \hat{Z}^T_{i} -\sum_{i=r+1}^{r+s} \frac{1}{1-\sigma_i^2} \tilde{Z}_{i} \tilde{Z}^T_{i},\\
\delta_2 &=&\sum_{i=r+1}^{r+s} \frac{\hat{\sigma}_i}{1-\hat{\sigma}_i^2} \hat{Z}_{i} \breve{W}^T_{i} - \sum_{i=r+1}^{r+s} \frac{\sigma_i}{1-\sigma_i^2} \tilde{Z}_{i} \tilde{W}^T_{i},\\ \delta_3 &=&  \sum_{i=r+s+1}^{p} \hat{Z}_{i} \hat{Z}^T_{i} - \sum_{i=r+s+1}^{p} \tilde{Z}_{i} \tilde{Z}^T_{i},\\
\delta_4 &=& \sum_{i=r+1}^{r+s} \frac{1}{1-\hat{\sigma}_i^2} \breve{W}_{i} \breve{W}^T_{i} - \sum_{i=r+1}^{r+s} \frac{1}{1-\sigma_i^2} \tilde{W}_{i} \tilde{W}^T_{i},\\
\delta_5 &=& \sum_{i=r+1}^{r+s} \frac{\hat{\sigma}_i}{1-\hat{\sigma}_i^2} \breve{W}_{i} \hat{Z}^T_{i} - \sum_{i=r+1}^{r+s} \frac{\sigma_i}{1-\sigma_i^2} \tilde{W}_{i} \tilde{Z}^T_{i},\\
\delta_6 &=& \sum_{i=r+s+1}^{K} \breve{W}_{i} \breve{W}^T_{i} - \sum_{i=r+s+1}^{K} \tilde{W}_{i} \tilde{W}^T_{i}.
\end{eqnarray}
By Lemma~\ref{lem:unorth proj}, $\tilde{\mathcal{P}}_C-\mathcal{P}_{\ccal} = \delta_1-\delta_2+\delta_3$ and $\tilde{\mathcal{P}}_N-\mathcal{P}_{\ncal} = \delta_4-\delta_5+\delta_6$. The following lemma provides upper bounds for all $\|\delta_i\|$.

\begin{lem}[Perturbation of projections, second step]\label{lem:bounding perturbation unorthogonal proj}
If condition \eqref{eq: singular value gap condition} holds then 
\begin{eqnarray*}
\max_{1\le i\le 6}\|\delta_i\| \le \frac{60\tau_n}{\min\{(1-\sigma_{r+1})^3,\sigma_{r+s}^3\}}.
\end{eqnarray*}
\end{lem}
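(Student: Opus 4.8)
The backbone of the argument is the identity \eqref{eq: RUU decomposition}, which exhibits $N:=Z^{T}WW^{T}$ and $\hat N:=Z^{T}\hat W\hat W^{T}$ as $p\times n$ matrices with singular value decompositions $N=U\Sigma\tilde W^{T}$ and $\hat N=\hat U\hat\Sigma\breve W^{T}$, and for which $\|\hat N-N\|=\tau_{n}$ is exactly the definition \eqref{eq: RUU}. Since $\tilde Z=ZU$ and $\hat Z=Z\hat U$, every rank-one block appearing in the $\delta_{i}$ can be rewritten through $N$ and $\hat N$: e.g. $\tilde Z_{i}\tilde Z_{i}^{T}=ZU_{i}U_{i}^{T}Z^{T}$ and $\hat Z_{i}\breve W_{i}^{T}=Z\hat U_{i}\breve W_{i}^{T}$, so that $\delta_{1}$ and $\delta_{4}$ become (conjugates by $Z$ of) differences of the left-, respectively right-singular functional calculus of $\hat N$ versus $N$ with weight $\sigma\mapsto(1-\sigma^{2})^{-1}$ applied to the middle block of singular values; $\delta_{2},\delta_{5}$ become analogous differences for the ``cross'' (left--right) calculus with weight $\sigma\mapsto\sigma(1-\sigma^{2})^{-1}$; and $\delta_{3},\delta_{6}$, using $\sum_{i>r+s}\hat Z_{i}\hat Z_{i}^{T}=ZZ^{T}-\sum_{i\le r+s}\hat Z_{i}\hat Z_{i}^{T}$ and its right-sided analogue, reduce to differences of the spectral projections of $\hat N$, $N$ onto their singular subspaces for singular values $\ge\sigma_{r+s}$. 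It therefore suffices to bound, in operator norm, these functional-calculus and spectral-projection differences, the outer $Z,Z^{T}$ being contractions.

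Next I would record the consequences of Weyl's inequality \eqref{eq:singular value bound}, $|\hat\sigma_{i}-\sigma_{i}|\le\tau_{n}$, together with the gap hypothesis \eqref{eq: singular value gap condition}: for both $N$ and $\hat N$ the middle singular values stay at distance at least $\tfrac12(1-\sigma_{r+1})$ from $1$ and at least $\tfrac12\sigma_{r+s}$ from $0$. This controls the scalar weights (both $(1-\hat\sigma_{i}^{2})^{-1}$ and $\hat\sigma_{i}(1-\hat\sigma_{i}^{2})^{-1}$ are $O((1-\sigma_{r+1})^{-1})$ and differ from their unperturbed values by $O(\tau_{n}(1-\sigma_{r+1})^{-2})$), handles $\delta_{3},\delta_{6}$ via the Davis--Kahan $\sin\Theta$ theorem with eigenvalue gap $\ge\tfrac12\sigma_{r+s}$ (yielding a bound of order $\tau_{n}/\sigma_{r+s}$, within the claim), and guarantees that the perturbed middle singular values stay inside the contour used below.

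The heart of the matter is $\delta_{1},\delta_{2},\delta_{4},\delta_{5}$. I would represent each middle-block functional calculus as a Cauchy integral of the resolvent of the Hermitian dilation $\mathcal D$ of $N$ --- the symmetric $(p+n)\times(p+n)$ block matrix with $N$, $N^{T}$ off the diagonal, whose nonzero eigenvalues are the $\pm\sigma_{i}$, with eigenvectors built from the left and right singular vectors. With $\Gamma$ a short contour encircling $\pm[\sigma_{r+s},\sigma_{r+1}]$ at distance $\tfrac12\min\{1-\sigma_{r+1},\sigma_{r+s}\}$ from the spectrum, the appropriate block of $\tfrac{1}{2\pi i}\oint_{\Gamma}h(z)(zI-\mathcal D)^{-1}\,dz$ equals the required calculus, for $h(z)=(1-z^{2})^{-1}$ (even, for $\delta_{1},\delta_{4}$) or $h(z)=z(1-z^{2})^{-1}$ (odd, for $\delta_{2},\delta_{5}$); the only singularities of $h$, at $z=\pm1$, lie outside $\Gamma\subset(-1,1)$, and $|h|=O((1-\sigma_{r+1})^{-1})$ on $\Gamma$. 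Since $\|\hat{\mathcal D}-\mathcal D\|=\|\hat N-N\|=\tau_{n}$, the resolvent identity gives $\|(zI-\hat{\mathcal D})^{-1}-(zI-\mathcal D)^{-1}\|\le\tau_{n}/\mathrm{dist}(z)^{2}$ on $\Gamma$, and integrating ($\Gamma$ has length $O(1)$) yields $\|\delta_{i}\|\le C\tau_{n}/[(1-\sigma_{r+1})\min\{1-\sigma_{r+1},\sigma_{r+s}\}^{2}]$. The elementary inequality $\min\{(1-\sigma_{r+1})^{3},\sigma_{r+s}^{3}\}\le(1-\sigma_{r+1})\min\{1-\sigma_{r+1},\sigma_{r+s}\}^{2}$ (using $0<\sigma_{r+s}\le1$) converts this into the stated bound, and tracking constants through the above gives the factor $60$.

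The main obstacle is that the middle-block singular values may be repeated or arbitrarily clustered, so the individual perturbed vectors $\hat Z_{i},\breve W_{i}$ need not be close to $\tilde Z_{i},\tilde W_{i}$, and no term-by-term estimate is available; one must treat each weighted sum as a single spectral object. The delicate points in doing so are (i) keeping the gap in the denominator equal to $\min\{1-\sigma_{r+1},\sigma_{r+s}\}$ rather than its square --- hence working with the dilation of $N=Z^{T}WW^{T}$, whose singular values are the principal cosines $\sigma_{i}$, instead of $NN^{T}$, whose eigenvalues $\sigma_{i}^{2}$ would shrink the gap and give a bound too weak for Assumption~\ref{ass:small-perturbation} --- and (ii) placing the contour so as to enclose all (perturbed and unperturbed) middle eigenvalues while avoiding the poles of $h$ at $\pm1$, for which one checks that \eqref{eq: singular value gap condition} is quantitatively strong enough.
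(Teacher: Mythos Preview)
Your approach is correct and genuinely different from the paper's. The paper does not use the Hermitian dilation or any contour-integral machinery; instead it establishes two elementary auxiliary lemmas. The first (Lemma~\ref{lem: projection bound}) collects Davis--Kahan/Wedin-type bounds for the left and right singular projections of $Z^{T}WW^{T}$ versus $Z^{T}\hat W\hat W^{T}$, together with direct estimates for $\|(I-\hat\Gamma^{2})^{-1}-(I-\Gamma^{2})^{-1}\|$ and for the weighted projection difference $\|\hat U_{\eta}(I-\hat\Gamma^{2})^{-1}\hat U_{\eta}^{T}-U_{\eta}(I-\Gamma^{2})^{-1}U_{\eta}^{T}\|$; this handles $\delta_{1},\delta_{3},\delta_{4},\delta_{6}$ directly. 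The second (Lemma~\ref{lem: unorthogonal projection bound}) bounds $\|U\Sigma(I-\Sigma^{2})^{-1}V^{T}-\hat U\hat\Sigma(I-\hat\Sigma^{2})^{-1}\hat V^{T}\|$ by expanding $\Sigma(I-\Sigma^{2})^{-1}=\sum_{k\ge 0}\Sigma^{2k+1}$ as a convergent power series and estimating term by term; this, chained after the restricted-SVD bound of Lemma~\ref{lem: projection bound}, handles $\delta_{2}=\delta_{5}^{T}$ and is where the worst constant $60$ arises.

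Your resolvent/dilation route buys a unified treatment of all six terms and would extend without change to other smooth weights $h$; the paper's route is more elementary (no complex analysis), keeps every constant explicit, and makes the origin of the cubic power in the denominator transparent as a product of a quadratic factor from $(1-\sigma^{2})^{-2}$ and a linear factor from the Davis--Kahan gap. One caveat: your closing remark that ``tracking constants gives the factor $60$'' is optimistic --- the contour length and resolvent bounds will produce a different absolute constant. This is harmless for every application downstream (only $O(\tau_{n}/\min\{\cdot\}^{3})$ is ever used), but if you want the literal $60$ you would have to reproduce the paper's chain of inequalities for $\delta_{2}$.
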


The proofs of Lemmas~\ref{lem:unorth proj}, \ref{lem:comparing Phat and Ptilde} and \ref{lem:bounding perturbation unorthogonal proj} are postponed to later in this section. As a direct consequence of these results and the triangle inequality, we have the following bound. 

\begin{coro}[Pertubation of projections]\label{cor:pertubation of projections}
If condition \eqref{eq: singular value gap condition} holds then there exists a constant $C>0$ such that
\begin{eqnarray*}
\max\{\|\hat{\mathcal{P}}_C-{\mathcal{P}}_C\|,\|\hat{\mathcal{P}}_N-{\mathcal{P}}_N\|\} \le \frac{C\tau_n}{\min\{(1-\sigma_{r+1})^3,\sigma_{r+s}^3\}}.
\end{eqnarray*}
\end{coro}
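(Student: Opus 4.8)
The plan is to obtain this corollary with no new estimates, purely by combining the three lemmas stated immediately above it with the triangle inequality. For the column-space projection I would write
\begin{equation*}
\|\hat{\mathcal{P}}_C-\mathcal{P}_C\| \le \|\hat{\mathcal{P}}_C-\tilde{\mathcal{P}}_C\| + \|\tilde{\mathcal{P}}_C-\mathcal{P}_C\|,
\end{equation*}
and analogously $\|\hat{\mathcal{P}}_N-\mathcal{P}_N\| \le \|\hat{\mathcal{P}}_N-\tilde{\mathcal{P}}_N\| + \|\tilde{\mathcal{P}}_N-\mathcal{P}_N\|$, so it suffices to control the two terms on the right in each case.

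The first term is handled directly by Lemma~\ref{lem:comparing Phat and Ptilde}: under condition \eqref{eq: singular value gap condition} it gives $\|\hat{\mathcal{P}}_C-\tilde{\mathcal{P}}_C\|$ and $\|\hat{\mathcal{P}}_N-\tilde{\mathcal{P}}_N\|$ both at most $16\tau_n/(1-\sigma_{r+1})^2$; since $1-\sigma_{r+1}\in(0,1]$ we have $(1-\sigma_{r+1})^2\ge(1-\sigma_{r+1})^3\ge\min\{(1-\sigma_{r+1})^3,\sigma_{r+s}^3\}$, so this bound is at most $16\tau_n/\min\{(1-\sigma_{r+1})^3,\sigma_{r+s}^3\}$. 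For the second term I would use the identities $\tilde{\mathcal{P}}_C-\mathcal{P}_C=\delta_1-\delta_2+\delta_3$ and $\tilde{\mathcal{P}}_N-\mathcal{P}_N=\delta_4-\delta_5+\delta_6$, which follow from Lemma~\ref{lem:unorth proj} together with the definitions \eqref{eq:PCtilde}--\eqref{eq:PNtilde}, then apply the triangle inequality and Lemma~\ref{lem:bounding perturbation unorthogonal proj}, which bounds each $\|\delta_i\|$ by $60\tau_n/\min\{(1-\sigma_{r+1})^3,\sigma_{r+s}^3\}$. This yields $\|\tilde{\mathcal{P}}_C-\mathcal{P}_C\|\le 180\tau_n/\min\{(1-\sigma_{r+1})^3,\sigma_{r+s}^3\}$ and the same for the $N$ version. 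Adding the two pieces gives the claimed bound with $C=196$ (any larger constant works as well), and taking the maximum over the $C$ and $N$ versions finishes the proof.

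Given the preceding lemmas, there is no genuine obstacle in this assembly step: it is three applications of the triangle inequality plus the elementary comparison of the powers of $1-\sigma_{r+1}$. The real difficulty sits one level down, in Lemma~\ref{lem:bounding perturbation unorthogonal proj}, whose proof must control each $\delta_i$ despite the fact that $\mathcal{P}_C$ and $\mathcal{P}_N$ are not orthogonal projections — in particular, tracking how the coefficients $1/(1-\hat\sigma_i^2)$ and $\hat\sigma_i/(1-\hat\sigma_i^2)$ perturb near the boundary cases $\sigma_i=1$ and $\sigma_i=0$, which is precisely where the cubic factors in the denominator come from. In the present statement, however, all of that work has already been done, so I would simply cite those lemmas and present the chain of inequalities above.
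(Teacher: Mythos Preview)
Your proposal is correct and matches the paper's own argument exactly: the corollary is stated as ``a direct consequence of these results and the triangle inequality,'' and your write-up spells out precisely that assembly, splitting via $\tilde{\mathcal{P}}_C,\tilde{\mathcal{P}}_N$ and invoking Lemmas~\ref{lem:comparing Phat and Ptilde} and~\ref{lem:bounding perturbation unorthogonal proj} together with the decomposition coming from Lemma~\ref{lem:unorth proj}.
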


We are now ready to prove Proposition~\ref{prop:beta-bias} about the bias of $\hat{\beta}$. 

\begin{proof}[Proof of Proposition~\ref{prop:beta-bias}]
Recall the estimate $\hat{\beta}$ in \eqref{eq:alpha beta theta hat}, identity of $\beta$ in \eqref{eq:alpha beta proj form}, and the model $Y=X\beta+X\theta+\alpha+\epsilon$ from \eqref{eq:RNC}.   
Taking the expectation with respect to the randomness of the noise $\epsilon$, we get
\begin{eqnarray}\label{eq:expected alpha hat beta hat}
\mathbb{E}\hat{\beta} = (X^TX)^{-1}X^T \hat{\mathcal{P}}_C(X\beta+X\theta+\alpha).  %\mathbb{E}\hat{\alpha} = \hat{\mathcal{P}}_N(X\beta+X\theta+\alpha), \quad 
\end{eqnarray}  
%Comparing with \eqref{eq:alpha beta proj form}, we see that the problem of bounding the bias of $\hat{\beta}$ boils down to bounding $\|\hat{\mathcal{P}}_C-\mathcal{P}_{\ccal}\|$. 
%By Assumption~\ref{ass:small-perturbation}, there exists a constant $C>0$ such that 
%\begin{eqnarray}\label{eq:verify singular value gap}
%\min\{(1-\sigma_{r+1})^2,\sigma_{r+s}^2\}\ge\min\{(1-\sigma_{r+1})^3,\sigma_{r+s}^3\}\ge \frac{\sqrt{np}\cdot\tau_n}{C} \ge 40\tau_n
%\end{eqnarray}
%when $np$ is sufficiently large, which means the condition \eqref{eq: singular value gap condition} is satisfied. 
By \eqref{eq:alpha beta proj form}, \eqref{eq:expected alpha hat beta hat} and Corollary~\ref{cor:pertubation of projections}, we have
\begin{eqnarray*}
\|\mathbb{E}\hat{\beta}-\beta\| &=& \|(X^TX)^{-1}X^T(\hat{\mathcal{P}}_C-\mathcal{P}_{\ccal})(X\beta+X\theta+\alpha)\|\\
&\le& \|(X^TX)^{-1}\|\cdot\|X\|\cdot\|X\beta+X\theta+\alpha\|\cdot\|\hat{\mathcal{P}}_C-\mathcal{P}_{\ccal}\|\\
&\le& \|(X^TX)^{-1}\|\cdot\|X\|\cdot\|X\beta+X\theta+\alpha\|\cdot\frac{C\tau_n}{\left(\min\{1-\sigma_{r+1},\sigma_{r+s}\}\right)^3}.
\end{eqnarray*}
The proof is complete.
\end{proof}

Next, we prove Proposition~\ref{prop: bias variance of gamma hat} about the consistency of $\hat{\alpha}$ and the bias and variance of $\hat{\gamma}$.

\begin{proof}[Proof of Proposition~\ref{prop: bias variance of gamma hat}]
%According to \eqref{eq:verify singular value gap},
%Assumption~\ref{ass:small-perturbation} implies condition \eqref{eq: singular value gap condition} for sufficiently large $np$, so the conclusion of Corollary~\ref{cor:pertubation of projections} holds. 

\noindent{\bf Bounding $\|\hat{\alpha}-\alpha\|$.}  From \eqref{eq:alpha beta proj form}, \eqref{eq:alpha beta theta hat} and \eqref{eq:RNC} we have 
$$\hat{\alpha}-\alpha = (\hat{\mathcal{P}}_N-\mathcal{P}_{\ncal})(X\beta+X\theta+\alpha+\epsilon)+\mathcal{P}_{\ncal}\epsilon.$$ 
By Assumption~\ref{ass:signal-scale} and the fact that $\|\epsilon\| = O(\sqrt{n\sigma^2})$ with high probability,
$$
\|X\beta+X\theta+\alpha+\epsilon\| \le C\sqrt{n(p+\sigma^2)}.
$$
Using the triangle inequality and Corollary~\ref{cor:pertubation of projections}, we get
\begin{eqnarray}
\nonumber\|\hat{\alpha}-\alpha\| &\le& \big\|\hat{\mathcal{P}}_N-\mathcal{P}_{\ncal}\big\|\cdot\|X\beta+X\theta+\alpha+\epsilon\|+\big\|{\mathcal{P}}_N\epsilon\big\|\\
\label{eq:alpha-alphahat}&\le& \frac{C\tau_n\sqrt{n(p+\sigma^2)}}{\min\{(1-\sigma_{r+1})^3,\sigma_{r+s}^3\}} +\big\|{\mathcal{P}}_N\epsilon\big\|.
\end{eqnarray}
To bound $\big\|{\mathcal{P}}_N\epsilon\big\|$, we first use \eqref{eq:tilde Z W} to rewrite $\tilde{W}_i$ as $\tilde{W}_i = WV_i$ for $i\le K$. Then from \eqref{eq:PN}, we can rewrite $\mathcal{P}_{\ncal}$ as follows:   
%\begin{eqnarray*}
%\mathcal{P}_{N} &=& \tilde{W}_{(r+s):(r+s)}(1-\Gamma^2)^{-1}\tilde{W}_{(r+s):(r+s)}^T \\
%&& - \tilde{W}_{(r+s):(r+s)}\Gamma(1-\Gamma^2)^{-1}\tilde{Z}_{(r+s):(r+s)}^T\\
%&&+ \tilde{W}_{(r+s+1):K}\tilde{W}_{(r+s+1):K}^T \\
%\end{eqnarray*}
\begin{eqnarray*}
\mathcal{P}_{N} &=& W\left(\sum_{i=r+1}^{r+s} \frac{1}{1-\sigma_i^2} V_{i} \tilde{W}^T_{i} - \sum_{i=r+1}^{r+s} \frac{\sigma_i}{1-\sigma_i^2} V_{i} \tilde{Z}^T_{i} + \sum_{i=r+s+1}^{K} V_{i} \tilde{W}^T_{i}\right) = :WQ,
\end{eqnarray*}
where $Q$ is a fixed matrix of rank at most $2K$. Moreover, for any unit vector $x$,
\begin{eqnarray*}
\|Qx\| &\le& \left\|\sum_{i=r+1}^{r+s} \frac{1}{1-\sigma_i^2} V_{i} \tilde{W}^T_{i}x + \sum_{i=r+s+1}^{K} V_{i} \tilde{W}^T_{i}x\right\| + \left\|\sum_{i=r+1}^{r+s} \frac{\sigma_i}{1-\sigma_i^2} V_{i} \tilde{Z}^T_{i}x \right\|\\
&=&\left(\sum_{i=r+1}^{r+s} \frac{(\tilde{W}^T_{i}x)^2}{(1-\sigma_i^2)^2}  + \sum_{i=r+s+1}^{K} (\tilde{W}^T_{i}x)^2\right)^{1/2}+\left(\sum_{i=r+1}^{r+s} \frac{\sigma_i^2(\tilde{Z}^T_{i}x)^2}{(1-\sigma_i^2)^2} \right)^{1/2}\\
&\le& \frac{\|\tilde{W}^Tx\|}{1-\sigma_{r+1}^2}+\frac{\|\tilde{Z}x\|}{1-\sigma_{r+1}^2}\\
&\le& \frac{2}{1-\sigma_{r+1}^2}.
\end{eqnarray*}
Thus, $\|Q\|\le 2/(1-\sigma_{r+1}^2)$ and $Q\epsilon$ is a Gaussian vector of dimension at most $2K$ and covariance matrix with spectral norm at most $4\sigma^2/(1-\sigma_{r+1}^2)^2$. Therefore with high probability 
$$\|Q\epsilon\| \le \frac{C\sqrt{K\sigma^2}\cdot\log n}{1-\sigma_{r+1}^2}.$$  
In particular,
\begin{eqnarray*}
\|\mathcal{P}_{\ncal}\epsilon\| \le \|W\|_{2\to\infty}\cdot\|Q\epsilon\| \le \|W\|_{2\to\infty}\cdot\frac{C\sqrt{K\sigma^2}\cdot\log n}{1-\sigma_{r+1}^2}. 
\end{eqnarray*}
Together with \eqref{eq:alpha-alphahat}, this implies the required bound for $\|\hat{\alpha}-\alpha\|$.

\noindent{\bf Bounding the bias of $\hat{\gamma}$.} By \eqref{eq:alpha beta proj form}, 
$$
\gamma = \tilde{W}_{(r+1):K}^T\alpha = \tilde{W}_{(r+1):K}^T\mathcal{P}_{\ncal}(X\beta+X\theta+\alpha).
$$
Similarly, by \eqref{eq:gamma hat}, \eqref{eq:RNC} and taking the expectation with respect to the randomness of the noise $\epsilon$, we have
$$
\mathbb{E}\hat{\gamma} = \breve{W}_{(r+1):K} \hat{\mathcal{P}}_N\mathbb{E} Y = \breve{W}_{(r+1):K} \hat{\mathcal{P}}_N (X\beta+X\theta+\alpha).
$$
Let $O = \breve{W}^T_{(r+1):K}\tilde{W}_{(r+1):K}$, then 
\begin{equation}\label{eq:bias gammahat}
\mathbb{E}\hat{\gamma}-O\gamma = \left(\breve{W}^T_{(r+1):K}\hat{\mathcal{P}}_N-O\tilde{W}_{(r+1):K}^T\mathcal{P}_{\ncal}\right)(X\beta+X\theta +\alpha).
\end{equation}
To bound this bias, we first replace $\hat{\mathcal{P}}_N$ with $\mathcal{P}_{\ncal}$ and then estimate the resulting expression. By Corollary~\ref{cor:pertubation of projections}, there exists a constant $C>0$ such that
\begin{equation*}
\|\hat{\mathcal{P}}_N-\mathcal{P}_{\ncal}\| \le \frac{C\tau_n}{\min\{(1-\sigma_{r+1})^3,\sigma_{r+s}^3\}}.
\end{equation*}
Together with Assumption~\ref{ass:signal-scale}, this implies
\begin{eqnarray*}
\big\|\breve{W}^T_{(r+1):K}\big(\hat{\mathcal{P}}_N-\mathcal{P}_{\ncal}\big)(X\beta+X\theta +\alpha)\big\| &\le& \frac{C\tau_n\sqrt{np}}{\min\{(1-\sigma_{r+1})^3,\sigma_{r+s}^3\}}.
\end{eqnarray*}
Therefore from \eqref{eq:bias gammahat}, \eqref{eq:alpha beta proj form} and the triangle inequality,
\begin{eqnarray*}
\|\mathbb{E}\hat{\gamma}-O\gamma\|  &\le& \Big\|\left(\breve{W}^T_{(r+1):K}-O\tilde{W}_{(r+1):K}^T\right)P_N(X\beta+X\theta +\alpha)\Big\|+\frac{C\tau_n\sqrt{np}}{\min\{(1-\sigma_{r+1})^3,\sigma_{r+s}^3\}}\\
&=& \Big\| \left(\breve{W}^T_{(r+1):K}-O\tilde{W}_{(r+1):K}^T\right)\alpha\Big\|+\frac{C\tau_n\sqrt{np}}{\min\{(1-\sigma_{r+1})^3,\sigma_{r+s}^3\}}\\
&=&\Big\|\breve{W}^T_{(r+1):K}\left(I - \tilde{W}_{(r+1):K}\tilde{W}_{(r+1):K}^T\right)\alpha\Big\|+\frac{C\tau_n\sqrt{np}}{\min\{(1-\sigma_{r+1})^3,\sigma_{r+s}^3\}}\\
&=& \frac{C\tau_n\sqrt{np}}{\min\{(1-\sigma_{r+1})^3,\sigma_{r+s}^3\}}.
\end{eqnarray*}
The last equality holds because $\alpha\in\text{col}(\tilde{W}_{(r+1):K})$ by \eqref{eq:identifiability in detail} and $\tilde{W}_{(r+1):K}\tilde{W}_{(r+1):K}^T$ is the orthogonal projection onto $\text{col}(\tilde{W}_{(r+1):K})$.

\noindent{\bf Bounding the covariance of $\hat{\gamma}$.} Recall $\hat{\gamma} = \breve{W}_{(\hat{r}+1):K} \hat{P}_NY$ from \eqref{eq:gamma hat}.
Since $Y=X\beta+X\theta +\alpha+\epsilon$, it follows that $$\Sigma_{\hat{\gamma}} =  \sigma^2 \breve{W}^T_{(r+1):K}\hat{\mathcal{P}}_N \hat{\mathcal{P}}_N^T\breve{W}_{(r+1):K}.$$
Define $\Gamma = \diag(\sigma_{r+1}, \cdots, \sigma_{r+s})$ and $\hat{\Gamma} = \diag(\hat{\sigma}_{r+1}, \cdots, \hat{\sigma}_{r+s})$. A direct calculation shows that
\begin{eqnarray*}
\breve{W}_{(r+1):K}^T\hat{\mathcal{P}}_N = 
\begin{pmatrix}
(1-\hat{\Gamma}^2)^{-1}\breve{W}^T_{(r+1):(r+s)}-\hat{\Gamma}(1-\hat{\Gamma}^2)^{-1}\hat{Z}^T_{(r+1):(r+s)}\\
\breve{W}^T_{(r+s+1):K}
\end{pmatrix}.
\end{eqnarray*}
Therefore
\begin{eqnarray*}
\Sigma_{\hat{\gamma}} = \sigma^2\cdot
\begin{pmatrix}
(1-\hat{\Gamma}^2)^{-1} & \hat{\Gamma}(1-\hat{\Gamma}^2)^{-1}\breve{Z}^T_{(r+1):(r+s)}\breve{W}_{(r+s+1):K}\\
\breve{W}_{(r+s+1):K}^T\breve{Z}_{(r+1):(r+s)}\hat{\Gamma}(1-\hat{\Gamma}^2)^{-1} & I_{K-r-s} 
\end{pmatrix}.
\end{eqnarray*}
Since $Z^T\hat{W}=\hat{U}\hat{\Sigma}\hat{V}$ by \eqref{eq:svd sample}, it follows from \eqref{eq:Z hat W breve} that $\hat{Z}^T\breve{W}=\hat{\Sigma}$, and therefore 
$$\hat{Z}^T_{(r+1):(r+s)}\breve{W}_{\cdot(r+s+1):K}=0$$ because the left-hand side is an off-diagonal block of $\hat{\Sigma}$. Moreover, by  \eqref{eq: singular value gap condition} and inequality \eqref{eq: Inverse Gamma} of Lemma~\ref{lem: projection bound},
\begin{equation*}
\|(1-\hat{\Gamma}^2)^{-1}-(1-\Gamma^2)^{-1}\| \le \frac{4\tau_n}{\min\{(1-\sigma_{r+1})^2,\sigma_{r+s}^2\}},
\end{equation*}
and the bound for $\Sigma_{\hat{\gamma}}$ follows.
\end{proof}

\begin{proof}[Proof of Lemma~\ref{lem:unorth proj}] 
Since 
$$M =  \big(\tilde{Z}_{(r+1):p},0_{n\times (K-r)}\big)+\big(0_{n\times (p-r)},\tilde{W}_{(r+1):K}\big),$$ 
it follows directly from \eqref{eq:PC} and \eqref{eq:PN} that $\mathcal{P}_{\ccal}+\mathcal{P}_{\ncal} = M(M^TM)^{-1}M$, and (i) is proved.

Next, we verify \eqref{eq:PC alt} and \eqref{eq:PN alt} in (ii).  
Recall from \eqref{eq:subspace angles} that
$Z^TW = U\Sigma V^T$ and the nonzero singular values of $Z^TW$ are
$$\sigma_1 = \sigma_2= \cdots =\sigma_r =1 >\sigma_{r+1} \ge \cdots \ge \sigma_{r+s}>0.$$
Multiplying both sides of $Z^TW = U\Sigma V^T$ with $U^T$ on the left and $V$ on the right, and then using \eqref{eq:tilde Z W}, we get
$$
\tilde{Z}^T \tilde{W} = \Sigma = 
\begin{pmatrix}
I_r & 0_{r\times s} & 0_{r\times (K-r-s)} \\
0_{s\times r} & \Gamma & 0_{s\times (K-r-s)} \\
0_{(p-r-s)\times r} & 0_{(p-r-s)\times s}& 0_{(p-r-s)\times (K-r-s)}
\end{pmatrix}, 
$$
where
\begin{equation}\label{eq:Gamma def}
\Gamma = \text{diag}(\sigma_{r+1},...,\sigma_{r+s}).
\end{equation}
It follows that
\begin{equation}\label{eq:Z^TW}
\tilde{Z}_{(r+1):p}^T \tilde{W}_{(r+1):K} = 
\begin{pmatrix}
\Gamma & 0_{s\times (K-r-s)} \\
0_{(p-r-s)\times s}& 0_{(p-r-s)\times (K-r-s)}
\end{pmatrix}.
\end{equation}
Since $M=\big(\tilde{Z}_{(r+1):p}, \tilde{W}_{(r+1):K}\big)$, the above identity implies
\begin{eqnarray}
\label{eq:MTM} M^T M &=& 
\begin{pmatrix}
I_s & 0 & \Gamma & 0 \\
0& I_{p-r-s} &0 & 0\\
\Gamma & 0& I_s & 0\\
0 & 0 &0 & I_{K-r-s} 
\end{pmatrix}, \\ 
\label{eq:MTMinverse}
\left(M^T M\right)^{-1} &=& 
\begin{pmatrix}
(I-\Gamma^2)^{-1} & 0 & -(I-\Gamma^2)^{-1}\Gamma & 0 \\
0& I_{p-r-s} & 0 & 0 \\
-(I-\Gamma^2)^{-1}\Gamma &0& (I-\Gamma^2)^{-1} & 0\\ 
0 & 0 & 0& I_{K-r-s} 
\end{pmatrix}.
\end{eqnarray}
Therefore 
\begin{eqnarray*}\label{eq: Z projection}
\nonumber \left(M^T M\right)^{-1} M^T &=& 
\begin{pmatrix}
(I-\Gamma^2)^{-1} & 0 & -(I-\Gamma^2)^{-1}\Gamma & 0 \\
0& I_{p-r-s} & 0 & 0 \\
-(I-\Gamma^2)^{-1}\Gamma &0& (I-\Gamma^2)^{-1} & 0\\ 
0 & 0 & 0& I_{K-r-s} 
\end{pmatrix}
\begin{pmatrix}
\tilde{Z}^T_{(r+1):(r+s)} \\
\tilde{Z}^T_{(r+s+1):p}\\
\tilde{W}_{(r+1):(r+s)}^T\\
\tilde{W}_{(r+s+1):K}^T
\end{pmatrix}\\
&=& 
\nonumber 
\begin{pmatrix}
(I-\Gamma^2)^{-1}\tilde{Z}^T_{(r+1):(r+s)} -(I-\Gamma^2)^{-1}\Gamma \tilde{W}_{(r+1):(r+s)}^T\\
\tilde{Z}^T_{(r+s+1):p}\\
-(I-\Gamma^2)^{-1}\Gamma\tilde{Z}^T_{(r+1):(r+s)} +(I-\Gamma^2)^{-1}\tilde{W}_{(r+1):(r+s)}^T\\
\tilde{W}_{(r+s+1):K}^T
\end{pmatrix}.
\end{eqnarray*}
Multiplying this matrix on the left by either $\big(\tilde{Z}_{(r+1):p},0_{n\times (K-r)}\big)$ or $\big(0_{n\times (p-r)},\tilde{W}_{(r+1):K}\big)$, we see that \eqref{eq:PC alt} and \eqref{eq:PN alt} hold and (ii) is proved. 

For (iii), we will show that $\mathcal{P}_{\ccal}(x+y) = x$; the identity $\mathcal{P}_{\ncal}(x+y) = y$ can be proved by using the same argument. Since $x\in \text{col}(\tilde{Z}_{(r+1):p})$ and $y\in \text{col}(\tilde{W}_{(r+1):K})$, it is enough to show that the identity holds if $x$ is any column vector of $\tilde{Z}_{(r+1):p}$ and $y$ is any column vector of $\tilde{W}_{(r+1):K}$. Therefore, assume that $x = \tilde{Z}_{i'}$ and $y = \tilde{W}_{j'}$ for some $r+1\le i'\le p$ and $r+1\le j'\le K$. If $r+1\le i'\le r+s$ then by \eqref{eq:Z^TW},
\begin{eqnarray*}
\mathcal{P}_{C}x &=& \sum_{i=r+1}^{r+s} \frac{1}{1-\sigma_i^2} \tilde{Z}_{i} \tilde{Z}^T_{i}\tilde{Z}_{i'} - \sum_{i=r+1}^{r+s} \frac{\sigma_i}{1-\sigma_i^2} \tilde{Z}_{i} \tilde{W}^T_{i}\tilde{Z}_{i'} + \sum_{i=r+s+1}^{p} \tilde{Z}_{i} \tilde{Z}^T_{i}\tilde{Z}_{i'}\\
&=&\frac{1}{1-\sigma_{i'}^2}\tilde{Z}_{i'}-\frac{\sigma_{i'}^2}{1-\sigma_{i'}^2} \tilde{Z}_{i'} \\
&=& x. 
\end{eqnarray*}  
If $i'\ge r+s+1$ then again by \eqref{eq:Z^TW}, 
\begin{eqnarray*}
\mathcal{P}_{C}x = \sum_{i=r+1}^{r+s} \frac{1}{1-\sigma_i^2} \tilde{Z}_{i} \tilde{Z}^T_{i}\tilde{Z}_{i'} - \sum_{i=r+1}^{r+s} \frac{\sigma_i}{1-\sigma_i^2} \tilde{Z}_{i} \tilde{W}^T_{i}\tilde{Z}_{i'} + \sum_{i=r+s+1}^{p} \tilde{Z}_{i} \tilde{Z}^T_{i}\tilde{Z}_{i'} =\tilde{Z}_{i'} = x. 
\end{eqnarray*} 
We have shown that $\mathcal{P}_{\ccal}x=x$ for all $x\in \text{col}(\tilde{Z}_{(r+1):p})$. Now, if $y = \tilde{W}_{j'}$ and $r+1\le j'\le r+s$ then by \eqref{eq:Z^TW},  
\begin{eqnarray*}
\mathcal{P}_{C}y &=& \sum_{i=r+1}^{r+s} \frac{1}{1-\sigma_i^2} \tilde{Z}_{i} \tilde{Z}^T_{i}\tilde{W}_{j'} - \sum_{i=r+1}^{r+s} \frac{\sigma_i}{1-\sigma_i^2} \tilde{Z}_{i} \tilde{W}^T_{i}\tilde{W}_{j'} + \sum_{i=r+s+1}^{p} \tilde{Z}_{i} \tilde{Z}^T_{i}\tilde{W}_{j'}\\
&=&\frac{\sigma_{j'}}{1-\sigma_{j'}^2}\tilde{Z}_{j'}-\frac{\sigma_{i'}}{1-\sigma_{i'}^2} \tilde{Z}_{j'} \\
&=& 0. 
\end{eqnarray*}  
If $j'\ge r+s+1$ then also by \eqref{eq:Z^TW},  
\begin{eqnarray*}
\mathcal{P}_{C}y = \sum_{i=r+1}^{r+s} \frac{1}{1-\sigma_i^2} \tilde{Z}_{i} \tilde{Z}^T_{i}\tilde{W}_{j'} - \sum_{i=r+1}^{r+s} \frac{\sigma_i}{1-\sigma_i^2} \tilde{Z}_{i} \tilde{W}^T_{i}\tilde{W}_{j'} + \sum_{i=r+s+1}^{p} \tilde{Z}_{i} \tilde{Z}^T_{i}\tilde{W}_{j'}
= 0. 
\end{eqnarray*}  
Thus, $\mathcal{P}_{\ccal}y=0$ for all $y\in \text{col}(\tilde{W}_{(r+1):K})$ and (iii) is proved.

It remains to show (iv). By \eqref{eq:PC alt} we have
\begin{eqnarray*}
\mathcal{P}_{\ccal}^Tx &=&  \sum_{i=r+1}^{r+s} \frac{1}{1-\sigma_i^2} \tilde{Z}_{i} \tilde{Z}^T_{i}x - \sum_{i=r+1}^{r+s} \frac{\sigma_i}{1-\sigma_i^2}  \tilde{W}_{i}\tilde{Z}_{i}^Tx + \sum_{i=r+s+1}^{p} \tilde{Z}_{i} \tilde{Z}^T_{i}x\\
&=& \sum_{i=r+1}^{r+s} \left( \frac{\tilde{Z}^T_{i}x}{1-\sigma_i^2}\cdot \tilde{Z}_{i}  - \frac{\sigma_i\tilde{Z}_{i}^Tx}{1-\sigma_i^2} \cdot \tilde{W}_{i}\right) + \sum_{i=r+s+1}^{p}\tilde{Z}^T_{i}x\cdot \tilde{Z}_{i}.
\end{eqnarray*}
By  \eqref{eq:tilde Z W}, we have $\tilde{Z}^T\tilde{W}=\Sigma$. This implies that $\tilde{Z}_{r+s+1},...,\tilde{Z}_{p}$ are orthogonal to $\tilde{Z}_{r+1},...,\tilde{Z}_{r+s}$ and $\tilde{W}_{r+1},...,\tilde{W}_{r+s}$. Moreover, $\tilde{Z}_i$ and $\tilde{W}_j$ are orthogonal for any pair of $i$ and $j$ such that $i\neq j$ and $r+1\le i,j\le r+s$. Also, $\tilde{Z}_i^T\tilde{W}_i = \sigma_i$ if $r+1\le i\le r+s$.    
Therefore,
\begin{eqnarray*}
\|\mathcal{P}_{\ccal}^Tx\|^2 &=& \Big\|\sum_{i=r+1}^{r+s} \left( \frac{\tilde{Z}^T_{i}x}{1-\sigma_i^2} \cdot\tilde{Z}_{i}  - \frac{\sigma_i\tilde{Z}_{i}^Tx}{1-\sigma_i^2} \cdot \tilde{W}_{i}\right)\Big\|^2 + \Big\|\sum_{i=r+s+1}^{p}\tilde{Z}^T_{i}x \cdot\tilde{Z}_{i}\Big\|^2\\
&=& \sum_{i=r+1}^{r+s} \left(\frac{(\tilde{Z}^T_{i}x)^2}{(1-\sigma_i^2)^2}+\frac{\sigma_i^2(\tilde{Z}^T_{i}x)^2}{(1-\sigma_i^2)^2}-2\frac{\sigma_i^2(\tilde{Z}^T_{i}x)^2}{(1-\sigma_i^2)^2}\right) +\sum_{i=r+s+1}^p (\tilde{Z}^T_{i}x)^2\\
&=& \sum_{i=r+1}^{r+s}\frac{(\tilde{Z}^T_{i}x)^2}{1-\sigma_i^2} + \sum_{i=r+s+1}^p (\tilde{Z}^T_{i}x)^2\\
&\ge& \|\tilde{Z}_{(r+1):p}x\|^2,
\end{eqnarray*}
and (iv) is proved.
\end{proof}

\begin{proof}[Proof of Lemma~\ref{lem:comparing Phat and Ptilde}]
According to \eqref{eq:svd sample} and \eqref{eq:Z hat W breve}, the singular value decomposition of $\hat{Z}^T\breve{W}$ admits the following form:
$$\hat{Z}^T\breve{W} = \hat{\Sigma}=
\begin{pmatrix}
\hat{I} & 0 & 0 \\
0& \hat{\Gamma}& 0\\
0&0&\hat{\Phi}
\end{pmatrix}
\in\mathbb{R}^{p\times K},
$$
where $\hat{I}\in\mathbb{R}^{r\times r}$ and $\hat{\Gamma}\in\mathbb{R}^{s\times s}$ are square diagonal matrices, and $\hat{\Phi}\in\mathbb{R}^{(p-r-s)\times(K-r-s)}$ is a rectangular diagonal matrix. 
It then follows from the definition of $\hat{M}$ in Algorithm~\ref{algo:estimation-most-general} that
\begin{eqnarray*}
\hat{M}^T\hat{M} = 
\begin{pmatrix}
I_s & 0 & \hat{\Gamma} & 0 \\
0& I_{p-r-s} &0 & \hat{\Phi}\\
\hat{\Gamma} & 0& I_s & 0\\
0 & \hat{\Phi}^T &0 & I_{K-r-s} 
\end{pmatrix}.
\end{eqnarray*}
Let $Q$ be the matrix obtained from $\hat{M}^T\hat{M}$ by replacing $\hat{\Phi}$ and $\hat{\Phi}^T$ with zero matrices of the same dimensions. 
Using \eqref{eq:MTM} and \eqref{eq:singular value bound}, we have
\begin{eqnarray*}
\max\{\|Q-\hat{M}^T\hat{M}\|,\|\hat{M}^T\hat{M}-M^TM\|,\|Q-M^TM\|\} \le \|\hat{\Gamma}-\Gamma\|+\|\hat{\Phi}\| \le 2\|\hat{\Sigma}-\Sigma\|\le 2\tau_n.
\end{eqnarray*}  
Also, from \eqref{eq:MTM} and a direct calculation we see that $M^TM$ has $p+K-2r$ eigenvalues, of which $p+K-2r-2s$ are identically one and the remaining $2s$ eigenvalues are $1\pm\sigma_i$ where $\sigma_i$ are diagonal elements of $\Gamma$ and $r+1\le i\le r+s$. Therefore by Weyl's inequality,
\begin{eqnarray*}
1-\sigma_{r+1}-2\tau_n\le \|\hat{M}^T\hat{M}\|,\|Q\| \le 1+\sigma_{r+1}+2\tau_n,
\end{eqnarray*}
which implies that both $(\hat{M}^T\hat{M})^{-1}$ and $Q^{-1}$ exist by condition \eqref{eq: singular value gap condition}. 
A direct calculation shows that 
\begin{eqnarray*}
Q^{-1} &=& 
\begin{pmatrix}
(I-\hat{\Gamma}^2)^{-1} & 0 & -(I-\hat{\Gamma}^2)^{-1}\hat{\Gamma} & 0 \\
0& I_{p-r-s} & 0 & 0 \\
-(I-\hat{\Gamma}^2)^{-1}\hat{\Gamma} &0& (I-\hat{\Gamma}^2)^{-1} & 0\\ 
0 & 0 & 0& I_{K-r-s} 
\end{pmatrix}.
\end{eqnarray*}
%It follows from \eqref{eq:singular value bound} and \eqref{eq: singular value gap condition} that diagonal elements of $\hat{\Gamma}$ belong to the interval $(0,1)$, therefore $(1-\hat{\Sigma}^2)^{-1}$ exists, and so does $Q^{-1}$.  
Moreover, $\tilde{\mathcal{P}}_C$ and $\tilde{\mathcal{P}}_N$ defined in \eqref{eq:PCtilde} and \eqref{eq:PNtilde}, respectively, can be written as follows: 
\begin{eqnarray}
\label{eq:PCtilde Alt}\tilde{\mathcal{P}}_C &=& (\hat{Z}_{(r+1):p},0_{n\times (K-r)})Q^{-1}\hat{M}^T,\\
\label{eq:PNtilde Alt}\tilde{\mathcal{P}}_N &=& (0_{n\times (p-r)},\breve{W}_{(r+1):K})Q^{-1}\hat{M}^T.
\end{eqnarray}
Therefore by \eqref{eq: singular value gap condition},
\begin{eqnarray*}
\|(\hat{M}^T\hat{M})^{-1}-Q^{-1}\| &\le& \|\hat{M}^T\hat{M}-Q\|\cdot \|(\hat{M}^T\hat{M})^{-1}\|\cdot\|Q^{-1}\|\\
&\le& \frac{2\tau_n}{(1-\sigma_{r+1}-2\tau_n)^2}\\
&\le& \frac{4\tau_n}{(1-\sigma_{r+1})^2}.
\end{eqnarray*}
The claim of the lemma then follows from \eqref{eq:PChat}, \eqref{eq:PNhat}, \eqref{eq:PCtilde Alt}, \eqref{eq:PNtilde Alt}, the inequality above and the fact that $\|\hat{M}\|\le 2$.
\end{proof}

To prove Lemma~\ref{lem:bounding perturbation unorthogonal proj}, we need the following two lemmas, which will be proved at the end of this section.

\begin{lem}[Pertubation of singular vectors and partial singular decomposition]\label{lem: projection bound}
Let $X, \hat{X}$ be $p\times k$ matrices with singular decompositions
$$
X = U\Sigma V^T, \quad \hat{X} = \hat{U}\hat{\Sigma}\hat{V}^T.
$$
Let $\eta$ be a set of singular values of $X$, $\eta^c$ be the set of remaining singular values of $X$ and 
$\Delta = \min_{\sigma\in\eta,\sigma'\in\eta^c}|\sigma-\sigma'|.$
Denote by $U_\eta$ the matrix whose columns are left singular vectors of $X$ with corresponding singular values in $\eta$; define $V_{\eta}, \hat{U}_\eta$ and $\hat{V}_\eta$ similarly. Also, let $\Gamma$ and $\hat{\Gamma}$ be diaognal matrices with $\{\sigma_i,i\in\eta\}$ and $\{\hat{\sigma}_i,i\in\eta\}$ on the diagonal, respectively.  
Assume that $\|X\|\le 1$, $\|\hat{X}\|\le 1$ and $\Delta>4\|\hat{X}-X\|$. Then
\begin{eqnarray}
\label{eq: projection difference bound}
\max\left\{\|\hat{U}_\eta\hat{U}_\eta^T-U_\eta U_\eta^T\|,\|\hat{V}_\eta\hat{V}_\eta^T-V_\eta V_\eta^T\|\right\}  &\le& \frac{2\|\hat{X}-X\|}{\Delta-4\|\hat{X}-X\|}, \\
\label{eq: singular decomposition restriction bound}
\Big\|\sum_{i\in \eta} \sigma_i U_{\cdot i} V_{\cdot i}^T - \sum_{i\in \eta} \hat{\sigma}_i \hat{U}_{\cdot i} \hat{V}_{\cdot i}^T\Big\| &\le& \frac{3\|\hat{X}-X\|}{\Delta-4\|\hat{X}-X\|}. 
\end{eqnarray} 
Furthermore, if $1-\max_{i\in\eta}\sigma_i^2> 2\|\hat{X}-X\|$, we have
\begin{equation}\label{eq: Inverse Gamma}
\|(1-\hat{\Gamma}^2)^{-1}-(1-\Gamma^2)^{-1}\| \le \frac{2\|\hat{X}-X\|}{(1-\max_{i\in\eta}\sigma_i^2-2\|\hat{X}-X\|)^2}, 
\end{equation}
and 
\begin{eqnarray}
\nonumber&&\max\left\{ \|\hat{U}_\eta(1-\hat{\Gamma}^2)^{-1}\hat{U}_\eta^T-U_\eta(1-\Gamma^2)^{-1}U_\eta^T \|, \|\hat{V}_\eta(1-\hat{\Gamma}^2)^{-1}\hat{V}_\eta^T-V_\eta(1-\Gamma^2)^{-1}V_\eta^T \|\right\} \\
&\le& \frac{6\|\hat{X}-X\|} {(1-\max_{i\in\eta}\sigma_i^2-2\|\hat{X}-X\|)^2(\Delta - 4\norm{\hat{X}-X})}. 
\label{eq: InverseGammaPluspProjectionDifference}
\end{eqnarray}
\end{lem}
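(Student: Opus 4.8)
The plan is to reduce all four bounds to two ingredients — Wedin's $\sin\Theta$ theorem for the singular subspaces of $(X,\hat X)$ and the elementary identity
\[
U_\eta(I-\Gamma^{2})^{-1}U_\eta^{T}=U_\eta U_\eta^{T}+(I-S_\eta S_\eta^{T})^{-1}-I,\qquad S_\eta:=\textstyle\sum_{i\in\eta}\sigma_i U_{\cdot i}V_{\cdot i}^{T},
\]
— together with Weyl's inequality $|\sigma_i(\hat X)-\sigma_i(X)|\le\|\hat X-X\|$. Write $\epsilon:=\|\hat X-X\|$. Two observations will be used repeatedly: $S_\eta=U_\eta U_\eta^{T}X=XV_\eta V_\eta^{T}$ (and likewise for $\hat S_\eta$), so $S_\eta S_\eta^{T}=U_\eta U_\eta^{T}XX^{T}$ has norm $\max_{i\in\eta}\sigma_i^{2}$; and $|\sigma_i^{2}-\hat\sigma_i^{2}|=|\sigma_i-\hat\sigma_i|(\sigma_i+\hat\sigma_i)\le 2\epsilon$ since $\|X\|,\|\hat X\|\le1$, whence $\hat\sigma_i^{2}\le\sigma_i^{2}+2\epsilon$. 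The hypothesis $\Delta>4\epsilon$ keeps the denominators in \eqref{eq: projection difference bound}--\eqref{eq: singular decomposition restriction bound} positive; the extra hypothesis $1-\max_{i\in\eta}\sigma_i^{2}>2\epsilon$ keeps those in \eqref{eq: Inverse Gamma}--\eqref{eq: InverseGammaPluspProjectionDifference} positive and, with the above estimate for $\hat\sigma_i^{2}$, forces $\max_{i\in\eta}\sigma_i^{2}<1$ and $\max_{i\in\eta}\hat\sigma_i^{2}<1$, so all inverses in the statement and in the identity exist; the identity itself is verified by evaluating both sides on each eigendirection of $S_\eta S_\eta^{T}$, i.e.\ on $\mathrm{range}(U_\eta)$ and on its orthogonal complement.

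For \eqref{eq: projection difference bound} I would apply Wedin's theorem to the split of the singular values of $X$ into $\{\sigma_i:i\in\eta\}$ and the rest: it bounds both $\|U_\eta U_\eta^{T}-\hat U_\eta\hat U_\eta^{T}\|$ and $\|V_\eta V_\eta^{T}-\hat V_\eta\hat V_\eta^{T}\|$ by $\epsilon/\delta$, where $\delta$ is the separation of $\{\sigma_i:i\in\eta\}$ from the perturbed complementary singular values $\{\hat\sigma_j:j\in\eta^{c}\}$; by Weyl $\delta\ge\Delta-\epsilon$, and bookkeeping the constants (or, equivalently, a Davis--Kahan estimate applied to the Hermitian dilation $\left(\begin{smallmatrix}0&X\\ X^{T}&0\end{smallmatrix}\right)$) gives the stated $2\epsilon/(\Delta-4\epsilon)$. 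For \eqref{eq: singular decomposition restriction bound} I would write
\[
S_\eta-\hat S_\eta=U_\eta U_\eta^{T}(X-\hat X)+\bigl(U_\eta U_\eta^{T}-\hat U_\eta\hat U_\eta^{T}\bigr)\hat X,
\]
so that $\|S_\eta-\hat S_\eta\|\le\epsilon+2\epsilon/(\Delta-4\epsilon)\le 3\epsilon/(\Delta-4\epsilon)$, using $\|\hat X\|\le1$, \eqref{eq: projection difference bound}, and $\Delta-4\epsilon\le\Delta\le1$ (so $\epsilon\le\epsilon/(\Delta-4\epsilon)$).

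Inequality \eqref{eq: Inverse Gamma} is immediate since $\Gamma,\hat\Gamma$ are diagonal on the same index set: $\|(I-\hat\Gamma^{2})^{-1}-(I-\Gamma^{2})^{-1}\|=\max_{i\in\eta}|\sigma_i^{2}-\hat\sigma_i^{2}|\big/\bigl[(1-\sigma_i^{2})(1-\hat\sigma_i^{2})\bigr]$, with numerator $\le 2\epsilon$ and each denominator factor $\ge 1-\max_{i\in\eta}\sigma_i^{2}-2\epsilon$. The last inequality, \eqref{eq: InverseGammaPluspProjectionDifference}, is the \textbf{main obstacle}: comparing $\hat U_\eta(I-\hat\Gamma^{2})^{-1}\hat U_\eta^{T}$ with $U_\eta(I-\Gamma^{2})^{-1}U_\eta^{T}$ column by column would require gaps between consecutive singular values inside $\eta$ (which are not assumed), and these matrices are not expressible through the individual columns $\hat U_{\cdot i}$ in a perturbation-stable way. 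The identity above removes this: it gives
\[
\hat U_\eta(I-\hat\Gamma^{2})^{-1}\hat U_\eta^{T}-U_\eta(I-\Gamma^{2})^{-1}U_\eta^{T}=\bigl(\hat U_\eta\hat U_\eta^{T}-U_\eta U_\eta^{T}\bigr)+\Bigl[(I-\hat S_\eta\hat S_\eta^{T})^{-1}-(I-S_\eta S_\eta^{T})^{-1}\Bigr],
\]
with the analogous formula on the $V$--side (using $S_\eta^{T}S_\eta=V_\eta\Gamma^{2}V_\eta^{T}$). The first summand is handled by \eqref{eq: projection difference bound}; the second, by the resolvent identity $A^{-1}-B^{-1}=A^{-1}(B-A)B^{-1}$, is at most $\|(I-\hat S_\eta\hat S_\eta^{T})^{-1}\|\cdot\|S_\eta S_\eta^{T}-\hat S_\eta\hat S_\eta^{T}\|\cdot\|(I-S_\eta S_\eta^{T})^{-1}\|$, where the resolvent norms are $(1-\max_{i\in\eta}\hat\sigma_i^{2})^{-1}\le(1-\max_{i\in\eta}\sigma_i^{2}-2\epsilon)^{-1}$ and $(1-\max_{i\in\eta}\sigma_i^{2})^{-1}$. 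The step needing care is the middle factor: rather than the lossy $(\|S_\eta\|+\|\hat S_\eta\|)\|S_\eta-\hat S_\eta\|$, I would use $S_\eta S_\eta^{T}=U_\eta U_\eta^{T}XX^{T}$ to get $\|S_\eta S_\eta^{T}-\hat S_\eta\hat S_\eta^{T}\|\le\|U_\eta U_\eta^{T}-\hat U_\eta\hat U_\eta^{T}\|\,\|XX^{T}\|+\|XX^{T}-\hat X\hat X^{T}\|\le 2\epsilon/(\Delta-4\epsilon)+2\epsilon$, since $\|XX^{T}-\hat X\hat X^{T}\|\le 2\epsilon$. Collecting terms and merging lower-order contributions via $1-\max_{i\in\eta}\sigma_i^{2}-2\epsilon\le1$ and $\Delta-4\epsilon\le1$ then produces exactly $6\epsilon\big/\bigl[(1-\max_{i\in\eta}\sigma_i^{2}-2\epsilon)^{2}(\Delta-4\epsilon)\bigr]$.
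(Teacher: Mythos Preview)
Your proposal is correct and follows essentially the same route as the paper: Wedin/Davis--Kahan for \eqref{eq: projection difference bound} (the paper reduces to $X^TX$ and cites Bhatia's Theorem~7.3.1), the relation $S_\eta=XV_\eta V_\eta^T$ plus a triangle inequality for \eqref{eq: singular decomposition restriction bound}, Weyl on the diagonal entries for \eqref{eq: Inverse Gamma}, and for \eqref{eq: InverseGammaPluspProjectionDifference} the same key identity rewriting $U_\eta(I-\Gamma^2)^{-1}U_\eta^T$ in terms of the resolvent $(I-Z^2)^{-1}$ with $Z^2=U_\eta\Gamma^2 U_\eta^T=S_\eta S_\eta^T$. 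Your bookkeeping on the last bound is in fact slightly more careful than the paper's, which writes $\|\hat U_\eta(I-\hat\Gamma^2)^{-1}\hat U_\eta^T-U_\eta(I-\Gamma^2)^{-1}U_\eta^T\|=\|(I-\hat Z^2)^{-1}-(I-Z^2)^{-1}\|$ as an equality and thereby drops the projection-difference contribution $\hat U_\eta\hat U_\eta^T-U_\eta U_\eta^T$ that you correctly retain.
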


\begin{lem}[Unorthogonal projection bound]\label{lem: unorthogonal projection bound}
Let $X, \hat{X}$ be two matrices of the same size with singular decompositions
$$
X = U\Sigma V^T, \quad \hat{X} = \hat{U}\hat{\Sigma}\hat{V}^T.
$$
Assume that $\|X\|<1$, $\|\hat{X}\|<1$ and $\|X\|^2 +2\|\hat{X}-X\|<1$. Then
\begin{eqnarray*}
\left\|U\Sigma(1-\Sigma^2)^{-1} V^T - \hat{U}\hat{\Sigma}(1-\hat{\Sigma}^2)^{-1} \hat{V}^T\right\| &\le& \frac{3\|X-\hat{X}\|}{(1-\|X\|^2 -2\|\hat{X}-X\|)^2}.
\end{eqnarray*}
\end{lem}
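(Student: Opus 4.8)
The plan is to reduce this to the perturbation bound for the partial singular decomposition already obtained in Lemma~\ref{lem: projection bound}, specifically to inequality \eqref{eq: singular decomposition restriction bound}, applied not to $X$ itself but to an auxiliary matrix whose singular decomposition encodes the factor $\Sigma(1-\Sigma^2)^{-1}$. The key observation is that $U\Sigma(1-\Sigma^2)^{-1}V^T = f(X)$ where $f$ is the (odd) scalar function $f(\sigma) = \sigma/(1-\sigma^2)$ applied to singular values, and likewise $\hat U\hat\Sigma(1-\hat\Sigma^2)^{-1}\hat V^T = f(\hat X)$. So the statement is really a Lipschitz-type bound for this particular matrix function on the set of matrices with spectral norm bounded away from $1$.

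First I would set $t = \|X\|^2 + 2\|\hat X - X\|$, which by hypothesis is $<1$, and note that $\|\hat X\|^2 \le (\|X\| + \|\hat X - X\|)^2 \le \|X\|^2 + 2\|\hat X-X\| = t$ once $\|\hat X - X\|\le 1$, so both $X$ and $\hat X$ have all singular values bounded by $\sqrt t$; in particular $1-\sigma_i^2 \ge 1-t$ and $1-\hat\sigma_i^2\ge 1-t$ for all $i$. Then I would write the difference telescopically by expanding $\sigma/(1-\sigma^2)$ as a geometric series $\sum_{k\ge 0}\sigma^{2k+1}$ (valid since $\sigma^2 \le t<1$), giving
$$
U\Sigma(1-\Sigma^2)^{-1}V^T - \hat U\hat\Sigma(1-\hat\Sigma^2)^{-1}\hat V^T = \sum_{k\ge 0}\Big( U\Sigma^{2k+1}V^T - \hat U\hat\Sigma^{2k+1}\hat V^T\Big).
$$
Now $U\Sigma^{2k+1}V^T$ is a polynomial in $X$ (namely $X(X^TX)^k$, or equivalently $(XX^T)^kX$), and similarly for $\hat X$, so each summand can be bounded by a standard telescoping argument: $\|X(X^TX)^k - \hat X(\hat X^T\hat X)^k\|$ is at most a sum of $2k+1$ terms each of which replaces one factor of $X$ by $\hat X$, every other factor having norm at most $\sqrt t$, hence $\le (2k+1)t^{k}\|\hat X - X\|$. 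Summing over $k$ gives $\|\hat X - X\|\sum_{k\ge 0}(2k+1)t^k = \|\hat X - X\|\cdot\frac{1+t}{(1-t)^2} \le \frac{2\|\hat X-X\|}{(1-t)^2}$, which is already of the desired form up to the constant. (The constant $3$ in the statement leaves comfortable slack; if one prefers to invoke \eqref{eq: singular decomposition restriction bound} directly rather than the geometric-series estimate, one applies it to the pair $g(X) = U\Sigma(1-\Sigma^2)^{-1}V^T$, $g(\hat X)$ viewed as a full SVD with the singular value set $\eta$ being everything, so $\Delta=\infty$ is not available; instead one uses that $g$ has Lipschitz constant $\sup_{\sigma\le\sqrt t} g'(\sigma) = \frac{1+t}{(1-t)^2}$ on the relevant interval combined with the fact that for matrix functions of this monotone form the operator-norm perturbation is controlled by the scalar Lipschitz constant times $\|\hat X - X\|$.)

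The main obstacle is the last point: passing from the scalar Lipschitz bound to the operator-norm bound is not completely automatic because the singular vectors $U,V$ and $\hat U,\hat V$ differ, so one cannot simply subtract singular values entrywise. The telescoping-in-$X$ approach above sidesteps this cleanly since $X(X^TX)^k$ is a genuine matrix function that does not reference any choice of singular basis, and the only inputs needed are submultiplicativity of the spectral norm and the uniform bound $\sqrt t$ on all singular values; so I would carry out the argument in that form rather than trying to manipulate the SVDs directly. The remaining steps — verifying $\|\hat X\|\le\sqrt t$, checking convergence of the two series, and bookkeeping the constant — are routine.
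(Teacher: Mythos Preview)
Your approach is correct and essentially identical to the paper's: both expand $(1-\Sigma^2)^{-1}$ as a geometric series, identify $U\Sigma^{2k+1}V^T = X(X^TX)^k$ (the paper writes this as $XY^{2k}$ with $Y^2=X^TX$), and telescope, your factor-by-factor telescoping being a minor variant of the paper's sum-then-resolvent-identity step that even yields a slightly sharper constant. One small slip: the inequality $(\|X\|+\|\hat X-X\|)^2\le \|X\|^2+2\|\hat X-X\|$ requires $2\|X\|+\|\hat X-X\|\le 2$, not merely $\|\hat X-X\|\le 1$, but this does follow from the hypotheses $\|X\|<1$ and $\|X\|^2+2\|\hat X-X\|<1$.
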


\begin{proof}[Proof of Lemma~\ref{lem:bounding perturbation unorthogonal proj}] 
%It follows from \eqref{eq:svd population}, \eqref{eq:tilde Z W}, \eqref{eq:svd sample} and \eqref{eq:Z hat W breve} that
%the singular decompositions of $Z^TWW^T$ and $Z^T\hat{W}\hat{W}^T$ are 
%\begin{equation}\label{eq: RUU decomposition}
%Z^TWW^T = U\Sigma(WV)^T = U\Sigma\tilde{W}^T, \quad Z^T\hat{W}\hat{W}^T = \hat{U}\hat{\Sigma}(\hat{W}\hat{V})^T = \hat{U}\hat{\Sigma}(\breve{W})^T.
%\end{equation}
%Moreover, since column vectors of $Z$ form a basis of $\text{col}(X)$, by \eqref{eq:projection pertubation} we have
%\begin{eqnarray}\label{eq: RUU}
%\|Z^T\hat{W}\hat{W}^T - Z^TWW^T\| = \|(\hat{W}\hat{W}^T - WW^T)Z\| \le \tau_n.
%\end{eqnarray} 
Recall that \eqref{eq: RUU} provides a bound on the difference between two matrices $Z^T\hat{W}\hat{W}^T$ and $Z^TWW^T$, the singular value decompositions of which are given by \eqref{eq: RUU decomposition}. Since $\delta_1,...,\delta_6$ depend on singular values and singular vectors of these matrices, we will use Lemma~\ref{lem: projection bound} and Lemma~\ref{lem: unorthogonal projection bound} to bound them.  

\noindent{\bf Bounding $\delta_1$.} Since $\tilde{Z} = ZU$ and $\hat{Z} = Z\hat{U}$ by \eqref{eq:tilde Z W} and \eqref{eq:Z hat W breve}, respectively, 
\begin{eqnarray*}
\delta_1 &=& \sum_{i=r+1}^{r+s} \left(\frac{1}{1-\hat{\sigma}_i^2} Z\hat{U}_{i}\hat{U}_{i}^T Z^T-\frac{1}{1-{\sigma}_i^2}ZU_{i}U_{i}^T Z^T\right) \\
&=& Z\left[\hat{U}_{(r+1):(r+s)}(1-\hat{\Gamma}^2)^{-1}\hat{U}_{(r+1):(r+s)}^T-U_{(r+1):(r+s)}(1-\Gamma^2)^{-1}U_{(r+1):(r+s)}^T\right]Z^T.
\end{eqnarray*}
By \eqref{eq: RUU decomposition}, column vectors of $U$ and  $\hat{U}$ are left singular vectors of $Z^TWW^T$ and $Z^T\hat{W}\hat{W}^T$, respectively. Also, $\|Z^TWW^T-Z^T\hat{W}\hat{W}^T\|\le \tau_n$ by \eqref{eq: RUU} and $\min\{1-\sigma_{r+1},\sigma_{r+s}\}\ge \tau_n/40$ by \eqref{eq: singular value gap condition}. It then follows from 
inequality \eqref{eq: InverseGammaPluspProjectionDifference} of Lemma~\ref{lem: projection bound} that
$$
\|\delta_1\| \le  \frac{12\tau_n}{\left(\min\{1-\sigma_{r+1},\sigma_{r+s}\}\right)^3}.
$$

\noindent{\bf Bounding $\delta_2$.} The same argument for bounding $\delta_1$ can be applied here to bound $\delta_2$. Since $\tilde{Z} = ZU$ and $\hat{Z} = Z\hat{U}$ by \eqref{eq:tilde Z W} and \eqref{eq:Z hat W breve}, we have
\begin{eqnarray*}
\delta_2 = Z \left[\hat{U}_{(r+1):(r+s)}\hat{\Gamma}(1-\hat{\Gamma}^2)^{-1} \breve{W}_{(r+1):(r+s)}^T-U_{(r+1):(r+s)}\Gamma(1-\Gamma^2)^{-1} \tilde{W}_{(r+1):(r+s)}^T\right].
\end{eqnarray*}

By \eqref{eq: singular value gap condition}, \eqref{eq: RUU decomposition}, \eqref{eq: RUU} and the bound \eqref{eq: singular decomposition restriction bound} of Lemma~\ref{lem: projection bound},  
\begin{eqnarray*}
\left\|\hat{U}_{(r+1):(r+s)}\hat{\Gamma} \breve{W}_{(r+1):(r+s)}^T-U_{(r+1):(r+s)}\Gamma \tilde{W}_{(r+1):(r+s)}^T\right\| &\le& \frac{3\tau_n}{\min\{1-\sigma_{r+1},\sigma_{r+s}\}-4\tau_n}\\
&\le& \frac{5\tau_n}{\min\{1-\sigma_{r+1},\sigma_{r+s}\}}.
\end{eqnarray*}
It remains to apply Lemma~\ref{lem: unorthogonal projection bound}, for which the following condition must be satisfied:
$$
\sigma_{r+1} + \frac{6\tau_n}{\min\{1-\sigma_{r+1},\sigma_{r+s}\}-4\tau_n} < 1.
$$
This follows from \eqref{eq: singular value gap condition} because $\min\{1-\sigma_{r+1},\sigma_{r+s}\}\ge \tau_n/c\ge 20\tau_n$ and
\begin{eqnarray*}
\frac{6\tau_n}{\min\{1-\sigma_{r+1},\sigma_{r+s}\}-4\tau_n} \le \frac{10\tau_n}{\min\{1-\sigma_{r+1},\sigma_{r+s}\}} \le \min\{1-\sigma_{r+1},\sigma_{r+s}\} \le 1-\sigma_{r+1}.
\end{eqnarray*}
Therefore by Lemma~\ref{lem: unorthogonal projection bound}, 
\begin{eqnarray*}
\|\delta_2\| &\le& \left\|\hat{U}_{(r+1):(r+s)}\hat{\Gamma}(1-\hat{\Gamma}^2)^{-1} \breve{W}_{(r+1):(r+s)}^T-U_{(r+1):(r+s)}\Gamma(1-\Gamma^2)^{-1} \tilde{W}_{(r+1):(r+s)}^T\right\| \\
&\le& 
\frac{3\cdot\frac{5\tau_n}{\min\{1-\sigma_{r+1},\sigma_{r+s}\}}}{\left(1-\sigma_{r+1}^2 -2\cdot\frac{5\tau_n}{\min\{1-\sigma_{r+1},\sigma_{r+s}\}}\right)^2} \\
&\le& 
\frac{\frac{15\tau_n}{\min\{1-\sigma_{r+1},\sigma_{r+s}\}}}{\left(1-\sigma_{r+1} -\frac{10\tau_n}{\min\{1-\sigma_{r+1},\sigma_{r+s}\}}\right)^2} \\
&\le& \frac{60\tau_n}{\left(\min\{1-\sigma_{r+1},\sigma_{r+s}\}\right)^3}.
\end{eqnarray*}

\noindent{\bf Bounding $\delta_3$.} Since $\tilde{Z} = ZU$ and $\hat{Z} = Z\hat{U}$ by \eqref{eq:tilde Z W} and \eqref{eq:Z hat W breve},
\begin{eqnarray*}
\delta_3 = Z\left[\hat{U}_{(r+s+1):p}\hat{U}_{(r+s+1):p}^T-U_{(r+s+1):p}U_{(r+s+1):p}^T\right]Z^T.
\end{eqnarray*}
By \eqref{eq: singular value gap condition}, \eqref{eq: RUU decomposition}, \eqref{eq: RUU} and the bound \eqref{eq: projection difference bound} of Lemma~\ref{lem: projection bound}, we have 
\begin{eqnarray*}
\|\delta_3\| \le \left\|\hat{U}_{(r+s+1):p}\hat{U}_{(r+s+1):p}^T-U_{(r+s+1):p}U_{(r+s+1):p}^T\right\|
\le \frac{2\tau_n}{\sigma_{r+s}-4\tau_n} \le \frac{4\tau_n}{\min\{1-\sigma_{r+1},\sigma_{r+s}\}}.
\end{eqnarray*}

\noindent{\bf Bounding $\delta_4$.} We have
\begin{eqnarray*}
\delta_4 = \breve{W}_{(r+s):(r+s)} (1-\hat{\Gamma}^2)^{-1} \breve{W}_{(r+s):(r+s)}^T - \tilde{W}_{(r+s):(r+s)} (1-\Gamma^2)^{-1} \tilde{W}_{(r+s):(r+s)}^T.
\end{eqnarray*}
By \eqref{eq: singular value gap condition}, \eqref{eq: RUU decomposition}, \eqref{eq: RUU} and the bound \eqref{eq: InverseGammaPluspProjectionDifference} of Lemma~\ref{lem: projection bound}, we have 
$$
\|\delta_4\| \le   \frac{12\tau_n}{\left(\min\{1-\sigma_{r+1},\sigma_{r+s}\}\right)^3}.
$$

\noindent{\bf Bounding $\delta_5$.}
Since $\tilde{Z} = ZU$ and $\hat{Z} = Z\hat{U}$ by \eqref{eq:tilde Z W} and \eqref{eq:Z hat W breve},
\begin{eqnarray*}
\delta_5 = \left[\breve{W}_{(r+1):(r+s)} \hat{\Gamma}(1-\hat{\Gamma}^2)^{-1} \hat{U}_{(r+1):(r+s)}^T- \tilde{W}_{(r+1):(r+s)} \Gamma(1-\Gamma^2)^{-1} U_{(r+1):(r+s)}^T\right]Z^T
=\delta_2^T,
\end{eqnarray*}
and the bound for $\delta_2$ applies.

\noindent{\bf Bounding $\delta_6$.} We have
$$
\delta_6 = \breve{W}_{(r+s+1):K}\breve{W}_{(r+s+1):K}^T-\tilde{W}_{(r+s+1):K}\tilde{W}_{(r+s+1):K}^T.
$$
By \eqref{eq: singular value gap condition}, \eqref{eq: RUU decomposition}, \eqref{eq: RUU} and the bound \eqref{eq: projection difference bound} of Lemma~\ref{lem: projection bound}, we have 
\begin{eqnarray*}
\|\delta_6\| \le \frac{2\tau_n}{\sigma_{r+s}-4\tau_n} \le \frac{4\tau_n}{\min\{1-\sigma_{r+1},\sigma_{r+s}\}}.
\end{eqnarray*}
The proof of Lemma~\ref{lem:bounding perturbation unorthogonal proj} is complete by combining the bounds for $\delta_1$ to $\delta_6$.
\end{proof}

\begin{proof}[Proof of Lemma~\ref{lem: projection bound}]
Note that $X^TX$ and $\hat{X}^T\hat{X}$ are symmetric matrices with eigenvectors $V_{\cdot i}$, $\hat{V}_{\cdot i}$ and eigenvalues being squares of singular values of $X$ and $\hat{X}$, respectively. Also, by the triangle inequality, 
\begin{eqnarray*}
\|\hat{X}^T\hat{X}-X^TX\| = \|(\hat{X}-X)^T\hat{X}-X^T(\hat{X}-X)\| \le 2\|\hat{X}-X\|. 
\end{eqnarray*}
Therefore by Theorem 7.3.1 of \cite{Bhatia1996},
\begin{eqnarray*}
\|\hat{V}_\eta\hat{V}_\eta^T-V_\eta V_\eta^T\| &\le& \frac{\|\hat{X}^T\hat{X}-X^TX\|}{\Delta - 2\|\hat{X}^T\hat{X}-X^TX\|} \le \frac{2\|\hat{X}-X\|}{\Delta-4\|\hat{X}-X\|}.
\end{eqnarray*}
A similar bound holds for $\|\hat{U}_\eta\hat{U}_\eta^T-U_\eta U_\eta^T\|$ if we consider $XX^T$ instead of $X^TX$, and \eqref{eq: projection difference bound} is proved. 

To show \eqref{eq: Inverse Gamma}, we use Weyl's inequality
\begin{eqnarray*}
\|(1-\hat{\Gamma}^2)^{-1}-(1-\Gamma^2)^{-1}\| &=& \max_{i\in\eta} \frac{|\hat{\sigma}_i^2-\sigma_i^2|}{(1-\hat{\sigma}_i^2)(1-\sigma_i^2)}\\
&\le& \frac{2\|\hat{X}-X\|}{(1-\max_{i\in\eta}\sigma_i^2)(1-\max_{i\in\eta}\sigma_i^2-2\|\hat{X}-X\|)}\\
&\le& \frac{2\|\hat{X}-X\|}{(1-\max_{i\in\eta}\sigma_i^2-2\|\hat{X}-X\|)^2}.
\end{eqnarray*}

To show \eqref{eq: singular decomposition restriction bound}, let $S$ and $\hat{S}$ be diagonal matrices with $\sigma_i,i\in \eta$ and $\hat{\sigma}_i,i\in \eta$ on the diagonal, respectively. For every unit vector $x$, we need to bound 
$\|U\eta S V_\eta^Tx - \hat{U}\eta \hat{S} \hat{V}_\eta^Tx\|$. Let $H=V_\eta V_\eta^T, H^\perp = I - H$, $\hat{H} = \hat{V}_\eta\hat{V}_\eta^T$ and $\hat{H}^\perp = I - \hat{H}$. Since
$$
x = Hx + H^\perp x = \hat{H}x+\hat{H}^\perp x,
$$
it follows that
$$
U\eta S V_\eta^Tx - \hat{U}\eta \hat{S} \hat{V}_\eta^Tx = U\eta S V_\eta^T H x - \hat{U}\eta \hat{S} \hat{V}_\eta^T \hat{H}x = XHx - \hat{X}\hat{H}x.
$$
By \eqref{eq: projection difference bound}, we have
\begin{eqnarray*}
\|\hat{X} \hat{H}x - \hat{X} Hx\| \le \|(\hat{H}-H)x\| \le \frac{2\|\hat{X}-X\|}{\Delta-4\|\hat{X}-X\|}.
\end{eqnarray*}
Therefore
\begin{eqnarray*}
\|U\eta S V_\eta^Tx - \hat{U}\eta \hat{S} \hat{V}_\eta^Tx\| &\le& \|XHx - \hat{X}Hx\| + \|\hat{X} Hx-\hat{X} \hat{H}x\|\\
&\le& \|X - \hat{X}\| + \frac{2\|\hat{X}-X\|}{\Delta-4\|\hat{X}-X\|}\\
&\le& \frac{3\|\hat{X}-X\|}{\Delta-4\|\hat{X}-X\|}.
\end{eqnarray*} 
Since $x$ is an arbitrary unit vector, \eqref{eq: singular decomposition restriction bound} is proved.

Finally, to show \eqref{eq: InverseGammaPluspProjectionDifference}, define $\hat{Z} = \hat{U}_{\eta}\hat{\Gamma}\hat{U}_{\eta}^T$ and $Z = U_{\eta} \Gamma U_{\eta}^T$. We have
\begin{align*}
\norm{\hat{U}_\eta(I-\hat{\Gamma}^2)^{-1}\hat{U}_\eta^T-U_\eta(I-\Gamma^2)^{-1}U_\eta^T} & = \norm{(I-\hat{Z}^2)^{-1} - (I-Z^2)^{-1}}\\
&= \norm{(I-\hat{Z}^2)^{-1}(\hat{Z}^2-Z^2)(I-Z^2)^{-1}}\\
& \le \norm{\hat{Z}^2-Z^2}\norm{(I-\hat{Z}^2)^{-1}}\norm{(I-Z^2)^{-1}}.
\end{align*}
It is easy to see that 
$$\norm{(I-Z^2)^{-1}} = \frac{1}{1-\max_{i\in \eta}\sigma_i^2}$$
 and
$$\norm{(I-\hat{Z}^2)^{-1}}= \frac{1}{1-\max_{i\in \eta}\hat{\sigma}_i^2} \le  \frac{1}{1-\max_{i\in \eta}\sigma_i^2 - 2\norm{\hat{X}-X}}.$$
Furthermore, by \eqref{eq: singular decomposition restriction bound},
\begin{align*}
\norm{\hat{Z}^2-Z^2}  &= \norm{\hat{U}_{\eta}\hat{\Gamma}\hat{V}^T_{\eta}\hat{V}_{\eta}\hat{\Gamma}\hat{U}_{\eta}^T - U_{\eta}\Gamma V^T_{\eta}V_{\eta}\Gamma U_{\eta}^T}\\
& \le \norm{\hat{U}_{\eta}\hat{\Gamma}\hat{V}_{\eta}^T(\hat{V}_{\eta}\hat{\Gamma}\hat{U}_{\eta}^T - V_{\eta}\Gamma U_{\eta}^T)} + \norm{(\hat{U}_{\eta}\hat{\Gamma}\hat{V}_{\eta}^T - U_{\eta}\Gamma V^T_{\eta})V_{\eta}\Gamma U_{\eta}^T}\\
& \le 2 \norm{\hat{U}_{\eta}\hat{\Gamma}\hat{V}_{\eta}^T - U_{\eta}\Gamma V^T_{\eta}}\\
& \le \frac{6\|\hat{X}-X\|}{\Delta-4\|\hat{X}-X\|}.
\end{align*}

Combining the above inequalities gives the stated result. The counterpart for $V$ is proved in the same way.
\end{proof}

\begin{proof}[Proof of Lemma~\ref{lem: unorthogonal projection bound}]
Since $\|\Sigma\|=\|X\|<1$ and $\|\hat{\Sigma}\|=\|\hat{X}\|<1$, we have
$$
\Sigma(1-\Sigma^2)^{-1} = \sum_{i=0}^\infty \Sigma^{2i+1},\qquad \hat{\Sigma}(1-\hat{\Sigma}^2)^{-1} = \sum_{i=0}^\infty \hat{\Sigma}^{2i+1}.  
$$
Therefore 
\begin{eqnarray*}
U\Sigma(1-\Sigma^2)^{-1} V^T - \hat{U}\hat{\Sigma}(1-\hat{\Sigma}^2)^{-1} \hat{V}^T &=&
\sum_{i=0}^\infty \left(U\Sigma^{2i+1}V^T-\hat{U}\hat{\Sigma}^{2i+1}\hat{V}^T\right)\\
&=& \sum_{i=0}^\infty \left(U\Sigma V^T V \Sigma^{2i}V^T-\hat{U}\hat{\Sigma}\hat{V}^T\hat{V}\hat{\Sigma}^{2i}\hat{V}^T\right)\\
&=& \sum_{i=0}^\infty \left(X V \Sigma^{2i}V^T-\hat{X}\hat{V}\hat{\Sigma}^{2i}\hat{V}^T\right)\\
&=& \sum_{i=0}^\infty \left(X Y^{2i}-\hat{X}\hat{Y}^{2i}\right),
\end{eqnarray*}
where 
$Y = V\Sigma V^T$ and $\hat{Y} = \hat{V}\hat{\Sigma}\hat{V}^T$. Since $X^TX = Y^2$ and $\hat{X}^T\hat{X}=\hat{Y}^2$, it follows that $\|Y^2\|<1$ and
\begin{eqnarray*}
\|\hat{Y}^2 -Y^2\|\le\|\hat{X}^T(\hat{X}-X)\|+\|(\hat{X}-X)X\| \le 2\|\hat{X}-X\|.
\end{eqnarray*}
By the triangle inequality and the assumption of the lemma,
$$
\|\hat{Y}^2\|\le\|Y^2\|+ 2\|\hat{X}-X\| = \|X\|^2+ 2\|\hat{X}-X\| <1.
$$
Therefore $\sum_{i=0}^\infty Y^{2i}$ and $\sum_{i=0}^\infty \hat{Y}^{2i}$ are convergent series. By the triangle inequality, we have
\begin{eqnarray*}
\Big\|\sum_{i=0}^\infty \left(X Y^{2i}-\hat{X}\hat{Y}^{2i}\right)\Big\| &=&\Big\| \sum_{i=0}^n (X-\hat{X})Y^{2i} + \sum_{i=0}^\infty \hat{X}(Y^{2i}-\hat{Y}^{2i}) \Big\| \\
&\le& \|X-\hat{X}\|\cdot \Big\|\sum_{i=0}^\infty Y^{2i}\Big\| + \Big\|\sum_{i=0}^\infty Y^{2i}-\sum_{i=0}^\infty \hat{Y}^{2i}\Big\|\\
&=& \|X-\hat{X}\|\cdot\|(1-Y^2)^{-1}\| + \Big\|(1-\hat{Y}^2)^{-1}-(1-Y^2)^{-1}\Big\|\\
&=& \|X-\hat{X}\|\cdot\|(1-Y^2)^{-1}\|+\Big\|(1-\hat{Y}^2)^{-1}(\hat{Y}^2-Y^2)(1-Y^2)^{-1}\Big\|\\
&\le& \|X-\hat{X}\|\cdot\|(1-Y^2)^{-1}\|+\|\hat{Y}^2-Y^2\|\cdot \|(1-\hat{Y}^2)^{-1}\|\cdot\|(1-Y^2)^{-1}\|\\
&\le& \|X-\hat{X}\|\cdot\|(1-Y^2)^{-1}\|+2\|\hat{X}-X\|\cdot \|(1-\hat{Y}^2)^{-1}\|\cdot\|(1-Y^2)^{-1}\|.
\end{eqnarray*}
Note that $\|(1-Y^2)^{-1}\| = (1-\|Y^2\|)^{-1}=(1-\|X\|^2)^{-1}$ and
$$
\|(1-\hat{Y}^2)^{-1}\| = \frac{1}{1-\|\hat{Y}^2\|}\le \frac{1}{1-\|Y^2\| -2\|\hat{X}-X\|} = \frac{1}{1-\|X\|^2 -2\|\hat{X}-X\|}.
$$
Putting these inequalities together, we get
\begin{eqnarray*}
\Big\|\sum_{i=0}^\infty \left(X Y^{2i}-\hat{X}\hat{Y}^{2i}\right)\Big\| &\le& \frac{3\|X-\hat{X}\|}{(1-\|X\|^2 -2\|\hat{X}-X\|)^2}
\end{eqnarray*}
and the proof is complete.
\end{proof}

\newpage 

\section{Consistency of variance estimation}

Denote by $H$ and $H^\perp$ the orthogonal projections onto $\text{span}\pare{\col(X),S_K(P)}$ and its orthogonal complement, respectively. Then $H^\perp Y=H^\perp\epsilon$ and therefore
\begin{equation}\label{eq: variance eps}
\mathbb{E} \|H^\perp Y\|^2 =  \mathbb{E} \text{Trace}(H^\perp \epsilon\epsilon^T) = (n-p-K+r)\sigma^2,
\end{equation}
where $p+K-r$ is the dimension of $\Span\pare{\text{col}(X),S_K(P)}$. 
Since $\mathcal{P}_{\rcal}$ is the orthogonal projection onto $\rcal =\col(X)\cap S_K(P)$ by \eqref{PRhat} and $\mathcal{P}_{\ccal}+\mathcal{P}_{\ncal}$ is the orthogonal projection onto the subspace of $\Span\pare{\col(X),S_K(P)}$ that is orthogonal to $\rcal$ by Lemma~\ref{lem:unorth proj}, we have  
\begin{eqnarray*}
H = \mathcal{P}_{\rcal}+\mathcal{P}_{\ccal}+\mathcal{P}_{\ncal}.
\end{eqnarray*}
Recall $\hat{H} = \hat{\mathcal{P}}_R+\hat{\mathcal{P}}_C+\hat{\mathcal{P}}_N$ from Algorithm~\ref{algo:estimation-most-general}. Denote $\hat{H}^\perp = I - \hat{H}$ and $H^\perp = I-H$. 

\begin{proof}[Proof of Proposition~\ref{prop:variance-consistency}]
We first bound the difference $$\hat{H}-H = (\hat{\mathcal{P}}_R-\mathcal{P}_{\rcal})+(\hat{\mathcal{P}}_C-\mathcal{P}_{\ccal})+(\hat{\mathcal{P}}_N-\mathcal{P}_{\ncal}).$$ 
Recall that  $\hat{\mathcal{P}}_R = \hat{Z}_{1:r}\hat{Z}_{1:r}^T$ by \eqref{PRhat} and $\mathcal{P}_{\rcal} = \tilde{Z}_{1:r}\tilde{Z}_{1:r}^T$ by \eqref{eq:PR}. Also, $\tilde{Z}_{1:r} = ZU_{1:r}$ and $\hat{Z}_{1:r} = Z\hat{U}_{1:r}$ by \eqref{eq:tilde Z W} and Algorithm~\ref{algo:estimation-most-general}, respectively. 
By \eqref{eq: RUU decomposition}, column vectors of $U$ and  $\hat{U}$ are left singular vectors of $Z^TWW^T$ and $Z^T\hat{W}\hat{W}^T$, respectively. Also, $\|Z^TWW^T-Z^T\hat{W}\hat{W}^T\|\le \tau_n$ by \eqref{eq: RUU} and $1-\sigma_{r+1}\ge 40\tau_n$ by \eqref{eq: singular value gap condition}. It then follows from 
the bound \eqref{eq: projection difference bound} of Lemma~\ref{lem: projection bound} that
%Since Assumption~\ref{ass:small-perturbation} implies condition \eqref{eq: singular value gap condition} in Lemma~\ref{lem:bounding perturbation unorthogonal proj}, which in turn implies \eqref{eq: W proj bound}, we have
\begin{eqnarray}\label{eq:Pr}
\big\|\hat{\mathcal{P}}_R-\mathcal{P}_{\rcal}\big\| = \big\|Z\left(\hat{U}_{1:r}\hat{U}_{1:r}^T-U_{1:r}U_{1:r}^T\right)Z^T\big\|
\le \frac{2\tau_n}{1-\sigma_{r+1}-4\tau_n}
\le \frac{4\tau_n}{1-\sigma_{r+1}}.
\end{eqnarray}
Together with the bound of Corollary~\ref{cor:pertubation of projections} and the triangle inequality, we see that there exists a constant $C>0$ such that with high probability
\begin{eqnarray*}
\|\hat{H} - H\| &\le& \|\hat{\mathcal{P}}_R-\mathcal{P}_{\rcal}\| + \|\hat{\mathcal{P}}_C-\mathcal{P}_{\ccal}\| +\|\hat{\mathcal{P}}_N-\mathcal{P}_{\ncal}\| \\
&\le& \frac{C\tau_n}{\min\{(1-\sigma_{r+1})^3,\sigma_{r+s}^3\}}.
\end{eqnarray*}
Since $\hat{H}^\perp - H^\perp = \hat{H}-H$, we obtain
\begin{eqnarray*}
|\hat{\sigma}^2-\sigma^2| &=&
\Big|\frac{\|\hat{H}^\perp Y\|^2}{n-p-K+r}-\frac{\|{H}^\perp Y\|^2}{n-p-K+r}\Big| \\
&\le& \|\hat{H}^\perp-H^\perp\|\cdot\frac{\|Y\|^2}{n-p-K+r}\\
&\le& \frac{C\tau_n}{\min\{(1-\sigma_{r+1})^3,\sigma_{r+s}^3\}}\cdot\frac{\|Y\|^2}{n-p-K+r}.
\end{eqnarray*} 
The claim of the lemma then follows because $\|Y\|\le \|X\beta+X\theta+\alpha\|+\|\epsilon\|=O(\sqrt{n(\sigma^2+p)})$ with high probability because of Assumption \ref{ass:signal-scale} and the fact that $\|\epsilon\|^2 = O(\sigma^2n)$ with high probability.
\end{proof}

\newpage 

\section{Inference of $\alpha$ and $\beta$}

\begin{proof}[Proof of Theorem~\ref{thm:general-beta-inference}]
Recall the variance of $\omega^T\hat{\beta}$ in \eqref{eq:true-var}. By the triangle inequality, 
\begin{align*}
\omega^T\Theta \tilde{X}^T\hat{\mathcal{P}}_C\hat{\mathcal{P}}_C^T\tilde{X}\Theta\omega &\ge \omega^T\Theta \tilde{X}^T\mathcal{P}_{\ccal}\mathcal{P}_{\ccal}^T\tilde{X}\Theta\omega - |\omega^T\Theta \tilde{X}^T(\hat{\mathcal{P}}_C\hat{\mathcal{P}}_C^T-\mathcal{P}_{\ccal}\mathcal{P}_{\ccal}^T)\tilde{X}\Theta\omega|
\end{align*}
From part (iv) of Lemma~\ref{lem:unorth proj} we have 
\begin{eqnarray*}
\omega^T\Theta \tilde{X}^T\mathcal{P}_{\ccal}\mathcal{P}_{\ccal}^T\tilde{X}\Theta\omega = \|\mathcal{P}_{\ccal}^T\tilde{X}\Theta\omega\|^2 \ge \|\tilde{Z}_{(r+1):p}^T\tilde{X}\Theta\omega\|^2.
\end{eqnarray*}
It follows from \eqref{eq:PC alt} that $\|\mathcal{P}_{\ccal}\|\le 2/(1-\sigma_{r+1})$. By the triangle inequality, \newline
Corollary~\ref{cor:pertubation of projections} and Assumption~\ref{ass:small-perturbation}, 
\begin{eqnarray*}
\|\hat{\mathcal{P}}_C\| \le \|\hat{\mathcal{P}}_C-\mathcal{P}_{\ccal}\| + \|\mathcal{P}_{\ccal}\| \le \frac{C}{1-\sigma_{r+1}}.
\end{eqnarray*}
Using Corollary~\ref{cor:pertubation of projections} and above inequalities, we get
\begin{eqnarray*}
\|\hat{\mathcal{P}}_C\hat{\mathcal{P}}_C^T-\mathcal{P}_{\ccal}\mathcal{P}_{\ccal}^T\| &\le& \|\hat{\mathcal{P}}_C(\hat{\mathcal{P}}_C^T-\mathcal{P}_{\ccal})\| + \|(\hat{\mathcal{P}}_C-\mathcal{P}_{\ccal})\mathcal{P}_{\ccal}^T\|\\
&\le& \frac{C\tau_n}{\min\{(1-\sigma_{r+1})^4,\sigma_{r+s}^4\}}.
\end{eqnarray*}
Therefore by \eqref{eq:general-var-requirement}, Assumption~\ref{ass:small-perturbation} and \eqref{eq:general-signal-requirement}
\begin{eqnarray*}
\omega^T\Theta \tilde{X}^T\hat{\mathcal{P}}_C\hat{\mathcal{P}}_C^T\tilde{X}\Theta\omega 
&\ge& \|\tilde{Z}_{(r+1):p}^T\tilde{X}\Theta\omega\|^2  - \frac{C\tau_n\norm{\tilde{X}\Theta\omega}^2}{\min\{(1-\sigma_{r+1})^4,\sigma_{r+s}^4\}}\\
&\ge& c^2 - \frac{C\tau_n\norm{\Theta\omega}^2}{\min\{(1-\sigma_{r+1})^4,\sigma_{r+s}^4\}}\\
&\ge& c^2/2.
\end{eqnarray*}
The last inequality implies that the variance of $\omega^T\hat{\beta}$ satisfies $\var(\omega^T\hat{\beta}) \ge c^2\sigma^2/(2n)$. Since the bias of $\omega^T\hat{\beta}$ is of order $o(1/\sqrt{n})$ by Proposition~\ref{prop:beta-bias}, it follows that the term
$$\frac{\omega^T{\hat{\beta}}-\omega^T{\beta}}{\sqrt{\frac{\sigma^2}{n}\omega^T\Theta \tilde{X}^T\hat{\mathcal{P}}_C\hat{\mathcal{P}}_C^T\tilde{X}\Theta\omega }}$$
is asymptotically standard normal. The proof is complete because $\hat{\sigma}^2$ is consistent by Proposition~\ref{prop:variance-consistency}.
\end{proof}

\begin{proof}[Proof of Corollary~\ref{coro:inference-individual-beta}]
Let ${\eta}_j$ be the partial residual from regressing $\tilde{X}_j$ on all other columns in $\tilde{X}$. Therefore, we have $\tilde{X}_{j'}^T{\eta}_j = 0$ if $j' \ne j$. When making inference on $\beta_j$, we have $\omega =e_j$ and in particular, we can write it as
$$\omega = (\tilde{X}_j^T{\eta}_j)^{-1}\tilde{X}^T{\eta}_j.$$ 
Therefore,
\begin{eqnarray*}
\tilde{Z}_{(r+1):p}^T\tilde{X}\Theta\omega = (\tilde{X}_j^T{\eta}_j)^{-1}\tilde{Z}_{(r+1):p}^T\tilde{X} (\tilde{X}^T\tilde{X})^{-1} \tilde{X}^T{\eta}_j.
\end{eqnarray*}
Since $\tilde{X} (\tilde{X}^T\tilde{X})^{-1} \tilde{X}^T$ is the orthogonal projection onto the column space of $\tilde{X}$ and the column vectors of $\tilde{Z}$ form an orthonornal basis of that space, it follows that
\begin{eqnarray*}
\tilde{Z}_{(r+1):p}^T\tilde{X}\Theta\omega=(\tilde{X}_j^T{\eta}_j)^{-1}\tilde{Z}_{(r+1):p}^T\left(\tilde{Z}_{1:r}\tilde{Z}_{1:r}^T+\tilde{Z}_{(r+1):p}\tilde{Z}_{(r+1):p}^T\right){\eta}_i= (\tilde{X}_j^T{\eta}_j)^{-1}\tilde{Z}_{(r+1):p}^T{\eta}_j. 
\end{eqnarray*}
Therefore, we need $\norm{\tilde{Z}_{(r+1):p}^T{\eta}_j}/ |\tilde{X}_j^T{\eta}_j| \ge c$ to use \eqref{eq:general-var-requirement}. To make it more interpretable, notice that
$$|\tilde{X}_j^T{\eta}_j| = \norm{{\eta}_j}^2 = \norm{{\eta}_j}^2/\norm{\tilde{X}_j}^2 = 1- \norm{\tilde{X}_j-{\eta}_j}^2/\norm{\tilde{X}_j}^2 = 1-R_j^2,$$
where $R_j^2$ is the partial $R^2$ of regressing $\tilde{X}_j$ to the other columns in $\tilde{X}$. Thus, \eqref{eq:nontrivial-partial-correlation} implies \eqref{eq:general-var-requirement} and Corollary~\ref{coro:inference-individual-beta} is a consequence of Theorem~\ref{thm:general-beta-inference}.
\end{proof}

\begin{proof}[Proof of Theorem~\ref{thm:chisq-test}]
Let $\text{Bias}(\hat{\gamma})=\mathbb{E}\hat{\gamma}-O\gamma$ be the bias of $\hat{\gamma}$ in estimating $O\gamma$, where $O$ is the orthogonal matrix in Proposition~\ref{prop: bias variance of gamma hat}. Then using the notation from Algorithm~\ref{algo:estimation-most-general},
$$\hat{\gamma} \sim N\left(O\gamma+\text{Bias}(\hat{\gamma}\right) , \Sigma_{\hat{\gamma}}).$$
Therefore, under the null hypothesis $H_0: \alpha = 0$, we have $\gamma = 0$ and
$$\Sigma_{\hat{\gamma}}^{-1/2}\hat{\gamma} \sim N\left(\Sigma_{\hat{\gamma}}^{-1/2}\text{Bias}(\hat{\gamma}),I\right).$$
From Assumption~\ref{ass:small-perturbation}, which implies \eqref{eq: singular value gap condition}, and Proposition~\ref{prop: bias variance of gamma hat}, 
\begin{eqnarray*}
\big\| \Sigma_{\hat{\gamma}}^{-1/2}\text{Bias}(\hat{\gamma}) \big\| 
&\le& \big\|\Sigma_{\hat{\gamma}}\big\|^{-1/2}\cdot\norm{\text{Bias}(\gamma) } \\
&\le& \left(\sigma^2-\frac{C\tau_n}{\min\{(1-\sigma_{r+1})^2,\sigma_{r+s}^2\}}\right)^{-1/2}\cdot \frac{C\tau_n\sqrt{np}}{\min\{(1-\sigma_{r+1})^3,\sigma_{r+s}^3\}}\\
&=& o(1).
\end{eqnarray*}
Therefore, as $n \to \infty$ we have
$$\Sigma_{\hat{\gamma}}^{-1/2}\hat{\gamma} \xrightarrow[]{d} N(0, I).$$
Note that $\|\hat{\Sigma}_{\hat{\gamma}}- {\Sigma}_{\hat{\gamma}}\|\rightarrow 0$ because $\hat{\sigma}^2$ is consistent by Proposition~\ref{prop:variance-consistency}. Using Slusky's theorem and the continuously mapping theorem, we obtain
$$\norm{\hat{\gamma}_0}^2 = \big\|\hat{\Sigma}_{\hat{\gamma}}^{-1/2}\hat{\gamma}\big\| \xrightarrow[]{d} \chi^2_K.$$
The proof is complete.
\end{proof}

\begin{proof}[Proof of Theorem~\ref{thm:theta-inference}]
Since $Y=X\beta+X\theta+\alpha+\epsilon$, it follows from \eqref{eq:alpha beta theta hat} that
$$\hat{\theta} = (X^TX)^{-1}X^T \hat{\mathcal{P}}_{\rcal} Y = (X^TX)^{-1}X^T \hat{\mathcal{P}}_{\rcal}(X\beta+X\theta+\alpha)+(X^TX)^{-1}X^T \hat{\mathcal{P}}_{\rcal}\epsilon.$$ 
Recall that $\epsilon\sim N(0,\sigma^2 I_p)$ and $\tilde{X}=X/\sqrt{n}$. Therefore, $\hat{\theta}$ is a multivariate Gaussian vector with variance 
$$
\var(\hat{\theta}) = \frac{\sigma^2}{n} \Theta\tilde{X}^T \hat{\mathcal{P}}_{\rcal}\hat{\mathcal{P}}_{\rcal}^T \tilde{X}\Theta
$$  
and expectation
\begin{eqnarray*}
\e \hat{\theta} &=& (X^TX)^{-1}X^T \hat{\mathcal{P}}_{\rcal}(X\beta+X\theta+\alpha) \\
&=& \theta + (X^TX)^{-1}X^T(\hat{\mathcal{P}}_{\rcal}-\mathcal{P}_{\rcal})(X\beta+X\theta+\alpha).  
\end{eqnarray*}
By \ref{ass:signal-scale},  \ref{ass:weak-dependence}, \ref{ass:small-perturbation}, and \eqref{eq:Pr} we have   
\begin{eqnarray*}
\|\e \hat{\theta}-\theta\| \le \|(X^TX)^{-1}\|\cdot\|X\|\cdot\|X\beta+X\theta+\alpha\|\cdot \|\hat{\mathcal{P}}_{\rcal}-\mathcal{P}_{\rcal}\| = o(n^{-1/2}).
\end{eqnarray*}
The claim about limiting distributions of $\hat{\theta}_i$ follows from \eqref{eq:nontrivial-partial-correlation-theta} and the same argument used in the proof of Corollary~\ref{coro:inference-individual-beta}.  
\end{proof}

\newpage 

\section{Bounding projection perturbation for adjacency matrix}\label{sec:projection bound adjacency matrix}

In this section we prove Theorem~\ref{thm:projection-concentration}, Corollary~\ref{coro:small-perturbation-A} and Theorem~\ref{thm:necessary}. For the proof of Theorem~\ref{thm:projection-concentration}, we need the following lemma.

\begin{lem}[\cite{lei2014consistency}]
\label{lemma:concentration}
Let $A$ be the adjacency matrix of a random graph on $n$ nodes generated from the inhomogeneous Erd\H{o}s-R\'{e}nyi model and denote $d = n\cdot\max_{ij}\e A_{ij}$. Then there exist some constants $C,c_0>0$ such that if $d \ge C\log n$ then $\norm{A - \e A} \le C\sqrt{d}$ with probability at least $1-n^{-c_0}$.
\end{lem}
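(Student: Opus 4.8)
The plan is to establish the sharp $\sqrt{d}$ spectral-norm bound by the Kahn--Szemer\'edi / Feige--Ofek discretization argument, rather than a matrix Bernstein inequality; the latter would produce an extra $\sqrt{\log n}$ factor and therefore fail to deliver the $\sqrt d$ rate at the boundary $d\asymp \log n$. Write $\Delta = A - P$ with $P = \e A$. Since $\Delta$ is symmetric, $\|\Delta\| = \sup_{x,y\in S^{n-1}} x^T\Delta y$. First I would discretize: fix a $1/4$-net $\mathcal{N}$ of the unit sphere $S^{n-1}$ with $|\mathcal{N}|\le 9^n$, so that a standard net inequality gives $\|\Delta\|\le 2\max_{x,y\in\mathcal{N}} x^T\Delta y$. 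It then suffices to control $x^T\Delta y = \sum_{i,j}\Delta_{ij}x_iy_j$ uniformly over the exponentially many net points, with a tail probability small enough to survive a union bound of size $|\mathcal{N}|^2 = e^{O(n)}$.

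The core of the argument is a decomposition of each bilinear form according to the size of its entries. Call a pair $(i,j)$ \emph{light} if $|x_iy_j|\le \sqrt d/n$ and \emph{heavy} otherwise, and split $x^T\Delta y = L(x,y) + H(x,y)$ accordingly. For the light part I would apply Bernstein's inequality: using $A_{ii}=0$, the relevant independent summands are bounded in absolute value by $O(\sqrt d/n)$, while the variance is at most $(\max_{ij}P_{ij})\sum_{i,j}x_i^2 y_j^2 = d/n$ (up to a constant). Taking the deviation level $t = C\sqrt d$ makes the Bernstein exponent of order $t^2/(d/n) = C^2 n$, which dominates the $O(n)$ metric entropy of $\mathcal{N}$; a union bound then yields $\max_{x,y\in\mathcal{N}}|L(x,y)|\le C\sqrt d$ with probability at least $1-n^{-c_0}$, provided $C$ is chosen large enough.

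The heavy part is the main obstacle and is purely combinatorial, requiring no concentration of individual sums. Instead I would invoke the bounded-degree and discrepancy properties that hold with high probability once $d\ge C\log n$: every vertex has degree $O(d)$, and for all vertex subsets $I,J\subseteq[n]$ the edge mass $e(I,J)=\sum_{i\in I,\,j\in J}A_{ij}$ either concentrates near $|I||J|d/n$ or obeys the Feige--Ofek logarithmic-excess bound. Decomposing the coordinates of $x$ and $y$ dyadically into geometric magnitude scales and estimating the heavy contribution scale-by-scale via these discrepancy inequalities gives $\max_{x,y\in\mathcal{N}}|H(x,y)|\le C\sqrt d$ on the same high-probability event. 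Combining the light and heavy bounds through the net inequality yields $\|\Delta\| = \|A - \e A\|\le C'\sqrt d$ with probability at least $1-n^{-c_0}$, as stated. This is precisely the argument carried out in \cite{lei2014consistency} (building on the Feige--Ofek bounded-discrepancy technique), so in the write-up it would suffice to cite that reference and reproduce the light/heavy split only to the level of detail above.
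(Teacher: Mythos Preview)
Your proposal is correct: the light/heavy split combined with the Feige--Ofek discrepancy bound is exactly the argument behind the cited result, and your sketch captures the key points (in particular why matrix Bernstein is insufficient at $d\asymp\log n$). Note that the paper itself gives no proof at all for this lemma---it is simply quoted from \cite{lei2014consistency}---so your write-up already goes beyond what is required; citing the reference, as you suggest at the end, would suffice.
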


\begin{proof}[Proof of Theorem~\ref{thm:projection-concentration}]

Recall the eigen-projection representation (see for example \cite{xia2019data})
\be\label{eq:eigen-projection}
\hat{W}\hat{W}^T - WW^T = \sum_{k=1}^\infty S_k,
\qquad S_k = \frac{1}{2\pi i}\int_\Omega [R(z)E]^k R(z) dz.
\ee
Here $E = A - \e A$, $R(z) = (zI-\e A)^{-1}$ is the resolvent and $\Gamma$ is a contour separating $\{\lambda_1,...,\lambda_K\}$ from the remaining eigenvalues such that $\Omega$ is sufficiently far away from all eigenvalues of $\e A$, that is, for some constant $c_1>0$, 
$$
\min_{j\in [n], \ z\in\Gamma} |z-\lambda_j| \ge \frac{d}{2c_1}. 
$$ 
Such a contour exists because of the eigenvalue gap assumption on $\e A$. This property of $\Omega$ implies that for any $z\in \Omega$, we have
$$
\|R(z)\| \le \frac{2c_1}{d}.
$$

Note that the range of $\{\lambda_1,...,\lambda_K\}$ is of order $O(d)$, therefore the length of $\Omega$ is of order $O(d)$ as well. According to Lemma~\ref{lemma:concentration}, there exists constant $C>0$ such that the event
$$T_0 = \left\{\norm{E} \le C\sqrt{d}\right\}$$
occurs with probability at least $1-n^{-c_0}$. Under $T_0$, the bound on the resolvent yields an upper bound for $S_k$:
\begin{equation}\label{eq:Sk bound k greater one}
\|S_k\| \le \frac{1}{2\pi}\left(\frac{2c_1}{d}\right)^k \cdot (C\sqrt{d})^{k/2} \cdot \frac{2c_1}{d} \cdot \text{length}(\Omega) \le \frac{c}{\pi} \left(\frac{4c_1^2C}{d}\right)^{k/2}.
\end{equation}
We will use this inequality to upper bound $S_k$ for all $k\ge 2$. It remains to show that $S_1$ is small. Since
$$
R(z) = (zI-\e A)^{-1} = \sum_{j=1}^n \frac{1}{z-\lambda_j} {w}_j {w}_j^T,
$$
it follows that
$$
S_1 = \frac{1}{2\pi i}\int_\Omega R(z)E R(z) dz 
= \sum_{j,j' = 1}^n \frac{1}{2\pi i} \int_\Omega \frac{{w}_j^T E {w}_{j'}}{(z-\lambda_j)(z-\lambda_{j'})} dz \cdot {w}_j {w}_{j'}^T.
$$
If $j$ and $j'$ are both either at most $K$ or greater than $K$ then the integral is zero. If one of them is at most $K$ and the other does not, say $j\le K$ and $j'> K$, then the integral may not be zero; the contribution of all such terms is
\begin{eqnarray*}
M := 2\sum_{j\le K, \ j'>K} \frac{1}{2\pi i} \int_\Omega \frac{{w}_j^T E {w}_{j'}}{(z-\lambda_j)(z-\lambda_{j'})} dz \cdot {w}_j {w}_{j'}^T =
\sum_{j\le K} \frac{1}{2\pi i} \sum_{j'>K} \frac{{w}_j^T E {w}_{j'}}{\lambda_j - \lambda_{j'}} \cdot {w}_j {w}_{j'}^T.
\end{eqnarray*} 
Let ${v}$ be a fixed unit vector. Then
\begin{eqnarray}
\nonumber\|M {v}\|^2 &=& \Big\|2 \sum_{j\le K} \frac{1}{2\pi i} {w}_j \sum_{j'>K} \frac{{w}_j^T E {w}_{j'}}{\lambda_j - \lambda_{j'}} \cdot {w}_{j'}^T {v} \Big\|^2 \\
\nonumber&=& \frac{1}{\pi^2}\sum_{j\le K} \Big| \sum_{j'>K} \frac{{w}_j^T E {w}_{j'}}{\lambda_j - \lambda_{j'}} \cdot {w}_{j'}^T {v} \Big|^2 \\
\label{eq:Mv bound}&=& \frac{1}{\pi^2}\sum_{j\le K} \left| {w}_j^TE Q_j {v} \right|^2,
\end{eqnarray}
in which we denote 
$$Q_j = \sum_{j'>K} \frac{1}{\lambda_j-\lambda_{j'}} {w}_{j'}{w}_{j'}^T.$$
Note that $Q_j$ depends only on $P$ and since $|\lambda_j - \lambda_{j'}| \ge cd$ for $j\le K, j' >K$, we have $\|Q_j\| \le 1/(cd)$ for all $j \le K$,  which further implies $\norm{Q_j{v}} \le 1/(c_1d)$. For any fixed vectors ${{x}},{{y}} \in \bR^n$, 
$${{x}}^TE {{y}}  = \sum_{i<j}E_{ij}({ x}_i{ y}_j+{ x}_j{ y}_i) + \sum_iE_{ii}{x}_i{ y}_i,$$
where $E_{ij}, i\le j$ are independent Bernoulli errors. By Hoeffding's inequality (see for example Theorem 2.6.3 in \cite{vershynin2018high}), for some absolute constant $c'$ we have 
\begin{align*}
\p(|{{x}}^TE {{y}} |>t) &\le 2\exp\left(-\frac{c't^2}{\sum_{i<j}({ x}_i{y}_j+{x}_j{ y}_i)^2 + \sum_i{ x}_i^2{y}_i^2}\right)\\
& \le 2\exp\left(-\frac{c't^2}{2\norm{{x}}^2\norm{{ y}}^2}\right).
\end{align*}
Applying the above inequality to each $1\le j\le K$ and using the bound $\|Q_j{v}\|\le 1/(c_1d)$, we have
\begin{align*}
\p\left(\left| {w}_j^TEQ_j {v} \right| > \frac{\sqrt{\log n}}{d}\right) & \le 2\exp\left(-\frac{c'\log n/d^2}{2\norm{{w}_j}^2\norm{Q_j {v}}^2}\right) \\
& \le 2\exp\left(-\frac{c'\log n/d^2}{2/(c_1d)^2}\right)\\
& = 2 n^{-c'/2c_1^2}.
\end{align*}
Denote $T_j = \big\{\big|{w}_j^TE Q_j {v} \big| > \sqrt{\log n}/d\big\}$. Then under $\bigcap_{j=1}^K T_j$, by \eqref{eq:Mv bound} we have
$$\norm{M {v}}^2 \le \frac{K\log n}{\pi^2d^2}.$$ 
Finally, under $\bigcap_{j=0}^K T_j$, using \eqref{eq:eigen-projection} and \eqref{eq:Sk bound k greater one}, we obtain
\begin{eqnarray*}
\big\|(\hat{W}\hat{W}^T - WW^T){v}\big\| &\le& \norm{S_1{v}} + \sum_{k\ge 2}\norm{S_k{v}}\\
&\le& \frac{\sqrt{K\log n}}{d} + \sum_{k\ge 2}\left(\frac{4c^2C}{d}\right)^{k/2}\\
&\le& \frac{\sqrt{K\log n}}{d} + \frac{2}{d}\\
&\le& \frac{2\sqrt{K\log n}}{d}
\end{eqnarray*}
with probability at least $1-(1+2K)n^{-c}$ for $c = \min(c_0, c'/2c_1^2) >0$, in which the summation of the sequence is valid as long as $4c_1^2C/d < 1/2$. Notice that $c$ does not depend on ${v}$.
\end{proof}

\begin{proof}[Proof of Corollary~\ref{coro:small-perturbation-A}]
Notice that $Z$ is fixed matrix. Directly applying Theorem~\ref{thm:projection-concentration} for each column vector of $Z$ yields
$$\norm{(\hat{W}\hat{W}-WW^T)Z_j} \le \frac{2\sqrt{Kn\log n}}{d}$$
for every $1\le j\le p$. This implies
$$\norm{(\hat{W}\hat{W}-WW^T)Z}\le\frac{2\sqrt{pKn\log n}}{d}$$
with probability at least $1-pKn^{-c}$.
The proof is complete.
\end{proof}

\begin{proof}[Proof of Corollary~\ref{coro:r-consistency}]

For sufficiently large $n$, it is easy to see that by Bernstein inequality the following event
$$T_0 = \left\{\bar{d}/2 \le \hat{d} \le 2\bar{d}\right\}$$
happens with probability at least $1-n^{-c}$ for some constant $c>0$. By Theorem~\ref{thm:projection-concentration}, the following event 
$$T_1 = \left\{ \norm{(\hat{W}\hat{W}^T-WW^T)Z} \le \frac{2\sqrt{pK\log n}}{d} \right\}$$
happens with probability at least $1-pKn^{-c}$. Under $T_1$, according to \eqref{eq: RUU}, \eqref{eq: RUU decomposition} and \eqref{eq:singular value bound}, we have for every $1\le i\le \min\{p,K\}$:
\begin{eqnarray*}
|\sigma_i - \hat{\sigma}_i|  \le \norm{\hat{\Sigma}-\Sigma} \le \norm{(\hat{W}\hat{W}^T-WW^T)Z} \le \frac{2\sqrt{pK\log n}}{d}.
\end{eqnarray*}
Since $\sigma_r=1$, this implies that under $T_0 \cap T_1$ we have
$$\hat{\sigma}_r \ge 1-\frac{2\sqrt{pK\log n}}{d}\ge 1-\frac{2\sqrt{pK\log n}}{\bar{d}}\ge 1-\frac{4\sqrt{pK\log n}}{\hat{d}}.$$
Similarly, under $T_0 \cap T_1$ we have 
\begin{eqnarray*}
\hat{\sigma}_{r+1} \le \sigma_{r+1}+\frac{2\sqrt{pK\log n}}{d} \le \sigma_{r+1}+ \frac{4\sqrt{pK\log n}}{\hat{d}}. 
\end{eqnarray*}
The right-hand side of this inequality is strictly smaller than the threshold $1-\frac{4\sqrt{pK\log n}}{\hat{d}}$ if 
$$\hat{d}> \frac{8\sqrt{pK\log n}}{1-\sigma_{r+1}},$$
which holds if $T_0$ occurs and condition \eqref{eq:same-scale} is satisfied.
Therefore, 
$$\p(\hat{r} = r) \ge \p(T_0 \cap T_1) \ge 1-2pKn^{-c}$$
and the proof is complete.
\end{proof}

To prove Theorem~\ref{thm:necessary}, we need the following two lemmas to deal with the special case when $P$ is a matrix of a fixed rank.

\begin{lem}[Theorem 1 of \cite{xia2019data}] \label{lem:Xia2019}
Let $A$ be the adjacency matrix of a random graph generated from the inhomogeneous Erd\H{o}s-R\'{e}nyi model and $E = A-\e A$. 
Assume that $\e A$ is of a fixed rank $K$ and $\|E\|<\lambda_K/2$, where $\lambda_K$ is the smallest nonzero eigenvalue of $\e A$. Denote by $W\in \mathbb{R}^{n\times K}$ the matrix whose columns are eigenvectors of $\e A$ corresponding to the $K$ largest eigenvalues of $\e A$; define $\hat{W}\in\mathbb{R}^{n\times K}$ similarly for $A$.  
Then 
$$\hat{W}\hat{W}^T-WW^T = \sum_{k\ge 1}S_{k},$$
where 
$$S_k = \sum_{s}(-1)^{1+\tau(s)}\tilde{P}^{-s_1}E\tilde{P}^{-s_2}\cdots\tilde{P}^{-s_k}E\tilde{P}^{-s_{k+1}},$$
the sum is over all vectors of non-negative integers $s = (s_1,..., s_{k+1})$ such that $\sum_{i=1}^{k+1}s_i=k$, and $\tau(s)$ is the number of positive entries of $s$. Moreover, 
$$\tilde{P}^0 = W_{\perp}W_{\perp}^T, \quad \tilde{P}^{-k} = W\Lambda^{-k}W^T \text{ for } k\ge 1,$$
where $\Lambda\in\mathbb{R}^{K\times K}$ is the diagonal matrix of $K$ non-zero eigenvalues of $\e A$, and $W_\perp\in\mathbb{R}^{n\times(n-K)}$ is the matrix whose columns are eigenvectors of $\e A$ associated with the eigenvalue zero.
\end{lem}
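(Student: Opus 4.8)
The statement is Theorem~1 of \cite{xia2019data} specialized to a rank-$K$ signal matrix, so the quickest route is simply to quote that result; below I sketch a self-contained argument. The plan is to combine the Riesz integral representation of spectral projections with a Neumann expansion of the resolvent and then extract the coefficients by a residue computation at the origin.

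First I would fix a positively oriented contour $\Gamma\subset\mathbb{C}$ that encloses the $K$ nonzero eigenvalues $\lambda_1,\dots,\lambda_K$ of $\e A$ but not $0$; since $\|E\|<\lambda_K/2$, it can be chosen so that it also separates the top $K$ eigenvalues of $A$ from the rest and so that $\mathrm{dist}(z,\mathrm{spec}(\e A))\ge\lambda_K/2$ on $\Gamma$. Writing $R(z)=(zI-\e A)^{-1}$, the holomorphic functional calculus gives $WW^T=\frac{1}{2\pi i}\oint_\Gamma R(z)\,dz$ and $\hat W\hat W^T=\frac{1}{2\pi i}\oint_\Gamma (zI-A)^{-1}\,dz$. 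Iterating the resolvent identity $(zI-A)^{-1}=R(z)+R(z)E(zI-A)^{-1}$ yields the Neumann series $(zI-A)^{-1}=\sum_{k\ge0}[R(z)E]^kR(z)$, convergent on $\Gamma$ because $\|R(z)E\|\le 2\|E\|/\lambda_K<1$ there. Subtracting the two representations gives $\hat W\hat W^T-WW^T=\sum_{k\ge1}S_k$ with $S_k=\frac{1}{2\pi i}\oint_\Gamma[R(z)E]^kR(z)\,dz$, and the crude bound $\|S_k\|\le C(2\|E\|/\lambda_K)^k$ makes the series absolutely convergent.

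Next I would evaluate $S_k$ by a residue-at-infinity argument: for $k\ge1$ the rational function $[R(z)E]^kR(z)$ is $O(|z|^{-(k+1)})$ as $|z|\to\infty$, its only singularities are at $0$ and the $\lambda_j$, and $\Gamma$ encloses all of them except $0$; hence $S_k=-\mathrm{Res}_{z=0}\{[R(z)E]^kR(z)\}$. Using the rank-$K$ decomposition $\e A=W\Lambda W^T$, the resolvent has the Laurent expansion about the origin
\[
R(z)=z^{-1}W_\perp W_\perp^T-\sum_{m\ge1}z^{m-1}W\Lambda^{-m}W^T=z^{-1}\tilde P^0-\sum_{m\ge1}z^{m-1}\tilde P^{-m}.
\]
Substituting this into each of the $k+1$ copies of $R(z)$ in $[R(z)E]^kR(z)$ and collecting the coefficient of $z^{-1}$: a choice of exponents $(s_1,\dots,s_{k+1})$ with $s_i\ge0$ contributes the $z$-power $z^{\sum_i(s_i-1)}$, so only tuples with $\sum_{i=1}^{k+1}s_i=k$ survive, each carrying the matrix $\tilde P^{-s_1}E\tilde P^{-s_2}\cdots E\tilde P^{-s_{k+1}}$ and the scalar $\prod_i c_{s_i}$ with $c_0=1$ and $c_m=-1$ for $m\ge1$, i.e.\ the scalar equals $(-1)^{\tau(s)}$. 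Therefore $S_k=-\sum_s(-1)^{\tau(s)}\tilde P^{-s_1}E\cdots E\tilde P^{-s_{k+1}}=\sum_s(-1)^{1+\tau(s)}\tilde P^{-s_1}E\cdots E\tilde P^{-s_{k+1}}$, which is the claimed formula.

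I expect the only real work to be the bookkeeping in this last step — tracking the total $z$-power, the constraint $\sum_i s_i=k$, and the sign $(-1)^{1+\tau(s)}$ coming from the count of nonzero exponents — together with stating the analytic prerequisites (contour choice, decay at infinity, radius of convergence of the Laurent expansion) carefully. All of this is either routine or can be imported wholesale from \cite{xia2019data}.
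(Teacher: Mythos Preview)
Your proof sketch is correct. The paper does not prove this lemma at all --- it is quoted verbatim as Theorem~1 of \cite{xia2019data} and used as a black box in the proof of Theorem~\ref{thm:necessary} --- so your self-contained argument via the Riesz projection formula, Neumann expansion, and the residue computation at the origin actually goes further than the paper, which simply cites the result.
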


\begin{lem}[Theorem 6.3.2 of \cite{vershynin2018high}]\label{lem:VectorNorm} Let $R\in\mathbb{R}^{m \times n}$ be a non-random matrix and $X = (X_1,...,X_n)^T\in\mathbb{R}^n$ be a random vector with independent, mean zero,
unit variance, sub-gaussian coordinates. Then
$$\norm{\norm{RX} - \norm{R}_F}_{\psi_2} \le C\max_i\norm{X_i}_{\psi_2}^2\norm{R},$$
in which $\norm{\cdot}_{\psi_2}$ is the sub-gaussian Orlicz norm. 
\end{lem}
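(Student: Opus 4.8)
The plan is to reduce the statement to a concentration bound for the \emph{squared} norm $\|RX\|_2^2$, which is a quadratic form in the sub-gaussian vector $X$, then transfer that bound to $\|RX\|_2$ by an elementary algebraic step, and finally repackage the resulting tail estimate as a bound on the $\psi_2$ norm. Write $K:=\max_i\|X_i\|_{\psi_2}$ (one may assume $K\gtrsim 1$, since a unit-variance sub-gaussian variable has $\psi_2$ norm of order at least one). First I would record the mean: since $\|RX\|_2^2=X^TR^TRX$ and the $X_i$ are independent, mean zero, and unit variance, $\e\|RX\|_2^2=\sum_i(R^TR)_{ii}=\tr(R^TR)=\|R\|_F^2$.

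Next, apply the Hanson--Wright inequality to the fixed positive semidefinite matrix $R^TR$: for every $t>0$,
$$\p\!\left(\bigl|\|RX\|_2^2-\|R\|_F^2\bigr|>t\right)\le 2\exp\!\left(-c\min\!\left\{\frac{t^2}{K^4\|R^TR\|_F^2},\ \frac{t}{K^2\|R^TR\|}\right\}\right).$$
Using $\|R^TR\|=\|R\|^2$ and $\|R^TR\|_F^2=\sum_i\sigma_i(R)^4\le\|R\|^2\|R\|_F^2$, the exponent is controlled from below by $c\min\{t^2/(K^4\|R\|^2\|R\|_F^2),\ t/(K^2\|R\|^2)\}$.

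To pass to $\|RX\|_2$ itself, I would use that for nonnegative reals $z,a$ one has $|z-a|\le\min\{|z^2-a^2|/a,\ |z^2-a^2|^{1/2}\}$, which follows from $|z^2-a^2|=|z-a|(z+a)$ together with $z+a\ge\max\{a,|z-a|\}$. Applying this with $z=\|RX\|_2$ and $a=\|R\|_F$ gives
$$\p\!\left(\bigl|\|RX\|_2-\|R\|_F\bigr|>u\right)\le\p\!\left(\bigl|\|RX\|_2^2-\|R\|_F^2\bigr|>\max\{u\|R\|_F,u^2\}\right).$$
Substituting $t=\max\{u\|R\|_F,u^2\}$ into the previous display and splitting into the regimes $u\le\|R\|_F$ and $u>\|R\|_F$ (using $\|R\|_F\ge\|R\|$ and $K\gtrsim 1$), the minimum in the Hanson--Wright exponent collapses to the single Gaussian-type term, so that $\p(|\|RX\|_2-\|R\|_F|>u)\le 2\exp(-c'u^2/(K^4\|R\|^2))$. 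Finally, by the standard equivalence between sub-gaussian tail decay and the $\psi_2$ norm, a bound $\p(|Y|>u)\le 2e^{-c'u^2/\sigma^2}$ yields $\|Y\|_{\psi_2}\le C\sigma$; taking $Y=\|RX\|_2-\|R\|_F$ and $\sigma=K^2\|R\|$ gives exactly $\|\,\|RX\|_2-\|R\|_F\,\|_{\psi_2}\le CK^2\|R\|=C\max_i\|X_i\|_{\psi_2}^2\,\|R\|$.

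The only substantive ingredient is the Hanson--Wright inequality; the remaining steps are routine bookkeeping. The point requiring the most care is the transfer from squares to norms: one must check that, after substituting $\max\{u\|R\|_F,u^2\}$, the two-regime Hanson--Wright tail really does reduce to a clean sub-gaussian tail with variance proxy of order $K^2\|R\|$ rather than the larger $K^2\|R\|_F$, since it is precisely this operator-norm dependence that makes the lemma nontrivial when $R$ has many comparable singular values. Of course one may instead simply cite this, as it is essentially Theorem~6.3.2 (and Exercise~6.3.5) of \cite{vershynin2018high}.
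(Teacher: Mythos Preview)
Your proof is correct and is in fact the standard argument (Hanson--Wright for $\|RX\|^2$, followed by the square-to-norm transfer); this is essentially how Vershynin proves the result you are citing. The paper itself gives no proof at all: the lemma is stated purely as a citation of Theorem~6.3.2 of \cite{vershynin2018high} and is used as a black box in the proof of Theorem~\ref{thm:necessary}, so your sketch already supplies strictly more than the paper does.
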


\begin{proof}[Proof of Theorem~\ref{thm:necessary}]
\textbf{Part (i).}  It is sufficient to find one configuration satisfying the conditions of Theorem~\ref{thm:projection-concentration}, such that the inequality holds. In particular, consider an Erd\H{o}s-Ren\'{y}i model with $K=1$ and  $P = \rho_n \mbone_n\mbone_n^T$, where $\mbone_n$ denotes the all-one vector. In this case, $d = n\rho_n$. Denote ${v} = \frac{1}{\sqrt{n}}\mbone_n$.  Since
$$|{v}^T(\hat{W}\hat{W}^T-WW^T){v}| \le \norm{(\hat{W}\hat{W}^T-WW^T){v}},$$
it is sufficient to show that 
\begin{equation}\label{eq:lower-bound-projection}
|{v}^T(\hat{W}\hat{W}^T-WW^T){v}| \ge c/d.
\end{equation}
By Lemma~\ref{lem:Xia2019}, we have 
\be\label{eq:docompose}
{v}^T(\hat{W}\hat{W}^T-WW^T){v} = {v}^T\pare{S_1(E)+S_2(E)}{v} + {v}^T\sum_{k\ge 3}S_{k}(E){v},
\ee
where $S_k$'s are defined accordingly in Lemma~\ref{lem:Xia2019}. In particular, if we denote $\tilde{P}^{\perp} = \tilde{P}^0$ then
\begin{eqnarray*}
S_1(E) &=& \tilde{P}^{-1}E\tilde{P}^{\perp}+\tilde{P}^{\perp}E\tilde{P}^{-1},\\
S_2(E) &=& (\tilde{P}^{-2}E\tilde{P}^{\perp}E\tilde{P}^{\perp}+\tilde{P}^{\perp}E\tilde{P}^{-2}E\tilde{P}^{\perp}+\tilde{P}^{\perp}E\tilde{P}^{\perp}E\tilde{P}^{-2})\\
&&-(\tilde{P}^{\perp}E\tilde{P}^{-1}E\tilde{P}^{-1}+\tilde{P}^{-1}E\tilde{P}^{\perp}E\tilde{P}^{-1}+\tilde{P}^{-1}E\tilde{P}^{-1}E\tilde{P}^{\perp}), 
\end{eqnarray*}
where  $\tilde{P}^{-1} = \frac{1}{n\rho_n}{v}{v}^T$. Since $\tilde{P}^{\perp}{v} = 0$, we have
$${v}^T\pare{S_1(E)+S_2(E)}{v} = -{v}^T\tilde{P}^{-1}E\tilde{P}^{\perp}E\tilde{P}^{-1}{v} = - \frac{1}{d^2n}\norm{\tilde{P}^{\perp}E\mbone_n}^2.$$
Note that $\tilde{P}^{\perp}= I-(1/n)\mbone_n\mbone_n^T$ is the centralization operator, therefore 
\be\label{eq:term-decomp}
\norm{\tilde{P}^{\perp}E\mbone_n} \ge \norm{E\mbone_n}-\frac{1}{n}\|\mbone_n\mbone_n^TE\mbone_n\| = \norm{E\mbone_n} - \frac{1}{\sqrt{n}}\Big|\sum_{i,j}E_{ij}\Big|.
\ee
%We first lower bound $ \norm{E\mbone_n}$. Since each entry of $\norm{E\mbone_n}$ is a sum of $n$ independent centered Bernoulli random variables with success probability $\rho_n$, by Bernstein's inequality
%\begin{eqnarray*}
%\mathbb{P}\Big(\Big|\sum_{j=1}^n E_{ij}\Big|>t\Big) \le \exp\left(-\frac{t^2/2}{n\rho_n+t/3}\right)
%\end{eqnarray*}  

We will use Lemma~\ref{lem:VectorNorm} to estimate the first term. In particular, let $\tilde{E} = E/\sqrt{\rho_n(1-\rho_n)}$ and  let $\Psi \in \bR^{(n(n+1)/2)}$ be the vector of on and above-diagonal entries of $\tilde{E}$, indexed by 
$$\Psi= (\tilde{E}_{11}, \tilde{E}_{12},..., \tilde{E}_{1n}, \tilde{E}_{22}, \tilde{E}_{23},..., \tilde{E}_{2n}, \tilde{E}_{33},...,\tilde{E}_{nn})^T.$$
Notice that coordinates of $\Psi$ are independent random variables with mean zero and unit variance. Then $E\mbone_n$ can be written as $\sqrt{\rho_n(1-\rho_n)}R\Psi$ for the matrix $R \in \bR^{n\times n(n+1)/2}$ defined by
\begin{equation*}
R = \begin{pmatrix}
  \mbone_n^T & 0 & 0 & \cdots & 0 \\
e_{2}^{(n)T} & \mbone_{n-1}^T & 0 & \cdots & 0 \\
e_{3}^{(n)T} & e_{2}^{(n-1)T} & \mbone_{n-2}^T & \cdots & 0 \\
  \vdots  & \vdots & \vdots  & \ddots & \vdots  \\
e_{n}^{(n)T} & e_{n-1}^{(n-1)T} & e_{n-2}^{(n-2)T} &\cdots & 1
 \end{pmatrix},
 \end{equation*}
in which $e_{j}^{(m)}\in\mathbb{R}^m$ is the vector with only its $j$th coordinate being 1 while the rest being zeros. Moreover, it is easy to check that (since there are exactly $n^2$ entries in $E$)
$$\norm{R}_F^2  = n^2.$$
To use Lemma~\ref{lem:VectorNorm}, we will find the spectral norm of $R$ first.  This can be achieved by checking for what unit vector ${u} \in \bR^{n(n+1)/2}$ the norm $\norm{R{u}}^2$  achieves its maximum. If we index ${u}$ in the same way as $\tilde{E}$ (in matrix form), we see that
$$\norm{R{u}}^2 = \sum_{i=1}^n\Big(\sum_{j=1}^n{u}_{ij}\Big)^2,$$
where ${u}_{ij} = {u}_{ji}$ for $i > j$. It is easy to see that the maximum is achieved when the mass 1 is equally assigned to all the upper triangular coordinates, as well as the diagonal coordinates but the ratio is two to one such that 
$${u}_{ij} = \frac{2}{\sqrt{(2n-1)n}},~~~~ i< j \text{~~and~~} {u}_{ii} = \frac{1}{\sqrt{(2n-1)n}}.$$
In this case, $$\norm{R{u}}^2 = 2n-1.$$
So the spectral norm of $R$ is $\norm{R} = \sqrt{2n-1}$. Now by using Lemma~\ref{lem:VectorNorm}, we have
\begin{eqnarray*}
\Big\|\norm{E\mbone_n}-\sqrt{\rho_n(1-\rho_n)}n\Big\|_{\psi_2} &=& \left\|\sqrt{\rho_n(1-\rho_n)}\norm{R\Psi}-\sqrt{\rho_n(1-\rho_n)}n\right\|_{\psi_2} \\
&\le& C\sqrt{n\rho_n(1-\rho_n)}\|\tilde{E}_{12}\|_{\psi_2}^2\\
&=& C\left(\frac{n}{\rho_n(1-\rho_n)}\right)^{1/2}\|E_{12}\|_{\psi_2}^2.
\end{eqnarray*}
Since $E_{12}$ is a centered Bernoulli random variable with success probability $\rho_n$, according to \cite{Buldygin2013}, the squared Gaussian norm of $E_{12}$ for $\rho_n= o(1)$ is
\begin{eqnarray*}
\|E_{12}\|_{\psi_2}^2 = \frac{1-2\rho_n}{2(\log(1-\rho_n)-\log\rho_n)}\approx \frac{1}{\log 1/\rho_n}.
\end{eqnarray*}
That means, for some constant $c$,
$$\p\left(|\norm{E\mbone_n}-\sqrt{\rho_n(1-\rho_n)}n| > t\right)\le 2\exp\left(-\frac{ct^2\rho_n\log^2 1/\rho_n}{n}\right).$$
In particular, for $\sqrt{n/\log n} \ll d \le n^{1-\varepsilon} $, with high probability we have
$$\norm{E\mbone_n} > \sqrt{dn}/2.$$
We turn to upper bound the second term of \eqref{eq:term-decomp}. By the Bernstein inequality, 
$$ \frac{1}{\sqrt{n}}\Big|\sum_{i,j}E_{ij}\Big| \le \sqrt{2d{\log n}}$$
with high probability. Therefore for sufficiently large $n$, the following happens with high probability 
\begin{eqnarray}\label{eq:bound1}
\big|{v}^T\pare{S_1(E)+S_2(E)}{v}\big|=\frac{\norm{\tilde{P}^{\perp}E\mbone_n}^2 }{d^2n}\ge \frac{1}{d^2n}\left(\sqrt{dn}/2-\sqrt{2d{\log n}}\right)^2\ge \frac{1}{16d}.
\end{eqnarray}

Next, we look at the term ${v}^T\sum_{k\ge 3}S_{k}(E){v}$. By the definition of $S_k$, we have
\begin{align}\label{eq:bound2}
\Big|{v}^T\sum_{k\ge 3}S_{k}(E){v}\Big| &\le \sum_{k\ge 3}\norm{S_{k}(E)} \le \sum_{k\ge 3}\left(\frac{4\norm{E}}{d}\right)^k \le \frac{C}{d^{3/2}}
\end{align}
with high probability according to Lemma~\ref{lemma:concentration}. Combining \eqref{eq:bound1}, \eqref{eq:bound2} and \eqref{eq:docompose} yields part (i).

\medskip

\textbf{Part (ii).}   Notice that part (i) does not directly indicate the correctness of part (ii). But it motivates the configuration we will use. Specifically, let $p=K=1$ and $P = \rho_n \mbone_n\mbone_n^T$. Notice that in this case, $W ={w}= \frac{1}{\sqrt{n}}\mbone_n$. Also, let ${w}_{\perp} \in \bR^n$ be a unit vector that is orthogonal to ${w}$, the exact value of which will be chosen later. We choose 
$$X = x = \sqrt{n}\pare{\frac{1}{\sqrt{2}}{w} + \frac{1}{\sqrt{2}}{w}_{\perp}}.$$
So $Z = x/\sqrt{n}$. Moreover, set $\alpha = \sqrt{n}{w}$. Denote by $\hat{{w}}$ the leading eigenvector of $A$.

It is easy to see that $r=0$ and $s=1$. Since  $p=K=1$, we have $\hat{Z} = \tilde{Z} = Z =: z$, $\tilde{W} = W = {w}$ and $\breve{W} = \hat{W} = \hat{{w}}.$  In particular, a simple calculation shows that $\hat{\mathcal{P}}_\mathcal{C}$ in \eqref{eq:PChat} takes the form
\begin{equation}\label{eq:PChat special case}
\hat{\mathcal{P}}_\mathcal{C} = \frac{1}{1-\hat{\sigma}_1^2} zz^T - \frac{\hat{\sigma}_1}{1-\hat{\sigma}_1^2} z\hat{{w}}^T,
\end{equation}
where $\hat{\sigma}_1 = \hat{{w}}^Tz$. It then follows from \eqref{eq:alpha beta theta hat} that
\begin{eqnarray*}
\e\hat{\beta} &=& (x^Tx)^{-1}x^T\hat{\mathcal{P}}_{\ccal}(x\beta+\alpha)\\
&=& z^T\left(\frac{1}{1-\hat{\sigma}_1^2} zz^T - \frac{\hat{\sigma}_1}{1-\hat{\sigma}_1^2} z\hat{{w}}^T\right)(z\beta+ {w})\\
&=& \left(\frac{1}{1-\hat{\sigma}_1^2} z^T - \frac{\hat{\sigma}_1}{1-\hat{\sigma}_1^2} \hat{{w}}^T\right)(z\beta+ {w})\\
&=& \left(\frac{1}{1-\hat{\sigma}_1^2} - \frac{\hat{\sigma}_1}{1-\hat{\sigma}_1^2} \hat{{w}}^Tz\right)\beta + \frac{1}{\sqrt{2}(1-\hat{\sigma}_1^2)} - \frac{\hat{\sigma}_1}{1-\hat{\sigma}_1^2}\hat{{w}}^T{w}\\
&=& \beta + \frac{1}{\sqrt{2}(1-\hat{\sigma}_1^2)} - \frac{\hat{\sigma}_1}{1-\hat{\sigma}_1^2}\hat{{w}}^T{w}.
\end{eqnarray*}
This indicates that the bias of $\hat{\beta}$ satisfies 
\begin{equation*}
|B(\hat{\beta})| = \frac{1}{1-\hat{\sigma}_1^2}\Big| \hat{\sigma}_1\hat{{w}}^T{w} - \frac{1}{\sqrt{2}}\Big| \ge \Big| \hat{\sigma}_1\hat{{w}}^T{w} - \frac{1}{\sqrt{2}}\Big|.
\end{equation*}
To show that this is further lower bounded by $c/d$, notice that 
\begin{eqnarray*}
\Big| \hat{\sigma}_1\hat{{w}}^T{w} - \frac{1}{\sqrt{2}}\Big| &=& \Big|z^T\hat{{w}}\hat{{w}}^T{w} - \frac{1}{\sqrt{2}}\Big|\\
&=&\Big| \frac{1}{\sqrt{2}}{w}^T\hat{{w}}\hat{{w}}^T{w}  - \frac{1}{\sqrt{2}} +  \frac{1}{\sqrt{2}}{w}_{\perp}^T\hat{{w}}\hat{{w}}^T{w} \Big| \\
&=&\Big| \frac{1}{\sqrt{2}}{w}^T\hat{{w}}\hat{{w}}^T{w}  - \frac{1}{\sqrt{2}}{w}^T{w}{w}^T{w} +  \frac{1}{\sqrt{2}}{w}_{\perp}^T\hat{{w}}\hat{{w}}^T{w} \Big| \\
&\ge& \frac{1}{\sqrt{2}} \Big| {w}^T\pare{\hat{{w}}\hat{{w}}^T-{w}{w}^T}{w} \Big| - \frac{1}{\sqrt{2}}\Big|{w}_{\perp}^T\hat{{w}}\hat{{w}}^T{w}\Big|\\
&\ge& \frac{1}{\sqrt{2}} \Big| {w}^T\pare{\hat{{w}}\hat{{w}}^T-{w}{w}^T}{w} \Big| - \frac{1}{\sqrt{2}}\Big|{w}_{\perp}^T\hat{{w}}\Big|.
\end{eqnarray*}
By part (i), with high probability, we already have 
$$\frac{1}{\sqrt{2}}\Big| {w}^T\pare{\hat{{w}}\hat{{w}}^T-{w}{w}^T}{w} \Big| \ge \frac{c}{\sqrt{2}d}.$$
Now we show that the second term of the lower bound above is of order $o(1/d)$. 
%We cannot use Lemma~\ref{lem:Xia2019} again here because the term is smaller and the tail terms of the expansion cannot be ignored. 
For that purpose, we choose a special configuration ${w}_{\perp} = e_1/\sqrt{2} - e_2/\sqrt{2}$, where $e_1$ and $e_2$ are unit vectors such that the first and second coordinates are one, respectively, and all other coordinates are zero. In this case, we have
$$|{w}_{\perp}^T\hat{{w}}| = \frac{1}{\sqrt{2}}\Big|\hat{{w}}_1  - \hat{{w}}_2\Big| \le \frac{1}{\sqrt{2}}\pare{|\hat{{w}}_1 - {w}_1| + |\hat{{w}}_2 - {w}_2|} \le \sqrt{2}\norm{\hat{{w}} - {w}}_{\infty}.$$
Notice that ${w}  = \mbone/\sqrt{n}$ so it is perfectly incoherent. Therefore, by using Corollary 3.6 of \cite{lei2019unified}, or the similarly result of \cite{Abbe2017entrywise}, we have
$$|{w}_{\perp}^T\hat{{w}}| \le \sqrt{2}\norm{\hat{{w}} - {w}}_{\infty} \le \sqrt{\frac{\log{n}}{n^2\rho_n}} = \sqrt{\frac{\log{n}}{dn}}$$
with high probability. Consequently, when $d \ll n/\log{n}$, we have $|{w}_{\perp}^T\hat{{w}}| = o(1/d)$ and the lower bound on the bias of $\hat{\beta}$ follows. 

Finally, assume that the variance of the noise in the regression model \eqref{eq:RNC} is $\sigma^2 = 1$. From \eqref{eq:alpha beta theta hat} and \eqref{eq:PChat special case} we have 
\begin{eqnarray*}
\Var(\hat{\beta}) &=& (x^Tx)^{-1}x^T\hat{\mathcal{P}}_\mathcal{C}\hat{\mathcal{P}}_\mathcal{C}^Tx(x^Tx)^{-1}\\
&=& \frac{1}{n}z^T\left(\frac{1}{1-\hat{\sigma}_1^2} zz^T - \frac{\hat{\sigma}_1}{1-\hat{\sigma}_1^2} z\hat{{w}}^T\right)\left(\frac{1}{1-\hat{\sigma}_1^2} zz^T - \frac{\hat{\sigma}_1}{1-\hat{\sigma}_1^2} \hat{{w}}z^T\right)z\\
&=& \frac{1}{n}\left(\frac{1}{1-\hat{\sigma}_1^2} z^T - \frac{\hat{\sigma}_1}{1-\hat{\sigma}_1^2} \hat{{w}}^T\right)\left(\frac{1}{1-\hat{\sigma}_1^2} z - \frac{\hat{\sigma}_1}{1-\hat{\sigma}_1^2} \hat{{w}}\right)\\
&=& \frac{1}{n(1-\hat{\sigma}^2)^2}\norm{z - \hat{\sigma}_1\hat{{w}}}^2 \\
&\le& \frac{4}{n(1-\hat{\sigma}^2)^2} \\
&\le& \frac{16}{n}
\end{eqnarray*}
whenever $\hat{\sigma} < 1/\sqrt{2}$. This completes the proof.
\end{proof}

\newpage

\section{Bounding projection perturbation for the Laplacian}

\begin{proof}[Proof of Theorem~\ref{thm:projection-concentration-Laplacian}] The proof of Theorem~\ref{thm:projection-concentration-Laplacian} is very similar to the proof of Theorem~\ref{thm:projection-concentration} in Section~\ref{sec:projection bound adjacency matrix}. We highlight here the main differences. Denote 
$$E'=L-\e L = (D-\e D) - (A-\e A) = (D-\e D) - E.$$
Then by Bernstein's inequality, there exists a constant $C>0$ such that 
$$\|D-\e D\|\le C\sqrt{d\log n}$$ 
with high probability. Together with Lemma~\ref{lemma:concentration}, this implies $\|E'\|\le C\sqrt{d\log n}$. Using Assumption~\ref{ass:eigen-gap laplacian} about eigenvalue gap of $P=\e L$, eigen-projection representation \eqref{eq:eigen-projection} (where $E$ is replaced with $E'$) and arguing as in  the proof of Theorem~\ref{thm:projection-concentration}, we see that similar to \eqref{eq:Sk bound k greater one}, 
\begin{equation*}
\|S_k\| \le \frac{c}{\pi} \left(\frac{4c^2C\log n}{d}\right)^{k/2}, \quad k\ge 2.
\end{equation*}
It remains to bound $S_1$. According to \eqref{eq:Mv bound},  
$$
\|S_1{v}\|^2 = \frac{2}{\pi^2}\sum_{j\le K} \left| {w}_j^TE' Q_j {v} \right|^2
\le \frac{4}{\pi^2}\sum_{j\le K} \left| {w}_j^TE Q_j {v} \right|^2 + \frac{4}{\pi^2}\sum_{j\le K} \left| {w}_j^T(D-\e D) Q_j {v} \right|^2, 
$$
where $Q_j$ are fixed matrices with norms of order $1/d$. The first term on the right-hand side is of order $K\log n/d^2$, as in the proof of Theorem~\ref{thm:projection-concentration}. Regarding the second term, for ${x}={w}_j$ and ${{y}}=Q_j{x}$ with $\|{x}\|=1$ and $\|{y}\|=O(1/d)$, we have 
$${x}^T(D-\e D) {y}  = \sum_{i,i'=1}^n {x}_i{y}_i(A_{ij}-\e A_{ii'})$$ 
Again, by Hoeffding's inequality \cite[Theorem 2.6.3]{vershynin2018high}, we have for some absolute constant $c'>0$:
\begin{align*}
\p\left(|{{x}}^T(D-\e D) {{y}} |>t\right) &\le 2\exp\left(-\frac{c't^2}{\sum_{i<i'}{x}_i^2{y}_i^2}\right).
\end{align*}
Using ${x} = {w}_j$ with $|{x}_i|\le \|W\|_{\infty}$ and $\|{y}\|=O(1/d)$, we have 
$$\sum_{i<i'}{x}_i^2{y}_i^2\le n\sum_{i=1}^n{x}_i^2{y}_i^2\le n\|W\|_\infty^2\cdot \|{y}\|^2 = O(n\|W\|_\infty^2/d^2).$$
Therefore with high probability, $|{{x}}^T(D-\e D) {{y}}|^2 = O(\|W\|_\infty^2n\log n/d^2)$. Putting these inequalities together, we obtain 
$$
\|S_1{v}\|^2 = O\left(\frac{\left(K+\|W\|_\infty^2n\right)\log n}{d}\right).
$$
The proof is complete by using the triangle inequality the bound for $\|S_k\|$ with $k\ge 2$. 
\end{proof}

\begin{proof}[Proof of Corollary~\ref{coro:small-perturbation-L}]
The proof of Corollary~\ref{coro:small-perturbation-L} is very similar to the proof of Corollary~\ref{coro:small-perturbation-A} in Section~\ref{sec:projection bound adjacency matrix}, but with the help of Theorem~\ref{thm:projection-concentration-Laplacian} instead of Theorem~\ref{thm:projection-concentration}. We omit the detail.
\end{proof}

\newpage 

\section{Proofs for small projection perturbation for parametric estimation under block models}\label{app:blockmodel}

First, we introduce the parametric estimators for $P$ under block models. Given the community label, the maximum likelihood estimator (MLE) of $P$ is used. Under the SBM,  the MLE of $B$ can be obtained by 
\begin{equation}\label{eq:SBM-est}
\hat{B}_{k\ell} = \frac{\sum_{g_i = k, g_j = \ell}A_{ij}}{n_{k}n_{\ell}}.
\end{equation}
with the follow-up estimator $\hat{P}_{ij} = \hat{B}_{g_ig_j}$. Under the DCBM, let $d_i$ be the degree of node $i$. The $\nu_i$'s can be estimated by
\begin{equation}\label{eq:DCBM-est1}
\hat{\nu}_i = \frac{n_kd_i}{\sum_{g_j = k}d_j}
\end{equation}
while the $B$ matrix can be estimated by
\begin{equation}\label{eq:DCBM-est2}
\hat{B}_{k\ell} = \frac{1}{n_{k}n_{\ell}}\sum_{g_i = k, g_j = \ell}\frac{A_{ij}}{\hat{\nu}_i\hat{\nu}_j}.
\end{equation}
The resulting $\hat{P}_{ij}$ is given by $\hat{\nu}_i\hat{\nu}_j\hat{B}_{g_ig_j}$. These will be the estimators we use for the parametric version of our method.

\begin{proof}[Proof of Theorem~\ref{thm:blockmodel-small-perturbation}]
{\bf Part (i).} Note that the estimator $\hat{B}_{k\ell}$ is the sample average of independent Bernoulli random variables. Using Bernstein's inequality and $K=o(\sqrt{nd/\log n})$ from \ref{ass:SBM}, with high probability we have
$$|\hat{B}_{k\ell} - B_{k\ell}| = O\Big(\frac{K\sqrt{\kappa_n\log{n}}}{n}\Big)$$
holds for all $1 \le k, \ell \le K$. 
Therefore, we have
$$\norm{\hat{P} - P}_{\infty}  = O\Big(\frac{K\sqrt{\kappa_n\log{n}}}{n}\Big).$$
Notice that under the current assumption, the eigen-gap of $P$ for the $K$ leading eigenvectors is in the order of $d/K$ \citep{lei2014consistency}. Using David-Kahan theorem~\citep{yu2015useful} together with \ref{ass:SBM} and \ref{ass:strong-consistency}, we have
$$\norm{\hat{W}\hat{W}^T - WW^T} = O\Big(\frac{\norm{\hat{P}-P}}{d/K}\Big) = O\Big(\frac{\sqrt{n}\norm{\hat{P}-P}_{\infty}}{d/K}\Big) = O\Big(\frac{K^2\sqrt{\log{n}}}{n\sqrt{d}}\Big).$$
The condition in part (i) then implies that Assumption~\ref{ass:small-perturbation} holds. 

{\bf Part (ii).} Since $d \gg \log{n}$, for sufficiently large $n$, by Bernstein inequality we have
$$\max_i |d_i - \e d_i| \le C\sqrt{d\log{n}}$$
with high probability.
% at least $1- 2\exp(-c/2\log{n})$ with some constant $C$ and $c$. 
Denote $\bar{d}_k = \sum_{g_j = k}d_j/n_{k}$, then $\hat{\nu}_i = d_i/\bar{d}_{g_i}$ and $\nu_i = \e d_i/\e \bar{d}_{g_i}$. Therefore by the above inequality, Assumption~\ref{ass:DCBM} and the triangle inequality, 
\begin{eqnarray}\label{eq:nu bound}
|\hat{\nu}_i - \nu_i| = \Big|\frac{d_i}{\bar{d}_{g_i}} - \frac{\e d_i}{\e \bar{d}_{g_i}} \Big|
\le \frac{|d_i-\e d_i|}{\e \bar{d}_{g_i}} + \frac{d_i|\bar{d}_{g_i} - \e \bar{d}_{g_i}|}{\bar{d}_{g_i}\e \bar{d}_{g_i}} = O\left(\sqrt{\frac{\log n}{d}}\right).
\end{eqnarray}
Now we bound the error $\hat{B}_{k\ell}-B_{k\ell}$. %For notation simplicity, we consider the case $k_1 = 1$ and $k_2 = 2$. 
Recall that
$$\hat{B}_{k\ell} = \frac{1}{n_kn_\ell}\sum_{g_i = k, g_j = \ell}\frac{A_{ij}}{\hat{\nu}_i\hat{\nu}_j}.$$
Define 
$$\hat{B}^*_{k\ell} = \frac{1}{n_kn_\ell}\sum_{g_i = k, g_j = \ell}\frac{A_{ij}}{\nu_i\nu_j}.$$
We have $|\hat{B}_{k\ell} - B_{k\ell}| \le |\hat{B}_{k\ell} - \hat{B}_{k\ell}^*|+|\hat{B}_{k\ell}^* - B_{k\ell}|$.
The second term $|\hat{B}_{k\ell}^* - B_{k\ell}|$ is simply a sum of centered Bernoulli random variables and can be bounded by $O_P(\frac{K\sqrt{\kappa_n\log{n}}}{n})$ as before (we use Assumption~\ref{ass:DCBM} to bound $\nu_i$). For the first term, notice that
\begin{eqnarray*}
|\hat{B}_{k\ell} - \hat{B}_{k\ell}^*| &\le& \frac{1}{n_1n_2}\sum_{g_i = k, g_j = \ell}A_{ij}\Big| \frac{1}{\hat{\nu}_i\hat{\nu}_j}-\frac{1}{\nu_i\nu_j}\Big| \\
&\le& \max_{ij}\Big|\frac{1}{\hat{\nu}_i\hat{\nu}_j}-\frac{1}{\nu_i\nu_j}\Big| \cdot\frac{1}{n_kn_\ell}\sum_{g_i = k, g_j = \ell}A_{ij}.
\end{eqnarray*}
By Assumption~\ref{ass:DCBM} and \eqref{eq:nu bound}, we have
\begin{align*}
\Big|\frac{1}{\hat{\nu}_i\hat{\nu}_j}-\frac{1}{\nu_i\nu_j}\Big| &\le \frac{|\nu_j - \hat{\nu}_j|}{\hat{\nu}_i\hat{\nu}_j\nu_j} +  \frac{|\nu_i - \hat{\nu}_i|}{\hat{\nu}_i\hat{\nu}_i\nu_j} = O\big( \max_{i}|\hat{\nu}_i - \nu_i|\big) = O\left(\sqrt{\frac{\log n}{d}}\right).
\end{align*}
Furthermore, using the assumption $K=o(\sqrt{nd/\log n})$ and Bernstein's inequality as before, we see that with high probability we have  
$$\max_{k, \ell}\frac{1}{n_{k}n_{\ell}}\sum_{g_i = k_1, g_j = k_2}A_{ij} = O(\kappa_n).$$
Therefore, $|\hat{B}_{k\ell} - \hat{B}_{k\ell}^*| = O_P(\sqrt{\kappa_n\log{n}/n})$. With the bound on $|\hat{B}_{k\ell}^* - B_{k\ell}|$, we have
$$|\hat{B}_{k\ell} - B_{k\ell}|  = O\left(\sqrt{\frac{\kappa_n\log{n}}{n}}+\frac{K\sqrt{\kappa_n\log{n}}}{n}\right) = O\left(\sqrt{\frac{\kappa_n\log{n}}{n}}\right).$$
%
% In general, we have
%$\max_{k_1,k_2}|\hat{B}_{k_1k_2} - B_{k_1k_2}|  = O_P(\frac{\log{K}\sqrt{\rho_n\log{n}}}{\sqrt{n}}).$
Finally, for any $i$ and $j$, we have
\begin{align*}
|\hat{B}_{g_ig_j}\hat{\nu}_i\hat{\nu}_j - B_{g_ig_j}\nu_i\nu_j| & \le \hat{\nu}_i\hat{\nu}_j|\hat{B}_{g_ig_j}-B_{g_ig_j}| + B_{g_ig_j}|\hat{\nu}_i\hat{\nu}_j - \nu_i\nu_j| = O\left(\sqrt{\frac{\kappa_n\log{n}}{n}}\right). 
\end{align*}
Similar to the SBM case, we use the Davis-Kahan theorem and the bound on eigen-gap in \cite{lei2014consistency} to achieve the bound
$$\norm{\hat{W}\hat{W}^T - WW^T} = O\Big(\frac{\norm{\hat{P}-P}}{d/K}\Big) = O\Big(\frac{\sqrt{n}\norm{\hat{P}-P}_{\infty}}{d/K}\Big)  = O\Big(\frac{K\sqrt{\log{n}}}{\sqrt{nd}}\Big).$$
The condition in part (ii) then implies that Assumption~\ref{ass:small-perturbation} holds. 
\end{proof}

\newpage

\section{A degenerate case: directing using eigenvectors as covariates with no confounding}\label{sec:degenerate}

In this section, we give detailed discussion on one special case of our model, when $\mathcal{R} = \col(X) \cap S_K(P)$ is degenerate:
$$\mathcal{R} = \{0\}.$$
This is a special case when there is no confounding between the covariates and the network, such that $r=0$. This situation holds in many cases. For example, the school conflict example in the main paper is determined to have $r=0$. However, our consideration of the more general framework for non-degenerate $\mathcal{R}$ is preferable compared to directly assuming the non-confounding special model in many aspects. 
\begin{figure}[H]
\centering
\includegraphics[width=0.8\textwidth]{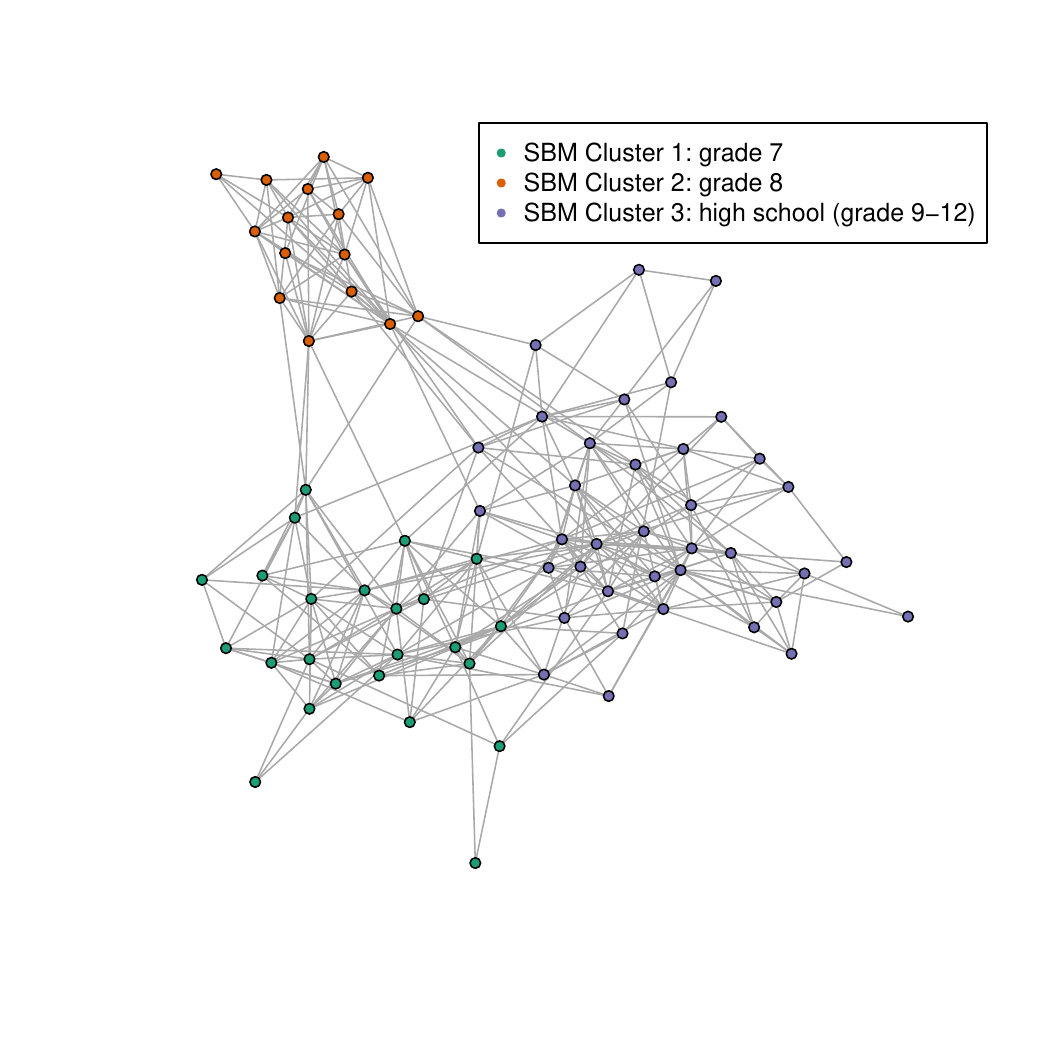}
\caption{The friendship network of one school district in the AddHealth study.}
\label{fig:school70}
\end{figure}
\paragraph{The scope of space confounding in practice.} Empirically, the  nontrivial $\mathcal{R}$ indeed exists in many applications.  As a concrete example, in Figure~\ref{fig:school70}, we show a friendship network of a school district from the AddHealth study \citep{harris2009national}. It happens that this network follows the stochastic block model (SBM) very well. We use three methods to identify the network model \citep{le2015estimating, wang2015likelihood, li2016network} and all of them agree on using the SBM with three communities for the network. The detected community labels perfectly recover the  middle school grade 7, middle school grade 8 and high school groups. Suppose we treat the network as an instantiation of the SBM, the intercept column (of all $1$'s) and the covariate ``school" both lie in the eigenspace, even though they do not perfectly assign with any of the eigenvectors. Generally speaking, we do not want to exclude the intercept or a particular covariate from the model for interpretation purposes. So in this case, directly using eigenvectors as other covariates and assuming a degenerate $\rcal$ would not be feasible. To better demonstrate the scope of nontrivial confounding $\rcal$, we show the inferred dimensionality $r$ for all of the 83 school districts of the Add Health study in Table~\ref{tab:estimated-r}.  19 out of the 83 data sets exhibit certain confounding issues between covariates and the network. This demonstrates that the non-trivial $\rcal$ may exist in a wide range of situations.
\begin{table}
\caption{\label{tab:estimated-r}Inferred dimensionality of the $\mathcal{R}$ for 83 school districts of the Add Health study. And $r>0$ indicates a non-empty intersection space between $X$ and the network eigenspace.}
\centering
\fbox{
\begin{tabular}{r|rrrr}
  \hline
 $r$& 0 & 1 & 2 & 3\\ 
  \hline
frequency &  64 &  13 &   5 & 1 \\ 
   \hline
\end{tabular}}
\end{table}
\paragraph{Conceptual interpretation.} The $r=0$ assumption is essentially assuming the $\col(X)$ and $S_K(P)$ are two orthogonal subspaces. Conceptually, this is an unpleasant constraint. We treat the regression analysis as a procedure to model a response $Y$ by any given set of covariates of interest $X$, and the network connection them. Enforcing the orthogonality means that inherently we could not freely pick what covariates we want to work with. Such a constraint would introduce difficulties for the applicability and interpretation of the model.

\section{Bootstrapping for $r$ selection and out-of-sample prediction}\label{sec:r-tuning}

\paragraph{A bootstrapping to select $r$.} The theory-driven tuning of $r$ introduced in \eqref{eq:r-selection} is based on large-sample properties and may be conservative when $n$ is not sufficiently large. Here, we propose an alternative way of tuning $r$ that provides better performance for small networks, according to our empirical evaluation. The idea is to approximate the actual deviation of principal angles using a bootstrapping procedure.  Specifically, we construct a singular value thresholding estimator $P^*$ as in \cite{chatterjee2015matrix}, but with use of the given rank $K$ instead.  Then  $P^*$ is rescaled so that the average degree of $P^*$ matches the average degree $\hat{d}$ of the original network. Note that the informative eigenvectors of $P^*$ are still $\hat{W}$. After that, we randomly generate $50$ networks $\hat{A}^{(b)}$ with $b=1,...,50$ from $P^*$ and denote their  $K$ leading eigenvectors by $\hat{W}^{(b)}$. Furthermore, the singular values of 
$Z^T\hat{W}^{(b)}$ are computed, denoted by $(\hat{\sigma}^{(b)}_1,...,\hat{\sigma}^{(b)}_K)$ with $b = 1,..., 50$. Finally, the maximum deviance 
$\delta = \max_{k,b} |\hat{\sigma}^{(b)}_k - \hat{\sigma}_k|$ is taken
to construct
$$\hat{r} = \max_{k}\{k: \hat{\sigma}_k > 1-\delta\}.$$
Intuitively, the sampling step of $\hat{A}^{(b)}$ mimics the generating procedure of $A$. Thus, $\delta$ should give an accurate estimate of the possible perturbation on the principal angles.  

\paragraph{Out-of-sample prediction.} For out-of-sample prediction, we would have the full data set $X = (X_{\text{tr}}^T,X_{\text{te}}^T)^T$ and network $A$ (for both training and test data), and the response $Y_{\text{tr}}$ on the training data. A natural way to formulate the model is by assuming the model by Defintion~\ref{defi:model}
for the full data set. Then the subspace basis calculation and intersection calculation can essentially be done in the same way as in our original algorithm. The only difference would be on when calculating the projection coefficients, it is based on $Y_{\text{tr}}$ along the training coordinates. 

\newpage

\section{Additional results of the simulation examples}\label{sec:additional-sim}

Table~\ref{tab:Controlled-SBM-Chisq} summarizes the type-I error probability of the $\chi^2$ test from the setting $\gamma = (0, 0, 0, 0)^T$ in the configuration of Section~\ref{sec:sim}. It shows the desired type-I error control to test $H_0: \alpha = 0$

\begin{table}
\caption{Type-I error probability of the $\chi^2$ test for $\gamma$ when the network is generated from the SBM and $\gamma = 0$. }
\label{tab:Controlled-SBM-Chisq}
\centering
\fbox{\begin{tabular}{l|rrrr}
  \hline
\multirow{2}{*}{Method} & \multirow{2}{*}{$n$} & \multicolumn{3}{c}{average expected degree} \\ 
%\hline
 &  & $2\log{n}$ & $\sqrt{n}$ & $n^{2/3}$  \\ 
   \hline
 \multirow{3}{*}{SP} &300 & 0.0522 & 0.0524 & 0.0493 \\ 
 & 500 & 0.0493 & 0.0520 & 0.0526 \\ 
 & 1000& 0.0490 & 0.0497 & 0.0510 \\
 & 2000  & 0.0504 & 0.0497 & 0.0506 \\ 
 & 4000  & 0.0496 & 0.0490 & 0.0501 \\ 
\hline
 \multirow{3}{*}{SP-SBM} & 300 & 0.0525 & 0.0509 & 0.0503 \\ 
 & 500 & 0.0508 & 0.0504 & 0.0521 \\ 
  & 1000& 0.0496 & 0.0500 & 0.0511 \\ 
& 2000 & 0.0504 & 0.0495 & 0.0503 \\ 
 & 4000 & 0.0499 & 0.0493 & 0.0500 \\  
   \hline
\end{tabular}}
\end{table}

As a counterpart of the study in Section~\ref{sec:sim}, we generate networks from the DCBM. For the DCBM, the degree parameters are generated from a uniform distribution between 0.2 and 1.  All of the other configurations remain the same, as in the previous setting. Table~\ref{tab:Controlled-DCBM-Ratio} shows the average bias-SD ratio for $\hat{\beta}_2, \hat{\beta}_3$ and $\hat{\beta}_3$ under the DCBM, while Table~\ref{tab:Controlled-DCBM-Coverage} shows the empirical coverage of the 95\% confidence intervals in this setting. The high-level message remains similar to the SBM setting and matches the theory. It can be seen that both versions of SP are valid for large enough $n$ when $\varphi_n = n^{2/3}$, while the parametric version also works for $\varphi_n = 2\sqrt{n}$. However, the advantage of the parametric version is smaller compared with the SBM setting because the DCBM estimation is more complicated.  Lastly, Table~\ref{tab:Controlled-DCBM-Chisq} shows the type I error probability of the $\chi^2$ test, in which the validity is evident.

\begin{table}
\caption{Average theoretical bias-SD ratio for $\beta_2, \beta_3$ and $\beta_4$ when the network is generated from the DCBM.}
\label{tab:Controlled-DCBM-Ratio}
\centering
\fbox{\begin{tabular}{l|rrrr}
  \hline
\multirow{2}{*}{Method} & \multirow{2}{*}{$n$} & \multicolumn{3}{c}{average expected degree} \\ 
%\hline
 &  & $2\log{n}$ & $\sqrt{n}$ & $n^{2/3}$  \\ 
  \hline
\multirow{5}{*}{SP}
& 300 & 1.456 & 0.877 & 0.259 \\ 
& 500 & 1.452 & 0.740 & 0.300 \\
& 1000 & 1.631 & 0.610 & 0.265 \\ 
 & 2000 & 2.019 & 0.689 & 0.238 \\ 
 & 4000 & 2.763 & 0.647 & 0.176 \\ 
\hline
\multirow{5}{*}{SP-DCBM} 
 & 300 & 1.302 & 0.515 & 0.179 \\ 
 & 500 & 0.811 & 0.401 & 0.140 \\ 
& 1000 & 0.978 & 0.243 & 0.118 \\ 
 & 2000 & 1.130 & 0.240 & 0.088 \\
 & 4000 & 1.252 & 0.188 & 0.100 \\ 
   \hline
\end{tabular}}
\end{table}

\begin{table}
\caption{Average coverage coverage probability of 95\% confidence intervals of $\beta_2, \beta_3, \beta_4$ when the network is generated from the DCBM.}
\label{tab:Controlled-DCBM-Coverage}
\centering
\fbox{\begin{tabular}{l|rrrr}
  \hline
\multirow{2}{*}{Method} & \multirow{2}{*}{$n$} & \multicolumn{3}{c}{average expected degree} \\ 
%\hline
 &  & $2\log{n}$ & $\sqrt{n}$ & $n^{2/3}$  \\ 
  \hline
\multirow{5}{*}{SP}
& 300 & 0.792 & 0.906 & 0.946 \\ 
& 500 & 0.813 & 0.904 & 0.939 \\ 
& 1000& 0.775 & 0.912 & 0.944 \\
 & 2000 & 0.605 & 0.901 & 0.946 \\ 
 & 4000  & 0.329 & 0.892 & 0.947 \\ 
\hline
\multirow{5}{*}{SP-DCBM} 
& 300 & 0.817 & 0.937 & 0.947 \\ 
& 500 & 0.893 & 0.935 & 0.947 \\
& 1000  & 0.881 & 0.943 & 0.950 \\ 
 & 2000 & 0.848 & 0.945 & 0.948 \\  
 & 4000 & 0.754 & 0.946 & 0.951 \\ 
   \hline
\end{tabular}}
\end{table}

\begin{table}
\caption{Type-I error probability of the $\chi^2$ test for $\gamma$ when the network is generated from the DCBM and $\gamma = 0$.}
\label{tab:Controlled-DCBM-Chisq}
\centering
\fbox{\begin{tabular}{l|r|rrr}
  \hline
 \multirow{2}{*}{Method} & \multirow{2}{*}{$n$} & \multicolumn{3}{c}{average expected degree} \\ 
%\hline
 & & $2\log{n}$ & $\sqrt{n}$ & $n^{2/3}$  \\ 
   \hline
\multirow{5}{*}{SP} & 300 & 0.0510 & 0.0519 & 0.0531 \\ 
& 500 & 0.0505 & 0.0509 & 0.0503 \\
& 1000& 0.0518 & 0.0514 & 0.0506 \\ 
& 2000  & 0.0515 & 0.0517 & 0.0497 \\ 
 & 4000  & 0.0508 & 0.0504 & 0.0501 \\
 \hline
 \multirow{5}{*}{SP-DCBM} 
 & 300 & 0.0515 & 0.0509 & 0.0535 \\ 
& 500 & 0.0499 & 0.0499 & 0.0518 \\ 
 & 1000 & 0.0512 & 0.0515 & 0.0502 \\ 
 & 2000 & 0.0518 & 0.0510 & 0.0492 \\  
 & 4000 & 0.0498 & 0.0489 & 0.0498 \\  
   \hline
\end{tabular}}
\end{table}

Table~\ref{tab:DCSBM-compare} shows the comparison of our methods with the benchmark methods in the DCBM setting.

\begin{table}
\caption{Mean squared error (MSE) $\times 10^2$ of $\e Y$ when the network is generated from the DCBM with $n=300, p=4, K=4$ with three types of individual effects.  } 
\centering
\fbox{\small
\label{tab:DCSBM-compare_n300}
\begin{tabular}{l|l|rrrrrrr}
  \hline
 indiv. effects &   avg. degree & SP & SP-DCBM & OLS & SIM &RNC & SP-L & SP-DCBM-L\\ 
  \hline
\multirow{4}{*}{eigenspace} & $2\log{n}$ & 11.89 & 7.01 & 39.79 & 37.22 & 12.27 & 39.77 & 39.78 \\ 
& $\sqrt{n}$ & 8.09 & 3.77 & 39.79 & 35.17 & 11.73 & 39.81 & 39.80 \\ 
 & $n^{2/3}$ & 2.46 & 1.00 & 39.79 & 31.86 & 10.11 & 39.71 & 39.65 \\ 
\hline
\multirow{4}{*}{$\gamma = 0$} & $2\log{n}$ & 0.39 & 0.38 & 0.32 & 0.64 & 1.30 & 0.32 & 0.34 \\ 
 & $\sqrt{n}$ & 0.39 & 0.38 & 0.32 & 0.63 & 1.26 & 0.34 & 0.34 \\ 
 & $n^{2/3}$ & 0.41 & 0.37 & 0.32 & 0.64 & 2.49 & 0.37 & 0.38 \\ 
\hline
\multirow{4}{*}{smooth} & $2\log{n}$ & 24.17 & 24.17 & 24.17 & 22.42 & 13.70 & 12.34 & 16.34 \\ 
  & $2\sqrt{n}$   & 24.82 & 24.82 & 24.82 & 23.30 & 15.51 & 12.70 & 16.82 \\ 
  & $n^{2/3}$ & 24.75 & 24.75 & 24.75 & 24.34 & 16.27 & 12.69 & 12.93 \\  
   \hline
\end{tabular}
}
\end{table}

\begin{table}
\caption{Mean squared error (MSE) $\times 10^2$ of $\e Y$ when the network is generated from the DCBM with $n=1000, p=4, K=4$ with three types of individual effects.  } 
\centering
\fbox{\small
\label{tab:DCSBM-compare}
\begin{tabular}{l|l|rrrrrrr}
  \hline
 indiv. effects &   avg. degree & SP & SP-DCBM & OLS & SIM &RNC & SP-L & SP-DCBM-L\\ 
  \hline
\multirow{4}{*}{eigenspace} & $2\log{n}$ & 10.51 & 4.80 & 41.41 & 40.86 & 12.24 & 41.41 & 41.41 \\ 
& $\sqrt{n}$ & 4.51 & 1.31 & 41.41 & 40.43 & 10.86 & 41.41 & 41.41 \\ 
 & $n^{2/3}$ & 1.23 & 0.39 & 41.41 & 37.78 & 9.23 & 41.41 & 41.41 \\ 
\hline
\multirow{4}{*}{$\gamma = 0$} & $2\log{n}$ & 0.14 & 0.14 & 0.12 & 0.21 & 0.35 & 0.12 & 0.14 \\
 & $\sqrt{n}$ & 0.14 & 0.13 & 0.12 & 0.21 & 0.82 & 0.12 & 0.13 \\ 
 & $n^{2/3}$ & 0.13 & 0.14 & 0.12 & 0.23 & 1.19 & 0.13 & 0.13 \\ 
\hline
\multirow{4}{*}{smooth} & $2\log{n}$& 16.80 & 16.80 & 16.75 & 16.56 & 3.05 & 8.47 & 12.10 \\  
  & $2\sqrt{n}$  & 24.82 & 24.82 & 24.82 & 24.37 & 17.35 & 12.51 & 17.01 \\ 
  & $n^{2/3}$ & 24.90 & 24.90 & 24.89 & 24.71 & 11.73 & 12.56 & 12.63 \\ 
   \hline
\end{tabular}
}
\end{table}

\newpage

\section{Additional results about the school conflict example}\label{sec:more_school_conflicts}

In this section, we include some additional results from our analysis of the school conflict study. Figure~\ref{fig:reg-hist-binary} shows distributions of binary variables in the data set. 
\begin{figure}
\centering
\includegraphics[width=0.6\textwidth]{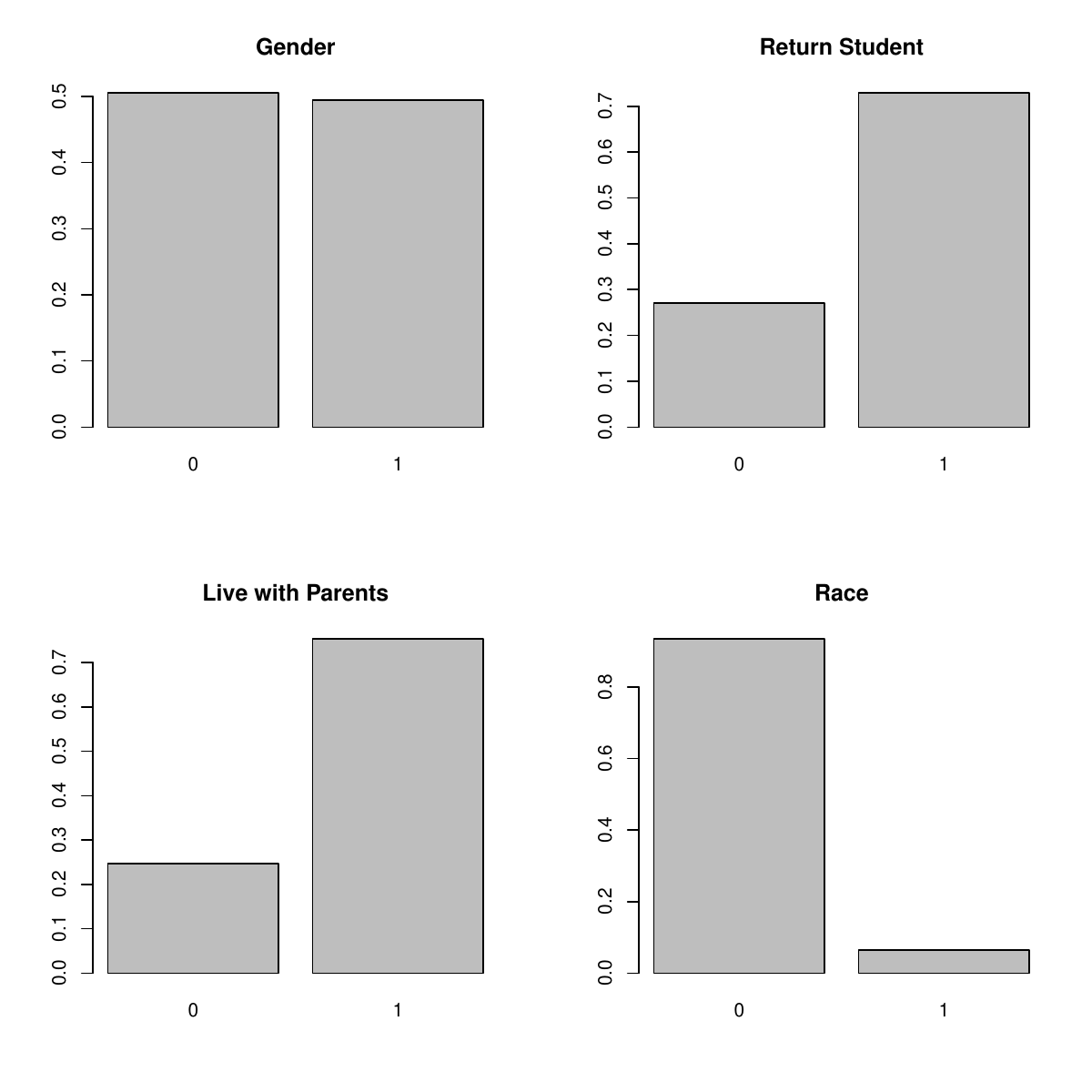}
\caption{Marginal distributions of the four binary variables in the data set. }
\label{fig:reg-hist-binary}
\end{figure}
\begin{figure}
\subfigure[  Full SP model fitting ]{\includegraphics[width = 0.45\textwidth]{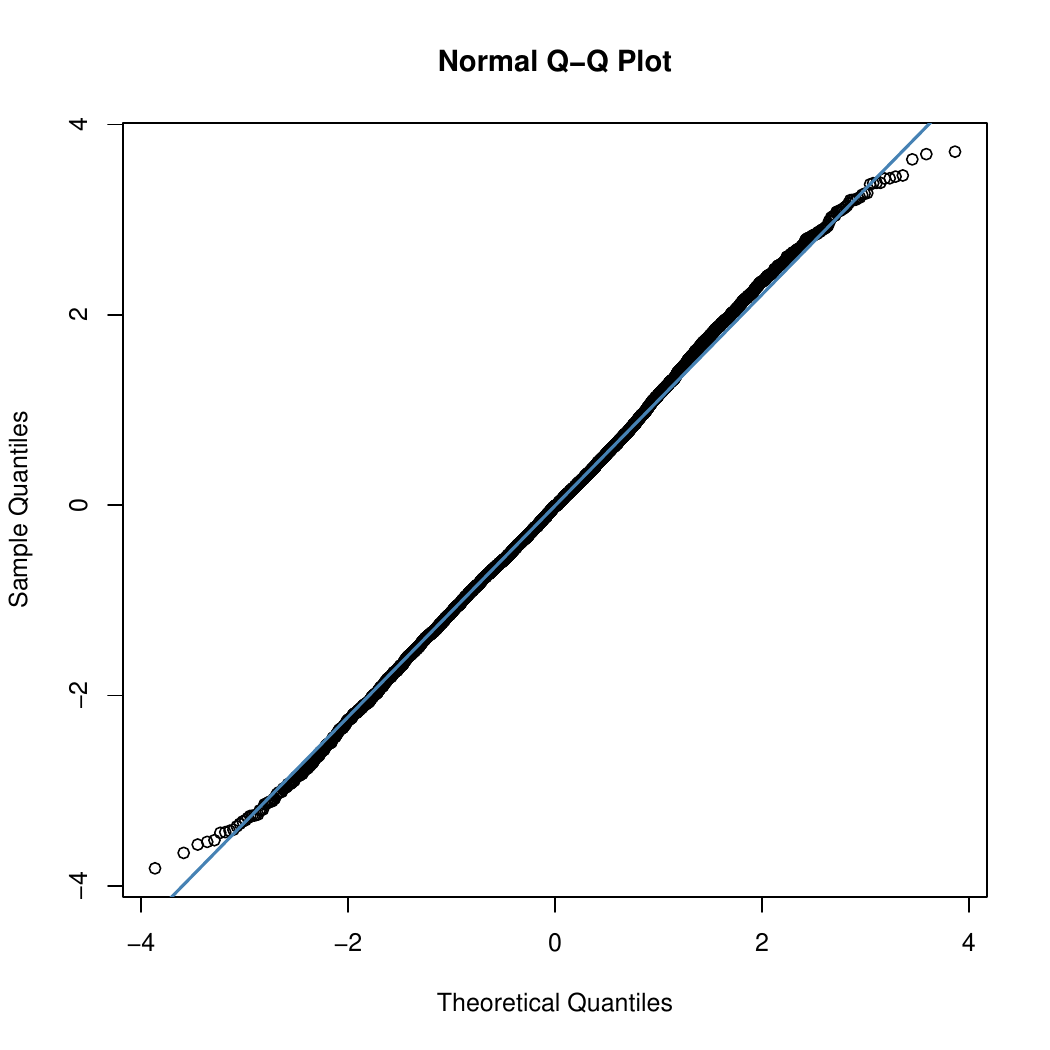}}
\subfigure[  Reduced SP modeling fitting  ]{\includegraphics[width = 0.45\textwidth]{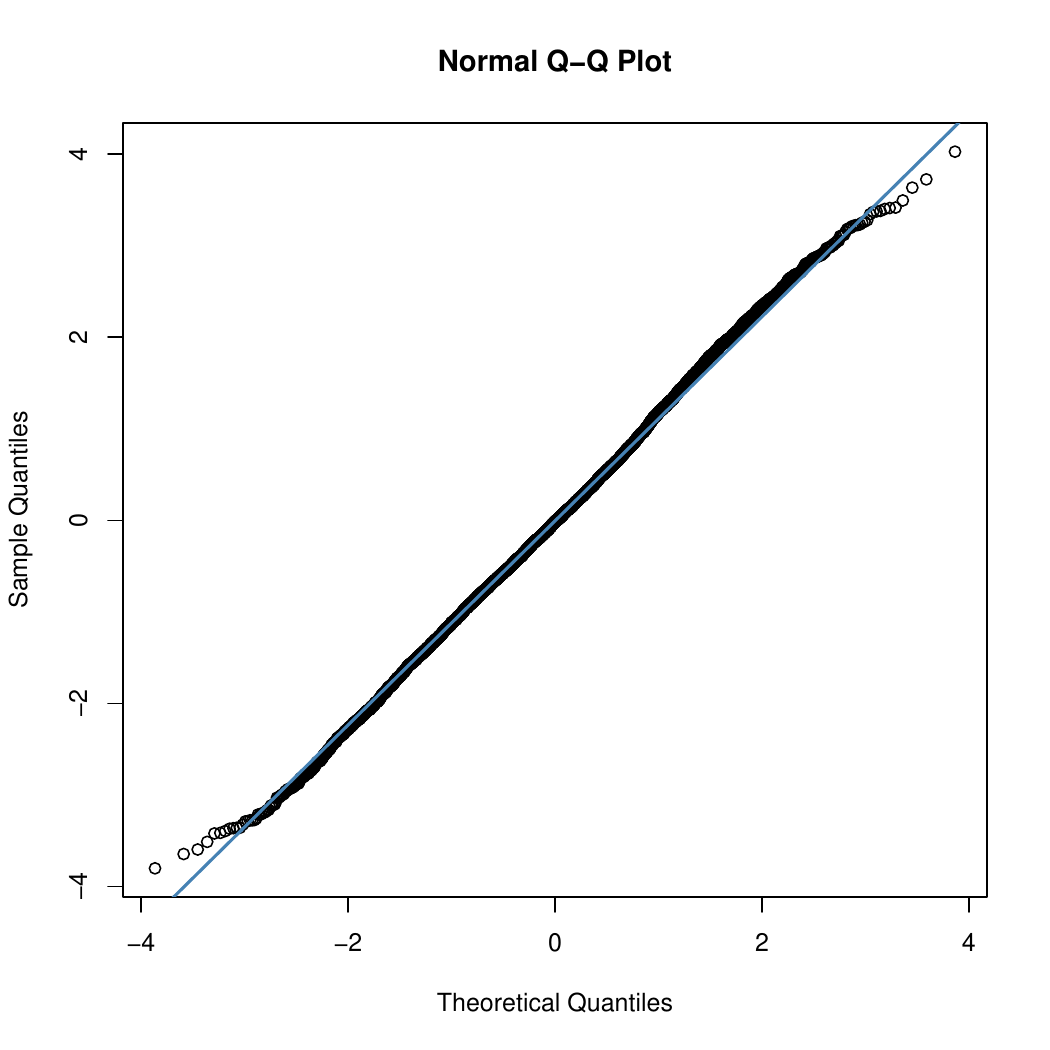}}
\caption{ QQ-plots of residuals in the full SP model fitting and the reduced SP modeling fitting. }
\label{fig:qqplots}
\end{figure}
Figure~\ref{fig:qqplots} includes the qq-plots of modeling fitting residuals of the full model and the reduced model in Section~\ref{sec:conflict}. The residuals follow the straight line closely with only slightly lighter tails than a normal distribution. This is reasonable as our response variable is a result of averaged discrete rating of the survey data. But the deviation from Gaussian is small so our model might serve as a good approximation for inference.
\begin{table}
\caption{Model fitting coefficients of the RNC method on the three networks}
\centering
\fbox{
\label{tab:RNC-results-3Networks}
\begin{tabular}{rrrr}
  \hline
 & Wave I+II & Wave II & Wave I \\ 
  \hline
Race: white & 0.2131 & 0.2078 & 0.2192 \\ 
Race: black & 0.0194 & 0.0091 & 0.0385 \\ 
 Race: hispanic  & 0.1102 & 0.1147 & 0.1131 \\  
 Race: asian & 0.1636 & 0.1625 & 0.1618 \\ 
 Grade & 0.3672 & 0.3921 & 0.3675 \\ 
  Gender & 0.1207 & 0.1108 & 0.1440 \\ 
  Return student & -0.1379 & -0.1360 & -0.1372 \\ 
  Live w/ both parents& 0.1710 & 0.1746 & 0.1713 \\ 
  Treatment & -0.0768 & -0.0887 & -0.0625 \\  
   \hline
\end{tabular}}
\end{table}

Table~\ref{tab:RNC-results-3Networks} includes the fitted coefficients of the covariates by the RNC method on the three different networks. Recall that the RNC method could not include intercepts or school effects, and it does not have an inference framework. So the model fitting results are not directly comparable with the other three methods in the paper and we could not precisely evaluate how statistically significant the network perturbations' impact is on the model fitting. Numerically, the model fitting results are reasonably stable for most of the variables, except for gender and the treatment. However, as discussed in the main paper, we do expect that the treatment effect estimate remains stable across different settings, so the large variation on treatment in Table~\ref{tab:RNC-results-3Networks} may be concerning for the RNC fit.

Figure~\ref{fig:school-other} reveals the principle angle cosine values measuring the subspace alignment between the Wave I network, Wave II network, and Wave I+II combined network for 25 schools in the study, excluding School 40 that has been demonstrated in Figure~\ref{fig:school40}.
\begin{figure}[h]
\centering
\includegraphics[width=1\textwidth]{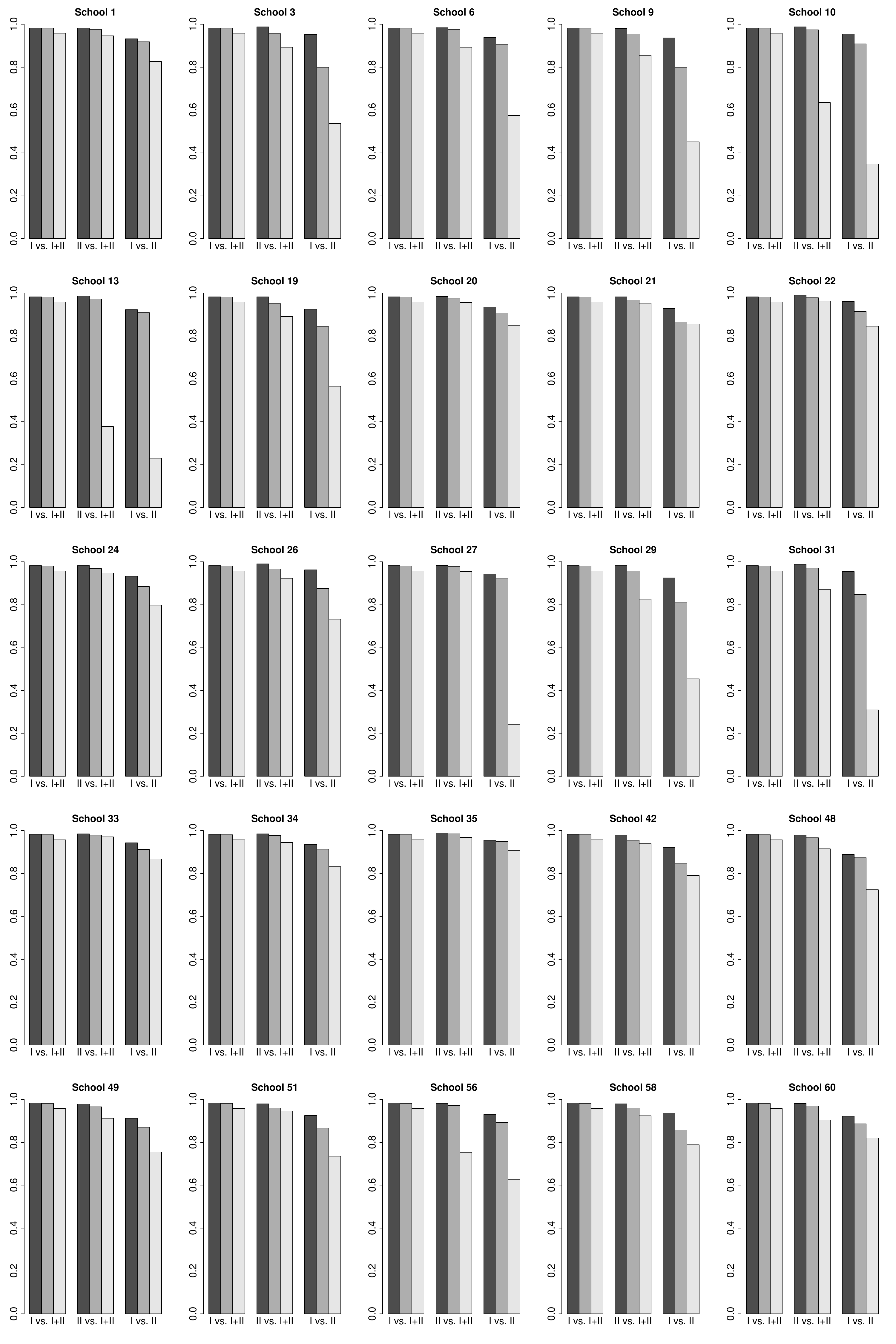}
\caption{Principal angle cosine values between different constructed networks of other 25 schools in the study.}
\label{fig:school-other}
\end{figure}

\end{appendix}

\end{document}